\documentclass{article}

\usepackage{fullpage}
\usepackage{url,color,cite}
\usepackage{caption}
\usepackage{amsmath,amsthm,amsfonts,amssymb}
\usepackage{mathtools}
\DeclarePairedDelimiter{\ceil}{\lceil}{\rceil}
\DeclarePairedDelimiter{\floor}{\lfloor}{\rfloor}
\usepackage{float}
\usepackage{epstopdf}
\usepackage{nicefrac}
\usepackage{algorithm}
\usepackage[noend]{algpseudocode}
\usepackage{pifont}
\usepackage{subcaption}
\usepackage{sidecap}
\usepackage{multirow}
\usepackage{multicol}
\usepackage{bbm}

\newtheorem{lemma}{Lemma}
\newtheorem*{lemma*}{Lemma}
\newtheorem{theorem}{Theorem}
\newtheorem*{theorem*}{Theorem}
\theoremstyle{definition}
\newtheorem{defn}{Definition}
\theoremstyle{definition}

\newtheorem{remark}{Remark}
\theoremstyle{definition}
\newtheorem{corollary}{Corollary}
\newtheorem*{corollary*}{Corollary}
\theoremstyle{definition}
\newtheorem{claim}{Claim}
\newtheorem*{claim*}{Claim}

\newcommand{\same}{\text{same}}

\newcommand{\bbE}{\mathbb{E}}
\newcommand{\bbN}{\mathbb{N}}
\newcommand{\bbR}{\mathbb{R}}

\newcommand{\calA}{\mathcal{A}}

\newcommand{\calC}{\mathcal{C}}
\newcommand{\calD}{\mathcal{D}}
\newcommand{\calE}{\mathcal{E}}
\newcommand{\calI}{\mathcal{I}}
\newcommand{\calH}{\mathcal{H}}
\newcommand{\calM}{\mathcal{M}}

\newcommand{\calP}{\mathcal{P}}

\newcommand{\calR}{\mathcal{R}}
\newcommand{\calS}{\mathcal{S}}
\newcommand{\calT}{\mathcal{T}}
\newcommand{\calU}{\mathcal{U}}

\newcommand{\calX}{\mathcal{X}}
\newcommand{\calY}{\mathcal{Y}}

\renewcommand{\(}{\left(}
\renewcommand{\)}{\right)}

\newcommand{\eps}{\epsilon}

\newcommand{\bh}{\boldsymbol{h}}
\newcommand{\bm}{\boldsymbol{m}}
\newcommand{\btm}{\widetilde{\bm}}

\newcommand{\by}{\boldsymbol{y}}
\newcommand{\bp}{\boldsymbol{p}}
\newcommand{\bhp}{\widehat{\bp}}
\newcommand{\btp}{\widetilde{\bp}}

\newcommand{\tbh}{\widetilde{\boldsymbol{h}}}

\newcommand{\bP}{\mathbf{P}}

\usepackage{microtype}
\usepackage{graphicx}
\usepackage{booktabs} 
\usepackage{hyperref}
\usepackage{cleveref}

\usepackage{lipsum}

\title{On the Renyi Differential Privacy of the Shuffle Model}

\author{Antonious M. Girgis, Deepesh Data, Suhas Diggavi, Ananda Theertha Suresh, and Peter Kairouz}

\date{}
\begin{document}
\maketitle
{\allowdisplaybreaks
\begin{abstract}

The central question studied in this paper is Renyi Differential
Privacy (RDP) guarantees for general discrete local mechanisms in the
{\em shuffle} privacy model. In the shuffle model, each of the $n$
clients randomizes its response using a local differentially private
(LDP) mechanism and the untrusted server only receives a random
permutation (shuffle) of the client responses without association to
each client. The principal result in this paper is 
the \emph{first} non-trivial RDP guarantee for general discrete local
randomization mechanisms in the shuffled privacy model, and we develop 
new analysis techniques for deriving our results which could be of independent interest. In
applications, such an RDP guarantee is most useful when we use it for
composing several private interactions. We numerically demonstrate
that, for important regimes, with composition our bound yields an
improvement in privacy guarantee by a factor of $8\times$ over the
state-of-the-art approximate Differential Privacy (DP) guarantee (with
standard composition) for shuffled models. Moreover, combining with
Poisson subsampling, our result leads to at least $10\times$ improvement over
subsampled approximate DP with standard composition. 
%To obtain our theoretical results, we develop new analysis techniques which could be of independent interest.

%In this paper, we study the Renyi Differential Privacy (RDP) of any discrete local mechanism in the {\em shuffle} privacy model, where each of the $n$ clients randomizes its response using a local differentially private (LDP) mechanism and sends it to the shuffler, who, upon receiving the $n$ responses, outputs a random permutation of them. We derive upper and lower bounds on the privacy parameter of the shuffle mechanism. We develop new proof techniques to derive our results that may be of independent interest. We believe ours is the first paper that directly analyzes RDP of local mechanisms in the shuffle privacy model. At the experimental front, we numerically evaluate our bounds (after converting them to approximate DP using standard conversion results) and demonstrate that our upper bound outperforms the state-of-the-art result that was obtained using approximate DP together with strong composition.

\end{abstract}

\section{Introduction}~ \label{sec:introduction}

%5- Composition of SGD 
%6- Important for applying RDP for shuffling and example of FL SGD algorithm
%7- Contribution: 1-two upper bounds for generic LDP 2- Lower bound 3- Numerics and saving 100X

%- A significant shift is occuring in how data is being generated and used for learning: data is local and users collaborate to build learning models, which require scale and so motivating collaboration (local methods do not work well)

%- this shifts the paradigm from data-center like ``centralized'' repositories of data.

%- this also means repeated interactions in order to build models

%- how do we guarantee good privacy of local data but still get good ``utility'' learning performance?

% This requires strong privacy framework for such scenarios - this is the central topic of the paper?

Differential privacy (DP)~\cite{Calibrating_DP06} gives a principled
and rigorous framework for data privacy by giving guarantees on the
information leakage for individual data points from the output of an
algorithm. Algorithmically, a standard method is to randomize the
output of an algorithm to enable such privacy. Originally DP was
studied in the \emph{centralized} context, where the privacy from
queries to a \textit{trusted server} holding the data was the
objective \cite{Calibrating_DP06}.  However, in distributed applications,
such as federated learning~\cite{Kairouz-FL-survey19}, two significant aspects need to be
accommodated: {\sf (i)} data is held locally at clients and needs to be used for
computation with an \emph{untrusted} server; and {\sf (ii)} to build
good learning models, one might need repeated interactions
(\emph{e.g.,} through distributed gradient descent).

To accommodate privacy of \emph{locally} held data, a more appropriate
notion is that of local differential privacy (LDP)%has been proposed
~\cite{kasiviswanathan2011can,duchi2013local}. In the LDP
framework, each (distributed) client holding local data, individually
randomizes its interactions with the (untrusted) server.%
\footnote{The mechanisms used have a long history including Randomized
  Response \cite{warner1965-RR}, but were recently studied through the lens
  of local differential privacy (LDP).} Recently, such LDP mechanisms
have been deployed by companies such as
Google~\cite{erlingsson2014rappor}, Apple~\cite{greenberg2016apple},
and Microsoft~\cite{microsoft}.  However, LDP mechanisms suffer from
poor performance in comparison with the centralized DP mechanisms,
making their applicability
limited~\cite{duchi2013local,kasiviswanathan2011can,kairouz2016discrete}.
To address this, a new privacy framework using anonymization has been
proposed in the so-called \emph{shuffled model}
\cite{erlingsson2019amplification,cheu2019distributed,balle2019privacy},
where each client sends her (randomized) interaction
message to a secure shuffler that randomly permutes all the received
messages before forwarding them to the server.\footnote{Such a
  shuffling can be enabled through anonymization techniques
  \cite{Prochlo17,erlingsson2019amplification,ESA}.} This model enables significantly better
privacy-utility performance by amplifying LDP through this mechanism.

For the second aspect, where there are repeated interactions
(\emph{e.g.,} through distributed gradient descent), one needs privacy
composition~\cite{bassily2014private}. In other words, for a given privacy budget, how many such
interactions can we accommodate. Clearly, from an optimization
viewpoint, we might need to run these interactions longer for better
models, but these also result in privacy leakage. Though the privacy
leakage can be quantified using advanced composition theorems for DP
(\emph{e.g.,} \cite{dwork2010boosting,kairouz2015composition}), these
might be loose. To address this, Abadi \emph{et al.}
\cite{abadi2016deep} developed a ``moments accountant'' framework,
which enabled a much tighter composition. This is enabled by providing
the composition privacy guarantee in terms of Renyi Differential
Privacy~\cite{mironov2017renyi}, and then mapping it back to the DP
guarantee \cite{mironov2019r}. It is known \cite{abadi2016deep} that
the moments accountant provides a significant saving in the total
privacy budget in comparison with using the strong composition
theorems~\cite{dwork2010boosting,kairouz2015composition}.  Therefore,
developing the RDP privacy guarantee can enable stronger composition
privacy results.

This leads us to the central question addressed in this paper:
\begin{quote}
\emph{Can we develop strong RDP privacy guarantees for general local mechanisms in the shuffled privacy model?}
\end{quote}

The principal result in this paper is the
\emph{first} non-trivial RDP guarantee for general discrete local
randomization mechanisms in the shuffled privacy model. In particular,
given an \emph{arbitrary discrete} local mechanism with
$\epsilon_0$-LDP guarantee, we provide an RDP guarantee for the
shuffled model, as a function of $\epsilon_0$ and the number of users
$n$. 
%\textcolor{blue}{
For example, numerically, we save a factor of
  $8\times$ compared to the state-of-the-art approximate DP guarantee
  for shuffled models in \cite{feldman2020hiding}\footnote{We used the
    open source implementation for the privacy analysis in
    \cite{feldman2020hiding} available from
    \url{https://github.com/apple/ml-shuffling-amplification}.}
  combined with strong composition, with the number of iterations
  $T=10^{5}$, LDP parameter $\epsilon_0=0.5$, and number of clients
  $n=10^{6}$; see Figure~\ref{fig:Dp_com_1} in
  Section~\ref{sec:numerics} for other such examples
  regimes. Furthermore, characterizing the RDP of the shuffled model
  enables us to compute the RDP of shuffling with Poisson sub-sampling
  by using the results in~\cite{zhu2019poission}. We show numerically
  that this approach can lead to at least $10\times$ improvement in privacy
  guarantee. This is for $T=10^{4}$, $\epsilon_0=3$, and $n=10^{6}$. The
  comparison is with applying the strong composition
  theorem~\cite{kairouz2015composition} after getting the
  state-of-the-art approximate DP of the shuffled model given
  in~\cite{feldman2020hiding} with Poisson
  sub-sampling~\cite{li2012sampling} (see
  Figure~\ref{fig:Dp_com_sampling_1} in Section~\ref{sec:numerics} for
  more such regimes). This in turn implies that we can accommodate
  at least $10\times$ more interactions for the same privacy budget in these cases. %}
Moreover, our upper bounds also give several orders of magnitude improvement over
the simple RDP bound stated in
\cite[Remark~$1$]{erlingsson2019amplification} (also stated in
\eqref{eqn:rdp_erlin-1}) in several regimes (see, for example,
Figure~\ref{fig:5} in Section~\ref{sec:numerics}). We also develop a
lower bound for the RDP for the shuffled model and demonstrate
numerically that the gap is small for many parameter regimes of
interest.
  
%  
%For example, numerically, this approach can lead to
%  a $>10\times$ improvement in the state-of-the-art approximate DP
%  guarantee for shuffled models in \citep{feldman2020hiding}\footnote{We used the open source implementation for the privacy analysis in \citep{feldman2020hiding} available from \url{https://github.com/apple/ml-shuffling-amplification}.} combined with strong
%  composition (see, for example, Figure~\ref{fig:Dp_com_1} in
%  Section~\ref{sec:numerics}); which in turn implies that we can
%  accomodate $>10\times$ interactions for same privacy budget for
%  these cases.  

In order to obtain our upper bound result, we develop new analysis techniques
which could be of independent interest. In particular, we develop a
novel RDP analysis for a special structure neighboring datasets (see
Theorem~\ref{thm:RDP_same}), in which one of the datasets has all the data points to be the same (see the definition in \eqref{datasets-same}). 
A key technical result is then to relate
the RDP of general neighboring datasets to those with special
structure (see Theorem~\ref{Thm:reduce_special_case}).
\begin{itemize}
\item For the RDP analysis of neighboring datasets with the above-mentioned special
  structure, we first observe that the output distribution of the
  shuffling mechanism is the multinomial distribution. Using this
  observation, then we show that the ratio of the distributions of the
  mechanism on special structure neighboring datasets is a
  sub-Gaussian random variable (r.v.), and we can write the Renyi
  divergence of the shuffled mechanism in terms of the moments of this
  r.v. Bounding the moments of this r.v.\ then gives an upper bound on
  the RDP for the special neighboring datasets. See the proof-sketch of
  Theorem~\ref{thm:RDP_same} in
  Section~\ref{subsubsec:proof-sketch_rdp-same} and its complete proof
  in Section~\ref{sec:special_form}.
\item We next connect the above analysis to the RDP computation for
  general neighboring datasets $\calD=(d_1,\hdots,d_n)$ and
  $\calD'=(d_1,\hdots,d_{n-1},d'_n)$. To do so, a crucial observation
  is to write the output distribution $\bp_i$ of the local randomizer
  $\calR$ on the $i$'th client's data point $d_i$ (for any
  $i\in[n-1]$) as a mixture distribution
  $\bp_i=e^{-\eps_0}\bp'_n+(1-e^{-\eps_0})\tilde{\bp}_i$ for some
  $\tilde{\bp}_i$. So, the number of clients that sample according to
  $\bp'_n$ is concentrated around $e^{-\eps_0}n$. Therefore, if we
  restrict the dataset to these clients only, the resulting datasets
  will have the special structure, and the size of that dataset will
  be concentrated around $e^{-\eps_0}n$. Finally, in order to be able
  to reduce the problem to the special case, we remove the effect of
  the clients that do not sample according to $\bp_n'$ without
  affecting the Renyi divergence. See the proof-sketch of
  Theorem~\ref{Thm:reduce_special_case} in
  Section~\ref{subsubsec:proof-sketch_reduce-special-case} and its
  complete proof in Section~\ref{sec:proof_reduce_special_case}.
\end{itemize}

\subsection*{Related Work}

We give the most relevant work related to the paper and put our contributions in the context of these works.

\paragraph{Shuffled privacy model:} As mentioned, the shuffled
model of privacy has been of significant recent interest
\cite{erlingsson2019amplification,ghazi2019power,balle2019improved,ghazi2019scalable,balle2019differentially,cheu2019distributed,balle2019privacy,balle2020private}. However,
all the existing works in 
literature~\cite{erlingsson2019amplification,balle2019privacy,feldman2020hiding}
only characterize the approximate DP of the shuffled model -- among these, 
\cite{feldman2020hiding} is the state-of-the-art, but as we show in our experiments, 
it yields weaker results when combined with composition. 
To the best of our knowledge, there is no bound on RDP of the shuffled model
in the literature except for the one mentioned briefly in a remark
in~\cite[Remark~$1$]{erlingsson2019amplification} (which is obtained by the
standard conversion results from DP to RDP) and we state it in 
\eqref{eqn:rdp_erlin-1} for comparison. However, this
bound is loose (\emph{e.g.,} see Figure \ref{fig:5}) and not useful
for conversion to approximate DP (\emph{e.g.,} see Figures
\ref{fig:DP_1}, \ref{fig:DP_3}), as well as for composition
(\emph{e.g.,} see Figure \ref{fig:Dp_com_5}). Thus, our work makes
progress on this important open question of analyzing the RDP of the
shuffled model. Both~\cite{ESA} and~\cite{girgis2021shuffled-aistats}
used advanced composition to analyze privacy of shuffled models in
federated learning; our results could be adapted to enhance their
privacy guarantees.

\paragraph{Renyi differential privacy:} The work of Abadi
\emph{et al.} \cite{abadi2016deep} provided a methodology to get
stronger composition results. Inherently, this used Renyi divergence,
and was later formalized in \cite{mironov2017renyi} which defined
Renyi differential privacy (RDP). RDP presents a unified definition
for several kinds of privacy notions including pure differential
privacy ($\epsilon$-DP), approximate differential privacy
($\left(\epsilon,\delta\right)$-DP), and concentrated differential
privacy (CDP)~\cite{bun2016concentrated,CDP_DworkR16}. As mentioned earlier, RDP enables a
stronger result for composition, through the ``moment accounting''
idea.  Similarly, several
works~\cite{mironov2019r,wang2019subsampled,zhu2019poission} have
shown that analyzing the RDP of subsampled mechanisms provides a
tighter bound on the total privacy loss than the bound that can be
obtained using the standard strong composition theorems. However, to
the best of our knowledge, RDP analysis of the shuffled model and its
use for composition in the shuffled model has not been studied. In
this paper, we analyze the RDP of the shuffled model, where we can
bound the approximate DP of a sequence of shuffled models using the
transformation from RDP to approximate
DP~\cite{abadi2016deep,wang2019subsampled,canonne2020discrete,asoodeh2021three}. We
show that our RDP analysis provides a better bound on the total
privacy loss of composition than that can be obtained using the
standard strong composition theorems (see Section \ref{sec:numerics}).

\paragraph{Discrete mechanisms:} Many of the works in DP use 
specific randomization mechanisms, adding noise using the Laplace or
Gaussian distributions. However, in many situations the data is
inherently discrete (\emph{e.g.,} see \cite{canonne2020discrete} and
references therein) or compression causes it to be so (\emph{e.g.,}
see \cite{girgis2021shuffled-aistats,Kairouz_discrete_FL-SA21} and references therein). It is therefore
of interest to directly analyze privacy of discrete randomization
mechanisms. Such discrete mechanisms have been studied extensively in
shuffled models~\cite{ghazi2019power,balle2019differentially}, but for approximate DP. 
To the best of our knowledge, RDP
for general discrete mechanisms in the shuffled privacy framework is
new to our work.
%
%\textcolor{blue}{here add more references to the shuffled model
%  including multiple messages etc; you could also include secure
%  summation frameworks - all of these do not have RDP analyses,
%  correct?} 
%\textcolor{red}{check this - have they done it for Gaussian or
%  continuous valued cases - if so we should focus on the discrete
%  case}. \textcolor{blue}{add references to conversion from RDP to
%  approximate DP and say that our constribution is to get to RDP and
%  then use these results, references could include
%  \citep{canonne2020discrete} and the Calmon JSAIT 2021 paper etc}
%

\subsection*{Paper Organization} 

The paper is organized as follows. In
Section~\ref{sec:prelims_prob-form}, we give some preliminary
definitions and results from the literature and also formulate our
problem. In Section~\ref{sec:main_results}, we present our main
results (two upper bounds and one lower bound on RDP), along with a
proof sketch of the first upper bound. We also describe the two main
ingredients in its proof -- first is the reduction of computing RDP
for the arbitrary pairs of neighboring datasets to computing RDP for
the special pairs of neighboring datasets, and the second is computing
RDP for the special pairs of neighboring datasets.  In
Section~\ref{sec:numerics}, we present several numerical results to
demonstrate the advantages of these bounds of the
state-of-the-art. The rest of the sections are devoted to the full
proofs of our main results:
Section~\ref{sec:proof_reduce_special_case} shows the reduction of our
general problem to the special case; Section~\ref{sec:special_form}
proves the RDP for the special case; Section~\ref{sec:general_case}
proves both our upper bounds; and Section~\ref{sec:lower-bound} proves
our lower bound. In Section~\ref{sec:conclusion}, we conclude with a
short discussion. Omitted details from the proofs are provided in the
appendices.

%~\citep{mironov2017renyi} proposes renyi differential privacy (RDP) as a relaxation of differential privacy based on renyi divergence.  In addition, the RDP provides a tight bound on the total privacy budget for composing a sequence of differentially private mechanisms. The most important example of the composition is the differentially private stochastic gradient descent (SGD) algorithm by updating the model with noisy gradients in order to provide a DP guarantee for each iteration. ~\citep{abadi2016deep} presents the moments accountant technique (restated to RDP later~\citep{mironov2019r}) for training deep neural networks by adding Gaussian noise to the model at each iteration of the SGD algorithm. It is shown that the moments accountant provides a significant saving in the total privacy budgets in comparison with using the strong composition theorems~\citep{dwork2010boosting,kairouz2015composition}. 

\section{Preliminaries and Problem Formulation}\label{sec:prelims_prob-form}
We give different privacy definitions that we use in Section~\ref{subsec:priv-defn}, some existing results on RDP to DP conversion and RDP composition in Section~\ref{subsec:RDP-DP}, and give our problem formulation in Section~\ref{subsec:prob-form}.
\subsection{Privacy Definitions}\label{subsec:priv-defn}
In this subsection, we define different privacy notions that we will use in this paper: local differential privacy (LDP), central different privacy (DP), and Renyi differential privacy (RDP).
\begin{defn}[Local Differential Privacy - LDP~\cite{kasiviswanathan2011can}]~\label{defn:LDPdef}
For $\epsilon_0\geq0$, a randomized mechanism $\calR:\calX\to\calY$ is said to be $\eps_0$-local differentially private (in short, $\eps_{0}$-LDP), if for every pair of inputs $d,d'\in\calX$, we have 
\begin{equation}~\label{ldp-def}
\Pr[\calR(d)\in \calS] \leq e^{\eps_0}\Pr[\calR(d')\in \calS], \qquad \forall \calS\subset\calY.
\end{equation}
\end{defn}

Let $\calD=\lbrace d_1,\ldots,d_n\rbrace$ denote a dataset comprising $n$ points from $\calX$. We say that two datasets $\calD=\lbrace d_1,\ldots,d_n\rbrace$ and $\calD^{\prime}=\lbrace d_1^{\prime},\ldots,d_n^{\prime}\rbrace$ are neighboring (and denoted by $\calD\sim\calD'$) if they differ in one data point, i.e., there exists an $i\in[n]$ such that $d_i\neq d'_i$ and for every $j\in[n],j\neq i$, we have $d_j=d'_j$.
\begin{defn}[Central Differential Privacy - DP \cite{Calibrating_DP06,dwork2014algorithmic}]\label{defn:central-DP}
For $\epsilon,\delta\geq0$, a randomized mechanism $\calM:\calX^n\to\calY$ is said to be $(\epsilon,\delta)$-differentially private (in short, $(\epsilon,\delta)$-DP), if for all neighboring datasets $\calD,\calD^{\prime}\in\calX^{n}$ and every subset $\calS\subseteq \calY$, we have
\begin{equation}~\label{dp_def}
\Pr\left[\calM(\calD)\in\calS\right]\leq e^{\eps_0}\Pr\left[\calM(\calD^{\prime})\in\calS\right]+\delta.
\end{equation}
\end{defn}

\begin{defn}[$(\lambda,\epsilon)$-RDP (Renyi Differential Privacy)~\cite{mironov2017renyi}]\label{defn:RDP}
A randomized mechanism $\calM:\calX^n\to\calY$ is said to have $\epsilon$-Renyi differential privacy of order $\lambda\in(1,\infty)$ (in short, $(\lambda,\epsilon(\lambda))$-RDP), if for any neighboring datasets $\calD$, $\calD'\in\calX^n$, the Renyi divergence between $\calM(\calD)$ and $\calM(\calD')$ is upper-bounded by $\eps(\lambda)$, i.e.,
\begin{equation}
D_{\lambda}(\calM(\calD)||\calM(\calD'))=\frac{1}{\lambda-1}\log\left(\mathbb{E}_{\theta\sim\calM(\calD')}\left[\left(\frac{\calM(\calD)(\theta)}{\calM(\calD')(\theta)}\right)^{\lambda}\right]\right)\leq \epsilon(\lambda),
\end{equation}
where $\calM(\calD)(\theta)$ denotes the probability that $\calM$ on input $\calD$ generates the output $\theta$. For convenience, instead of $\eps(\lambda)$ being an upper bound, we define it as $\eps(\lambda)=\sup_{\calD\sim\calD'}D_{\lambda}(\calM(\calD)||\calM(\calD'))$.
\end{defn}
Our objective in this paper is to characterize the Renyi differential privacy of a shuffling mechanism $\calM$ (see Section~\ref{subsec:prob-form} for details) for different values of order $\lambda$. 
%Thus, it is convenient to write the parameter $\epsilon$ of the mechanism $\calM$ as a function of $\lambda$, i.e., a mechanism $\calM$ is $(\lambda,\epsilon(\lambda))$-RDP, where $\epsilon(\lambda)$ is a function of the order $\lambda$. For convenience, instead of $\eps(\lambda)$ being an upper bound, we will define it as
%\begin{align}
%\end{align}

\subsection{RDP to DP Conversion and RDP Composition}\label{subsec:RDP-DP}
In this subsection, we state some preliminary results from literature
that we will use. Though our main objective in this paper is to derive
RDP guarantees of a shuffling mechanism, we also give the
central privacy guarantees of that mechanism. For that purpose, we use
the following result for converting the RDP guarantees of a mechanism
to its DP guarantees. To the best of our knowledge, this result gives
the best conversion.\footnote{An optimal conversion from RDP to
  approximate DP was studied in \cite{asoodeh2021three}; however, we observed numerically, that it does not give better performance as compared to the conversion presented above.} 
  %, numerical stability issues could cause some problems. An open question is how to use this optimal  conversion in a better manner.} 
\begin{lemma}[From RDP to DP~\cite{canonne2020discrete,Borja_HypTest-RDP20}]\label{lem:RDP_DP} 
Suppose for any $\lambda>1$, a mechanism $\calM$ is $\left(\lambda,\epsilon\left(\lambda\right)\right)$-RDP. Then, the mechanism $\calM$ is $\left(\epsilon,\delta\right)$-DP, where $\epsilon,\delta$ are define below: 
\begin{equation*}
\begin{aligned}
\text{For a given }\delta\in(0,1) &: & \epsilon &= \min_{\lambda} \epsilon\left(\lambda\right)+\frac{\log\left(1/\delta\right)+\left(\lambda-1\right)\log\left(1-1/\lambda\right)-\log\left(\lambda\right)}{\lambda-1}\\
\text{For a given }\epsilon>0 &: & \delta &= \min_{\lambda}\frac{\exp\left(\left(\lambda-1\right)\left(\epsilon\left(\lambda\right)-\epsilon\right)\right)}{\lambda-1}\left(1-\frac{1}{\lambda}\right)^{\lambda}. 
\end{aligned}
\end{equation*}
\end{lemma}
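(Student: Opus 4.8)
The plan is to reduce everything to a single inequality relating $\epsilon$ and $\delta$ at a fixed order $\lambda>1$, and then optimize over $\lambda$. Fix neighboring datasets $\calD\sim\calD'$ and write $P=\calM(\calD)$, $Q=\calM(\calD')$ for the output distributions. Since $\calM$ is $(\lambda,\epsilon(\lambda))$-RDP with $\epsilon(\lambda)<\infty$, the Renyi divergence $D_\lambda(P\|Q)$ is finite, which forces $P\ll Q$; hence the likelihood ratio $Z=P(\theta)/Q(\theta)$ with $\theta\sim Q$ is well defined, satisfies $\bbE_Q[Z]=1$, and by Definition~\ref{defn:RDP} obeys the moment bound $\bbE_Q[Z^{\lambda}]\le e^{(\lambda-1)\epsilon(\lambda)}$. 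Because $\epsilon(\lambda)=\sup_{\calD\sim\calD'}D_\lambda(\calM(\calD)\|\calM(\calD'))$, this bound is uniform over all neighboring pairs and over both orderings of the pair, so nothing is lost by passing to the worst case.

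First I would recall the elementary fact that for every $\calS\subseteq\calY$,
\[
\Pr[P\in\calS]-e^{\epsilon}\Pr[Q\in\calS]\ \le\ \sum_{\theta}\bigl(P(\theta)-e^{\epsilon}Q(\theta)\bigr)_{+}\ =\ \bbE_{\theta\sim Q}\bigl[(Z-e^{\epsilon})_{+}\bigr],
\]
the maximum over $\calS$ being attained at $\calS^{\star}=\{\theta:\ P(\theta)>e^{\epsilon}Q(\theta)\}$ (terms with $Q(\theta)=0$ contribute nothing since $P\ll Q$). So it suffices to show $\bbE_{Q}[(Z-e^{\epsilon})_{+}]\le\delta$ for the claimed $\delta$. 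The key step is a sharp pointwise majorization: for every $z\ge0$,
\[
(z-e^{\epsilon})_{+}\ \le\ \frac{(\lambda-1)^{\lambda-1}}{\lambda^{\lambda}}\,e^{-(\lambda-1)\epsilon}\,z^{\lambda},
\]
which one obtains by maximizing the one-variable function $z\mapsto(z-e^{\epsilon})_{+}z^{-\lambda}$ over $z>e^{\epsilon}$; its maximizer is $z=\tfrac{\lambda}{\lambda-1}e^{\epsilon}$ and the maximum value is exactly the constant displayed. Taking $\bbE_Q[\cdot]$ and invoking the RDP moment bound,
\[
\bbE_{Q}\bigl[(Z-e^{\epsilon})_{+}\bigr]\ \le\ \frac{(\lambda-1)^{\lambda-1}}{\lambda^{\lambda}}\,e^{-(\lambda-1)\epsilon}\,\bbE_{Q}[Z^{\lambda}]\ \le\ \frac{(\lambda-1)^{\lambda-1}}{\lambda^{\lambda}}\,e^{(\lambda-1)(\epsilon(\lambda)-\epsilon)}\ =\ \frac{e^{(\lambda-1)(\epsilon(\lambda)-\epsilon)}}{\lambda-1}\Bigl(1-\tfrac1\lambda\Bigr)^{\lambda},
\]
using $\tfrac{(\lambda-1)^{\lambda-1}}{\lambda^{\lambda}}=\tfrac{1}{\lambda-1}\bigl(\tfrac{\lambda-1}{\lambda}\bigr)^{\lambda}$ for the last equality. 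Since this holds for every $\lambda>1$, the ``for a given $\epsilon$'' formula follows by taking the minimum over $\lambda$. For the ``for a given $\delta$'' formula I would invert the same inequality: solving $\delta\ge\tfrac{(\lambda-1)^{\lambda-1}}{\lambda^{\lambda}}e^{(\lambda-1)(\epsilon(\lambda)-\epsilon)}$ for $\epsilon$ yields $\epsilon=\epsilon(\lambda)+\tfrac{\log(1/\delta)+(\lambda-1)\log(1-1/\lambda)-\log\lambda}{\lambda-1}$ after the identity $(\lambda-1)\log(\lambda-1)-\lambda\log\lambda=(\lambda-1)\log(1-1/\lambda)-\log\lambda$, and one again minimizes over $\lambda$.

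The only genuinely delicate point is getting the constant tight. The naive route bounds $\Pr_{P}[\,Z>e^{\epsilon}\,]$ by a Chernoff/Markov argument and produces only $\delta=e^{(\lambda-1)(\epsilon(\lambda)-\epsilon)}$, losing the factor $\tfrac{1}{\lambda-1}(1-1/\lambda)^{\lambda}$ (which is of order $\tfrac{1}{e\lambda}$ for large $\lambda$ and is precisely what makes the bound worthwhile once it is used with composition). Replacing the indicator-based tail bound by the exact pointwise majorant of $(z-e^{\epsilon})_{+}$ by a multiple of $z^{\lambda}$ is what recovers this factor; the remaining ingredients --- absolute continuity $P\ll Q$, the worst-case set $\calS^{\star}$, and the logarithmic rearrangement --- are routine.
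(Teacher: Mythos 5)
The paper does not prove Lemma~\ref{lem:RDP_DP}; it is stated as a preliminary imported from \cite{canonne2020discrete,Borja_HypTest-RDP20}, so there is no in-paper proof to compare against. Your argument is correct and is essentially the standard derivation in those references: reducing $(\epsilon,\delta)$-DP to bounding $\mathbb{E}_{Q}[(Z-e^{\epsilon})_{+}]$, majorizing $(z-e^{\epsilon})_{+}$ by $\frac{(\lambda-1)^{\lambda-1}}{\lambda^{\lambda}}e^{-(\lambda-1)\epsilon}z^{\lambda}$ (the optimizer $z=\frac{\lambda}{\lambda-1}e^{\epsilon}$ and the resulting constant check out), applying the RDP moment bound, and inverting for the fixed-$\delta$ form.
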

As mentioned in Section~\ref{sec:introduction}, the main strength of RDP in comparison to other privacy notions comes from composition. The following result states that if we adaptively compose two RDP mechanisms with the same order, their privacy parameters add up in the resulting mechanism.
\begin{lemma}[Adaptive composition of RDP~{\cite[Proposition~1]{mironov2017renyi}}]\label{lemm:compostion_rdp} 
For any $\lambda>1$, let $\calM_1:\calX\to \calY_1$ be a $(\lambda,\epsilon_1(\lambda))$-RDP mechanism and $\calM_2:\calY_1\times \calX\to \calY$ be a $(\lambda,\epsilon_2(\lambda))$-RDP mechanism. Then, the mechanism defined by $(\calM_1,\calM_2)$ satisfies $(\lambda,\epsilon_1(\lambda)+\epsilon_2(\lambda))$-RDP. 
\end{lemma}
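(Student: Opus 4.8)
The plan is to reduce the claim to a pointwise factorization of likelihood ratios combined with the definition of RDP applied to each mechanism in turn. Fix neighboring datasets $\calD\sim\calD'$ and write a generic output of the composed mechanism as a pair $(y_1,y)\in\calY_1\times\calY$, where $y_1$ is produced by $\calM_1$ and $y$ by $\calM_2$ given $y_1$. Since $\calM_2$ is run on the output of $\calM_1$ together with the dataset, the joint law factors as $\Pr[(\calM_1,\calM_2)(\calD)=(y_1,y)]=\calM_1(\calD)(y_1)\cdot\calM_2(y_1,\calD)(y)$, and likewise for $\calD'$; hence the likelihood ratio of the composed mechanism is the product of the two per-mechanism likelihood ratios $\frac{\calM_1(\calD)(y_1)}{\calM_1(\calD')(y_1)}$ and $\frac{\calM_2(y_1,\calD)(y)}{\calM_2(y_1,\calD')(y)}$.

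Next I would substitute this factorization into the moment form of the R\'enyi divergence, $\exp\big((\lambda-1)D_\lambda((\calM_1,\calM_2)(\calD)\,\|\,(\calM_1,\calM_2)(\calD'))\big)=\bbE_{(y_1,y)\sim(\calM_1,\calM_2)(\calD')}\big[(\text{ratio})^\lambda\big]$, and use the tower property to rewrite this as an outer expectation over $y_1\sim\calM_1(\calD')$ of $\big(\calM_1(\calD)(y_1)/\calM_1(\calD')(y_1)\big)^\lambda$ times an inner expectation over $y\sim\calM_2(y_1,\calD')$ of $\big(\calM_2(y_1,\calD)(y)/\calM_2(y_1,\calD')(y)\big)^\lambda$. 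The inner expectation is exactly $\exp\big((\lambda-1)D_\lambda(\calM_2(y_1,\calD)\,\|\,\calM_2(y_1,\calD'))\big)$, which is at most $\exp((\lambda-1)\epsilon_2(\lambda))$ by the RDP guarantee of $\calM_2$, since the two second-stage inputs $(y_1,\calD)$ and $(y_1,\calD')$ agree on their first coordinate and their datasets are neighboring. Pulling this constant out of the outer expectation and then applying the RDP guarantee of $\calM_1$ to the remaining outer expectation yields the bound $\exp\big((\lambda-1)(\epsilon_1(\lambda)+\epsilon_2(\lambda))\big)$; taking $\frac{1}{\lambda-1}\log(\cdot)$ of both sides gives $D_\lambda\leq\epsilon_1(\lambda)+\epsilon_2(\lambda)$, as desired.

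The one delicate point — and the step I would be most careful about — is the adaptivity: the input to $\calM_2$, and hence its output law, depends on the random draw $y_1$, so one cannot simply multiply two fixed divergences. The resolution is that RDP is a worst-case guarantee over all neighboring inputs, so the bound $\exp((\lambda-1)\epsilon_2(\lambda))$ on the inner expectation holds \emph{uniformly} in $y_1$ and may therefore be factored out before the contribution of $\calM_1$ is controlled. (If one prefers to avoid densities, the identical argument goes through with $\calM_1(\calD')$ and $\calM_2(y_1,\calD')$ as base measures and Radon--Nikodym derivatives in place of ratios; since the mechanisms here are discrete, the elementary sum/expectation version suffices.)
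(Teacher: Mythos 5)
Your proof is correct and is precisely the standard argument for \cite[Proposition~1]{mironov2017renyi}, which the paper cites without reproving: factorize the joint likelihood ratio, apply the tower property, bound the inner (conditional) moment uniformly in $y_1$ using the RDP guarantee of $\calM_2$, and then bound the outer moment using the RDP guarantee of $\calM_1$. You also correctly identify and resolve the only delicate point, namely that adaptivity is handled because the RDP bound on $\calM_2$ holds for every fixed auxiliary input $y_1$, so the inner expectation can be bounded by a constant before the outer expectation is evaluated.
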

By combining the above two lemmas, we can give the DP privacy guarantee of a sequence of $T$ adaptive RDP mechanisms. Let $\calM_1\left(\calI_1,\calD\right),\ldots,\calM_T\left(\calI_T,\calD\right)$ be a sequence of $T$ adaptive mechanisms, where $\calI_t$ denotes the auxiliary input to the $t$'th mechanism, which may depend on the previous mechanisms' outputs and the auxiliary inputs $\{(\calI_i,\calM_i(\calI_i,\calD)):i<t\}$, and the $t$'th mechanism $\calM_t$ is $\left(\lambda,\epsilon_t\left(\lambda\right)\right)$-RDP. Thus, from Lemma~\ref{lemm:compostion_rdp}, we get that the mechanism defined by $\left(\calM_1,\ldots,M_T\right)$ is $\left(\lambda,\sum_{t=1}^{T}\epsilon_t\left(\lambda\right)\right)$-RDP, and then from Lemma~\ref{lem:RDP_DP} that the mechanism $\left(\calM_1,\ldots,M_T\right)$ satisfies $\left(\epsilon,\delta\right)$-DP, where $\epsilon,\delta$ are given below: 
\begin{equation*}
\begin{aligned}
\text{For a given }\delta\in(0,1) &: & \epsilon &= \min_{\lambda} \sum_{t=1}^{T}\epsilon_t\left(\lambda\right)+\frac{\log\left(1/\delta\right)+\left(\lambda-1\right)\log\left(1-1/\lambda\right)-\log\left(\lambda\right)}{\lambda-1}\\
\text{For a given }\epsilon>0 &: & \delta &= \min_{\lambda}\frac{\exp\left(\left(\lambda-1\right)\left(\sum_{t=1}^{T}\epsilon_t\left(\lambda\right)-\epsilon\right)\right)}{\lambda-1}\left(1-\frac{1}{\lambda}\right)^{\lambda}. 
\end{aligned}
\end{equation*} 

\subsection{Problem Formulation}\label{subsec:prob-form}
\begin{figure}[t]
\centering \includegraphics[scale=0.5]{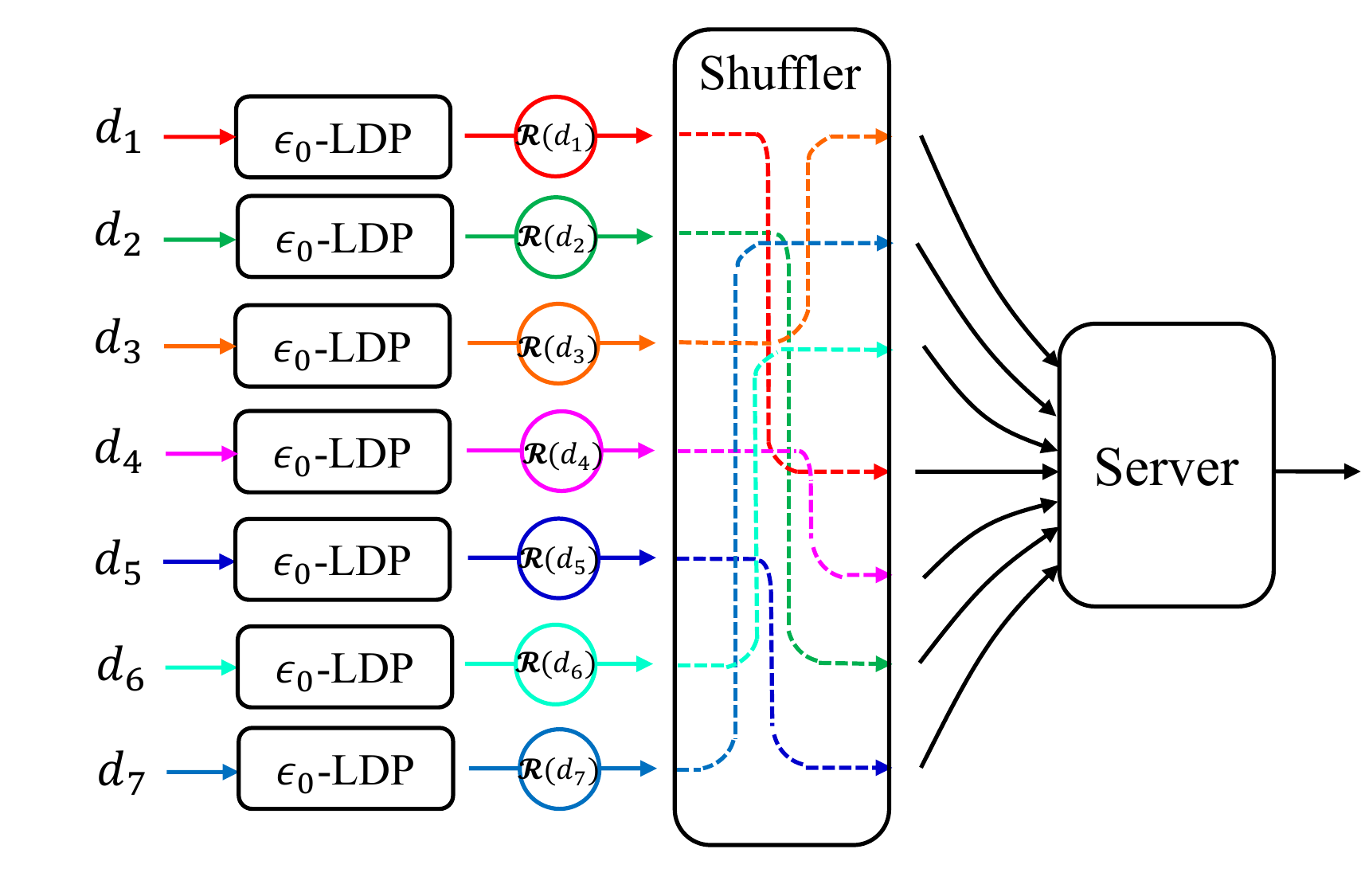}
\caption{In the shuffled model, each client sends the output of the local randomizer $\calR$ to a secure shuffler that randomly permutes the clients' messages before passing them to the server.}
\label{fig:setup-figure}
\end{figure}
Let $\calD=\left(d_1,\ldots,d_n\right)$ be a dataset consisting of $n$ data points, where $d_i$ is a data point at the $i$'th client that takes values from a set $\calX$. Let $\calR:\calX\to\calY$ be a local randomizer that satisfies the following two properties:
\begin{enumerate}
\item $\calR$ is an $\epsilon_0$-LDP mechanism (see Definition~\ref{defn:LDPdef}). 
\item  The range of $\calR$ is a discrete set, i.e., the output of $\calR$ takes values in a discrete set $\left[B\right]=\left\{1,\ldots,B\right\}$ for some $B\in\mathbb{N}:=\{1,2,3,\hdots\}$. Here, $[B]$ could be the whole of $\mathbb{N}$. %\footnote{ This range includes the value $B=\infty$.}.
\end{enumerate}
Client $i$ applies $\calR$ on $d_i$ (each client uses independent randomness for computing $\calR(d_i)$) and sends $\calR(d_i)$ to the shuffler, who shuffles the received $n$ inputs and outputs the result; see Figure~\ref{fig:setup-figure}. To formalize this, let $\calH_n:\calY^n\to\calY^n$ denote the shuffling operation that takes $n$ inputs and outputs their uniformly random permutation. We define the shuffling mechanism as
%Consider a shuffling mechanism $\calM$ that receives $\(\calR(d_1),\hdots,\calR(d_n)\)$ as its input and outputs their random permutation. 
%Here, $\calR(d_i)$ for different users $i$ apply independent randomness.
%Formally, we define the shuffling mechanism by 
\begin{equation}\label{shuffle-mech}
\calM\left(\calD\right) := \calH_{n}\left(\calR\left(d_1\right),\ldots,\calR\left(d_n\right)\right).
\end{equation}
%where $\calH_{n}$ denotes the shuffling operation over $n$ inputs, which outputs a uniformly random permutation of the $n$ input elements. 

Our goal is to characterize the Renyi differential privacy of $\calM$. 

Since the output of $\calM$ is a random permutation of the $n$ outputs of $\calR$, the server cannot associate the $n$ messages to the clients; and the only information it can use from the messages is the histogram, i.e., the number of messages that give any particular output in $[B]$. 
We define a set $\calA_{B}^{n}$ as follows
\begin{equation}\label{histogram-set}
\calA_B^{n}=\bigg\lbrace \bh=\left(h_1,\ldots,h_B\right):\sum_{j=1}^{B}h_j=n\bigg\rbrace,
\end{equation}
to denote the set of all possible histograms of the output of the shuffler with $n$ inputs. Therefore, we can assume, without loss of generality (w.l.o.g.), that the output of $\calM$ is a distribution over $\calA_B^{n}$ for input dataset $\calD$ of $n$ data points.

The notation used throughout the paper is given in Table~\ref{Tab_summary}.

%\cblue{If we want to incorporate sampling, explain that here...}

%\subsection{Notation}\label{subsec:notation}
%Let $\left[B\right]$ define the set $\lbrace 1,\ldots,B\rbrace$ of integers. $\epsilon_0$ defines the LDP parameter (see Definition~\ref{defn:LDPdef}), and $\left(\epsilon,\delta\right)$ defines the privacy loss parameters of approximate DP (see Definition~\ref{defn:central-DP}). We use $\bp=\left(p_1,\ldots,p_B\right)$ to denote a discrete distribution over the set $\left[B\right]$. We use $\calP=\lbrace \bp_1,\ldots,\bp_n \rbrace$ to denote a set of $n$ discrete distributions (For abuse of notations the distributions $\bp_1,\ldots,\bp_n$ can be identical). For given $\calP$, we define $\calP_{-i}=\calP\setminus \lbrace\bp_i\rbrace$ which is a set of $n-1$ discrete distributions. We define $F\left(\calP\right)$ as a distribution over the histograms of $B$ bins and $n$ (non-identical) trials (see~\eqref{general-distribution} in Section~\ref{sec:proof_reduce_special_case}). For given $\calP$, a subset $\calC\subset \left[n\right]$, and a discrete distribution $\bp'$, we define $\calP_{\calC}=\lbrace \tilde{\bp}_1,\ldots,\tilde{\bp}_n\rbrace$, where $\tilde{\bp}_i=\bp'$ for $i\in\calC$ and $\tilde{\bp}_i=\bp_i$ for $i\in\left[n\right]\setminus\calC$ (see~\eqref{eq:defn_P_C} in Section~\ref{sec:proof_reduce_special_case}). 

%notation itself,    what it represents/means,    where it is formally 
%defined first,     which sections/theorems/proofs it is used in (or 
%atleast what it is central to)
%
\begin{table}[t!]
\centering
\begin{tabular}{ | c | c |   }
\hline \hline
Symbol & Description \\ \hline \hline
$[B]$ & $\{1,2,\hdots,B\}$ for any $B\in\bbN$ \\ \hline
$\eps_0$ & LDP parameter (see Definition~\ref{defn:LDPdef}) \\ \hline
$(\eps,\delta)$ & Approximate DP parameters (see Definition~\ref{defn:central-DP}) \\ \hline
$(\lambda,\eps(\lambda))$ & RDP parameters (see Definition~\ref{defn:RDP}) \\ \hline
$\calR:\calX\to[B]$ & A discrete $\eps_0$-LDP mechanism at clients for mapping \\
&  their data points to elements in $[B]$ \\ \hline
$\bp=(p_1,\hdots,p_B)$ & The output distribution of $\calR$ when the data point is $d$ \\ \hline
$\bp'=(p'_1,\hdots,p'_B)$ & The output distribution of $\calR$ when the data point is $d'$ \\ \hline
$\bp_i=(p_{i1},\hdots,p_{iB})$ for $i\in[n]$ & The output distribution of $\calR$ when the data point is $d_i$ \\ \hline
$\bp'_n=(p'_{n1},\hdots,p'_{nB})$ & The output distribution of $\calR$ when the data point is $d'_n$ \\ \hline
$\calP$ & A collection of $n$ distribution $\{\bp_1,\hdots,\bp_n\}$ \\ \hline
$\calP_{-i}$ & A collection of $(n-1)$ distribution $\calP\setminus\{\bp_i\}$ \\ \hline
& A collection of $n$ distribution, where clients in the set $\calC$ map \\ 
$\calP_{\calC}$, where $\calC\subseteq[n-1]$ & according to $\bp_n'$, clients in the set $[n-1]\setminus\calC$ map according \\ 
& to $\tilde{\bp}_i$ (see \eqref{eq:mixture-dist}), and client $n$ maps according to $\bp_n$ (see \eqref{eq:defn_P_C}-\eqref{eq:defn_hatP}) \\ \hline
$\calA_B^n$ & A set of histograms with $B$ bins and $n$ elements (see \eqref{histogram-set}) \\ \hline
$\bh$ & $\bh=(h_1,\hdots,h_B)$ with $\sum_{i=1}^Bh_i=n$ is an element of $\calA_B^n$  \\ \hline
$\calM(\calD)$ & The shuffled mechanism $\calM$ on the dataset $\calD\in\calX^n$; \\
& $\calM(\calD)$ is a distribution over $\calA_B^n$ (see \eqref{shuffle-mech}) \\ \hline
$F(\calP)$ & Distribution over $\calA_B^n$ when client $i$ maps its data point \\
& according to the distribution $\bp_i$ (see \eqref{general-distribution}) \\ \hline \hline
\end{tabular}
\caption{Notation used throughout the paper}\label{Tab_summary}
\end{table}

\section{Main Results}\label{sec:main_results}

%Our objective is to characterize the Renyi differential privacy of the mechanism $\calM$. 

This section is dedicated to presenting the main results of this paper, along with their implications and comparisons with related work. We state two different upper bounds on the RDP of the shuffle model in Section~\ref{subsec:upper-bounds} and a lower bound in Section~\ref{subsec:lower-bound}. We present a detailed proof-sketch of our first upper bound in Section~\ref{subsec:proof-sketch_general-case}, along with all the main ingredients required to prove the upper bound.

\subsection{Upper Bounds}\label{subsec:upper-bounds}
In this subsection, we will provide two upper bounds.
\begin{theorem}[Upper Bound 1]\label{thm:general_case} 
For any $n\in\mathbb{N}$, $\epsilon_0\geq 0$,and any integer $\lambda\geq 2$, the RDP of the shuffle model is upper-bounded by
\begin{equation}\label{eqn:1st_bound}
%D_{\lambda}\left(\calM\left(\calD\right)||\calM\left(\calD'\right)\right) 
\epsilon(\lambda) \leq \frac{1}{\lambda-1}\log\left(1+\binom{\lambda}{2} \frac{\left(e^{\epsilon_0}-1\right)^2}{\overline{n}e^{\epsilon_0}}+\sum_{i=3}^{\lambda} \binom{\lambda}{i} i\Gamma\left(i/2\right) \left(\frac{\left(e^{2\epsilon_0}-1\right)^2}{2e^{2\epsilon_0}\overline{n}}\right)^{i/2}+e^{\epsilon_0\lambda-\frac{n-1}{8e^{\epsilon_0}}}\right),
\end{equation}
where $\overline{n}=\floor{\frac{n-1}{2e^{\epsilon_0}}}+1$ and $\Gamma\left(z\right)=\int_{0}^{\infty}x^{z-1}e^{-x}dx$ is the gamma function. 
\end{theorem}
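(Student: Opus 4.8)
The plan is to reduce, through a clone/blanket decomposition and high‑probability conditioning, to a clean ``add‑one‑sample'' comparison between two multinomial‑type laws, and then bound the Renyi moment by expanding a binomial. Since neighbouring datasets differ in one coordinate, assume $\calD,\calD'$ differ only at client $n$, so $\calR(d_n)\sim\bp_n$, $\calR(d_n')\sim\bp_n'$, with $e^{-\epsilon_0}\le p_{nj}/p'_{nj}\le e^{\epsilon_0}$ for all $j$ (bins outside the common support are discarded).

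First I would use LDP to write, for each $i\in[n-1]$, $\bp_i=\tfrac1{e^{\epsilon_0}}\bp_n'+(1-\tfrac1{e^{\epsilon_0}})\btp_i$ with $\btp_i$ the normalized residual (valid since $\bp_i\ge e^{-\epsilon_0}\bp_n'$). Thus each client $i\ne n$ is, independently, a \emph{clone} of $\bp_n'$ with probability $1/e^{\epsilon_0}$ and a \emph{blanket} sample from $\btp_i$ otherwise; let $C\sim\mathrm{Bin}(n-1,1/e^{\epsilon_0})$ count the clones and $E=\{C\ge\overline{n}-1\}$. Since $\overline{n}-1\le\tfrac{n-1}{2e^{\epsilon_0}}=\tfrac12\bbE[C]$, a multiplicative Chernoff bound gives $\Pr[E^c]\le e^{-(n-1)/(8e^{\epsilon_0})}$. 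I would then split on $E$ using that $(P,Q)\mapsto e^{(\lambda-1)D_\lambda(P\|Q)}=\sum_\theta P(\theta)^\lambda Q(\theta)^{1-\lambda}$ is jointly convex for $\lambda\ge1$ (a perspective function): since $C$ is independent of client $n$, $\calM(\calD)=\Pr[E]\calM^E(\calD)+\Pr[E^c]\calM^{E^c}(\calD)$ and likewise for $\calD'$ with the \emph{same} weights, so $e^{(\lambda-1)\epsilon(\lambda)}\le\Pr[E]\,e^{(\lambda-1)D_\lambda(\calM^E(\calD)\|\calM^E(\calD'))}+\Pr[E^c]\,e^{(\lambda-1)D_\lambda(\calM^{E^c}(\calD)\|\calM^{E^c}(\calD'))}$. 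On $E^c$ the two mechanisms still differ only through one $\epsilon_0$‑LDP coordinate and shuffling is post‑processing, so their likelihood ratio is pointwise $\le e^{\epsilon_0}$, making the second term $\le e^{\epsilon_0\lambda-(n-1)/(8e^{\epsilon_0})}$ --- the last summand in \eqref{eqn:1st_bound}.

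For the $E$‑term, conditioning also on which clients are clones, the blanket samples together with all but $\overline{n}-1$ of the clone samples form noise common to $\calM^E(\calD)$ and $\calM^E(\calD')$ and independent of the rest; adding such common noise does not increase Renyi divergence (product additivity together with the data‑processing inequality, the histogram being a function of its parts), and the convexity above replaces the mixture over clone‑sets by the worst one. Hence $D_\lambda(\calM^E(\calD)\|\calM^E(\calD'))\le D_\lambda(P^\ast\|Q^\ast)$, where $Q^\ast=\mathrm{Mult}(\overline{n},\bp_n')$ and $P^\ast$ is the histogram law of $\overline{n}-1$ i.i.d.\ draws from $\bp_n'$ together with one draw from $\bp_n$. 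Computing the multinomial p.m.f.\ one finds the clean identity $P^\ast(\bh)/Q^\ast(\bh)=\tfrac1{\overline{n}}\sum_j(p_{nj}/p'_{nj})h_j$. Writing $W(\bh):=P^\ast(\bh)/Q^\ast(\bh)-1=\tfrac1{\overline{n}}\sum_{t=1}^{\overline{n}}Y_t$, where $Y_t:=p_{n,X_t}/p'_{n,X_t}-1$ with $X_1,\dots,X_{\overline{n}}$ i.i.d.\ $\sim\bp_n'$, we have $\bbE[Y_t]=0$ and $Y_t\in[e^{-\epsilon_0}-1,\,e^{\epsilon_0}-1]$, hence $\bbE_{Q^\ast}[W]=0$ and $|W|\le e^{\epsilon_0}-1$; and --- using that $\lambda$ is an integer --- the binomial theorem gives $e^{(\lambda-1)D_\lambda(P^\ast\|Q^\ast)}=\bbE_{Q^\ast}[(1+W)^\lambda]=1+\binom{\lambda}{2}\bbE[W^2]+\sum_{i=3}^\lambda\binom{\lambda}{i}\bbE[W^i]$.

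It remains to bound the moments of the i.i.d.\ average $W$. For the second moment, $\bbE[W^2]=\tfrac1{\overline{n}}\mathrm{Var}(Y_1)=\tfrac1{\overline{n}}\chi^2(\bp_n\|\bp_n')$, and a short optimization shows that for any pair with $e^{-\epsilon_0}\le p_{nj}/p'_{nj}\le e^{\epsilon_0}$ for all $j$ the chi‑square divergence is at most $(e^{\epsilon_0}-1)^2/e^{\epsilon_0}$ (extremal at the two‑point randomized response), so $\bbE[W^2]\le\frac{(e^{\epsilon_0}-1)^2}{\overline{n}e^{\epsilon_0}}$. For $i\ge3$, since $W$ is $\overline{n}^{-1}$ times a sum of $\overline{n}$ i.i.d.\ variables lying in an interval of length $e^{\epsilon_0}-e^{-\epsilon_0}$, Hoeffding's lemma makes $W$ sub‑Gaussian with variance proxy $\sigma^2=\frac{(e^{2\epsilon_0}-1)^2}{4e^{2\epsilon_0}\overline{n}}$; integrating $\Pr_{Q^\ast}[|W|>t]\le 2e^{-t^2/(2\sigma^2)}$ against $i\,t^{i-1}$ gives $\bbE[W^i]\le\bbE|W|^i\le i\Gamma(i/2)(2\sigma^2)^{i/2}=i\Gamma(i/2)(\tfrac{(e^{2\epsilon_0}-1)^2}{2e^{2\epsilon_0}\overline{n}})^{i/2}$. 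Substituting these into the expansion, then into the $E/E^c$ split, and taking $\tfrac1{\lambda-1}\log$ yields exactly \eqref{eqn:1st_bound}. I expect the main obstacle to be the chain of reductions rather than any single computation: one must set things up so that each step (clone decomposition, conditioning on $E$, discarding common noise, passing to the worst clone‑set) is a genuine mixture/post‑processing of the shuffled mechanism, so that $D_\lambda$ is controlled at every stage, and one must isolate the bad event cleanly enough that it contributes precisely the stated $e^{\epsilon_0\lambda-(n-1)/(8e^{\epsilon_0})}$ tail; once the comparison $D_\lambda(P^\ast\|Q^\ast)$ is in hand, the multinomial identity and the moment bounds are routine.
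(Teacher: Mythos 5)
Your proposal is correct and follows essentially the same route as the paper's proof: the same mixture decomposition $\bp_i=\tfrac{1}{e^{\epsilon_0}}\bp_n'+(1-\tfrac{1}{e^{\epsilon_0}})\tilde\bp_i$ with a $\mathrm{Bin}(n-1,e^{-\epsilon_0})$ clone count, joint convexity of the Hellinger integral to pass to the mixture components, a Chernoff split with $\gamma=1/2$ producing the $e^{\epsilon_0\lambda-(n-1)/(8e^{\epsilon_0})}$ tail, reduction to the ``$\overline{n}-1$ copies of $\bp_n'$ plus one draw from $\bp_n$'' multinomial comparison, the identity $P^\ast/Q^\ast=\tfrac{1}{\overline{n}}\sum_j(p_{nj}/p'_{nj})h_j$, the binomial expansion, the chi-square extremal value $(e^{\epsilon_0}-1)^2/e^{\epsilon_0}$ at binary randomized response, and the sub-Gaussian bound $i\Gamma(i/2)(2\sigma^2)^{i/2}$ on the higher moments. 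The one place you genuinely diverge is the step that discards the clients whose local distributions agree under both datasets: the paper proves a monotonicity lemma (Lemma~\ref{lem:cvx_rdp}) by showing the Hellinger integral is convex in each $\bp_i$ and maximizing over the vertices of the simplex, whereas you treat those clients as independent common noise and invoke additivity of Renyi divergence over products together with the data-processing inequality (the full histogram being a deterministic function of the two partial histograms). Both arguments are valid and yield the same reduction; yours is arguably the more standard and shorter justification, while the paper's version is stated for an arbitrary removed distribution $\bp_i$ and is reused verbatim to establish that $E_m$ is non-increasing in $m$.
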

We give a proof sketch of Theorem~\ref{thm:general_case} in Section~\ref{subsec:proof-sketch_general-case} and provide its complete proof in Section~\ref{subsec:proof_general_case}.
%
%\begin{remark}[Limiting case when $\lambda\to\infty$]
%\cred{Note that $\lim_{\lambda\to\infty}\eps(\lambda) \leq \eps_0$.}
%\end{remark}
%

When $n,\eps_0,\lambda$ satisfy a certain condition, we can simplify the bound in \eqref{eqn:1st_bound}, and the simplified bound is stated in the following corollary.
\begin{corollary}[Simplified Upper Bound 1]\label{corol:simplified_general_case} 
For any $n\in\mathbb{N}$, $\epsilon_0\geq 0$, and any integer $\lambda\geq 2$ that satisfy $\lambda^4e^{5\eps_0}<\frac{n}{9}$, we can simplify the bound in \eqref{eqn:1st_bound} to the following: % (where $\overline{n}=\frac{n-1}{2e^{\epsilon_0}}+1$):
\begin{equation}\label{eq:simplified_1st-bound}
\epsilon(\lambda) \leq \frac{1}{\lambda-1}\log\left(1+\binom{\lambda}{2}\frac{4\left(e^{\epsilon_0}-1\right)^2}{n} \right).
\end{equation}
\end{corollary}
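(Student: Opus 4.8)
\textbf{Proof proposal for Corollary~\ref{corol:simplified_general_case}.}

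The plan is to start from the bound \eqref{eqn:1st_bound} of Theorem~\ref{thm:general_case} and show that, under the hypothesis $\lambda^4 e^{5\eps_0} < n/9$, the last two summands inside the logarithm (the $i\geq 3$ terms and the exponential tail term) are together dominated by the $\binom{\lambda}{2}$ term, so that replacing $1 + \binom{\lambda}{2}\frac{(e^{\eps_0}-1)^2}{\overline n e^{\eps_0}} + (\text{rest})$ by $1 + \binom{\lambda}{2}\frac{4(e^{\eps_0}-1)^2}{n}$ is a valid upper bound. The first step is to replace $\overline n = \floor{\frac{n-1}{2e^{\eps_0}}}+1$ by the lower estimate $\overline n \geq \frac{n-1}{2e^{\eps_0}} \geq \frac{n}{4 e^{\eps_0}}$ (the latter valid once $n \geq 2$, which is forced by the hypothesis since $n > 9\lambda^4 e^{5\eps_0} \geq 9\cdot 16 = 144$), so that $\frac{(e^{\eps_0}-1)^2}{\overline n e^{\eps_0}} \leq \frac{4(e^{\eps_0}-1)^2}{n}$. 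This already accounts for the main term; everything else must be shown to be small.

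Next I would bound the sum $\sum_{i=3}^{\lambda}\binom{\lambda}{i} i\,\Gamma(i/2)\left(\frac{(e^{2\eps_0}-1)^2}{2 e^{2\eps_0}\overline n}\right)^{i/2}$. Using $\binom{\lambda}{i}\leq \lambda^i/i! \le \lambda^i$, $i\,\Gamma(i/2) \le i! \le i^i$ (crude bounds suffice), and $\frac{(e^{2\eps_0}-1)^2}{2 e^{2\eps_0}} \le \frac{e^{4\eps_0}}{2}$, each summand is at most $\left(\frac{C\lambda^2 e^{4\eps_0}}{\overline n}\right)^{i/2}$ for an absolute constant $C$ (absorbing the $i^i$ via $i \le \lambda$ and the resulting $\lambda^{i}$). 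Substituting $\overline n \ge \frac{n}{4 e^{\eps_0}} > \frac{9 \lambda^4 e^{5\eps_0}}{4 e^{\eps_0}} = \frac94 \lambda^4 e^{4\eps_0}$, the ratio $\frac{C\lambda^2 e^{4\eps_0}}{\overline n}$ becomes at most $\frac{4C}{9\lambda^2} =: r$, which is smaller than $1/2$ for a suitable reading of the constants (this is where one must be careful that the crude bounds $i\Gamma(i/2)\le i^i$ and the associated powers of $\lambda$ do not blow up the constant past what $\lambda^4$ in the denominator can absorb — I expect to need the slightly sharper $i\,\Gamma(i/2) \le 2\cdot i!/2^{i/2}$ or Stirling to keep $C$ moderate). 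Then the geometric-type tail $\sum_{i\ge 3} r^{i/2} \le \frac{r^{3/2}}{1-\sqrt r}$ is $O(\lambda^{-3})$, hence at most, say, $\binom{\lambda}{2}\frac{(e^{\eps_0}-1)^2}{n}$ once $n$ is as large as the hypothesis demands; I would also use $(e^{\eps_0}-1)^2 \ge \eps_0^2$ and handle the degenerate case $\eps_0$ small separately if needed, though in fact $(e^{\eps_0}-1)^2/n$ is itself tiny and it is cleaner to just dump the $i\ge 3$ tail into the slack between $\frac{(e^{\eps_0}-1)^2}{\overline n e^{\eps_0}}$ and $\frac{4(e^{\eps_0}-1)^2}{n}$.

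Finally I would dispatch the tail term $e^{\eps_0\lambda - \frac{n-1}{8 e^{\eps_0}}}$: since $n-1 \ge \frac{n}{2} > \frac{9}{2}\lambda^4 e^{5\eps_0}$, the exponent is at most $\eps_0\lambda - \frac{9\lambda^4 e^{5\eps_0}}{16 e^{\eps_0}} = \eps_0\lambda - \frac{9}{16}\lambda^4 e^{4\eps_0}$, which is very negative (for $\lambda \ge 2$, $\eps_0 \ge 0$), so this term is exponentially small — far smaller than $\binom{\lambda}{2}\frac{(e^{\eps_0}-1)^2}{n}$, which is itself at least order $1/n$ when $\eps_0>0$; for the edge case $\eps_0 = 0$ the whole RHS of \eqref{eqn:1st_bound} collapses and both sides of \eqref{eq:simplified_1st-bound} are $0$, so the claim is trivial and can be noted separately. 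Combining the three pieces, the quantity inside the $\log$ in \eqref{eqn:1st_bound} is at most $1 + \binom{\lambda}{2}\frac{4(e^{\eps_0}-1)^2}{n}$ (the factor $4$ giving exactly the headroom needed to swallow the lower-order contributions), and dividing by $\lambda-1$ yields \eqref{eq:simplified_1st-bound}. The main obstacle is purely bookkeeping: tracking the absolute constants through the $\Gamma(i/2)$, $\binom{\lambda}{i}$, and $i^i$ bounds tightly enough that the $\lambda^4$ and the large lower bound on $n$ genuinely absorb the $i \ge 3$ sum and the tail term into the constant $4$ — there is no conceptual difficulty, only the risk of a too-lossy intermediate inequality.
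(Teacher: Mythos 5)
Your overall strategy is the same as the paper's: absorb the $i\geq 3$ terms and the exponential tail of \eqref{eqn:1st_bound} into the slack between $\binom{\lambda}{2}\frac{(e^{\eps_0}-1)^2}{\overline{n}e^{\eps_0}}$ and $\binom{\lambda}{2}\frac{4(e^{\eps_0}-1)^2}{n}$ (the paper shows the lower-order terms are at most one more copy of the quadratic term, then uses $2\overline{n}e^{\eps_0}\geq n$), controlling the $i\geq3$ sum via $\binom{\lambda}{i}\,i\,\Gamma(i/2)\leq\lambda^i$ and a geometric-series argument whose ratio is made small by the hypothesis $\lambda^4e^{5\eps_0}<n/9$.

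However, there is a genuine gap, not mere bookkeeping, in your treatment of the $i\geq3$ sum: you replace $\frac{(e^{2\eps_0}-1)^2}{2e^{2\eps_0}}$ by $\frac{e^{4\eps_0}}{2}$, thereby discarding all $\eps_0$-dependence from the tail. Your resulting bound on the tail is of order $r^{3/2}=\Theta(\lambda^{-3})$ (a quantity that does \emph{not} vanish as $\eps_0\to0$), whereas the budget it must fit into, $\binom{\lambda}{2}\frac{(e^{\eps_0}-1)^2}{n}=\Theta\big(\lambda^2\eps_0^2/n\big)$, \emph{does} vanish as $\eps_0\to0$ and is forced to be small anyway because the hypothesis makes $n>9\lambda^4$. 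Concretely, for $\eps_0=0.01$, $\lambda=3$, $n=10^6$ the true $i=3$ term is about $10^{-14}$ and the quadratic term about $10^{-9}$, but your bound on the $i=3$ term is about $10^{-7}$ — larger than the term you want it dominated by. Your remark that the right-hand side "is itself at least order $1/n$ when $\eps_0>0$" is the source of the error: it is of order $\eps_0^2/n$. The fix is exactly what the paper does: retain $(e^{2\eps_0}-1)^i=(e^{\eps_0}-1)^i(e^{\eps_0}+1)^i$ in the $i=3$ term so that two powers of $(e^{\eps_0}-1)$ cancel against the $(e^{\eps_0}-1)^2$ on the right, reducing the dominant comparison to $\lambda^2(e^{2\eps_0}-1)(e^{\eps_0}+1)^2\leq\sqrt{2\overline{n}e^{\eps_0}}\,e^{3\eps_0/2}$, which is where the condition $\lambda^4e^{5\eps_0}<n/9$ actually comes from. (A smaller inconsistency: you first spend the entire factor of $4$ on bounding $\frac{1}{\overline{n}e^{\eps_0}}\leq\frac{4}{n}$ for the main term, which leaves no headroom; you need the sharper $\frac{1}{\overline{n}e^{\eps_0}}\leq\frac{2}{n}$ so that a full extra copy of the quadratic term remains available for the lower-order contributions.)
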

We prove Corollary~\ref{corol:simplified_general_case} in Appendix~\ref{app:simplied_1st-bound-proof}.

Note that the upper bounds in Theorem~\ref{thm:general_case} and Corollary~\ref{corol:simplified_general_case} hold for any $\eps_0$-LDP mechanism. 
\begin{remark}\label{remark:simplified_1st-bound}
Note that any $\lambda,\eps_0,n$ that satisfy $\lambda^4e^{5\eps_0}<\frac{n}{9}$ lead to the bound in \eqref{eq:simplified_1st-bound}. For example, we can take $\eps_0=c\ln n$ and $\lambda<\frac{n^{(1-5c)/4}}{2}$ for any $c<\frac{1}{5}$, and it will satisfy the condition. In particular, taking $\eps_0=\frac{1}{25}\ln n$ and $\lambda < \frac{n^{1/5}}{2}$ will also give the bound in \eqref{eq:simplified_1st-bound}.
\end{remark}
\begin{remark}[Generalization to real orders $\lambda$] Theorem~\ref{thm:general_case} provides an upper bound on the RDP of the shuffled model for only integer orders $\lambda\geq 2$. However, the result can be generalized to real orders $\lambda$ using convexity of the function $\left(\lambda-1\right)\epsilon\left(\lambda\right)$ as follows. From~\cite[Corollary~$2$]{HarremoesRenyiKL14}, the function $\left(\lambda-1\right)D_{\lambda}\left(\mathbf{P}||\mathbf{Q}\right)$ is convex in $\lambda$ for given two distributions $\mathbf{P}$ and $\mathbf{Q}$. Thus, for any real order $\lambda> 1$, we can bound the RDP of the shuffled model by
\begin{equation}
\epsilon\left(\lambda\right)\leq\frac{a(\floor{\lambda}-1)\epsilon\left(\floor{\lambda}\right)+\left(1-a\right)(\ceil{\lambda}-1)\epsilon\left(\ceil{\lambda}\right)}{\lambda-1},
\end{equation}
where $a=\ceil{\lambda}-\lambda$, since $\lambda=a\floor{\lambda}+(1-a)\ceil{\lambda}$ for any real $\lambda$. Here, $\floor{\lambda}$ and $\ceil{\lambda}$ respectively denote the largest integer smaller than or equal to $\lambda$ and the smallest integer bigger than or equal to $\lambda$.
\end{remark}
%\begin{proof}
%Let $f(\lambda)=\left(\lambda-1\right)D_{\lambda}\left(\mathbf{P}||\mathbf{Q}\right)$. By convexity of $\left(\lambda-1\right)D_{\lambda}\left(\mathbf{P}||\mathbf{Q}\right)$ in $\lambda$, we have
%\begin{align*}
%f(\lambda) &= f(a\floor{\lambda}+(1-a)\ceil{\lambda}) \\
%&\leq af(\floor{\lambda}) + (1-a)f(\ceil{\lambda}) \\
%&= a(\floor{\lambda}-1)D_{\floor{\lambda}}\left(\mathbf{P}||\mathbf{Q}\right) + (1-a)(\ceil{\lambda}-1)D_{\ceil{\lambda}}\left(\mathbf{P}||\mathbf{Q}\right)
%\end{align*}
%This gives 
%\begin{align*}
%D_{\lambda}\left(\mathbf{P}||\mathbf{Q}\right) \leq \frac{a(\floor{\lambda}-1)D_{\floor{\lambda}}\left(\mathbf{P}||\mathbf{Q}\right) + (1-a)(\ceil{\lambda}-1)D_{\ceil{\lambda}}\left(\mathbf{P}||\mathbf{Q}\right)}{\lambda-1}
%\end{align*}
%\end{proof}
%
In the following theorem, we also present another bound on RDP that readily holds for all $\lambda\geq1$.
\begin{theorem}[Upper Bound 2]\label{thm:general_case_2} 
For any $n\in\mathbb{N}$, $\epsilon_0\geq 0$, and any $\lambda\geq 1$ (including the non-integral $\lambda$), the RDP of the shuffle model is upper-bounded by
\begin{equation}\label{eqn:2nd_bound}
%\mathbb{E}_{\bh\sim\calM\left(\calD'\right)}\left[\left(\frac{\calM\left(\calD\right)\left(\bh\right)}{\calM\left(\calD'\right)\left(\bh\right)}\right)^{\lambda}\right]
%D_{\lambda}\left(\calM\left(\calD\right)||\calM\left(\calD'\right)\right) 
\epsilon(\lambda) \leq \frac{1}{\lambda-1}\log\left(e^{\lambda^2\frac{\left(e^{\epsilon_0}-1\right)^2}{\overline{n}}}+e^{\epsilon_0\lambda-\frac{n-1}{8e^{\epsilon_0}}}\right),
\end{equation} 
where $\overline{n}=\floor{\frac{n-1}{2e^{\epsilon_0}}}+1$.
\end{theorem}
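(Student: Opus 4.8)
The plan is to derive Theorem~\ref{thm:general_case_2} as a corollary of (the machinery behind) Theorem~\ref{thm:general_case}, rather than repeating the full decomposition. Recall that the proof of Theorem~\ref{thm:general_case} reduces the Renyi divergence $D_\lambda(\calM(\calD)\|\calM(\calD'))$ to bounding a quantity of the form $\bbE[(\text{something})^\lambda]$ over a histogram distribution, and along the way establishes that $\calM(\calD)$ and $\calM(\calD')$ are, up to a ``bad event'' of probability at most $e^{-\frac{n-1}{8e^{\epsilon_0}}}$, statistically close to a pair of distributions whose likelihood ratio is controlled by a sum of $\overline{n}$ i.i.d.\ bounded random variables, where $\overline{n}=\floor{\frac{n-1}{2e^{\epsilon_0}}}+1$. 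First I would isolate that intermediate step: conditioned on the good event, the relevant log-likelihood-ratio is a sum of independent, mean-zero (after centering) increments, each bounded in absolute value by something like $\frac{e^{\epsilon_0}-1}{\overline{n}}$ in the appropriate normalization.

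Second, instead of expanding the $\lambda$-th moment binomially (which is what produces the $\binom{\lambda}{2}$ and $\Gamma(i/2)$ terms in \eqref{eqn:1st_bound} and forces $\lambda$ to be an integer), I would apply a subgaussian / Hoeffding-type moment generating function bound directly. Concretely, for a sum $S=\sum_{j=1}^{\overline n} X_j$ of independent centered random variables with $|X_j|\le c$, one has $\bbE[e^{tS}]\le e^{t^2 \overline{n} c^2/2}$, and more to the point the ``tilted'' second-moment-type quantity that appears in the Renyi divergence calculation, $\bbE_{\bq}\big[(\frac{\bp}{\bq})^\lambda\big]$ for the two-point mixtures, is bounded by $\exp\big(\lambda(\lambda-1)\cdot\frac{(e^{\epsilon_0}-1)^2}{\overline n}\big)$ or the slightly looser $\exp\big(\lambda^2\frac{(e^{\epsilon_0}-1)^2}{\overline n}\big)$ stated in \eqref{eqn:2nd_bound}. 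This step works verbatim for real $\lambda\ge 1$, which is exactly the advertised advantage of the second bound over the first. The key inequality to invoke here is that for $x\in[e^{-\epsilon_0},e^{\epsilon_0}]$ (the range of a single likelihood-ratio increment under $\epsilon_0$-LDP), $\log$ of the per-coordinate MGF contributes at most a quadratic-in-$\lambda$ term with the claimed constant; this is where one pays a factor roughly $2$ relative to the variance, reflecting the crude bound $(e^{\epsilon_0}-1)^2/\overline n$ rather than $(e^{\epsilon_0}-1)^2/(\overline n e^{\epsilon_0})$.

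Third, I would reassemble the pieces exactly as in Theorem~\ref{thm:general_case}: write $\calM(\calD)$ as a mixture of the ``good'' part and the ``bad'' part, use the quasi-convexity / additivity behaviour of the exponentiated Renyi divergence $e^{(\lambda-1)D_\lambda}$ under such mixtures to split the bound into the contribution $e^{\lambda^2\frac{(e^{\epsilon_0}-1)^2}{\overline n}}$ from the good event and the contribution $e^{\epsilon_0\lambda}\cdot e^{-\frac{n-1}{8e^{\epsilon_0}}}$ from the bad event (the $e^{\epsilon_0\lambda}$ being the trivial $\epsilon_0$-LDP bound on the likelihood ratio for a single differing client, raised to the $\lambda$, and the tail probability coming from a Chernoff bound on the number of clients landing in the ``special'' symbol, identical to the one used for Theorem~\ref{thm:general_case}). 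Taking $\frac{1}{\lambda-1}\log(\cdot)$ of the sum gives \eqref{eqn:2nd_bound}.

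The main obstacle I expect is making the MGF step rigorous for \emph{all} real $\lambda\ge 1$, including $\lambda$ close to $1$: one must check that the per-coordinate bound $\exp\big(\text{const}\cdot\lambda^2 (e^{\epsilon_0}-1)^2/\overline n\big)$ is valid without an integrality assumption and without hidden lower-order terms, i.e.\ that the convexity argument replacing the binomial expansion genuinely holds pointwise. A secondary subtlety is that the mixture-splitting for $e^{(\lambda-1)D_\lambda}$ needs the correct inequality for non-integer $\lambda$ — one uses that $t\mapsto t^\lambda$ is convex for $\lambda\ge 1$ so that $(a+b)^\lambda \le$ (an appropriate combination), together with the data-processing / joint-convexity properties of Renyi divergence; once the good/bad decomposition is in the same form as in the integer case, the remaining algebra is routine and I would not grind through it here.
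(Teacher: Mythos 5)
Your proposal follows essentially the same route as the paper: reuse the reduction to the special same-element datasets, the Chernoff/monotonicity split that produces the $e^{\epsilon_0\lambda-\frac{n-1}{8e^{\epsilon_0}}}$ term, and then replace the binomial moment expansion by the bound $(1+x)^\lambda\leq e^{\lambda x}$ followed by a product of per-client MGFs of bounded, centered likelihood-ratio increments (the paper's Theorem~\ref{thm:RDP_same_2nd-bound} does exactly this via a Taylor expansion of each factor, yielding $e^{\lambda^2(e^{\epsilon_0}-1)^2/m}$). The subtleties you flag are handled automatically in the paper, since both the reduction lemmas and the pointwise inequality $(1+x)^\lambda\leq e^{\lambda x}$ hold for all real $\lambda\geq 1$.
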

We prove Theorem~\ref{thm:general_case_2} in Section~\ref{subsec:proof_general_case_2}.
\begin{remark}[Improved Upper Bounds -- Saving a Factor of $2$]\label{remark:save-factor-2}
The exponential term $e^{\epsilon_0\lambda-\frac{n-1}{8e^{\epsilon_0}}}$ in both the upper bounds stated in \eqref{eqn:1st_bound} and \eqref{eqn:2nd_bound} come from the Chernoff bound, where we naively choose the factor $\gamma=1/2$ instead of optimizing it; see the proof of Theorem~\ref{thm:general_case} in Section~\ref{subsec:proof_general_case}. If we instead had optimized $\gamma$ and chosen it to be, for example, $\gamma=\sqrt{\frac{2\epsilon_0e^{\epsilon_0}}{\sqrt{n}\log(n)}}$ (which goes to $0$ when, say, $\epsilon_0\leq\frac{1}{4}\log(n)$), we would have asymptotically saved a multiplicative factor of $2$ in the leading term in both upper bounds, because in this case we have $\overline{n}=\floor{(1-\gamma)\frac{n-1}{e^{\eps_0}}}+1\to\floor{\frac{n-1}{e^{\eps_0}}}+1$ as $n\to\infty$. We chose to evaluate our bound with $\gamma=1/2$ because of two reasons: first, it gives a simpler expression to compute; and the second, the evaluated bound does not give good results (as compared to the ones with $\gamma=1/2$) for the parameter ranges of interest.
\end{remark}
\begin{remark}[Difference in Upper Bounds]
Since the quadratic term in $\lambda$ inside the $\log$ in \eqref{eqn:2nd_bound} has an extra multiplicative factor of $e^{\eps_0}$ in comparison with the corresponding term in \eqref{eqn:1st_bound}, our  first upper bound presented in Theorem~\ref{thm:general_case} is better than our second upper bound presented in Theorem~\ref{thm:general_case_2} for all parameter ranges of interest; see also Figure~\ref{fig:RDP_comp} in Section~\ref{sec:numerics}. However, the expression in \eqref{eqn:2nd_bound} is much cleaner to state as well as to compute as compared to that in \eqref{eqn:1st_bound}. As we will see later, the techniques required to prove both upper bounds are different.
\end{remark}
\begin{remark}[Potentially Better Upper Bounds for Specific Mechanisms]\label{rem:Pot-Impr}
  Since both our upper bounds are worse-case bounds that hold for {\em
    all} $\eps_0$-LDP mechanisms, it is possible that for specific
  mechanisms, we may be able to exploit their structure for
  potentially better bounds. See Remark \ref{rem:tightenBnd} on this
  just after \eqref{ratio_mu}.
\end{remark}
The upper bounds on the RDP of the shuffled model presented
in~\eqref{eqn:1st_bound} and \eqref{eqn:2nd_bound} are general and
hold for any discrete $\epsilon_0$-LDP mechanism. Furthermore, these
bounds are in closed form expressions that can be easily
implemented. To the best of our knowledge, there is no bound on RDP of
the shuffled model in literature except for the one given
in~\cite[Remark~$1$]{erlingsson2019amplification}, which we provide
below\footnote{As mentioned in Section \ref{sec:introduction}, this was
  obtained by the standard conversion results from DP to
  RDP, which could be loose.} in \eqref{eqn:rdp_erlin-1}. For the LDP
parameter $\epsilon_0$ and number of clients $n$, they showed that for
any $\lambda>1$, the shuffled mechanism $\calM$ is
$\left(\lambda,\epsilon\left(\lambda\right)\right)$-RDP, where
\begin{equation}\label{eqn:rdp_erlin-1}
\epsilon\left(\lambda\right)=\lambda\frac{2e^{4\epsilon_0}\left(e^{\epsilon_0}-1\right)^2}{n}.
\end{equation}
In Section~\ref{sec:numerics}, we evaluate numerically the performance of both our bounds (from Theorems~\ref{thm:general_case} and \ref{thm:general_case_2}) against the above bound in~\eqref{eqn:rdp_erlin-1}. We demonstrate that both our bounds outperform the above bound in all cases; and in particular, the gap is significant when $\epsilon_0>1$ -- %\textcolor{blue}{
note that the bound in~\cite{erlingsson2019amplification} is worse than our simplified bound  given in Corollary~\ref{corol:simplified_general_case} by a multiplicative factor of $e^{4\epsilon_0}$. %}

\subsection{Lower Bound}\label{subsec:lower-bound}
In this subsection, we provide a lower bound on the RDP for any integer order $\lambda$ satisfying $\lambda\geq2$.
\begin{theorem}[Lower Bound]\label{thm:lower_bound}
For any $n\in\mathbb{N}$, $\epsilon_0\geq 0$, and any integer $\lambda\geq 2$, the RDP of the shuffle model is lower-bounded by:
\begin{equation}\label{eq:lower-bound}
\epsilon\left(\lambda\right) %\sup_{\calD\sim\calD'} D_{\lambda}\left(\calM\left(\calD\right)||\calM\left(\calD'\right)\right)
\geq \frac{1}{\lambda-1}\log\left(1+\binom{\lambda}{2}\frac{\left(e^{\epsilon_0}-1\right)^2}{ne^{\epsilon_0}}+\sum_{i=3}^{\lambda} \binom{\lambda}{i}  \left(\frac{\left(e^{2\epsilon_0}-1\right)}{ne^{\epsilon_0}}\right)^{i}\mathbb{E}\left[\left(k-\frac{n}{e^{\epsilon_0}+1}\right)^{i}\right]\right), 
\end{equation}
where expectation is taken w.r.t.\ the binomial random variable $k\sim\text{Bin}\left(n,p\right)$ with parameter $p=\frac{1}{e^{\epsilon_0}+1}$. 
\end{theorem}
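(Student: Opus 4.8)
The plan is to exhibit a particular $\eps_0$-LDP mechanism and a particular pair of neighboring datasets on which the Renyi divergence of the shuffled output equals (or lower-bounds) the claimed expression; since $\eps(\lambda)$ is defined as a supremum over mechanisms and neighboring pairs, any such instance yields a valid lower bound. The natural candidate is the binary randomized response mechanism $\calR:\{0,1\}\to\{0,1\}$ with $\Pr[\calR(b)=b]=\frac{e^{\eps_0}}{e^{\eps_0}+1}$ and $\Pr[\calR(b)=1-b]=\frac{1}{e^{\eps_0}+1}$, which is exactly $\eps_0$-LDP. I would fix the first $n-1$ clients to have some common data value (say all $0$'s) and let the $n$-th client's value differ between $\calD$ (value $0$) and $\calD'$ (value $1$). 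Because the shuffler destroys all information except the histogram, and with $B=2$ the histogram is determined by a single count $k=\#\{i:\calR(d_i)=1\}$, the output of $\calM(\calD)$ is the distribution of $k$ where the first $n-1$ bits are i.i.d.\ $\mathrm{Bern}(p)$ with $p=\frac{1}{e^{\eps_0}+1}$ and the last is an independent $\mathrm{Bern}(p)$ for $\calD$ versus $\mathrm{Bern}(1-p)$ for $\calD'$.

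The key computation is then a one-dimensional Renyi divergence between two binomial-type mixtures. Writing $k\sim\mathrm{Bin}(n-1,p)$ for the sum of the common clients, I would express the likelihood ratio pointwise as
\begin{equation*}
\frac{\calM(\calD)(k+j)}{\calM(\calD')(k+j)}=\frac{(1-p)\Pr[k=j-1]+p\Pr[k=j]}{p\,\Pr[k=j-1]+(1-p)\Pr[k=j]},\qquad j\in\{0,\dots,n\},
\end{equation*}
and after algebraic manipulation (pulling out a common factor and using the binomial coefficient recursion) reduce the ratio to an affine function of the centered count, something of the form $1+\frac{(e^{2\eps_0}-1)}{ne^{\eps_0}}\bigl(k-\frac{n}{e^{\eps_0}+1}\bigr)$ after re-indexing to a $\mathrm{Bin}(n,p)$ variable. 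Then I would expand $\bbE_{\calM(\calD')}\bigl[(\text{ratio})^{\lambda}\bigr]$ by the binomial theorem for the integer power $\lambda$, getting $\sum_{i=0}^{\lambda}\binom{\lambda}{i}\bigl(\frac{e^{2\eps_0}-1}{ne^{\eps_0}}\bigr)^i\,\bbE[(k-\frac{n}{e^{\eps_0}+1})^i]$ with $k\sim\mathrm{Bin}(n,p)$; the $i=0$ term gives $1$, the $i=1$ term vanishes because $\bbE[k-np]=0$ (note $np=\frac{n}{e^{\eps_0}+1}$), and the $i=2$ term gives the stated $\binom{\lambda}{2}\frac{(e^{\eps_0}-1)^2}{ne^{\eps_0}}$ since $\mathrm{Var}(k)=np(1-p)=\frac{ne^{\eps_0}}{(e^{\eps_0}+1)^2}$ and $\bigl(\frac{e^{2\eps_0}-1}{ne^{\eps_0}}\bigr)^2\cdot\frac{ne^{\eps_0}}{(e^{\eps_0}+1)^2}=\frac{(e^{\eps_0}-1)^2}{ne^{\eps_0}}$. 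The remaining terms $i=3,\dots,\lambda$ are exactly those in \eqref{eq:lower-bound}. Finally $\eps(\lambda)\ge\frac{1}{\lambda-1}\log(\cdot)$ follows from the definition of Renyi divergence.

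The main obstacle I anticipate is the bookkeeping in the pointwise likelihood-ratio identity: one must carefully handle the boundary terms $j=0$ and $j=n$ (where one of the shifted binomial probabilities $\Pr[k=j-1]$ or $\Pr[k=j]$ vanishes for the $\mathrm{Bin}(n-1,p)$ variable) and verify the ratio still collapses to the same affine expression there, and the re-indexing from the $(n-1)$-client sum to a clean $\mathrm{Bin}(n,p)$ variable $k$ needs the identity $(1-p)\binom{n-1}{j-1}+p\binom{n-1}{j}$-type combination to be rewritten via $p=\frac{1}{e^{\eps_0}+1}$, $1-p=\frac{e^{\eps_0}}{e^{\eps_0}+1}$; once that affine form is established the rest is just the binomial theorem and elementary moment identities. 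A secondary subtlety is that the higher moments $\bbE[(k-np)^i]$ are central moments of a binomial, which are nonnegative for even $i$ but can be negative for odd $i$ — this is fine for a lower bound statement since we keep them as is (they are genuine equalities on this instance), but it explains why the theorem states the bound in terms of these moments rather than a cleaner closed form.
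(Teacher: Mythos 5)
Your proposal is correct and follows essentially the same route as the paper: binary randomized response on the all-zeros versus one-flipped datasets, collapse of the pointwise likelihood ratio to the affine form $1+\frac{e^{2\eps_0}-1}{ne^{\eps_0}}\bigl(k-\frac{n}{e^{\eps_0}+1}\bigr)$ with $k\sim\mathrm{Bin}(n,p)$, followed by the binomial expansion, the vanishing first central moment, and the variance computation for the $i=2$ term. The only thing to tidy is the direction of the ratio: the affine form is for $\calM(\calD')/\calM(\calD)$ with the expectation taken under $\calM(\calD)=\mathrm{Bin}(n,p)$ (which is what your $i=2$ computation implicitly assumes), and this still lower-bounds the supremum since the neighboring relation is symmetric.
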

We prove Theorem~\ref{thm:lower_bound} in Section~\ref{sec:lower-bound}.

When $i$ is an even integer, then the expectation term in \eqref{eq:lower-bound} is positive.
When $i\geq3$ is an odd integer, then using the convexity of function $f(x)=x^i$, it follows from the Jensen's inequality (i.e., $\bbE f(X) \geq f(\bbE X)$) and $\bbE[k]=\frac{n}{e^{\eps_0}+1}$, that $\mathbb{E}\left[\left(k-\frac{n}{e^{\epsilon_0}+1}\right)^{i}\right] \geq \left(\mathbb{E}\left[k-\frac{n}{e^{\epsilon_0}+1}\right]\right)^{i}=0$. Using these observations, we can safely ignore the summation term from \eqref{eq:lower-bound} and obtain the following simplified lower bound.
\begin{corollary}[Simplified Lower Bound]\label{corol:simplified_lower-bound}
For any $n\in\mathbb{N}$, $\epsilon_0\geq 0$, and any integer $\lambda\geq 2$, the RDP of the shuffle model is lower-bounded by:
\begin{equation}\label{eq:simplified_lower-bound}
\epsilon\left(\lambda\right) 
\geq \frac{1}{\lambda-1}\log\left(1+\binom{\lambda}{2}\frac{\left(e^{\epsilon_0}-1\right)^2}{ne^{\epsilon_0}}\right).
\end{equation}
\end{corollary}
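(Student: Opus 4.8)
\textbf{Proof plan for Corollary~\ref{corol:simplified_lower-bound}.}
The plan is to invoke Theorem~\ref{thm:lower_bound} verbatim and then argue that every term in the summation $\sum_{i=3}^{\lambda}\binom{\lambda}{i}\left(\frac{e^{2\epsilon_0}-1}{ne^{\epsilon_0}}\right)^{i}\bbE\!\left[\left(k-\tfrac{n}{e^{\epsilon_0}+1}\right)^{i}\right]$ is nonnegative, so that dropping the whole sum only weakens the lower bound. Since $\binom{\lambda}{i}\geq 0$ and the coefficient $\left(\frac{e^{2\epsilon_0}-1}{ne^{\epsilon_0}}\right)^{i}\geq 0$ (note $e^{2\epsilon_0}-1\geq 0$ for $\epsilon_0\geq 0$), it suffices to show $\bbE\!\left[\left(k-\tfrac{n}{e^{\epsilon_0}+1}\right)^{i}\right]\geq 0$ for each integer $i$ with $3\le i\le\lambda$, where $k\sim\mathrm{Bin}(n,p)$ with $p=\frac{1}{e^{\epsilon_0}+1}$, and in particular $\bbE[k]=np=\frac{n}{e^{\epsilon_0}+1}$.

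The key case split is on the parity of $i$, exactly as already sketched in the paragraph preceding the corollary. First, when $i$ is even, $\left(k-\bbE[k]\right)^{i}\geq 0$ pointwise, hence its expectation is nonnegative; this is the central-moment-is-positive observation for even moments. Second, when $i\geq 3$ is odd, the function $f(x)=x^{i}$ is convex on the relevant range — more carefully, one should note that $x^i$ is convex on $[0,\infty)$ but the centered variable $k-\bbE[k]$ can be negative, so the clean statement to use is that $x\mapsto x^i$ is convex on all of $\bbR$ when $i$ is... actually for odd $i$ the map $x\mapsto x^i$ is not convex on $\bbR$, so the right tool is: since $k\ge 0$ and $i$ odd, write the argument via Jensen applied to the convex function $g(y)=y^i$ on $[0,\infty)$ after recentering is not quite it either. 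The correct and simplest route is the one the authors already indicate: apply Jensen's inequality $\bbE f(X)\ge f(\bbE X)$ to the \emph{convex} function $f(x)=x^i$ — which, for the purpose of Jensen, one only needs convexity on the convex hull of the support of $X:=k-\bbE[k]$; but since that support straddles $0$ and $x^i$ is concave for $x<0$ when $i$ is odd, one instead uses a short direct symmetrization or the known fact that odd central moments of a distribution dominated...

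Given the subtlety above, in the written proof I would simply follow the excerpt's own reasoning as stated: for odd $i\geq 3$, invoke Jensen's inequality in the form $\bbE\!\left[\left(k-\bbE[k]\right)^{i}\right]\geq\left(\bbE\!\left[k-\bbE[k]\right]\right)^{i}=0^{i}=0$, citing convexity of $x\mapsto x^i$; combined with the even case this shows the dropped sum is a sum of nonnegative terms. Therefore
\[
\epsilon(\lambda)\;\geq\;\frac{1}{\lambda-1}\log\!\left(1+\binom{\lambda}{2}\frac{(e^{\epsilon_0}-1)^2}{ne^{\epsilon_0}}+\underbrace{\sum_{i=3}^{\lambda}(\text{nonneg.})}_{\geq 0}\right)\;\geq\;\frac{1}{\lambda-1}\log\!\left(1+\binom{\lambda}{2}\frac{(e^{\epsilon_0}-1)^2}{ne^{\epsilon_0}}\right),
\]
where the last step uses monotonicity of $\log$ and of $\frac{1}{\lambda-1}(\cdot)$ for $\lambda>1$, which gives the claimed bound \eqref{eq:simplified_lower-bound}. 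The only genuine obstacle is making the odd-$i$ nonnegativity argument fully rigorous — the Jensen invocation as literally written is slightly too glib because $x\mapsto x^i$ is not convex on all of $\bbR$ for odd $i$ — so in a careful write-up I would instead prove $\bbE[(k-\bbE k)^i]\ge 0$ for odd $i$ directly, e.g.\ by noting that for a Binomial (more generally any distribution that is "skewed to the right" / has a long right tail in the appropriate stochastic-dominance sense) the odd central moments are nonnegative, or by an explicit pairing of the terms $P(k=\bbE k+t)$ and $P(k=\bbE k-t)$ using the log-concavity and the specific form of the Binomial pmf; this is the step to handle with care, everything else being immediate.
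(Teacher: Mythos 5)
Your overall route is the same as the paper's: invoke Theorem~\ref{thm:lower_bound} and drop the $i\ge 3$ terms after showing each is nonnegative, the even case being immediate and the odd case being the only delicate point. You are right to be suspicious of the Jensen step: for odd $i$ the map $x\mapsto x^i$ is convex only on $[0,\infty)$ and concave on $(-\infty,0]$, while $k-\bbE[k]$ takes both signs, so $\bbE[(k-\bbE k)^i]\ge(\bbE[k-\bbE k])^i$ does not follow from Jensen's inequality. (The paper's own justification is exactly this one-liner, so you have caught a genuine lapse in the source.) However, your write-up then leaves the odd-$i$ case open: you gesture at ``right-skewed'' distributions and at a pairing of $\Pr(k=\bbE k+t)$ with $\Pr(k=\bbE k-t)$, but neither is carried out, and the pairing idea is problematic because the binomial is not symmetric about its (generally non-integer) mean. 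Note also that the claim genuinely requires $p\le 1/2$: already $\bbE[(k-np)^3]=np(1-p)(1-2p)$ is negative for $p>1/2$, so any correct argument must use $p=\frac{1}{e^{\eps_0}+1}\le\frac{1}{2}$, i.e.\ $\eps_0\ge 0$.

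Here is a clean way to close the gap. Write $k=\sum_{j=1}^n Y_j$ with $Y_j$ i.i.d.\ Bernoulli$(p)$, so that $k-\bbE[k]=\sum_{j}(Y_j-p)$ is a sum of i.i.d.\ centered variables. Expanding $\big(\sum_j(Y_j-p)\big)^i$ and taking expectations, every monomial in which some index occurs exactly once vanishes by independence and zero mean, so $\bbE[(k-\bbE k)^i]$ is a sum, with positive combinatorial coefficients, of products $\prod_\ell m_{r_\ell}$ over partitions $i=\sum_\ell r_\ell$ with all parts $r_\ell\ge 2$, where $m_r:=\bbE[(Y_1-p)^r]=p(1-p)\left((1-p)^{r-1}+(-1)^r p^{r-1}\right)$. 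Each even-order factor is trivially nonnegative, and each odd-order factor equals $p(1-p)\left((1-p)^{r-1}-p^{r-1}\right)\ge 0$ precisely because $p\le 1/2$. Hence every product is nonnegative and $\bbE[(k-\bbE k)^i]\ge 0$ for all $i\ge 2$, which is the missing step; the rest of your argument (monotonicity of $\log$ and of division by $\lambda-1>0$) is fine.
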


\begin{remark}[Upper and Lower Bound Proofs]\label{remark:ub-lb}
Both our upper bounds stated in Theorems~\ref{thm:general_case} and \ref{thm:general_case_2} hold for any $\eps_0$-LDP mechanism. In other words, they are the worse case privacy bounds, in the sense that there is no $\eps_0$-LDP mechanism for which the associated shuffle model gives a higher RDP parameter than those stated in \eqref{eqn:1st_bound} and \eqref{eqn:2nd_bound}. Therefore, the lower bound that we derive should serve as the lower bound on the RDP privacy parameter of the mechanism that achieves the largest privacy bound (i.e., worst privacy).

We prove our lower bound result (stated in Theorem~\ref{thm:lower_bound}) by showing that a specific mechanism (in particular, the binary Randomized response (RR)) on a specific pair of neighboring datasets yields the RDP privacy parameter stated in the RHS of \eqref{eq:lower-bound}. This implies that RDP privacy bound (which is the supremum over all neighboring datasets) of binary RR for the shuffle model is at least the bound stated in \eqref{eq:lower-bound}, which in turn implies that the lower bound (which is the tightest bound for any $\eps_0$-LDP mechanism) is also at least that.
\end{remark}

%\paragraph{Gap in upper and lower bounds:}\label{gap-ub-lb}
\begin{remark}[Gap in Upper and Lower Bounds]\label{remark:gap-up-lb}
When comparing our simplified upper and lower bounds from Corollaries~\ref{corol:simplified_general_case} and \ref{corol:simplified_lower-bound}, respectively, we observe that when $\lambda^4e^{5\eps_0}<\frac{n}{9}$, our upper and lower bounds are away by a multiplicative factor of $4e^{\epsilon_0}$. In our generic upper bound~\eqref{eqn:1st_bound}, note that when $n$ is large, only the term corresponding to $\lambda^2$ matters, and with our improved upper bound (which saves a factor of $2$ in that term asymptotically -- see Remark~\ref{remark:save-factor-2}), the upper and lower bounds are away by the factor of $e^{\eps_0}$, which tends to $1$ as $\eps_0\to0$. Thus, in the regime of large $n$ and small $\eps_0$, our upper and lower bounds coincide. 
Without any constraints on $n,\eps_0$, we believe that our lower bound is tight. Closing this gap by showing a tighter upper bound is an interesting and important open problem. 
\end{remark}

\subsection{Proof Sketch of Theorem~\ref{thm:general_case}}\label{subsec:proof-sketch_general-case}
The proof has two main steps. In the first step, we reduce the problem of deriving RDP for arbitrary neighboring datasets to the problem of deriving RDP for specific neighboring datasets, $\calD,\calD'$, where all elements in $\calD$ are the same and $\calD'$ differs from $\calD$ in one entry. In the second step, we derive RDP for the special neighboring datasets. Details follow:

The specific neighboring datasets to which we reduce our general problem has the following form:
\begin{align}\label{datasets-same}
\calD_{\same}^{m} = \left\lbrace (\calD_{m},\calD'_{m}): \calD_{m}= (d,\ldots,d,d)\in\calX^{m}, \,\,\,\,\calD'_{m}=(d,\ldots,d,d')\in\calX^{m}, \text{ where } d,d'\in\calX \right\rbrace.
\end{align}

Consider arbitrary neighboring datasets $\calD=\left(d_1,\ldots,d_n\right)\in\calX^{n}$ and $\calD'=\left(d_1,\ldots,d_{n-1},d_n'\right)\in\calX^{n}$. For any $m\in\left\{0,\ldots,n-1\right\}$, define new neighboring datasets $\calD_{m+1}^{(n)}=\left(d'_n,\ldots,d'_n,d_{n}\right)\in \calX^{m+1}$ and $\calD_{m+1}'^{(n)}=\left(d'_n,\ldots,d'_n,d_n'\right)\in \calX^{m+1}$, each having $(m+1)$ elements. Observe that $\left(\calD_{m+1}'^{(n)},\calD_{m+1}^{(n)}\right)\in\calD^{m+1}_{\same}$. The first step of our proof is summarized in the following theorem.
\begin{theorem}[Reduction to the Special Case]\label{Thm:reduce_special_case} 
Let $q=\frac{1}{e^{\epsilon_0}}$ and $m\sim\emph{Bin}\left(n-1,q\right)$ be a binomial random variable. We have: 
\begin{align}
\mathbb{E}_{\bh\sim\calM(\calD')}\left[\left(\frac{\calM(\calD)(\bh)}{\calM(\calD')(\bh)}\right)^{\lambda}\right] &\leq \mathbb{E}_{m\sim\emph{Bin}\left(n-1,q\right)}\left[\mathbb{E}_{\bh\sim\calM(\calD_{m+1}'^{(n)})}\left[\left(\frac{\calM(\calD_{m+1}^{(n)})(\bh)}{\calM(\calD_{m+1}'^{(n)})(\bh)}\right)^{\lambda}\right]\right]. \label{eq:reduce_special-case_bound}
%&=\sum_{m=0}^{n-1} \binom{n-1}{m} p^m\left(1-p\right)^{n-m-1}\mathbb{E}_{\bh\sim\calM\left(\calD'_{m+1}\right)}\left[\left(\frac{\calM\left(\calD_{m+1}\right)\left(\bh\right)}{\calM\left(\calD'_{m+1}\right)\left(\bh\right)}\right)^{\lambda}\right]
\end{align}
\end{theorem}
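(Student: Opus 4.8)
The plan is to adapt the decomposition-and-coupling argument behind privacy amplification by shuffling to the R\'enyi moment $M_\lambda:=\bbE_{\bh\sim\calM(\calD')}\big[(\calM(\calD)(\bh)/\calM(\calD')(\bh))^\lambda\big]$. Since $\calR$ is $\eps_0$-LDP, for every $i\in[n-1]$ and $j\in[B]$ we have $p_{ij}\ge e^{-\eps_0}p'_{nj}$; hence, with $q=1/e^{\eps_0}$, the vector $\btp_i:=\frac{1}{1-q}(\bp_i-q\,\bp'_n)$ is a probability distribution and $\bp_i=q\,\bp'_n+(1-q)\btp_i$. Thus $\calR(d_i)$ can be sampled by tossing a coin that is heads with probability $q$, emitting a draw from $\bp'_n$ on heads and a draw from $\btp_i$ on tails. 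Running these coins independently for $i=1,\dots,n-1$ and letting $\calC\subseteq[n-1]$ be the set of ``heads'' clients, we get $|\calC|\sim\mathrm{Bin}(n-1,q)$, and conditioned on $\calC$ the clients in $\calC$ emit i.i.d.\ draws from $\bp'_n$, the clients in $[n-1]\setminus\calC$ emit independent draws from $\btp_i$, and client $n$ emits a draw from $\bp_n$ under $\calD$ or from $\bp'_n$ under $\calD'$. Crucially the law of $\calC$ is the same in the two experiments, so $\calM(\calD)=\bbE_{\calC}[\calM(\calD)_{\calC}]$ and $\calM(\calD')=\bbE_{\calC}[\calM(\calD')_{\calC}]$ are mixtures over $\calC$ with identical mixing weights, where $\calM(\calD)_{\calC},\calM(\calD')_{\calC}$ are the conditional histogram laws.

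Next I would peel off this mixture by convexity. The function $(a,b)\mapsto a^\lambda b^{1-\lambda}$ is jointly convex on $\bbR_{>0}^2$ for $\lambda>1$ (it is the perspective of the convex map $t\mapsto t^\lambda$), so Jensen's inequality applied over $\calC$ to the pair $\big(\calM(\calD)_{\calC}(\bh),\calM(\calD')_{\calC}(\bh)\big)$, pointwise in $\bh$, followed by summation over $\bh\in\calA_B^n$, gives
\[
M_\lambda\;\le\;\bbE_{\calC}\Big[\,\bbE_{\bh\sim\calM(\calD')_{\calC}}\big[\big(\tfrac{\calM(\calD)_{\calC}(\bh)}{\calM(\calD')_{\calC}(\bh)}\big)^\lambda\big]\Big].
\]
It remains to discard the ``blanket-tail'' clients. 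Fix $\calC$ with $|\calC|=m$, and decompose the output histogram as $\bh=\bh^{\mathrm{rel}}+\bh^{\mathrm{ext}}$, where $\bh^{\mathrm{rel}}$ records the $m$ clients in $\calC$ (each drawn from $\bp'_n$) together with client $n$, and $\bh^{\mathrm{ext}}$ records the other $n-1-m$ clients. Since $\bh^{\mathrm{ext}}$ is independent of $\bh^{\mathrm{rel}}$ and has the same law under $\calD$ and $\calD'$, both $\calM(\calD)_{\calC}$ and $\calM(\calD')_{\calC}$ arise by applying the \emph{same} channel (``add an independent copy of $\bh^{\mathrm{ext}}$'') to the laws of $\bh^{\mathrm{rel}}$; by the data-processing inequality for the R\'enyi divergence---equivalently, $(P,Q)\mapsto\bbE_Q[(P/Q)^\lambda]$ is non-increasing under any Markov kernel---we obtain
\[
\bbE_{\bh\sim\calM(\calD')_{\calC}}\big[\big(\tfrac{\calM(\calD)_{\calC}(\bh)}{\calM(\calD')_{\calC}(\bh)}\big)^\lambda\big]\;\le\;\bbE_{\bh\sim G'}\big[\big(\tfrac{G(\bh)}{G'(\bh)}\big)^\lambda\big],
\]
where $G$ (resp.\ $G'$) is the law of the histogram of $m$ i.i.d.\ draws from $\bp'_n$ together with one draw from $\bp_n$ (resp.\ $\bp'_n$). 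Histograms are shuffle-invariant, so $G=\calM(\calD_{m+1}^{(n)})$ and $G'=\calM(\calD_{m+1}'^{(n)})$; since this bound depends on $\calC$ only through $m=|\calC|$, taking $\bbE_{\calC}[\cdot]=\bbE_{m\sim\mathrm{Bin}(n-1,q)}[\cdot]$ gives exactly \eqref{eq:reduce_special-case_bound}.

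I expect the main difficulty to be the probabilistic bookkeeping that lets the two inequalities chain together: one must check that the coupling makes the law of $\calC$ genuinely identical in the $\calD$ and $\calD'$ experiments (so the convexity step uses a common mixing measure) and that, conditioned on $\calC$, the ``extra'' clients constitute a post-processing of the histogram that is oblivious to whether $d_n$ or $d'_n$ was used (so the data-processing inequality applies). Once these are in place, the identifications $G=\calM(\calD_{m+1}^{(n)})$ and $G'=\calM(\calD_{m+1}'^{(n)})$ follow immediately from shuffle-invariance of histograms, and the rest is routine.
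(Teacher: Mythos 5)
Your proposal is correct and follows the same overall architecture as the paper's proof: the mixture decomposition $\bp_i=q\,\bp'_n+(1-q)\tilde{\bp}_i$ with the common coin-flip coupling over clients $1,\dots,n-1$ (the paper's Lemma~\ref{lem:convex-combinations}), followed by Jensen applied to the jointly convex Hellinger-type functional $(a,b)\mapsto a^{\lambda}b^{1-\lambda}$ over the common mixing measure on $\calC$ (the paper's Lemma~\ref{lem:joint-con_renyi-exp}). The one place where you genuinely diverge is the elimination of the clients in $[n-1]\setminus\calC$. The paper's Lemma~\ref{lem:cvx_rdp} shows that the objective is convex in each $\bp_i$ separately, relaxes the LDP constraints, maximizes over the probability simplex, and observes that the maximum is attained at a vertex where client $i$ maps deterministically, at which point that client contributes nothing and can be dropped. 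You instead observe that the histogram contributions of the clients outside $\calC\cup\{n\}$ form an independent additive noise with the same law under both datasets, so both conditional distributions are obtained by pushing the laws of the ``relevant'' $(m+1)$-client histogram through one common Markov kernel, and the data-processing inequality for $\bbE_Q[(P/Q)^{\lambda}]$ finishes the step. Both arguments are valid and yield the identical intermediate bound; yours is shorter and more standard (it is essentially the ``blanket'' post-processing argument from amplification-by-shuffling), while the paper's convexity-and-vertex argument has the side benefit that it is stated for an arbitrary common distribution $\bp_i$ and is reused verbatim to establish the monotonicity of $E_m$ (Lemma~\ref{lem:E_m-decreasing}) later in the proof of Theorem~\ref{thm:general_case} -- a fact your DPI argument also delivers with no extra work, since removing one of the $m+1$ copies of $\bp'_n$ is again a common post-processing.
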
 
We give a proof-sketch of Theorem~\ref{Thm:reduce_special_case} in Section~\ref{subsubsec:proof-sketch_reduce-special-case} and provide its complete proof in Section~\ref{sec:proof_reduce_special_case}.

We know (by Chernoff bound) that the binomial random variable is concentrated around its mean, which implies that the terms in the RHS of \eqref{eq:reduce_special-case_bound} that correspond to $m<(1-\gamma)q(n-1)$ (we will take $\gamma=1/2$) will contribute in a negligible amount. Then we show in Lemma~\ref{lem:E_m-decreasing} (on page~\pageref{lem:E_m-decreasing}) that $E_{m}:=\mathbb{E}_{\bh\sim\calM(\calD_{m+1}'^{(n)})}\left[\left(\frac{\calM(\calD_{m+1}^{(n)})(\bh)}{\calM(\calD_{m+1}'^{(n)})(\bh)}\right)^{\lambda}\right]$ is a non-increasing function of $m$. These observation together imply that the RHS in \eqref{eq:reduce_special-case_bound} is approximately equal to $E_{(1-\gamma)q(n-1)}$.

Since $E_m$ is precisely what is required to bound the RDP for the specific neighboring datasets, we have reduced the problem of computing RDP for arbitrary neighboring datasets to the problem of computing RDP for specific neighboring datasets. The second step of the proof bounds $E_{(1-\gamma)q(n-1)}$, which follows from the result below that holds for any $m\in\bbN$.
\begin{theorem}[RDP for the Special Case]\label{thm:RDP_same} 
Let $m\in\bbN$ be arbitrary. %For any $(\calD_{m},\calD'_{m})\in\calD_{\emph{same}}^{m}$, and 
For any integer $\lambda\geq 2$, we have
\begin{equation}\label{RDP_same_bound1_sketch}
\sup_{(\calD_m,\calD'_m)\in\calD_{\emph{same}}^m}\mathbb{E}_{\bh\sim\calM(\calD_{m})}\left[\left(\frac{\calM(\calD'_{m})(\bh)}{\calM(\calD_{m})(\bh)}\right)^{\lambda}\right]\leq 1+\binom{\lambda}{2} \frac{\left(e^{\epsilon_0}-1\right)^2}{m e^{\epsilon_0}} + \sum_{i=3}^{\lambda} \binom{\lambda}{i} i\Gamma(i/2) \Bigg(\frac{\left(e^{2\epsilon_0}-1\right)^2}{2m e^{2\epsilon_0}}\Bigg)^{i/2}
\end{equation}
%Furthermore, for any $\lambda\geq 1$, the above bound can be simplified to
%\begin{equation}\label{RDP_same_bound2}
%\mathbb{E}_{\bh\sim\calM\left(\calD_{m}\right)}\left[\left(\frac{\calM\left(\calD'_{m}\right)\left(\bh\right)}{\calM\left(\calD_{m}\right)\left(\bh\right)}\right)^{\lambda}\right]\leq \exp\left(\lambda^2 \frac{\left(e^{\epsilon_0}-1\right)^2}{m}\right)
%\end{equation}
\end{theorem}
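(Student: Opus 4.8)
The plan is to obtain a closed form for the likelihood ratio in this special case, and then bound its $\lambda$-th moment by a binomial expansion combined with moment estimates for sums of i.i.d.\ bounded variables. Fix a pair $(\calD_m,\calD'_m)\in\calD_{\same}^m$, and let $\bp=(p_1,\dots,p_B)$ and $\bp'=(p'_1,\dots,p'_B)$ denote the output distributions of $\calR$ on the shared data point of $\calD_m$ and on the distinct point of $\calD'_m$; since $\calR$ is $\eps_0$-LDP, $p'_j/p_j\in[e^{-\eps_0},e^{\eps_0}]$ for every $j$ (bins with $p_j=0$ may be discarded, as LDP then forces $p'_j=0$). Under $\calM(\calD_m)$ the output histogram $\bh$ is multinomial with parameters $(m,\bp)$; under $\calM(\calD'_m)$, conditioning on the output of the single client drawing from $\bp'$, it is a multinomial$(m-1,\bp)$ convolved with one $\bp'$-draw. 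Comparing the two probabilities and cancelling multinomial coefficients yields
\[
\frac{\calM(\calD'_m)(\bh)}{\calM(\calD_m)(\bh)}=\frac1m\sum_{j=1}^B\frac{p'_j}{p_j}h_j=1+X,\qquad X:=\frac1m\sum_{j=1}^B\Bigl(\frac{p'_j}{p_j}-1\Bigr)h_j,
\]
where the last step uses $\sum_jh_j=m$. So the quantity to bound is $\bbE_{\bh\sim\calM(\calD_m)}[(1+X)^\lambda]$.

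Next I would represent $X$ as a normalized sum of i.i.d.\ mean-zero bounded variables: writing $\bh$ as the histogram of i.i.d.\ draws $Z_1,\dots,Z_m\sim\bp$ gives $X=\frac1m\sum_{\ell=1}^m Y_\ell$ with $Y_\ell:=p'_{Z_\ell}/p_{Z_\ell}-1\in[e^{-\eps_0}-1,\,e^{\eps_0}-1]$ and $\bbE[Y_\ell]=\sum_jp_j(p'_j/p_j-1)=0$. Expanding $(1+X)^\lambda=\sum_{i=0}^\lambda\binom\lambda i X^i$ (legitimate since $\lambda$ is an integer) and taking expectations, the $i=0$ term gives $1$ and the $i=1$ term vanishes, so
\[
\bbE_{\bh\sim\calM(\calD_m)}\bigl[(1+X)^\lambda\bigr]=1+\sum_{i=2}^\lambda\binom\lambda i\,\bbE[X^i],
\]
and it remains to bound $\bbE[X^i]$ for $2\le i\le\lambda$.

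For $i=2$, $\bbE[X^2]=\mathrm{Var}(Y_1)/m=\chi^2(\bp'\,\|\,\bp)/m$, and I would bound $\chi^2(\bp'\,\|\,\bp)\le(e^{\eps_0}-1)^2/e^{\eps_0}$ via convexity: since $x\mapsto(x-1)^2$ lies below its endpoint chord $L$ on $[e^{-\eps_0},e^{\eps_0}]$, and $L$ is affine with $\sum_jp_j=\sum_jp'_j=1$, $\sum_jp_j(p'_j/p_j-1)^2\le\sum_jp_jL(p'_j/p_j)=L(1)$, and a short computation gives $L(1)=(e^{\eps_0}-1)^2/e^{\eps_0}$; this contributes the second term. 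For $3\le i\le\lambda$, $\bbE[X^i]\le\bbE|X|^i=m^{-i}\bbE|S|^i$ with $S:=\sum_\ell Y_\ell$, and I would use sub-Gaussianity: Hoeffding's lemma makes each $Y_\ell$ sub-Gaussian with variance proxy $(e^{\eps_0}-e^{-\eps_0})^2/4$, so $S$ has proxy $\sigma^2=m(e^{\eps_0}-e^{-\eps_0})^2/4$; then $\bbE|S|^i=\int_0^\infty it^{i-1}\Pr[|S|>t]\,dt\le 2i\int_0^\infty t^{i-1}e^{-t^2/(2\sigma^2)}\,dt=i\,\Gamma(i/2)\,(2\sigma^2)^{i/2}$, the last equality by the substitution $w=t^2/(2\sigma^2)$. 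Since $2\sigma^2=m(e^{2\eps_0}-1)^2/(2e^{2\eps_0})$, dividing by $m^i$ gives $\bbE[X^i]\le i\,\Gamma(i/2)\bigl((e^{2\eps_0}-1)^2/(2e^{2\eps_0}m)\bigr)^{i/2}$, exactly the $i$-th summand. Assembling the $i=2$ and $i\ge3$ bounds with the expansion proves the inequality; since every estimate used only $\eps_0$-LDP, it holds uniformly over $(\calD_m,\calD'_m)\in\calD_{\same}^m$, so the supremum is bounded as claimed.

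The main obstacle lies in the two quantitative estimates. The chi-squared bound $\chi^2(\bp'\,\|\,\bp)\le(e^{\eps_0}-1)^2/e^{\eps_0}$ is sharp only because it uses the normalization $\sum_jp'_j=1$; the pointwise LDP constraint alone would give the weaker $(e^{\eps_0}-1)^2$, losing the factor $e^{\eps_0}$ present in the theorem. And reproducing the factor $i\,\Gamma(i/2)$ and the constant $(e^{2\eps_0}-1)^2/(2e^{2\eps_0})$ in the third term relies on the sub-Gaussian moment inequality together with the exact Hoeffding variance proxy $(e^{\eps_0}-e^{-\eps_0})^2/4$, so keeping these constants tight rather than a loose multiple is the crux. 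The remaining ingredients — the multinomial likelihood-ratio identity and the binomial expansion — are routine.
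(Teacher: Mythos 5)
Your proof is correct and follows essentially the same route as the paper's: the same multinomial likelihood-ratio identity $\frac{\calM(\calD'_m)(\bh)}{\calM(\calD_m)(\bh)}=\frac1m\sum_j\frac{p'_j}{p_j}h_j$, the same binomial expansion into centered moments with the $i=1$ term vanishing, and the same Hoeffding/sub-Gaussian tail-integration bound yielding the identical constants $i\,\Gamma(i/2)$ and $\bigl((e^{2\eps_0}-1)^2/(2me^{2\eps_0})\bigr)^{i/2}$ for $i\geq3$ (this is the paper's Lemma~\ref{lemm:MomAcc_RV}). The only point of divergence is the variance term: the paper (Lemma~\ref{lemma_bound_sup}) maximizes the convex function $\sum_j p_j'^2/p_j$ over the polytope $\calT_{\eps_0}$ and characterizes the maximizing vertex where all LDP constraints are tight, whereas you bound $\chi^2(\bp'\|\bp)=\sum_j p_j\bigl(p'_j/p_j-1\bigr)^2$ by replacing the convex function $x\mapsto(x-1)^2$ with its chord over $[e^{-\eps_0},e^{\eps_0}]$ and evaluating the resulting affine function at $\sum_j p_j\cdot(p'_j/p_j)=1$, which indeed gives the same constant $(e^{\eps_0}-1)^2/e^{\eps_0}$. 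Your chord argument is shorter and avoids the vertex-enumeration step, at the cost of producing only the upper bound rather than the exact supremum — which is all the theorem requires.
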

We give a proof-sketch of Theorem~\ref{thm:RDP_same} in Section~\ref{subsubsec:proof-sketch_rdp-same} and provide its complete proof in Section~\ref{sec:special_form}.

Substituting $m=(1-\gamma)q(n-1)+1$ in \eqref{RDP_same_bound1_sketch} yields the bound in Theorem~\ref{thm:general_case}.

\subsubsection{Proof Sketch of Theorem~\ref{Thm:reduce_special_case}}\label{subsubsec:proof-sketch_reduce-special-case}
%We provide a proof sketch for Theorem~\ref{Thm:reduce_special_case}. Let $p=\frac{1}{e^{\epsilon_0}}$ and $m\sim \text{Bin}\left(n-1,p\right)$ be a binomial random variable. Suppose arbitrary neighboring datasets $\calD=\left(d_1,\ldots,d_{n-1},d_n\right)$ and $\calD'=\left(d_1,\ldots,d_{n-1},d'_{n}\right)$. For given neighboring datasets $\calD\sim\calD'$ and a number $m\in\lbrace 1,\ldots,n-1\rbrace$, we define new neighboring datasets $\calD_{m+1}=\left(d'_n,\ldots,d'_n,d_n\right)$ and  $\calD_{m+1}=\left(d'_n,\ldots,d'_n,d'_n\right)$, where $\calD_{m+1}$, $\calD'_{m+1}$ have $m+1$ data points. 
%
%
%The objective of Theorem~$3$ is bound the general Renyi divergence $D_{\lambda}\left(\calM\left(\calD\right)||\calM\left(\calD'\right)\right)$ with the special form $D_{\lambda}\left(\calM\left(\calD_{m+1}\right)||\calM\left(\calD'_{m+1}\right)\right)$ that we already bound in Theorem~$2$. In other words, we want to show that
%\begin{align*}
%\mathbb{E}_{\bh\sim\calM\left(\calD'\right)}\left[\left(\frac{\calM\left(\calD\right)\left(\bh\right)}{\calM\left(\calD'\right)\left(\bh\right)}\right)^{\lambda}\right]&\leq \mathbb{E}_{m\sim\text{Bin}\left(n-1,p\right)}\left[\mathbb{E}_{\bh\sim\calM\left(\calD'_{m+1}\right)}\left[\left(\frac{\calM\left(\calD_{m+1}\right)\left(\bh\right)}{\calM\left(\calD'_{m+1}\right)\left(\bh\right)}\right)^{\lambda}\right]\right]\\
%\end{align*}

%Recall from Section~\ref{subsec:prob-form} that the LDP mechanism $\calR$ has a discrete range $[B]$ for $B\in\bbN$.
For $i\in[n]$, let $\bp_i$ denote the distribution of the $\eps_0$-LDP mechanism $\calR$ when the input data point is $d_i$, and $\bp'_n$ denote the distribution of $\calR$ when the input data point is $d'_n$. The main idea of the proof is the observation that each distribution $\bp_i$ can be written as the following mixture distribution:  %as follows:
%\begin{equation}
$\bp_i = \frac{1}{e^{\epsilon_0}}\bp'_n+\left(1-\frac{1}{e^{\epsilon_0}}\right)\tilde{\bp}_i$,
%\end{equation}
where $\tilde{\bp}_i$ is a certain distribution associated with $\bp_i$. 
So, instead of client $i\in[n-1]$ mapping its data point $d_i$ according to $\bp_i$, we can view it as the client $i$ maps $d_i$ according to $\bp_n'$ with probability $\frac{1}{e^{\epsilon_0}}$ and according to $\tilde{\bp}_i$ with probability $(1-\frac{1}{e^{\epsilon_0}})$. Thus the number of clients %($m$ out of the first $n-1$ clients) 
that sample from the distribution $\bp'_n$ follows a binomial distribution Bin($n-1,q$) with parameter $q=\frac{1}{e^{\epsilon_0}}$. 
This allows us to write the distribution of $\calM$ when clients map their data points according to $\bp_1,\hdots,\bp_n,\bp'_n$ as a convex combination of the distribution of $\calM$ when clients map their data points according to $\tilde{\bp}_1,\hdots,\tilde{\bp}_{n-1},\bp_n,\bp'_n$; see Lemma~\ref{lem:convex-combinations}. 
Then using a joint convexity argument (see Lemma~\ref{lem:joint-con_renyi-exp}), we write the Renyi divergence between the original pair of distributions of $\calM$ in terms of the same convex combination of the Renyi divergence between the resulting pairs of distributions of $\calM$ as in Lemma~\ref{lem:convex-combinations}.
Using a monotonicity argument (see Lemma~\ref{lem:cvx_rdp}), we can remove the effect of clients that do not sample from the distribution $\bp'_n$ without decreasing the Renyi divergence.
By this chain of arguments, we have reduced the problem to the one involving the computation of Renyi divergence only for the special form of neighboring datasets, which proves Theorem~\ref{Thm:reduce_special_case}. Details can be found in Section~\ref{sec:proof_reduce_special_case}.

\subsubsection{Proof Sketch of Theorem~\ref{thm:RDP_same}}\label{subsubsec:proof-sketch_rdp-same}
Consider any pair of special neighboring datasets $(\calD_m,\calD'_m)\in\calD_{\same}^m$ for any $m\in\bbN$. %, where $\calD_m=\left(d,\ldots,d\right)$ and $\calD_m=\left(d,\ldots,d,d'\right)$, each having $m$ data points. 
Using the polynomial expansion, we get
\begin{align}
\mathbb{E}_{\bh\sim\calM(\calD_{m})}\left[\left(\frac{\calM(\calD'_{m})(\bh)}{\calM(\calD_{m})(\bh)}\right)^{\lambda}\right] &= \sum_{i=0}^{\lambda}\binom{\lambda}{i}\bbE_{\bh\sim\calM(\calD_{m})}\left[\left(\frac{\calM(\calD'_{m})(\bh)}{\calM(\calD_{m})(\bh)}-1\right)^{i}\right]. \label{Proof-same_Taylor-exp}
\end{align}
Let $X:\calA_B^m\to\bbR$ denote a random variable (r.v.) associated with the distribution $\calM(\calD_m)$, 
and for every $\bh\in\calA_B^m$, is defined as $X(\bh)=m\(\frac{\calM(\calD'_{m})(\bh)}{\calM(\calD_{m})(\bh)}-1\)$. With this, we can rewrite \eqref{Proof-same_Taylor-exp} in terms of the moments of $X$. % as follows:
%\begin{align}
%\mathbb{E}_{\bh\sim\calM\left(\calD_{m}\right)}\left[\left(\frac{\calM\left(\calD'_{m}\right)\left(\bh\right)}{\calM\left(\calD_{m}\right)\left(\bh\right)}\right)^{\lambda}\right] = 1 + \sum_{i=1}^{\lambda}\binom{\lambda}{i}\frac{1}{m^{i}} \bbE_{\bh\sim\mu_{0}}\left[\left(X\left(\bh\right)\right)^{i}\right]. \label{Proof-same_Taylor-exp_X}
%\end{align}
Then we show that $X$ is a sub-Gaussian r.v.\ that has zero-mean and bounded variance. Using the sub-Gaussianity of $X$, we bound its higher moments (see Lemma~\ref{lemm:MomAcc_RV}). Substituting these bounds in \eqref{Proof-same_Taylor-exp} proves Theorem~\ref{thm:RDP_same}. Details can be found in Section~\ref{sec:special_form}.

\section{Numerical Results}\label{sec:numerics}

In this section, we present numerical experiments to show the performance of our bounds on the RDP of the shuffled model and its usage for getting approximate DP and composition results.

\begin{figure}[t]
    \centering 
\begin{subfigure}{0.31\textwidth}
  \includegraphics[scale=0.39]{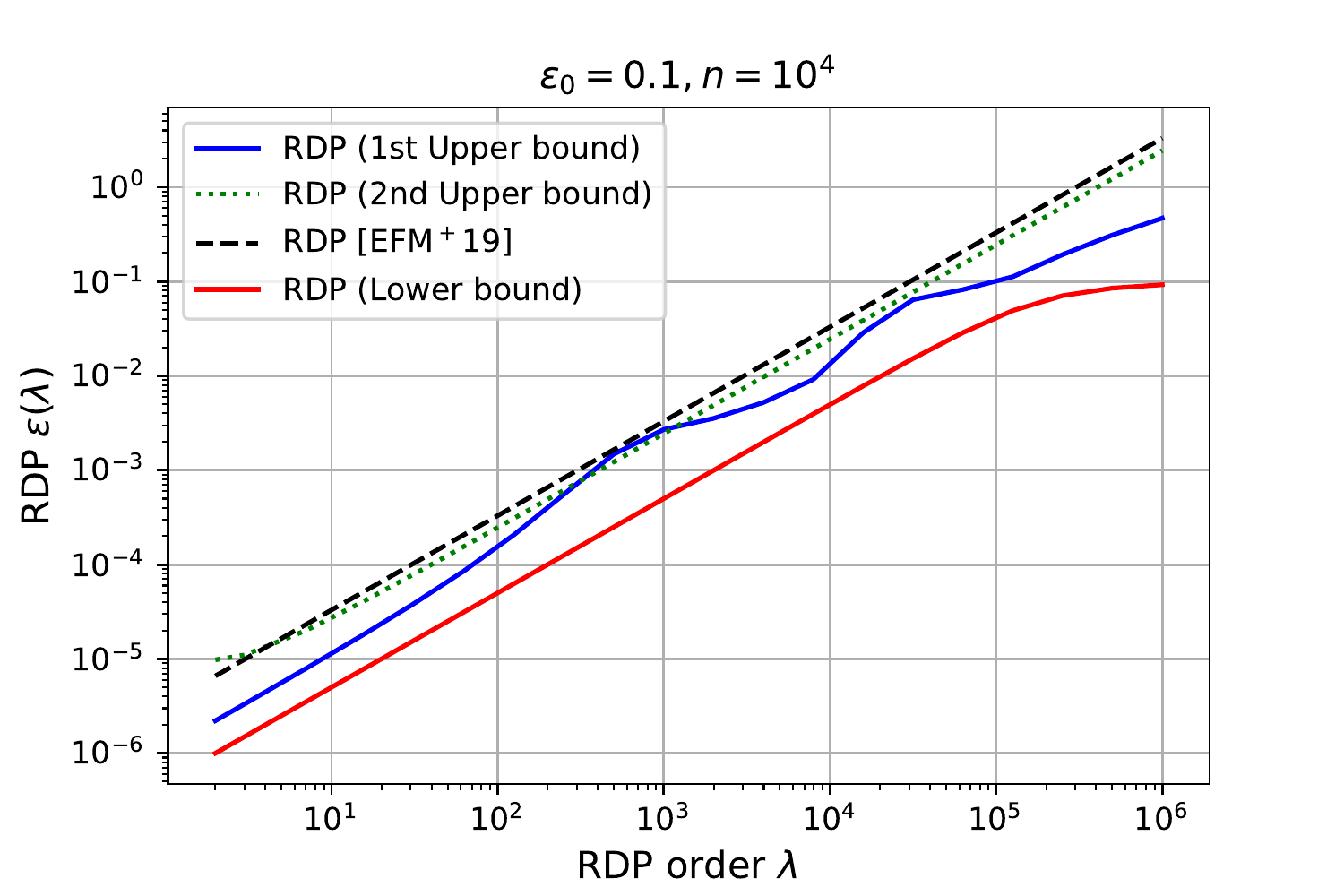}
  \caption{RDP as a function of $\lambda$ for $\epsilon_0=0.1$ and $n=10^4$ }
  \label{fig:1}
\end{subfigure}\hfil 
\begin{subfigure}{0.31\textwidth}
  \includegraphics[scale=0.39]{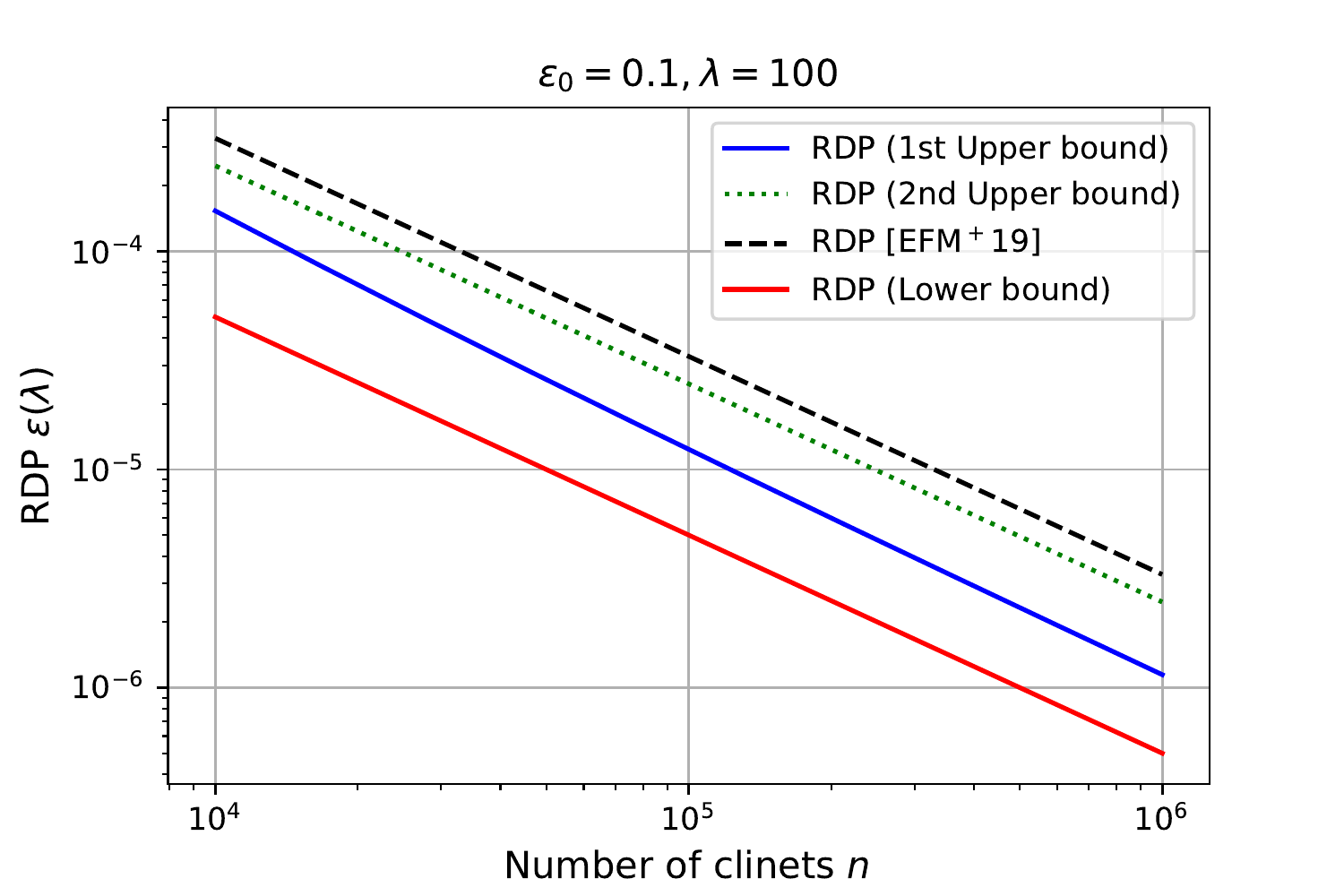}
  \caption{RDP as a function of $n$ for $\epsilon_0=0.1$ and $\lambda=100$}
  \label{fig:2}
\end{subfigure}\hfil 
\begin{subfigure}{0.31\textwidth}
  \includegraphics[scale=0.39]{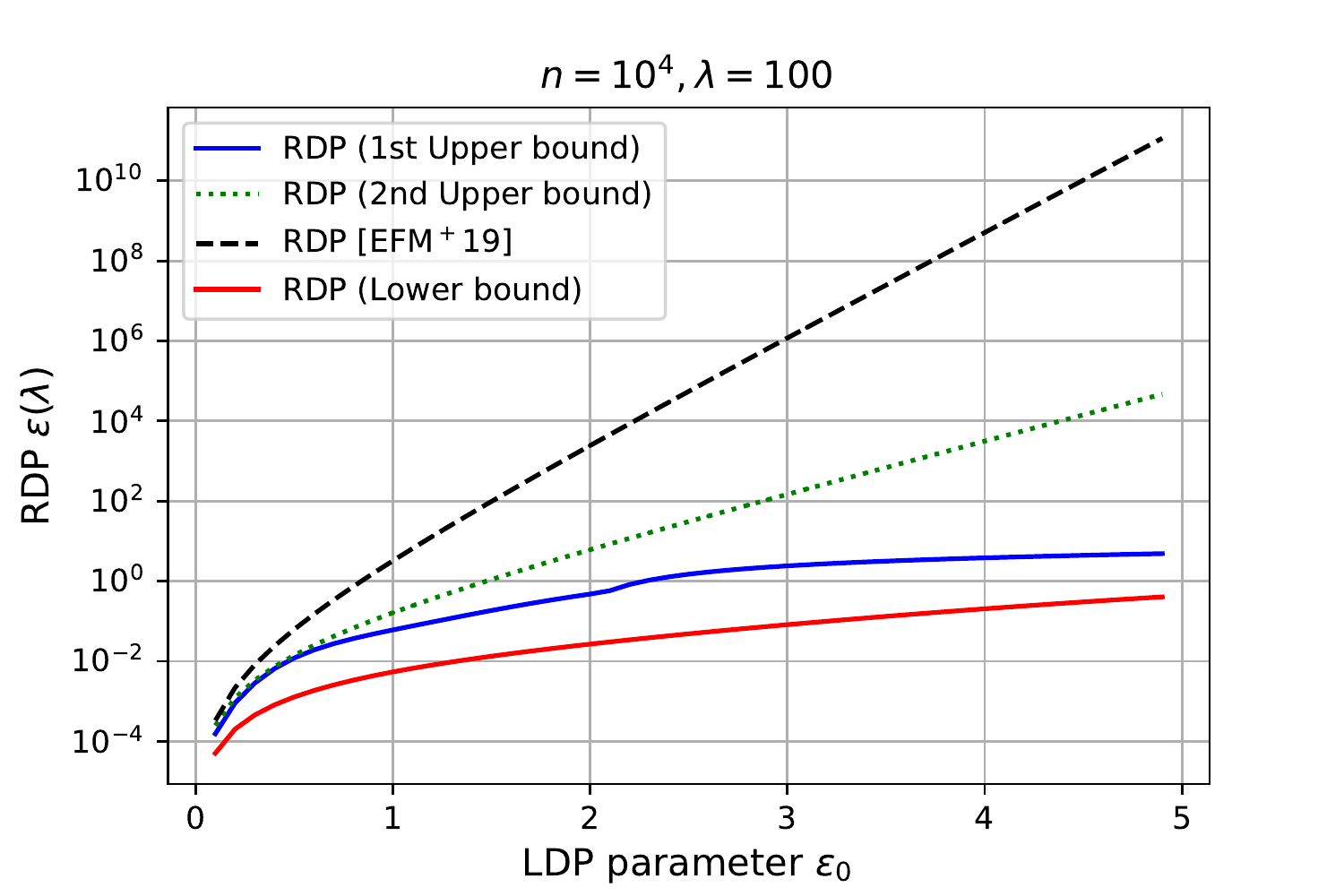}
  \caption{RDP as a function of $\epsilon_0$ for $n=10^4$ and $\lambda=100$}
  \label{fig:3}
\end{subfigure}

\medskip
\begin{subfigure}{0.31\textwidth}
  \includegraphics[scale=0.39]{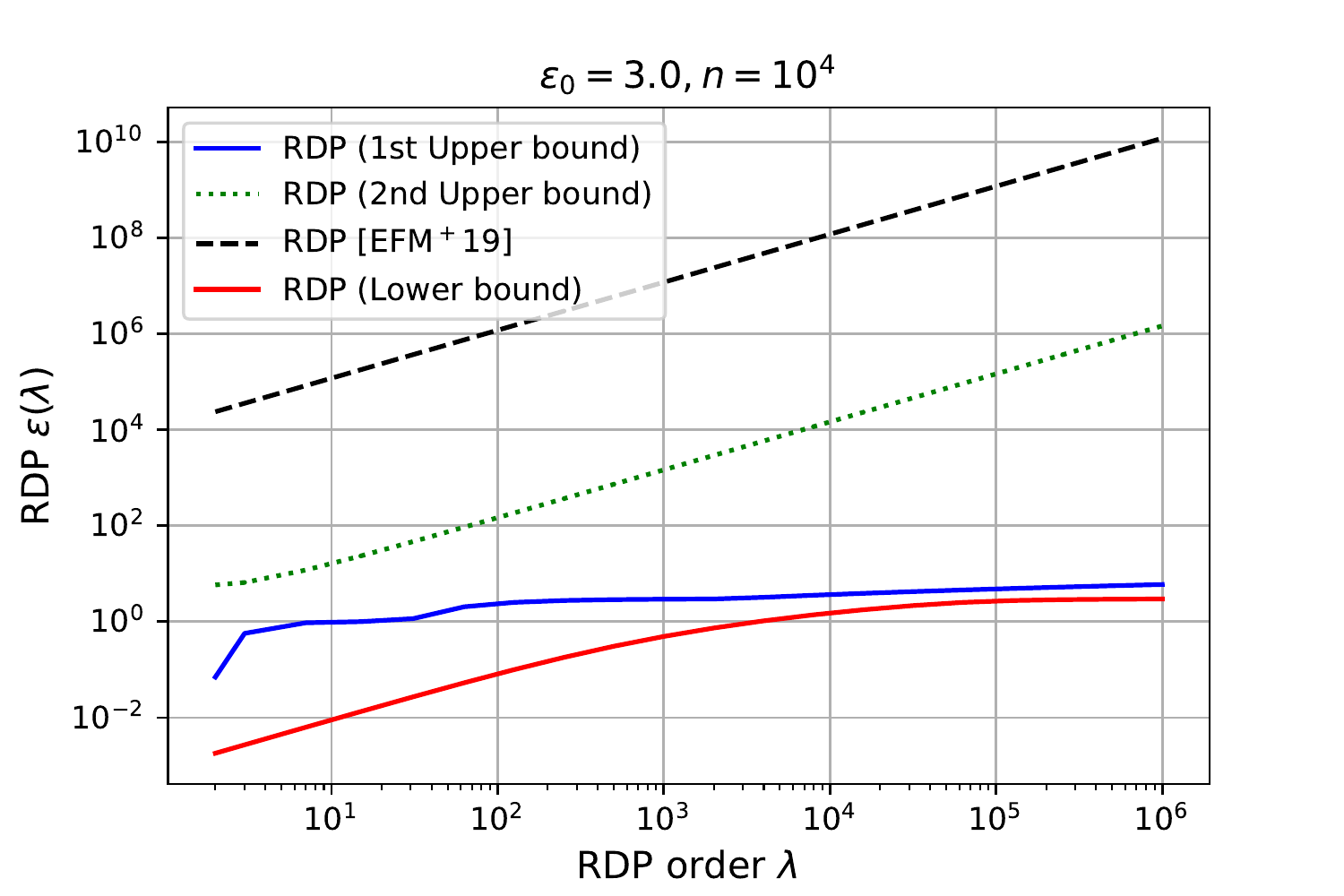}
  \caption{RDP as a function of $\lambda$ for $\epsilon_0=3$ and $n=10^4$}
  \label{fig:4}
\end{subfigure}\hfil 
\begin{subfigure}{0.31\textwidth}
  \includegraphics[scale=0.39]{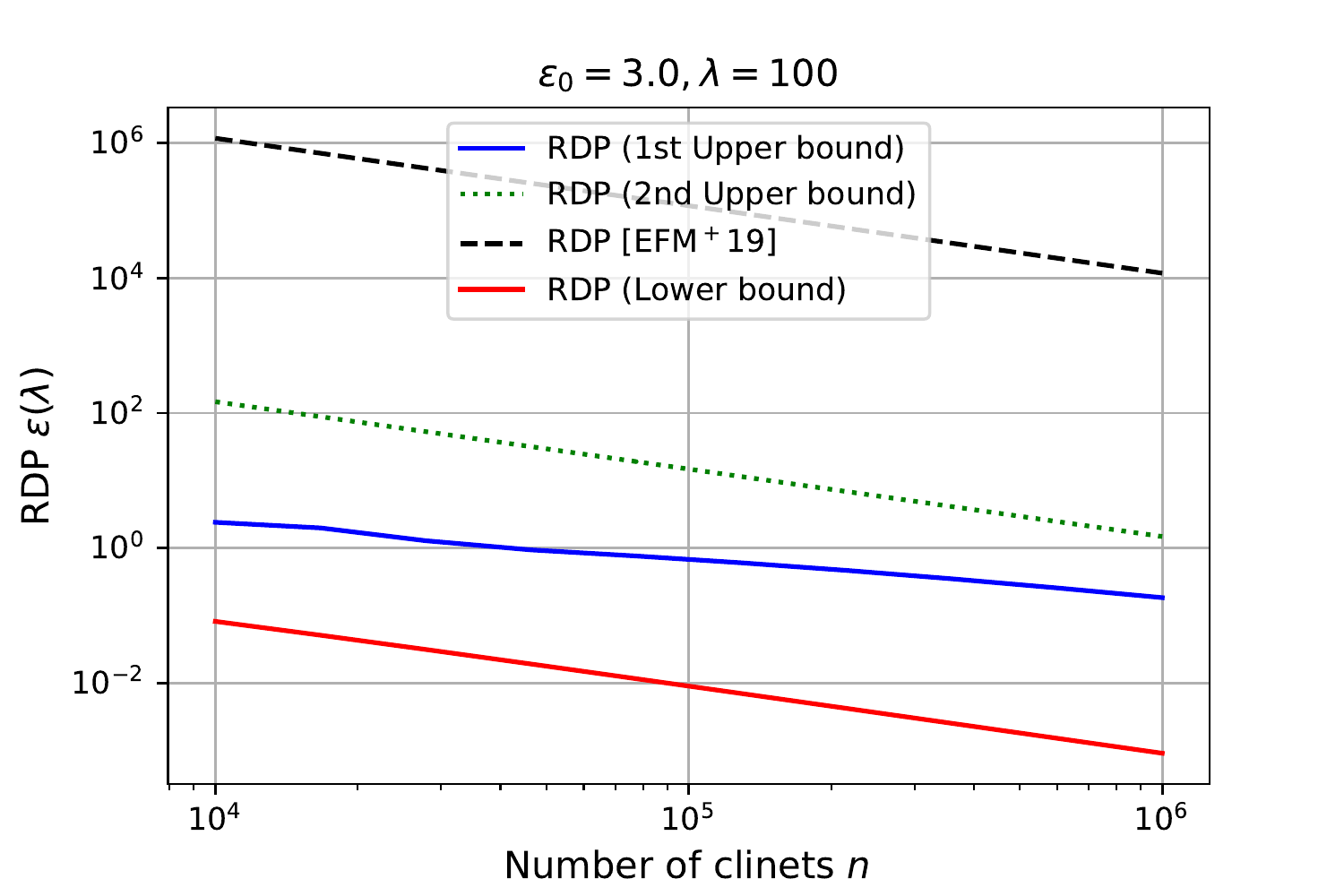}
  \caption{RDP as a function of $n$ for $\epsilon_0=3$ and $\lambda=100$}
  \label{fig:5}
\end{subfigure}\hfil 
\begin{subfigure}{0.31\textwidth}
  \includegraphics[scale=0.39]{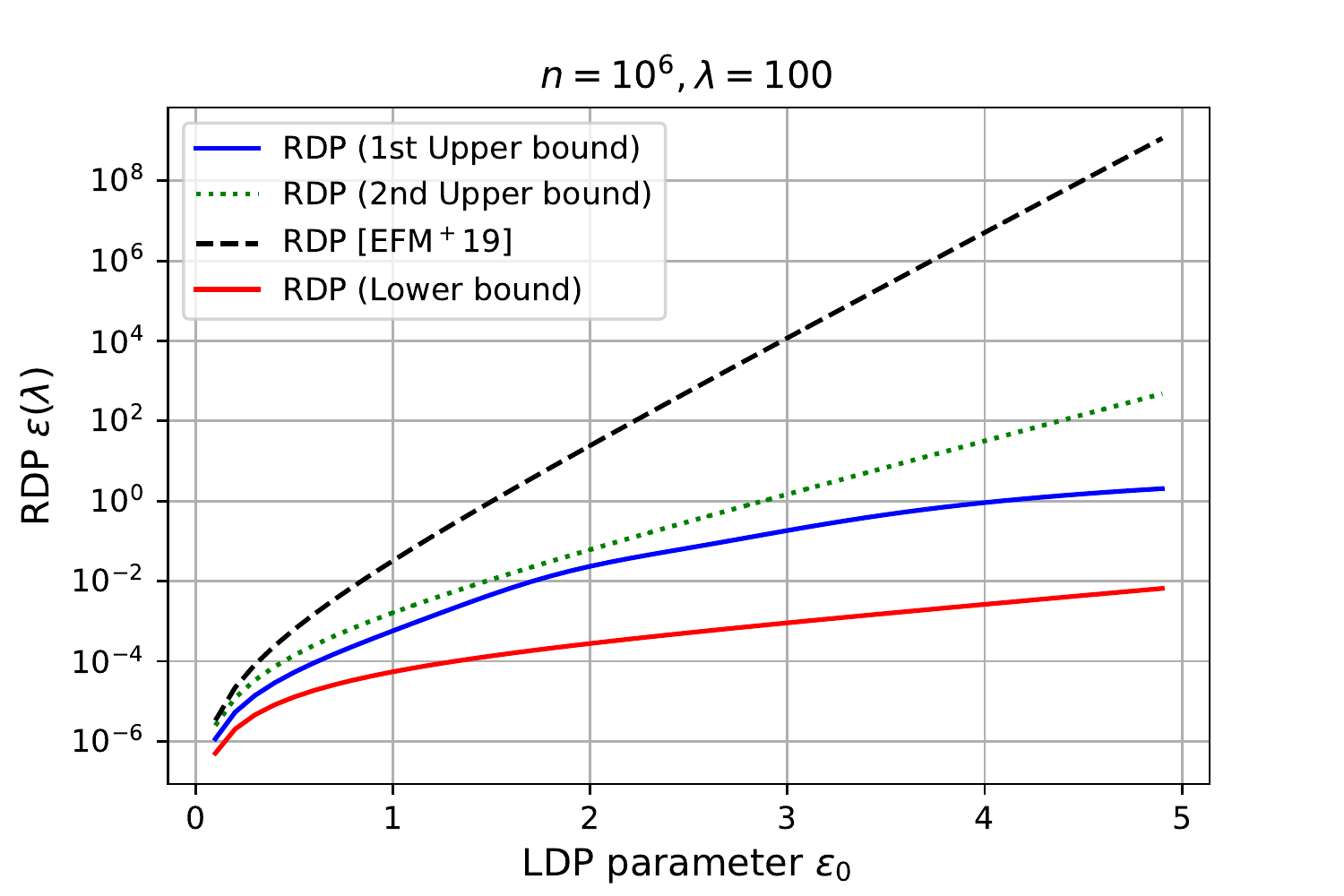}
  \caption{RDP as a function of $\epsilon_0$ for $n=10^6$ and $\lambda=100$}
  \label{fig:6}
\end{subfigure}
\caption{Comparison of several bounds on the RDP of the shuffled model: {\sf (i)} Our first upper bound~\eqref{eqn:1st_bound} in Theorem~\ref{thm:general_case}. {\sf (ii)} Our second upper bound~\eqref{eqn:2nd_bound} in Theorem~\ref{thm:general_case}. {\sf (iii)} Our lower bound proposed in Theorem~\ref{thm:lower_bound}. {\sf (iv)} The upper bound on the RDP of the shuffled model given in~\cite[Remark~$1$]{erlingsson2019amplification}.}
\label{fig:RDP_comp}
\end{figure}

%To the best of our knowledge, there is no bound on RDP of the shuffled model in the literature except the one presented in~\cite[Remark~$1$]{erlingsson2019amplification}.

\paragraph{RDP of the shuffled model:} In Figure~\ref{fig:RDP_comp}, we plot several bounds on the RDP of the
shuffled model in different regimes. In particular, we compare between
the first upper bound on the RDP given in
Theorem~\ref{thm:general_case}, the second upper bound on the RDP
given in Theorem~\ref{thm:general_case_2}, the lower bound on the RDP
given in Theorem~\ref{thm:lower_bound}, and the upper bound on the RDP
given in~\cite[Remark~$1$]{erlingsson2019amplification} and stated
in~\eqref{eqn:rdp_erlin-1}.\footnote{The results in~\cite{feldman2020hiding} are for approximate DP (not for RDP), that is why we did not compare with them in Figure~\ref{fig:RDP_comp}.} 
It is shown that our first upper
bound~\eqref{eqn:1st_bound} gives a tighter bound on the RDP in
comparison with the second bound~\eqref{eqn:2nd_bound} and the upper
bound given in~\cite{erlingsson2019amplification}. Furthermore, the
first upper bound is close to the lower bound for small values of the
LDP parameter $\epsilon_0$ and for high orders $\lambda$. In addition,
the gap between our proposed bound in Theorem~\ref{thm:general_case}
and the bound given in~\cite{erlingsson2019amplification} increases as
the LDP parameter $\epsilon_0$ increases. We also observe that the
curves of the lower and upper bounds on the RDP of the shuffled model
saturate close to $\epsilon_0$ when the order $\lambda$ approaches to
infinity.  This indicates that the pure DP of the shuffled model is
bounded below by $\epsilon_0$, an observation made in literature
\cite{erlingsson2019amplification,BalcerC20}. As can be seen
in Figures \ref{fig:4} and \ref{fig:5}, the RDP obtained by
standard approximate DP to RDP conversion in
~\cite[Remark~$1$]{erlingsson2019amplification}, can be several orders
of magnitude loose in comparison to our analysis.

%
% For LDP parameter $\epsilon_0$ and number of clients $n$, the authors showed that the shuffled mechanism $\calM$ is $\left(\lambda,\epsilon\left(\lambda\right)\right)$-RDP, where $\epsilon\left(\lambda\right)$ is given by 
%\begin{equation}~\label{eqn:rdp_erlin}
%\epsilon\left(\lambda\right)=\lambda\frac{2e^{4\epsilon_0}\left(e^{\epsilon_0}-1\right)^2}{n}
%\end{equation}

\begin{figure}[t]
    \centering 
\begin{subfigure}{0.31\textwidth}
  \includegraphics[scale=0.39]{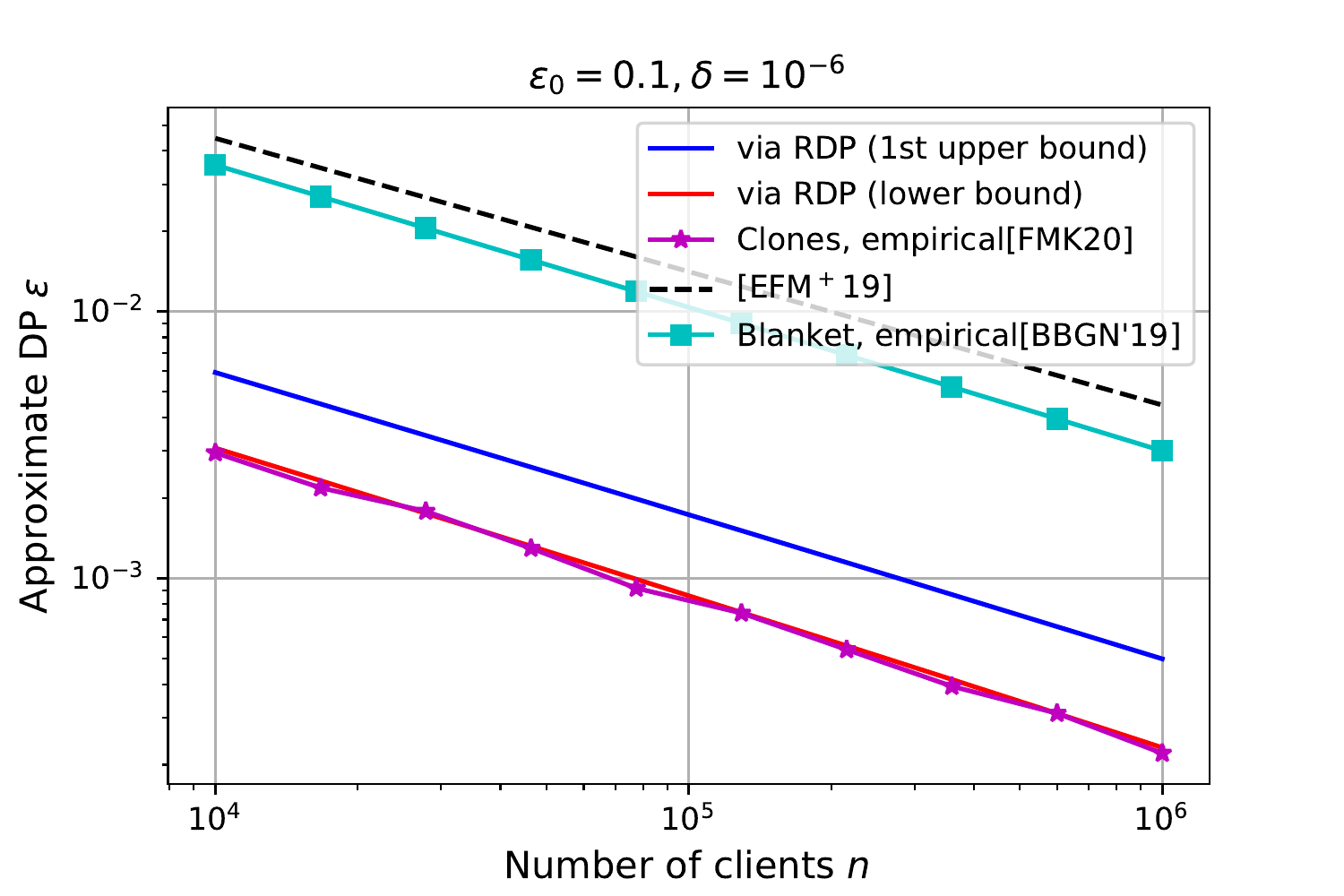}
  \caption{Approximate DP as a function of $n$ for $\epsilon_0=0.1$ and $\delta=10^{-6}$}
  \label{fig:DP_1}
\end{subfigure}\hfil 
\begin{subfigure}{0.31\textwidth}
  \includegraphics[scale=0.39]{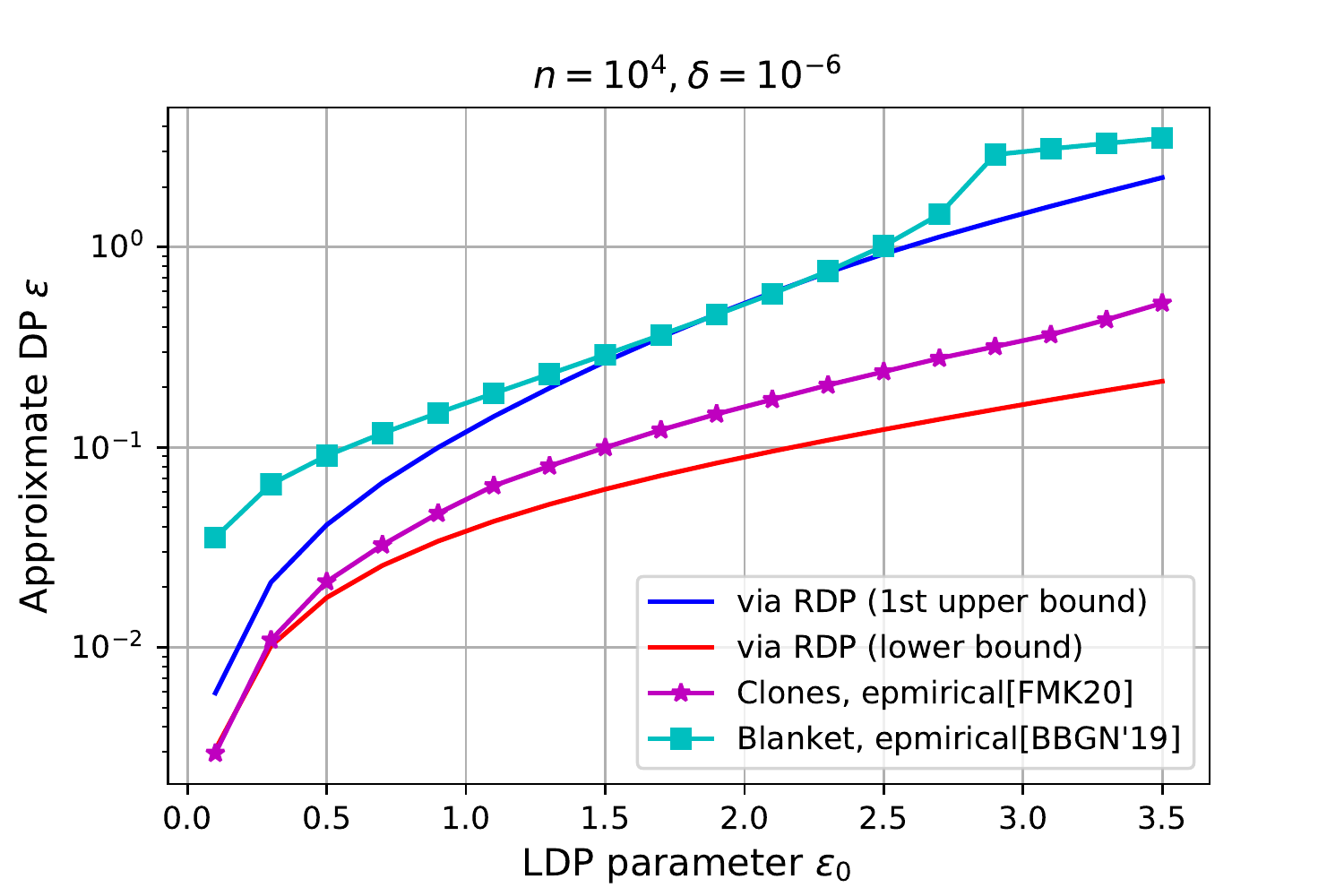}
  \caption{Approximate DP as a function of $\epsilon_0$ for $n=10^4$ and $\delta=10^{-6}$}
  \label{fig:DP_2}
\end{subfigure}

\medskip
\begin{subfigure}{0.31\textwidth}
  \includegraphics[scale=0.39]{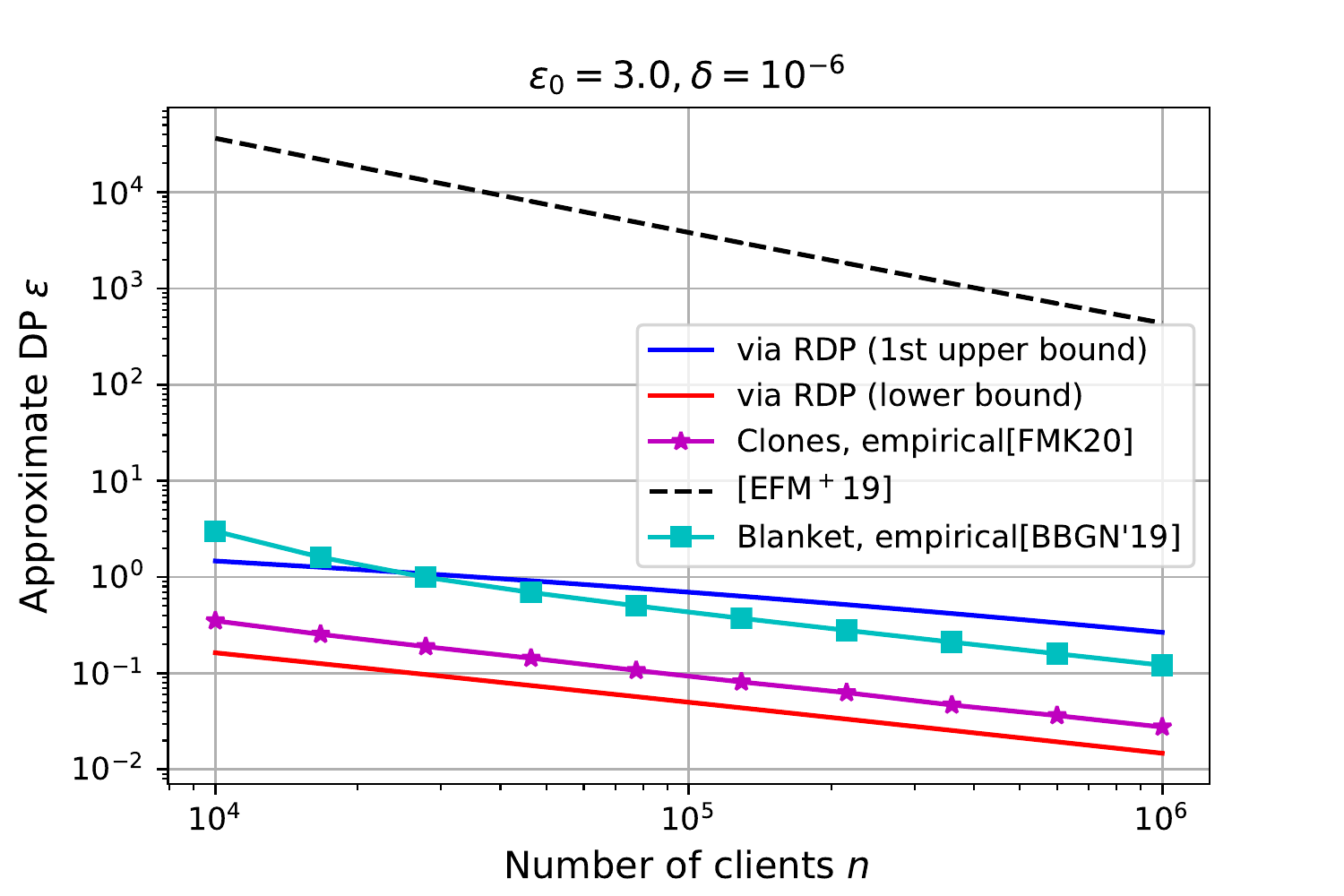}
  \caption{Approximate DP as a function of $n$ for $\epsilon_0=3$ and $\delta=10^{-6}$}
  \label{fig:DP_3}
\end{subfigure}\hfil 
\begin{subfigure}{0.31\textwidth}
  \includegraphics[scale=0.39]{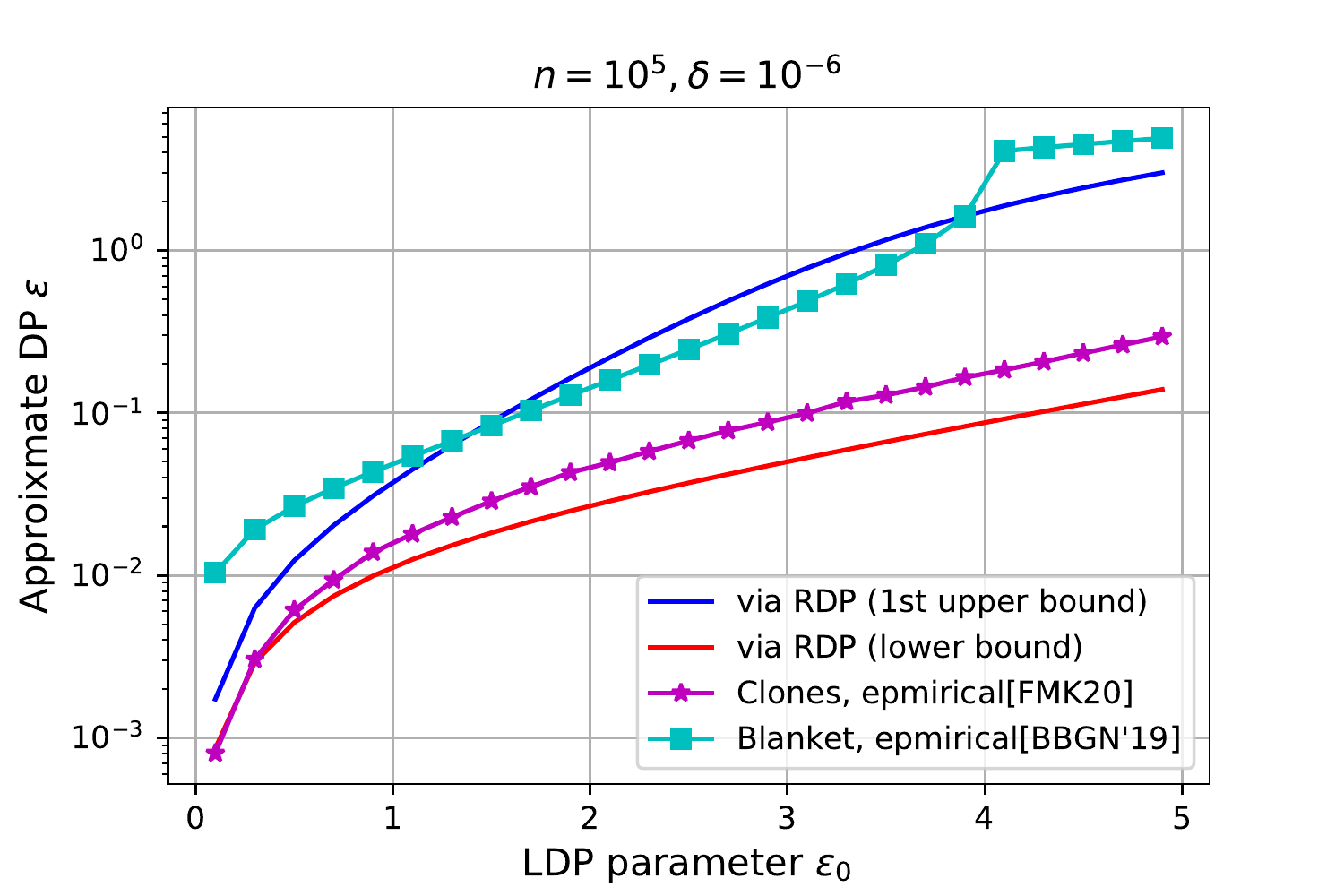}
  \caption{Approximate DP as a function of $\epsilon_0$ for $n=10^5$ and $\delta=10^{-6}$}
  \label{fig:DP_4}
\end{subfigure}
\caption{Comparison of several bounds on the Approximate $\left(\epsilon,\delta\right)$-DP of the shuffled model for $\delta=10^{-6}$: {\sf (i)} Approximate DP obtained from our first upper bound~\eqref{eqn:1st_bound} of the RDP in Theorem~\ref{thm:general_case}. {\sf (ii)} Approximate DP obtained from our lower bound on the RDP proposed in Theorem~\ref{thm:lower_bound}. {\sf (iii)} The empirical bound on the approximate DP given in~\cite{feldman2020hiding}. {\sf (iv)} The theoretical bound on the approximate DP given in~\cite{erlingsson2019amplification}. {\sf (v)} The generic bound on the approximate DP given in~\cite{balle2019privacy}.} 
\label{fig:Approximate_DP_comp}
\end{figure}
    
\paragraph{Approximate DP of the shuffled model:} Analyzing RDP of the
shuffled model provides a bound on the approximate DP of the shuffled
model from the relation between the RDP and approximate DP as shown in
Lemma~\ref{lem:RDP_DP}. In Figure~\ref{fig:Approximate_DP_comp}, we
plot several bounds on the approximate
$\left(\epsilon,\delta\right)$-DP of the shuffled model for fixed
$\delta=10^{-6}$. In Figures~\ref{fig:DP_4} and~\ref{fig:DP_2}, we do
not plot the results given in~\cite{erlingsson2019amplification},
since their bounds are quite loose and are far from the plotted range when
$\epsilon_0>1$. We can see that our analysis of the RDP of the
shuffled model provides a better bound on the approximate DP of the
shuffled model for small values of LDP parameter
$\epsilon_0<1$. However, our RDP analysis performs worse than the best
known bound given in~\cite{feldman2020hiding} when LDP parameter is
large $\epsilon_0>1.5$. This might be due to the gap between our upper
and lower bound on the RDP of the shuffled model as the lower bound
provides better performance than the bound given
in~\cite{feldman2020hiding} for all values of LDP parameter
$\epsilon_0$. Note that the main use case for converting our RDP
analysis to approximate DP is after composition rather than in the
single-shot conversion illustrated in Figure
\ref{fig:Approximate_DP_comp}.

\begin{figure}[htb]
    \centering 
\begin{subfigure}{0.31\textwidth}
  \includegraphics[scale=0.39]{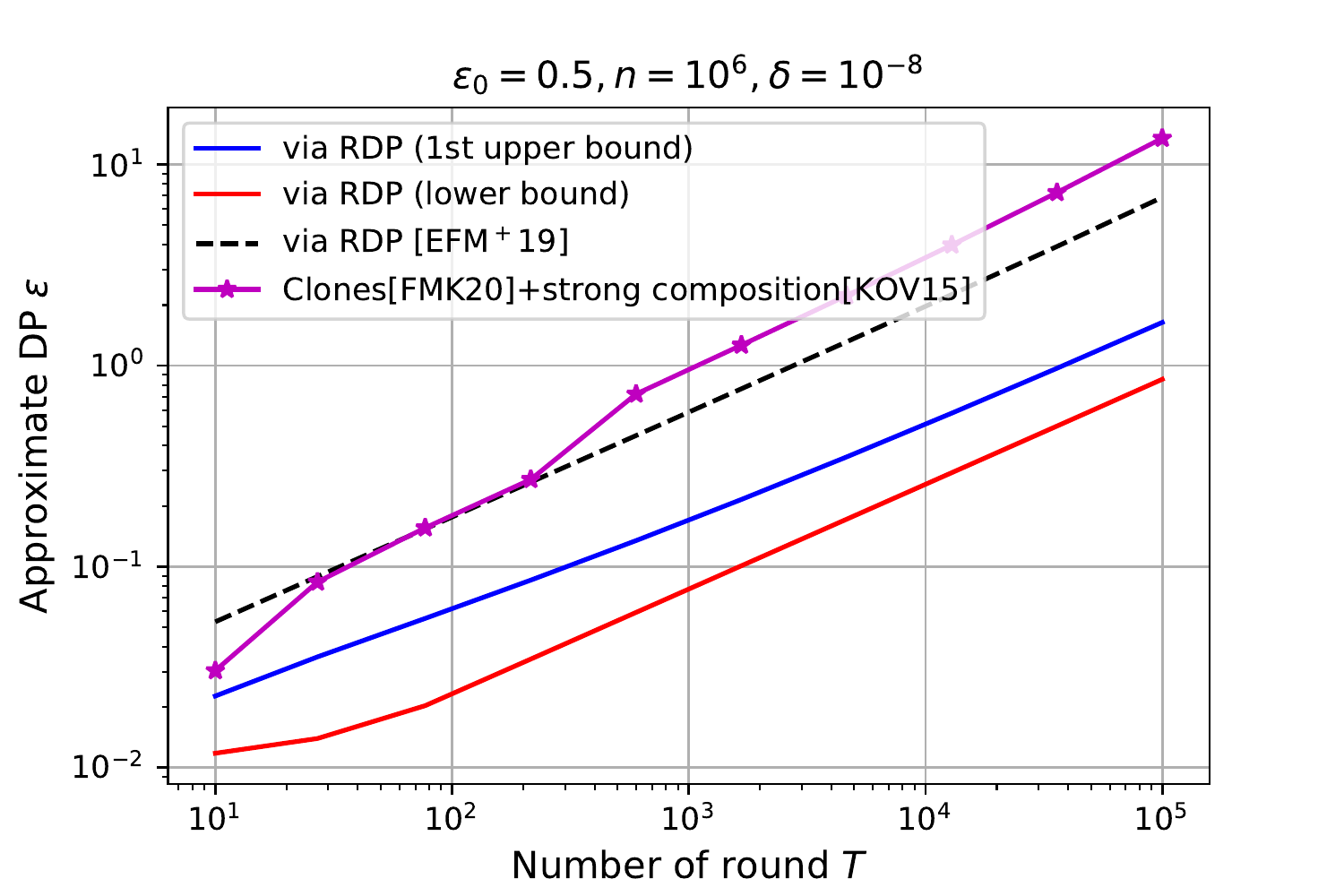}
  \caption{Approximate DP as a function of $T$ for $\epsilon_0=0.5$ and $n=10^{6}$ }
  \label{fig:Dp_com_1}
\end{subfigure}\hfil 
\begin{subfigure}{0.31\textwidth}
  \includegraphics[scale=0.39]{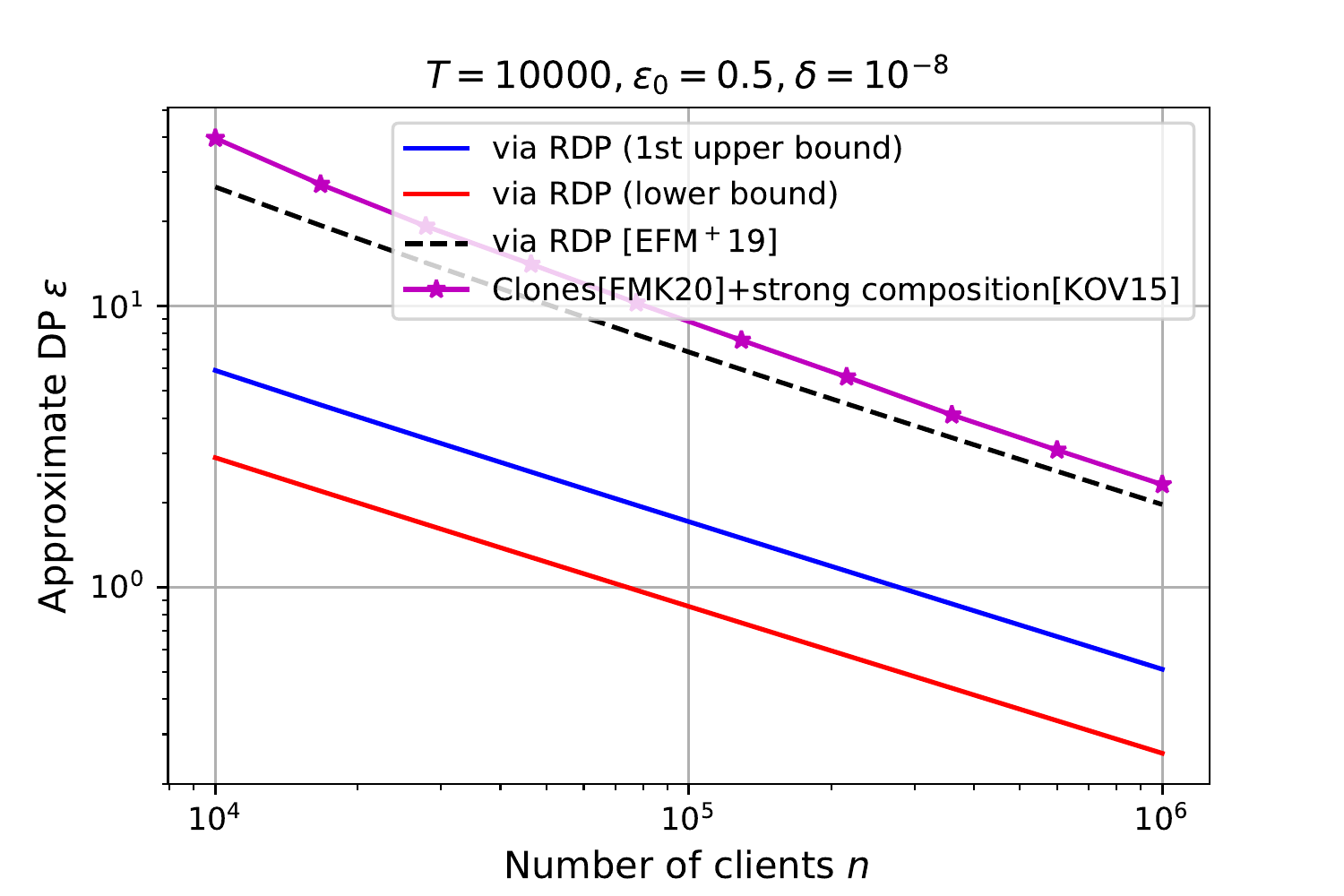}
  \caption{Approximate DP as a function of $n$ for $\epsilon_0=0.5$ and $T=10^{4}$}
  \label{fig:Dp_com_2}
\end{subfigure}\hfil 
\begin{subfigure}{0.31\textwidth}
  \includegraphics[scale=0.39]{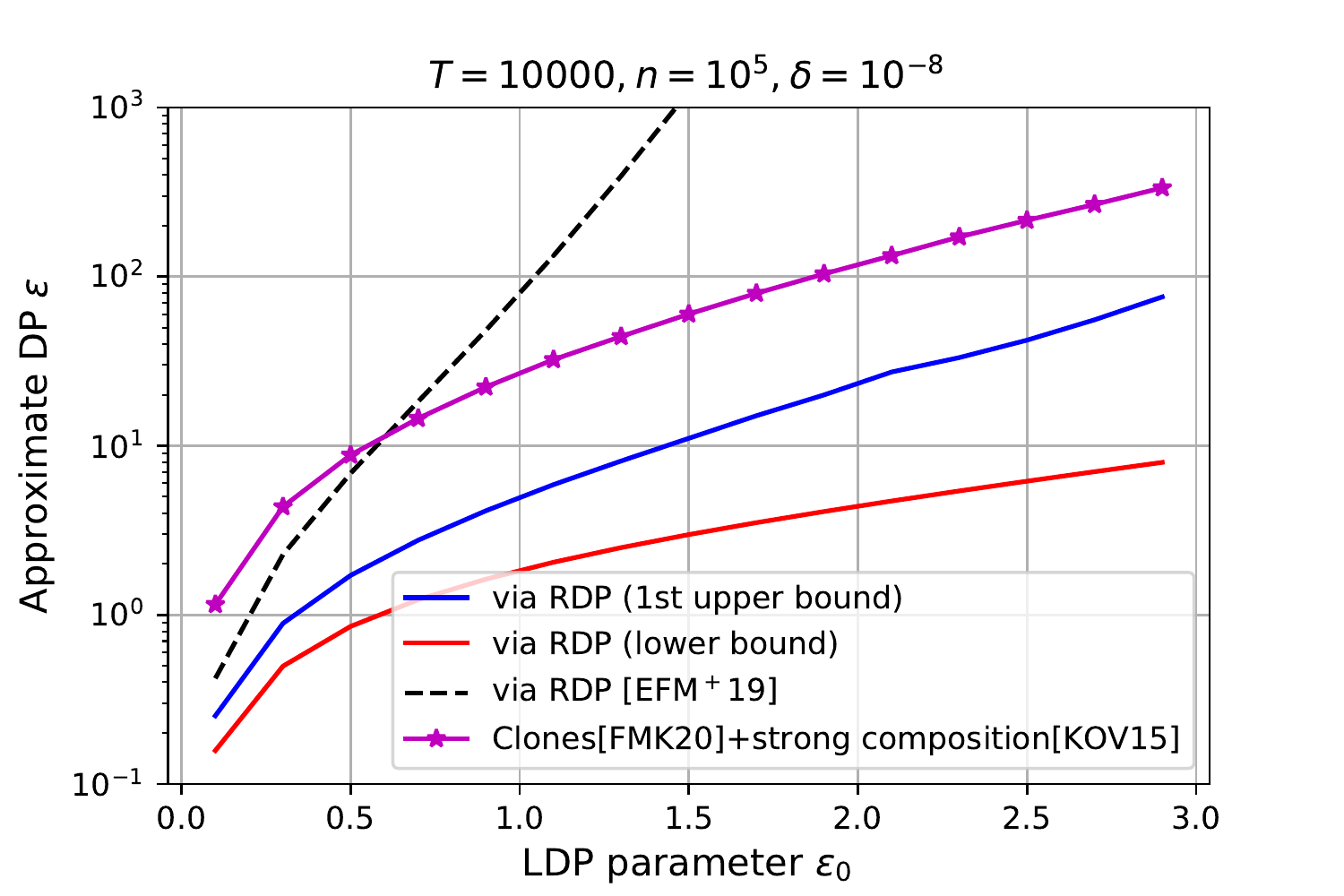}
  \caption{Approximate DP as a function of $\epsilon_0$ for $n=10^4$ and $T=10^{4}$}
  \label{fig:Dp_com_3}
\end{subfigure}

\medskip
\begin{subfigure}{0.31\textwidth}
  \includegraphics[scale=0.39]{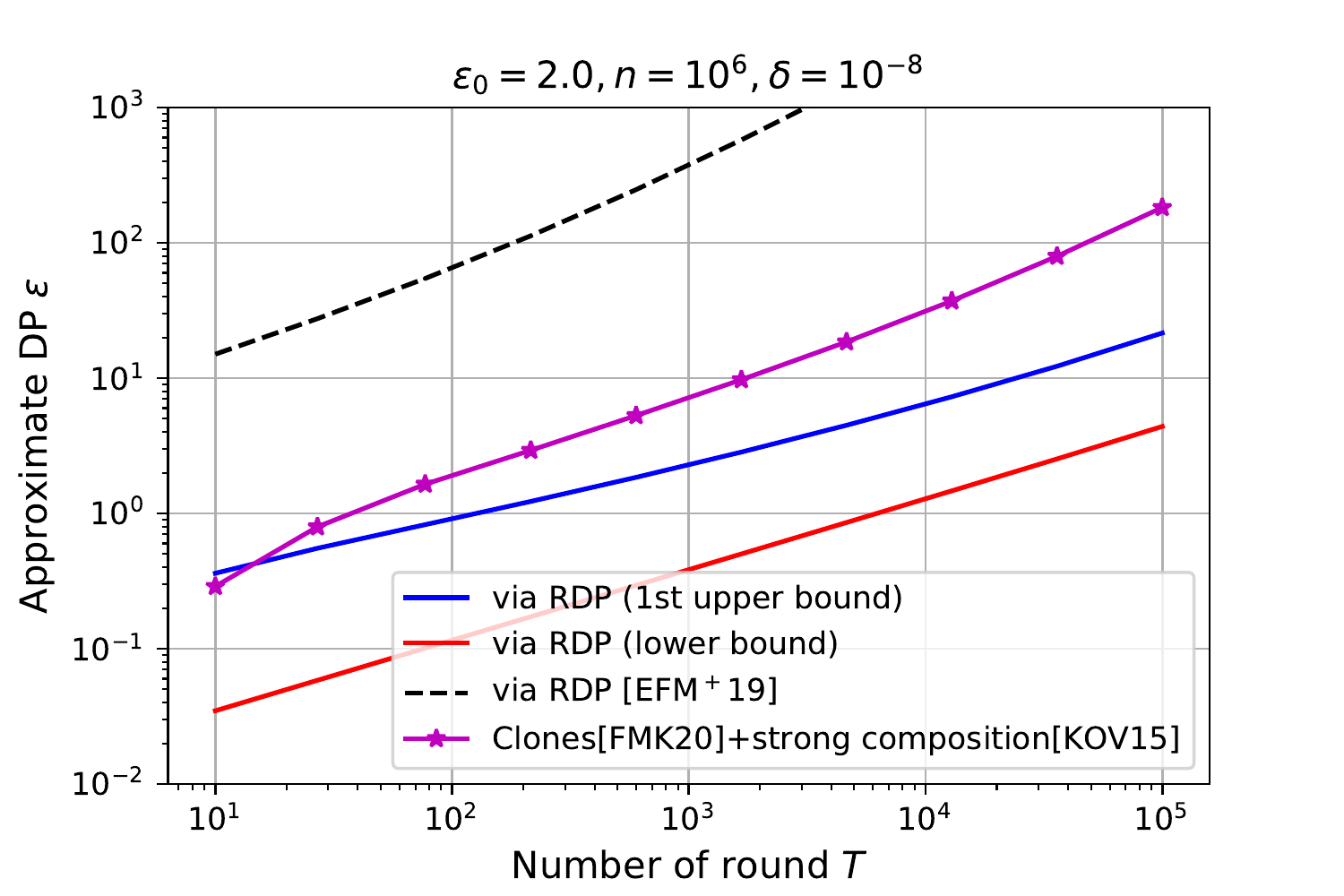}
  \caption{Approximate DP as a function of $T$ for $\epsilon_0=2$ and $n=10^6$}
  \label{fig:Dp_com_4}
\end{subfigure}\hfil 
\begin{subfigure}{0.31\textwidth}
  \includegraphics[scale=0.39]{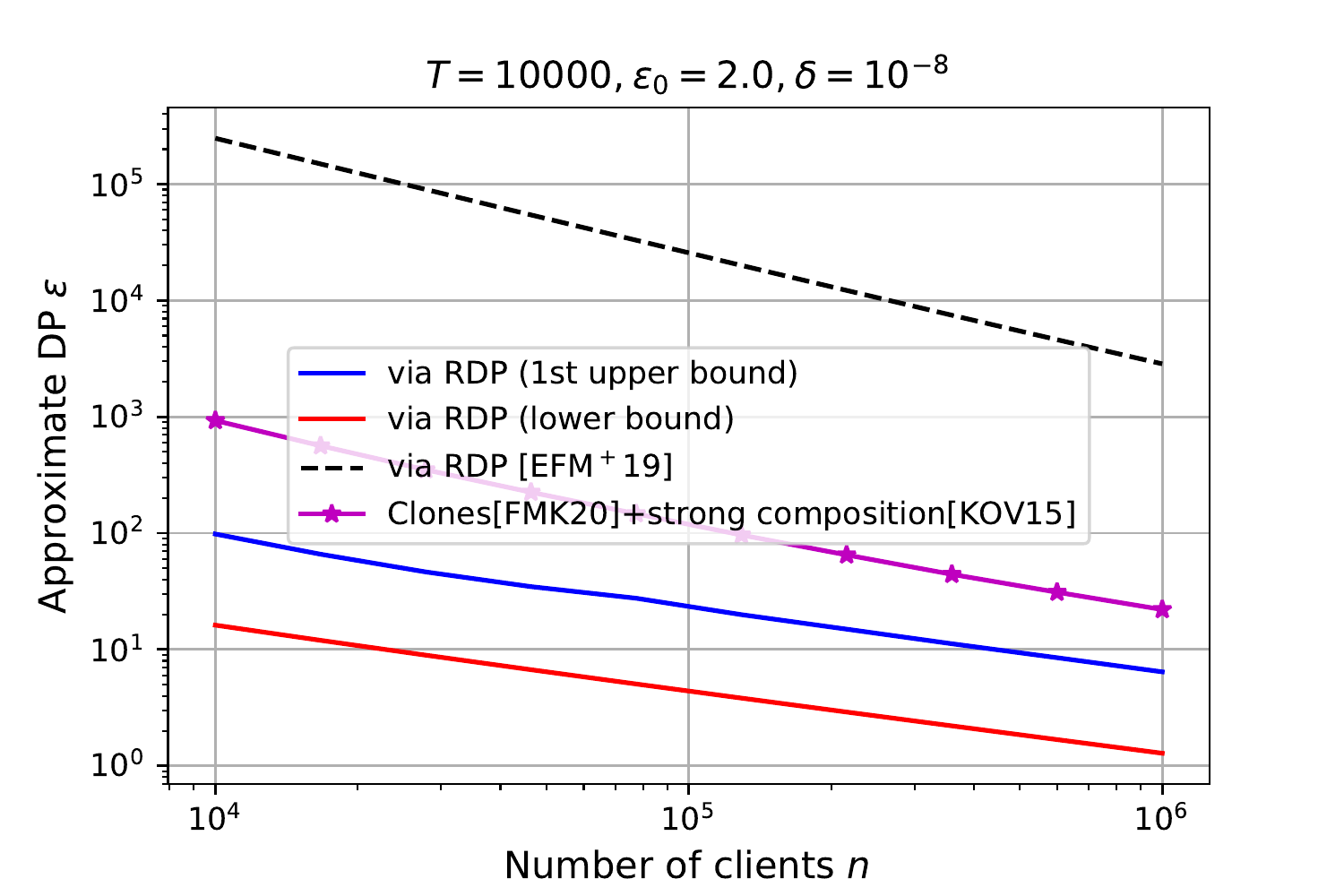}
  \caption{Approximate DP as a function of $n$ for $\epsilon_0=2$ and $T=10^{4}$}
  \label{fig:Dp_com_5}
\end{subfigure}\hfil 
\begin{subfigure}{0.31\textwidth}
  \includegraphics[scale=0.39]{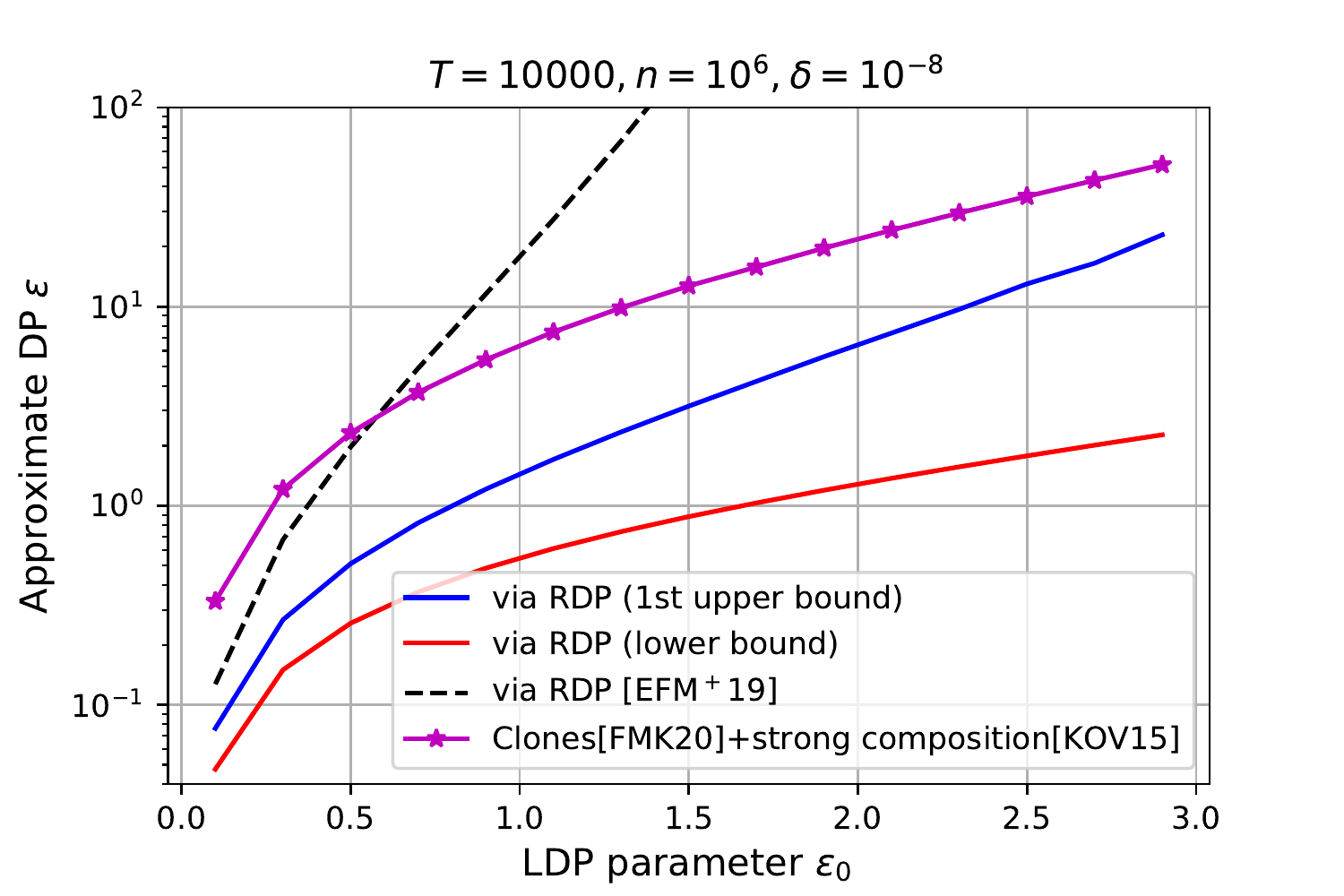}
  \caption{Approximate DP as a function of $\epsilon_0$ for $n=10^6$ and $T=10^4$}
  \label{fig:Dp_com_6}
\end{subfigure}
\caption{Comparison of several bounds on the Approximate $\left(\epsilon,\delta\right)$-DP for composition a sequence of shuffled models for $\delta=10^{-6}$: {\sf (i)} Approximate DP obtained from our first upper bound~\eqref{eqn:1st_bound} of the RDP in Theorem~\ref{thm:general_case}. {\sf (ii)} Approximate DP obtained from our lower bound on the RDP proposed in Theorem~\ref{thm:lower_bound}. {\sf (iii)} Approximate DP obtained from the upper bound on the RDP given in~\cite{erlingsson2019amplification}. {\sf (iv)} Applying the strong composition theorem~\cite{kairouz2015composition} after getting the approximate DP of the shuffled model given in~\cite{feldman2020hiding}.}
\label{fig:DP_composition}
\end{figure}
 
\paragraph{Composition of a sequence of shuffled models:} We now numerically evaluate the privacy parameters of the approximate $\left(\epsilon,\delta\right)$-DP for a composition of $T$ mechanisms $\left(\calM_1,\ldots,\calM_T\right)$, where $\calM_t$ is a shuffled mechanism for all $t\in \left[T\right]$. In Figure~\ref{fig:DP_composition}, we plot three different bounds on the overall privacy parameter $\epsilon$ for fixed $\delta=10^{-6}$ for a composition of $T$ identical shuffled models. The first bound on the overall privacy parameter $\epsilon$ is obtained as a function of $\delta$ and the number of iterations $T$ by optimizing over the RDP order $\lambda$ using our upper bound on the RDP of the shuffled model given in Theorem~\ref{thm:general_case}. The second bound is obtained by optimizing over the RDP order $\lambda$ using the upper bound on the RDP of the shuffled model given in~\cite{erlingsson2019amplification}. The third bound is obtained by first computing the privacy parameters $(\tilde{\epsilon},\tilde{\delta})$ of the shuffled model given in~\cite{feldman2020hiding}. Then, we use the strong composition theorem given in~\cite{kairouz2015composition} to obtain the overall privacy loss $\epsilon$. We observe that there is a significant saving in the overall privacy parameter $\epsilon$-DP using our bound on RDP in comparison with using the bound on DP~\cite{feldman2020hiding} with strong composition theorem~\cite{kairouz2015composition}. For example, we save a factor of $8\times$ in computing the overall privacy parameter $\epsilon$ for number of iterations $T=10^{5}$, LDP parameter $\epsilon_0=0.5$, and number of clients $n=10^{6}$. We observe that the bound given in~\cite{feldman2020hiding} with the strong composition theorem~\cite{kairouz2015composition} behaves better for small number of iterations $T<10$ and large LDP parameter $\epsilon_0=2$. However, the typical number of iterations $T$ in the standard SGD algorithm is usually larger. Therefore, this  demonstrate the significance of our RDP analysis for composition in important regimes of interest.  

\begin{figure}[htb]
    \centering 
\begin{subfigure}{0.31\textwidth}
  \includegraphics[scale=0.39]{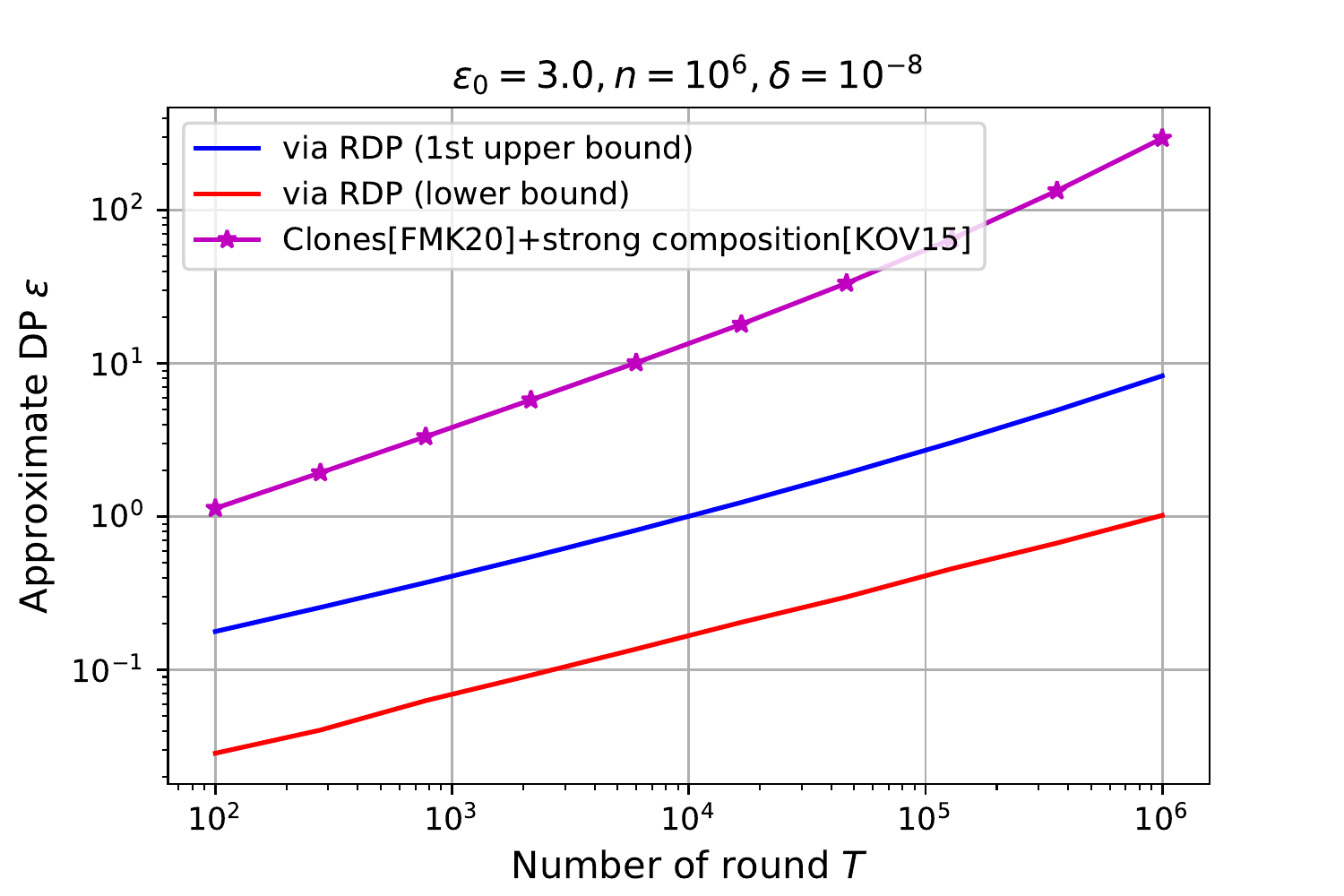}
  \caption{Approximate DP as a function of $T$ for $\epsilon_0=3$, $\gamma=0.001$ and $n=10^{6}$.}
  \label{fig:Dp_com_sampling_1}
\end{subfigure}\hfil 
\begin{subfigure}{0.31\textwidth}
  \includegraphics[scale=0.39]{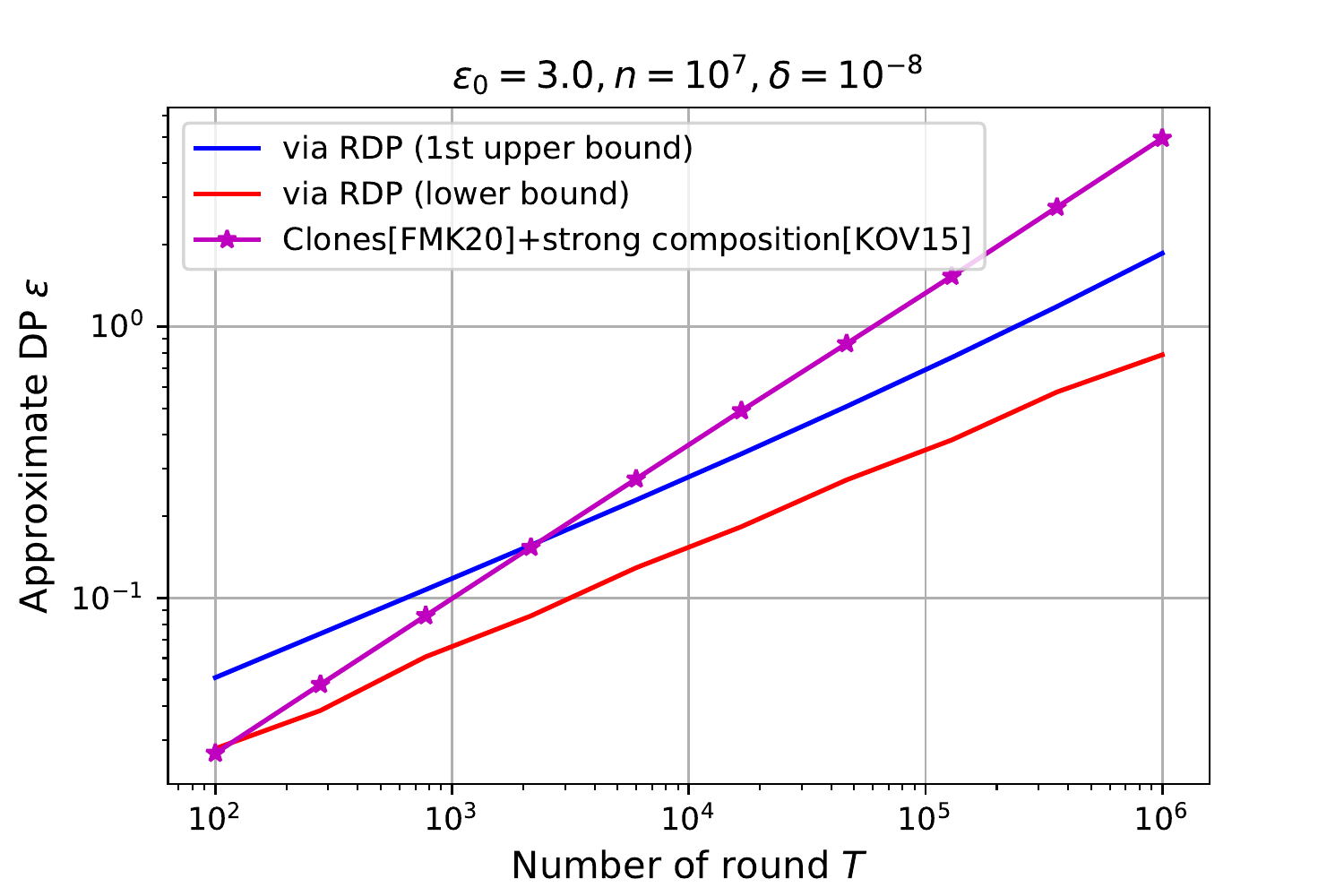}
  \caption{Approximate DP as a function of $n$ for $\epsilon_0=3$, $\gamma=0.001$ and $n=10^{7}$.}
  \label{fig:Dp_com_sampling_2}
\end{subfigure}
\caption{Comparison of several bounds on the Approximate $\left(\epsilon,\delta\right)$-DP for composition a sequence of shuffled models with Poisson sub-sampling for $\delta=10^{-8}$ and $\gamma=0.001$: {\sf (i)} Approximate DP obtained from our first upper bound~\eqref{eqn:1st_bound} on the RDP in Theorem~\ref{thm:general_case} with RDP amplification by Poisson sub-sampling~\cite{zhu2019poission}. {\sf (ii)} Approximate DP obtained from our lower bound on the RDP proposed in Theorem~\ref{thm:lower_bound} with RDP amplification by Poisson sub-sampling~\cite{zhu2019poission}. {\sf (iii)} Applying the strong composition theorem~\cite{kairouz2015composition} after getting the approximate DP of the shuffled model given in~\cite{feldman2020hiding} with Poisson sub-sampling~\cite{li2012sampling}.}
\label{fig:DP_composition_sampling}
\end{figure}

\paragraph{Privacy amplification by shuffling and Poisson sub-sampling:}
In the Differentially Private Stochastic Gradient Descent (DP-SGD),
shuffling and sampling the dataset at each iteration are important
tools to provide strong privacy
guarantee~\cite{girgis2021shuffled-jsait,ESA}. In these frameworks,
the further advantage of sampling with shuffling\footnote{In this framework we assume that the sampling and shuffling is done by a secure mechanism which is separated from the server, \emph{i.e.,} the server does not know which clients are participating.}  can be analyzed by
standard combination of approximate DP with Poisson subsampling
~\cite{li2012sampling}. The resulting approximate DP along with strong
composition theorem given in~\cite{kairouz2015composition} gives the
overall privacy loss $\epsilon$. An alternate path we use is to
combine our RDP analysis with sampling of RDP mechanisms using
~\cite{wang2019subsampled,zhu2019poission}. This enables us to get an
RDP guarantee with sampling, which we can then compose using
properties of RDP.  We can use the conversion from RDP to approximate
DP to obtain a bound on the overall privacy loss of multiple
iterations. In Figure~\ref{fig:DP_composition_sampling}, we compare
our results of amplifying the RDP of the shuffled model by Poisson
sub-sampling to the strong composition~\cite{kairouz2015composition}
after getting the approximate DP of the shuffled model given
in~\cite{feldman2020hiding} with Poisson sub-sampling given
in~\cite{li2012sampling}. We observe that we save a factor of
$11\times$ by using our RDP bound for $n=10^6$ and
$\gamma=0.001$. However, we can see that the gap between our
(lower/upper) bounds and the strong composition decreases when
$n=10^7$. This could be due to the simplistic combination of our
analysis with the RDP subsampling of \cite{zhu2019poission}.

\section{Proof of Theorem~\ref{Thm:reduce_special_case} -- Reduction to the Special Case}\label{sec:proof_reduce_special_case}
%Our proof of this theorem is inspired from~\cite{balle2019privacy,feldman2020hiding}. 

%The main idea is to observe that each distribution $\bp_i$ (associated with $\calR(d_i)$) can be seen as a mixture of distribution $\bp_n'$ (associated with $\calR(d_n')$) with probability $p=e^{-\epsilon_0}$ and another distribution $\tilde{\bp}_i$  (that depends on $\bp_i,\bp_n'$) with probability $1-p$. Thus, the number of clients that sample from $\bp_n'$ follows a binomial distribution with parameter $p=e^{-\epsilon_0}$. Furthermore, we show that eliminating the clients who do not sample from $\bp_n'$ does not reduce the Renyi divergence.    
%Details follow.

%Note that the LDP mechanism $\calR:\calX\to\calY$ has a discrete range $\calY=\left[B\right]$. Let $\bp_i:=(p_{i1},\ldots,p_{iB})$ and $\bp'_n:=(p'_{n1},\ldots,p'_{nB})$ be probability distributions over $\calY$ when the data point is $d_i$ and $d'_n$, respectively, where $p_{ij}=\Pr[\calR(d_i)=j]$ and $p_{nj}'=\Pr[\calR(d'_n)=j]$ for all $j\in[B]$ and $i\in\left[n\right]$. 

%Before proceeding with the proof, first we give some notations and definitions.

Recall that the LDP mechanism $\calR:\calX\to\calY$ has a discrete range $\calY=\left[B\right]$ for some $B\in\bbN$. 
\iffalse
Since the output of $\calM$ is a random permutation of the $n$ outputs of $\calR$ (see \eqref{shuffle-mech}), the server cannot associate the $n$ messages to the clients; and the only information it can use from the messages is the histogram, i.e., the number of messages that give any particular output in $[B]$. 
We define the $\calA_{B}^{n}$ by
\begin{equation}\label{histogram-set}
\calA_B^{n}=\lbrace \bh=\left(h_1,\ldots,h_B\right):\sum_{j=1}^{B}h_j=n\rbrace,
\end{equation}
to denote the set of all possible histograms of the output of the shuffler with $n$ inputs. Therefore, we can assume, without loss of generality (w.l.o.g.), that the output of $\calM$ is a distribution over $\calA_B^{n}$ for input dataset $\calD$ of $n$ data points.
\fi
Let $\bp_i:=(p_{i1},\ldots,p_{iB})$ and $\bp'_n:=(p'_{n1},\ldots,p'_{nB})$ denote the probability distributions over $\calY$ when the input to $\calR$ is $d_i$ and $d'_n$, respectively, where $p_{ij}=\Pr[\calR(d_i)=j]$ and $p_{nj}'=\Pr[\calR(d'_n)=j]$ for all $j\in[B]$ and $i\in\left[n\right]$. 
%\begin{defn}\label{defn:notation_Ps}
Let $\calP=\lbrace \bp_i:i\in\left[n\right] \rbrace$ and $\calP'=\lbrace \bp_i:i\in\left[n-1\right] \rbrace\bigcup \lbrace \bp'_n\rbrace$. 
For $i\in[n-1]$, let $\calP_{-i}=\calP\setminus\{\bp_i\}$, $\calP'_{-i}=\calP'\setminus\{\bp_i\}$, and also 
$\calP_{-n}=\calP\setminus\{\bp_n\}$, $\calP'_{-n}=\calP'\setminus\{\bp_n'\}$.
Here, $\calP,\calP'$ correspond to the datasets $\calD=\{d_1,\hdots,d_n\},\calD'=\{d_1,\hdots,d_{n-1},d'_n\}$, respectively, and for any $i\in[n]$, $\calP_{-i}$ and $\calP'_{-i}$ correspond to the datasets $\calD_{-i}=\{d_1,\hdots,d_{i-1},d_{i+1},\hdots,d_n\}$ and $\calD'_{-i}=\{d_1,\hdots,d_{i-1},d_{i+1},\hdots,d_{n-1},d'_n\}$, respectively. 
%\end{defn}
%
%

For any collection $\calP=\{\bp_1,\hdots,\bp_n\}$ of $n$ distributions, we define $F(\calP)$ to be the distribution over $\calA_B^n$ (which is the set of histograms on $B$ bins with $n$ elements as defined in \eqref{histogram-set}) that is induced when every client $i$ (independent to the other clients) samples an element from $[B]$ accordingly to the probability distribution $\bp_i$. 
Formally, for any $\bh\in\calA_B^n$, define
\begin{align}\label{mapping-possibilities}
\calU_{\bh} := \left\{ (\calU_1,\hdots,\calU_B): \calU_1,\hdots,\calU_B\subseteq[n] \text{ s.t. } \bigcup_{j=1}^B\calU_j=[n] \text{ and } |\calU_j|=h_j,\forall j\in[B] \right\}.
\end{align}
Note that for each $(\calU_1,\hdots,\calU_B)\in\calU_{\bh}$, $\calU_j$ for $j=1,\hdots,B$ denotes the identities of the clients that map to the $j$'th element in $[B]$ -- here $\calU_j$'s are disjoint for all $j\in[B]$. Note also that $|\calU_{\bh}|=\binom{n}{\bh}=\frac{n!}{h_1!h_2!\hdots h_B!}$. It is easy to verify that for any $\bh\in\calA_B^n$, $F(\calP)(\bh)$ is equal to
\begin{align}\label{general-distribution}
F(\calP)(\bh) = \sum_{(\calU_1,\hdots,\calU_B)\in\calU_{\bh}}\prod_{j=1}^B\prod_{i\in\calU_j}p_{ij}
\end{align}
Similarly, we can define $F(\calP'),F(\calP_{-i}),F(\calP'_{-i})$. 
Note that $F(\calP)$ and $F(\calP')$ are distributions over $\calA_B^n$, whereas, $F(\calP_{-i})$ and $F(\calP'_{-i})$ are distributions over $\calA_B^{n-1}$.
It is easy to see that $F(\calP)=\calM(\calD)$ and $F(\calP')=\calM(\calD')$. Similarly, $F(\calP_{-i})=\calM(\calD_{-i})$ and $F(\calP'_{-i})=\calM(\calD'_{-i})$. 
\iffalse
\begin{defn}\label{defn:FP_MD}
We define $F(\calP)$ and $F(\calP')$ to be the distributions over $\calA_B^{n}$ such that 
$F(\calP)=\calM(\calD)$ and $F(\calP')=\calM(\calD')$. 
Similarly, for any $i\in[n]$, we define $F(\calP_{-i})$ and $F(\calP'_{-i})$ to be the distributions over $\calA_B^{n-1}$ such that $F(\calP_{-i})=\calM(\calD_{-i})$ and $F(\calP'_{-i})=\calM(\calD'_{-i})$. 
%Note that $F(\calP_{-i}),F(\calP'_{-i})$ are distributions over $\calA_B^n$, whereas, $F(\calP_{-i}),F(\calP'_{-i})$ for any $i\in[n]$ are distributions over $\calA_B^{n-1}$.
\end{defn}
\fi

Now we are ready to prove Theorem~\ref{Thm:reduce_special_case}.

Since $\calR$ is an $\eps_0$-LDP mechanism, we have $$e^{-\eps_0}\leq \frac{p_{ij}}{p'_{nj}}\leq e^{\eps_0},\qquad \forall j\in[B]\ ,i\in[n].$$

As mentioned in Section~\ref{subsubsec:proof-sketch_reduce-special-case}, a crucial observation is that any distribution $\bp_i$ can be written as the following mixture distribution:
\begin{equation}\label{eq:mixture-dist}
\bp_i= q \bp'_n+\left(1-q\right)\tilde{\bp}_i,
\end{equation}
where $q=\frac{1}{e^{\epsilon_0}}$. The distribution $\tilde{\bp}_i=\left[\tilde{p}_{i1},\ldots,\tilde{p}_{iB}\right]$ is given by $\tilde{p}_{ij}=\frac{p_{ij}-q p'_{nj}}{1-q}$, where it is easy to verify that $\tilde{p}_{ij}\geq 0$ and $\sum_{j=1}^{B}\tilde{p}_{ij}=1$.

Now we show that since each $\bp_i=q \bp'_n+(1-q) \tilde{\bp}_i$ is a mixture distribution, we can write $F(\calP)$ and $F(\calP')$ as certain convex combinations. Before stating the result, we need some notation.

For any $\calC\subseteq[n-1]$, define two sets $\calP_{\calC},\calP'_{\calC}$, having $n$ distributions each, as follows:
\begin{align}
\calP_{\calC} &= \{\hat{\bp}_1,\hdots,\hat{\bp}_{n-1}\}\bigcup\{\bp_n\}, \label{eq:defn_P_C} \\
\calP'_{\calC} &= \{\hat{\bp}_1,\hdots,\hat{\bp}_{n-1}\}\bigcup\{\bp'_n\}, \label{eq:defn_P_C-prime}
\end{align}
%\calP_{\calC} = \widehat{\calP}_{\calC,[n-1]}\bigcup\{\bp_n\} and 
%$\calP'_{\calC} = \widehat{\calP}_{\calC,[n-1]}\bigcup\{\bp'_n\}$, 
where, for every $i\in[n-1]$, $\hat{\bp}_i$ is defined as follows:
\begin{equation}\label{eq:defn_hatP}
\hat{\bp}_i=
\begin{cases}
\bp_n' & \text{ if } i\in\calC, \\
\tilde{\bp}_i & \text{ if } i\in[n-1]\setminus\calC.
\end{cases}
\end{equation}
Note that $\calP_{\calC}$ and $\calP'_{\calC}$ differ only in one distribution -- $\calP_{\calC}$ contains $\bp_n$ whereas $\calP'_{\calC}$ contains $\bp'_n$. In words, if clients map their data points according to the distributions in either $\calP_{\calC}$ or $\calP'_{\calC}$ for any $\calC\subseteq[n-1]$, then for all clients $i\in\calC$, the $i$'th client maps its data point according to $\bp'_n$ (which is the distribution of $\calR$ on input $d_n'$), and for all clients $i\in[n-1]\setminus\calC$, the $i$'th client maps its data point according to $\tilde{\bp}_i$. The last client maps its data point according to $\bp_n$ or $\bp_n'$ depending on whether the set is $\calP_{\calC}$ or $\calP'_{\calC}$.

%The set of $(n-1)$ common distributions in $\calP_{\calC}$ and $\calP'_{\calC}$ is denoted by $\widehat{\calP}_{\calC,[n-1]}=\calP'_{\calC,n} \bigcup \widetilde{\calP}_{[n-1]\setminus\calC}$, where $\calP'_{\calC,n}=\{\bp'_n,\hdots,\bp'_n\}$ with $|\calP'_{\calC,n}|=|\calC|$ and $\widetilde{\calP}_{[n-1]\setminus\calC}=\{\tilde{\bp}_i:i\in\left[n-1\right]\setminus \calC\}$. 

In the following lemma, we show that $F(\calP)$ and $F(\calP')$ can be written as convex combinations of $\{F(\calP_{\calC}):\calC\subseteq[n-1]\}$ and $\{F(\calP'_{\calC}):\calC\subseteq[n-1]\}$, respectively, where for any $\calC\subseteq[n-1]$, both $F(\calP_{\calC})$ and $F(\calP'_{\calC})$ can be computed analogously as in \eqref{general-distribution}.

\begin{lemma}[Mixture Interpretation]\label{lem:convex-combinations}
%For any $m\in\{0,1,\hdots,n-1\}$, let $\calP_{m} = \calP'_{m,n} \bigcup \widetilde{\calP}_{n-1-m} \bigcup \{\bp_n\}$ and $\calP'_{m} = \calP'_{m,n} \bigcup \widetilde{\calP}_{n-1-m} \bigcup \{\bp'_n\}$, where $\calP'_{m,n}=\{\bp'_n,\hdots,\bp'_n\}$ with $|\calP'_{m,n}|=m$ and $\widetilde{\calP}_{n-1-m}=\{\tilde{\bp}_i:i\in\left[n-1\right]\setminus \calC\}$.
%\begin{align}
%F(\calP)&=\sum_{m=0}^{n-1}\binom{n-1}{m}p^m(1-p)^{n-1-m}F(\calP_m) \label{P_mixture} \\
%F(\calP')&=\sum_{m=0}^{n-1}\binom{n-1}{m}p^m(1-p)^{n-1-m}F(\calP'_m) \label{P-prime_mixture}
%\end{align}
$F(\calP)$ and $F(\calP')$ can be written as the following convex combinations:
\begin{align}
F(\calP)&=\sum_{\calC\subseteq [n-1]} q^{|\calC|}(1-q)^{n-|\calC|-1}F(\calP_{\calC}), \label{P_mixture} \\
F(\calP')&=\sum_{\calC\subseteq [n-1]} q^{|\calC|}(1-q)^{n-|\calC|-1}F(\calP'_{\calC}), \label{P-prime_mixture}
\end{align}
where $\calP_{\calC},\calP_{\calC}'$ are defined in \eqref{eq:defn_P_C}-\eqref{eq:defn_hatP}.
\end{lemma}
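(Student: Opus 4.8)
\textbf{Proof proposal for Lemma~\ref{lem:convex-combinations}.}

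The plan is to expand each distribution $\bp_i$ for $i\in[n-1]$ using the mixture decomposition \eqref{eq:mixture-dist}, $\bp_i = q\bp'_n + (1-q)\tilde{\bp}_i$, and then track which clients ``choose'' the $\bp'_n$ component versus the $\tilde{\bp}_i$ component. Concretely, I would introduce for each client $i\in[n-1]$ an independent Bernoulli random variable $Z_i$ with $\Pr[Z_i=1]=q$, so that conditioned on $Z_i=1$ client $i$ samples from $\bp'_n$ and conditioned on $Z_i=0$ it samples from $\tilde{\bp}_i$. Client $n$ always samples from $\bp_n$ (respectively $\bp'_n$ for $F(\calP')$). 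The set $\calC=\{i\in[n-1]:Z_i=1\}$ is then a random subset of $[n-1]$, and conditioned on $\calC$, the collection of distributions the $n$ clients sample from is exactly $\calP_{\calC}$ (resp.\ $\calP'_{\calC}$) as defined in \eqref{eq:defn_P_C}--\eqref{eq:defn_hatP}. Since the $Z_i$ are independent, $\Pr[\calC=C]=q^{|C|}(1-q)^{n-1-|C|}$ for each $C\subseteq[n-1]$, which gives the claimed convex-combination weights.

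The formal verification proceeds by direct computation using the explicit formula \eqref{general-distribution}. Fix $\bh\in\calA_B^n$. By \eqref{general-distribution},
\begin{equation*}
F(\calP)(\bh) = \sum_{(\calU_1,\hdots,\calU_B)\in\calU_{\bh}} p_{nj(n)}\prod_{i=1}^{n-1} p_{ij(i)},
\end{equation*}
where $j(i)$ denotes the unique bin $j$ with $i\in\calU_j$ in the tuple $(\calU_1,\hdots,\calU_B)$. Substituting $p_{ij(i)} = q\, p'_{nj(i)} + (1-q)\tilde{p}_{ij(i)}$ for each $i\in[n-1]$ and expanding the product $\prod_{i=1}^{n-1}(q\, p'_{nj(i)} + (1-q)\tilde{p}_{ij(i)})$ into $2^{n-1}$ terms indexed by subsets $\calC\subseteq[n-1]$ (where $\calC$ is the set of indices for which the $q\, p'_{nj(i)}$ summand is chosen), we obtain
\begin{equation*}
\prod_{i=1}^{n-1}\bigl(q\, p'_{nj(i)} + (1-q)\tilde{p}_{ij(i)}\bigr) = \sum_{\calC\subseteq[n-1]} q^{|\calC|}(1-q)^{n-1-|\calC|}\prod_{i\in\calC}p'_{nj(i)}\prod_{i\in[n-1]\setminus\calC}\tilde{p}_{ij(i)}.
\end{equation*}
Plugging this back in and swapping the order of summation between $(\calU_1,\hdots,\calU_B)\in\calU_{\bh}$ and $\calC\subseteq[n-1]$, the inner sum over $\calU_{\bh}$ with the weights $\prod_{i\in\calC}p'_{nj(i)}\prod_{i\in[n-1]\setminus\calC}\tilde{p}_{ij(i)}\cdot p_{nj(n)}$ is, by definition \eqref{general-distribution} applied to the collection $\calP_{\calC}$ (where client $i\in\calC$ uses $\hat\bp_i=\bp'_n$, client $i\in[n-1]\setminus\calC$ uses $\hat\bp_i=\tilde\bp_i$, and client $n$ uses $\bp_n$), exactly $F(\calP_{\calC})(\bh)$. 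This yields \eqref{P_mixture}. The identity \eqref{P-prime_mixture} follows by the same computation with $p_{nj(n)}$ replaced by $p'_{nj(n)}$ throughout, so that client $n$ uses $\bp'_n$ and the inner sum becomes $F(\calP'_{\calC})(\bh)$.

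Finally, the weights are nonnegative and sum to one: $\sum_{\calC\subseteq[n-1]} q^{|\calC|}(1-q)^{n-1-|\calC|} = \sum_{k=0}^{n-1}\binom{n-1}{k}q^k(1-q)^{n-1-k} = (q+(1-q))^{n-1}=1$, so both expressions are genuine convex combinations. The only mild subtlety to get right is bookkeeping: ensuring that after swapping the two summations the coefficient of each $F(\calP_{\calC})(\bh)$ is precisely $q^{|\calC|}(1-q)^{n-1-|\calC|}$ and that the residual product over $\calU_{\bh}$ matches the definition of $F(\calP_{\calC})$ term-by-term; this is purely a matter of carefully matching the combinatorial sum \eqref{general-distribution} for the sub-collection $\calP_{\calC}$, and is routine once the indexing is set up. I do not anticipate any real obstacle beyond this notational care.
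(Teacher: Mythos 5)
Your proposal is correct and follows essentially the same route as the paper: the paper's proof also introduces, for each $i\in[n-1]$, a random choice between $\bp'_n$ (with probability $q$) and $\tilde{\bp}_i$ (with probability $1-q$), conditions on the realized subset $\calC$ of clients choosing $\bp'_n$, and identifies the conditional distribution with $F(\calP_{\calC})$. Your additional algebraic verification — distributing the product $\prod_{i}(q\,p'_{nj(i)}+(1-q)\tilde{p}_{ij(i)})$ over subsets and swapping summations against \eqref{general-distribution} — is just a more explicit rendering of that same conditioning step, so no substantive difference.
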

We prove Lemma~\ref{lem:convex-combinations} in Appendix~\ref{app:proof_convex-combinations}.

Now, using Lemma~\ref{lem:convex-combinations}, in the following lemma we show that the Renyi divergence between $F(\calP)$ and $F(\calP')$ can be upper-bounded by a convex combination of the Renyi divergence between $F(\calP_{\calC})$ and $F(\calP'_{\calC})$ for $\calC\subseteq[n-1]$. 
\begin{lemma}[Joint Convexity]\label{lem:joint-con_renyi-exp}
For any $\lambda>1$, the function $\mathbb{E}_{\bh\sim F\left(\calP'\right)}\left[\left(\frac{F\left(\calP\right)\left(\bh\right)}{F\left(\calP'\right)\left(\bh\right)}\right)^{\lambda}\right]$ is jointly convex in $(F(\calP),F(\calP'))$, i.e.,
\begin{align}
\mathbb{E}_{\bh\sim F\left(\calP'\right)}\left[\left(\frac{F\left(\calP\right)\left(\bh\right)}{F\left(\calP'\right)\left(\bh\right)}\right)^{\lambda}\right]
&\leq \sum_{\calC\subseteq \left[n-1\right]} q^{|\calC|}\left(1-q\right)^{n-|\calC|-1} \mathbb{E}_{\bh\sim F\left(\calP'_{\calC}\right)}\left[\left(\frac{F\left(\calP_{\calC}\right)\left(\bh\right)}{F\left(\calP'_{\calC}\right)\left(\bh\right)}\right)^{\lambda}\right]. \label{eqn:rdp_bound}
\end{align}
\end{lemma}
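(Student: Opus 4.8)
The plan is to establish the joint convexity of the map $(\bP,\bQ)\mapsto \bbE_{\bh\sim\bQ}\bigl[(\bP(\bh)/\bQ(\bh))^\lambda\bigr]$ as a general fact about pairs of probability distributions, and then specialize it to the convex decompositions $F(\calP)=\sum_\calC q^{|\calC|}(1-q)^{n-|\calC|-1}F(\calP_\calC)$ and $F(\calP')=\sum_\calC q^{|\calC|}(1-q)^{n-|\calC|-1}F(\calP'_\calC)$ supplied by Lemma~\ref{lem:convex-combinations}. Note that the weights $w_\calC:=q^{|\calC|}(1-q)^{n-|\calC|-1}$ are nonnegative and sum to $1$ (they are the point masses of $\text{Bin}(n-1,q)$ grouped by which coordinates land in $\calP'_n$), so both $F(\calP)$ and $F(\calP')$ are genuine convex combinations with the \emph{same} weight vector, and the lemma follows immediately once the general joint-convexity statement is in hand.

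For the general statement: fix any measurable space and consider the functional $G(\bP,\bQ)=\sum_{\bh}\bQ(\bh)\bigl(\bP(\bh)/\bQ(\bh)\bigr)^\lambda=\sum_\bh \bP(\bh)^\lambda \bQ(\bh)^{1-\lambda}$. The key step is to recognize each summand $\phi(a,b):=a^\lambda b^{1-\lambda}$ (for $a,b\ge 0$, $\lambda>1$) as a jointly convex function of $(a,b)$ on $\bbR_{\ge0}^2$; this is exactly the perspective function of the convex map $t\mapsto t^\lambda$, namely $\phi(a,b)=b\,\psi(a/b)$ with $\psi(t)=t^\lambda$ convex, and it is standard that the perspective of a convex function is jointly convex. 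Summing jointly convex functions preserves joint convexity, so $G$ is jointly convex in $(\bP,\bQ)$. Applying this with the convex combinations from Lemma~\ref{lem:convex-combinations} gives
\[
G\bigl(F(\calP),F(\calP')\bigr)\;\le\;\sum_{\calC\subseteq[n-1]} w_\calC\,G\bigl(F(\calP_\calC),F(\calP'_\calC)\bigr),
\]
which is precisely \eqref{eqn:rdp_bound}.

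I expect the only real subtlety to be the justification that $\phi(a,b)=a^\lambda b^{1-\lambda}$ is jointly convex — in particular handling the boundary behavior at $b=0$ (where the natural convention is $\phi(a,0)=+\infty$ for $a>0$ and $\phi(0,0)=0$) and, if one wants a self-contained argument, verifying joint convexity directly by computing the Hessian on the interior $a,b>0$ and checking it is positive semidefinite (its entries are multiples of $a^{\lambda-2}b^{-1-\lambda}$ times a rank-one-plus-diagonal form whose determinant is $\lambda(\lambda-1)a^{2\lambda-2}b^{-2\lambda}\cdot$ a nonnegative factor), then extending to the boundary by lower semicontinuity. One should also note that the countable sum over $\bh\in\calA_B^n$ (which may be infinite when $B=\infty$) is a supremum of finite partial sums of jointly convex functions, hence still jointly convex, so no convergence issue arises beyond the fact that both sides may a priori be $+\infty$ — in which case the inequality holds trivially. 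This boundary/Hessian bookkeeping is the main obstacle, but it is routine; alternatively one can invoke the known joint convexity of Hellinger-type integrals / Rényi divergence exponentials directly from the literature (e.g.\ via the data-processing or via Lemma~\ref{lem:cvx_rdp}-style arguments referenced later), which short-circuits the computation entirely.
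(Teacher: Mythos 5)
Your proposal is correct and follows essentially the same route as the paper: the paper's Lemma~\ref{lem:hellinger_convex} proves joint convexity of $(P,Q)\mapsto\int P^{\lambda}Q^{1-\lambda}d\mu$ by exactly the perspective-function argument you describe (writing $P_{\alpha}/Q_{\alpha}$ as a convex combination with weights $\alpha Q_0/Q_{\alpha}$ and $(1-\alpha)Q_1/Q_{\alpha}$ and invoking convexity of $t\mapsto t^{\lambda}$ pointwise), then applies it to the decomposition from Lemma~\ref{lem:convex-combinations}. Your explicit identification of $a^{\lambda}b^{1-\lambda}$ as the perspective of $t\mapsto t^{\lambda}$, and your remarks on the boundary and on infinite $B$, are just a cleaner packaging of the same argument.
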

We prove Lemma~\ref{lem:joint-con_renyi-exp} in Appendix~\ref{app:proof_joint-con_renyi-exp}.

For any $\calC\subseteq[n-1]$, let $\widetilde{\calP}_{[n-1]\setminus\calC}=\{\tilde{\bp}_i:i\in\left[n-1\right]\setminus \calC\}$. With this notation, note that $\calP_{\calC}\setminus\widetilde{\calP}_{[n-1]\setminus\calC}=\{\bp'_n,\hdots,\bp'_n\} \bigcup \{\bp_n\}$ and $\calP'_{\calC}\setminus\widetilde{\calP}_{[n-1]\setminus\calC}=\{\bp'_n,\hdots,\bp'_n\} \bigcup \{\bp'_n\}$ is a pair of specific neighboring distributions, each containing $|\calC|+1$ distributions. In other words, if we define $\calD_{|\calC|+1}^{(n)}=\left(d'_n,\ldots,d'_n,d_{n}\right)$ and $\calD_{|\calC|+1}'^{(n)}=\left(d'_n,\ldots,d'_n,d_n'\right)$, each having $(|\calC|+1)$ data points (note that $(\calD_{|\calC|+1}'^{(n)},\calD_{|\calC|+1}^{(n)})\in\calD_{\same}^{|\calC|+1}$), then the mechanisms $\calM(\calD_{|\calC|+1}^{(n)})$ and $\calM(\calD_{|\calC|+1}'^{(n)})$ will have distributions $F(\calP_{\calC}\setminus\widetilde{\calP}_{[n-1]\setminus\calC})$ and $F(\calP'_{\calC}\setminus\widetilde{\calP}_{[n-1]\setminus\calC})$, respectively.

Now, since $(\calD_{|\calC|+1}'^{(n)},\calD_{|\calC|+1}^{(n)})\in\calD_{\same}^{|\calC|+1}$, if we remove the effect of distributions in $\widetilde{\calP}_{[n-1]\setminus\calC}$ in the RHS of \eqref{eqn:rdp_bound}, we would be able to bound the RHS of \eqref{eqn:rdp_bound} using the RDP for the special neighboring datasets in $\calD_{\same}^{|\calC|+1}$. This is precisely what we will do in the following lemma and the subsequent corollary, where we will eliminate the distributions in $\widetilde{\calP}_{[n-1]\setminus\calC}$ in the RHS \eqref{eqn:rdp_bound}.

The following lemma actually holds for arbitrary pairs $(\calP,\calP')$ of neighboring distributions $\calP=\{\bp_1,\hdots,\bp_n\}$ and $\calP=\{\bp_1,\hdots,\bp_{n-1},\bp'_n\}$, where we show that $\mathbb{E}_{\bh\sim F\left(\calP'\right)}\left[\left(\frac{F\left(\calP\right)\left(\bh\right)}{F\left(\calP'\right)\left(\bh\right)}\right)^{\lambda}\right]$ does not decrease when we eliminate a distribution $\bp_i$ (i.e., remove the data point $d_i$ from the datasets) for any $i\in[n-1]$. We need this general statement as it will be required in the proof of Theorem~\ref{thm:general_case} later.
\begin{lemma}[Monotonicity]\label{lem:cvx_rdp} 
For any $i\in\left[n-1\right]$, we have
\begin{equation}\label{eq:cvx_rdp} 
\mathbb{E}_{\bh\sim F\left(\calP'\right)}\left[\left(\frac{F\left(\calP\right)\left(\bh\right)}{F\left(\calP'\right)\left(\bh\right)}\right)^{\lambda}\right]\leq \mathbb{E}_{\bh\sim F\left(\calP'_{-i}\right)}\left[\left(\frac{F\left(\calP_{-i}\right)\left(\bh\right)}{F\left(\calP'_{-i}\right)\left(\bh\right)}\right)^{\lambda}\right],
\end{equation}
where, for $i\in[n-1]$, $\calP_{-i}=\calP\setminus\{\bp_i\}$ and $\calP'_{-i}=\calP'\setminus\{\bp_i\}$. 
Note that in the LHS of \eqref{eq:cvx_rdp}, $F(\calP),F(\calP')$ are distributions over $\calA_B^n$, whereas, in the RHS, $F(\calP_{-i}),F(\calP'_{-i})$ for any $i\in[n-1]$ are distributions over $\calA_B^{n-1}$.
%Also define $\calP_{-n}=\calP\setminus\{\bp_n\}$ and $\calP'_{-n}=\calP'\setminus\{\bp_n'\}$.
\end{lemma}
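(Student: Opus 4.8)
The plan is to prove the inequality by conditioning on the output of the $i$-th client and then applying the joint convexity of the Hellinger integral (already established as Lemma~\ref{lem:joint-con_renyi-exp}, via \eqref{hellinger_joint-convexity}). Fix any $i\in[n-1]$. The key observation is that the distribution $F(\calP)$ over $\calA_B^n$ can be decomposed according to which bin $j\in[B]$ the $i$-th client maps into: conditioned on client $i$ producing symbol $j$ (which happens with probability $p_{ij}$), the remaining $n-1$ clients produce a histogram drawn from $F(\calP_{-i})$ over $\calA_B^{n-1}$, and the full histogram is obtained by adding $1$ to coordinate $j$. Writing $\tbh_j = (h_1,\ldots,h_{j-1},h_j-1,h_{j+1},\ldots,h_B)$ (and interpreting any term with a negative coordinate as $0$), this gives the identities
\begin{align}
F(\calP)(\bh) = \sum_{j=1}^B p_{ij}\, F(\calP_{-i})(\tbh_j), \qquad F(\calP')(\bh) = \sum_{j=1}^B p_{ij}\, F(\calP'_{-i})(\tbh_j), \label{eq:decomp_client_i}
\end{align}
where the second identity uses that $\calP$ and $\calP'$ agree in coordinate $i$ (since $i\neq n$).

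With \eqref{eq:decomp_client_i} in hand, I would argue as follows. For each $j\in[B]$, the pair of "shifted" distributions $\bh\mapsto F(\calP_{-i})(\tbh_j)$ and $\bh\mapsto F(\calP'_{-i})(\tbh_j)$ — viewed as (sub-)distributions over $\calA_B^n$ supported on those $\bh$ with $h_j\geq 1$ — is, up to the measure-preserving relabelling $\bh\leftrightarrow\tbh_j$, just a copy of the pair $(F(\calP_{-i}),F(\calP'_{-i}))$ on $\calA_B^{n-1}$, and in particular the Hellinger integral $\int (\cdot)^\lambda(\cdot)^{1-\lambda}$ of this pair equals $\bbE_{\bh\sim F(\calP'_{-i})}[(F(\calP_{-i})(\bh)/F(\calP'_{-i})(\bh))^\lambda]$ for every $j$. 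Since $(F(\calP)(\bh), F(\calP')(\bh))$ is, by \eqref{eq:decomp_client_i}, the convex combination $\sum_j p_{ij}(\cdot)$ of these $B$ pairs with the \emph{same} weights $p_{ij}$ in both coordinates, the joint convexity of the Hellinger integral (the inequality \eqref{hellinger_joint-convexity}, applied $B-1$ times) yields
\begin{equation*}
\mathbb{E}_{\bh\sim F(\calP')}\Bigl[\Bigl(\tfrac{F(\calP)(\bh)}{F(\calP')(\bh)}\Bigr)^{\lambda}\Bigr] \;\leq\; \sum_{j=1}^B p_{ij}\, \mathbb{E}_{\bh\sim F(\calP'_{-i})}\Bigl[\Bigl(\tfrac{F(\calP_{-i})(\bh)}{F(\calP'_{-i})(\bh)}\Bigr)^{\lambda}\Bigr] \;=\; \mathbb{E}_{\bh\sim F(\calP'_{-i})}\Bigl[\Bigl(\tfrac{F(\calP_{-i})(\bh)}{F(\calP'_{-i})(\bh)}\Bigr)^{\lambda}\Bigr],
\end{equation*}
using $\sum_{j=1}^B p_{ij}=1$, which is exactly \eqref{eq:cvx_rdp}.

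The main technical obstacle is the careful bookkeeping in making the relabelling argument rigorous: one must check that the map $\bh\mapsto\tbh_j$ is a bijection between $\{\bh\in\calA_B^n : h_j\geq 1\}$ and $\calA_B^{n-1}$, that the "boundary" terms with $h_j=0$ contribute nothing (consistent with the convention $F(\calP_{-i})(\tbh_j)=0$ there), and — most delicately — that the joint convexity inequality \eqref{hellinger_joint-convexity} is legitimately applied when some of the mixture components are not normalized probability distributions but sub-distributions over $\calA_B^n$ (the $j$-th component has total mass equal to $\Pr_{\bh'\sim F(\calP_{-i})}[\text{feasible}]=1$ actually, so in fact each shifted object \emph{is} a probability distribution on $\calA_B^n$, which removes this worry). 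Once these routine checks are dispatched, the proof is a direct application of Lemma~\ref{lem:joint-con_renyi-exp}'s underlying convexity fact. An essentially equivalent route, which I might present instead for cleanliness, is to note that $(F(\calP),F(\calP'))$ is obtained from the pair of product-of-$B$-many distributions by the linear (hence the pushforward-of-a-channel) operation "merge client $i$'s output into the histogram", and invoke the data-processing-type monotonicity of the Hellinger integral under a common channel applied to both arguments; but the conditioning argument above is the most self-contained given what has already been set up in the paper.
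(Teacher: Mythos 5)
Your proposal is correct and takes essentially the same route as the paper's proof in Appendix~\ref{app:proof_cvx_rdp}: both rest on the decomposition $F(\calP)(\bh)=\sum_{j}p_{ij}F(\calP_{-i})(\tbh_j)$ (and likewise for $\calP'$) together with the joint convexity of the Hellinger integral from Lemma~\ref{lem:hellinger_convex}, and both conclude by observing that at a point mass $p_{ij^*}=1$ the quantity collapses to the $(n-1)$-client expression. The only difference is presentational — you apply joint convexity directly to the mixture $\sum_j p_{ij}(G_j,G'_j)$ and sum the weights, whereas the paper first argues the functional is convex in $\bp_i$ and then maximizes over the simplex, noting the maximum sits at a vertex; your version is marginally more direct and your normalization check on the shifted distributions $G_j$ is exactly the right point to verify.
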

We prove Lemma~\ref{lem:cvx_rdp} in Appendix~\ref{app:proof_cvx_rdp}.

%We have shown in Lemma~\ref{lem:cvx_rdp} that $\mathbb{E}_{\bh\sim F\left(\calP\right)}\left[\left(\frac{F\left(\calP\right)\left(\bh\right)}{F\left(\calP'\right)\left(\bh\right)}\right)^{\lambda}\right]$ does not decrease when eliminating a distribution $\bp_i$ for any $i\in\left[n-1\right]$. 
Note that Lemma~\ref{lem:cvx_rdp} is a general statement that holds for arbitrary pairs $(\calP,\calP')$ of neighboring distributions. For our purpose, we apply Lemma~\ref{lem:cvx_rdp} with $(\calP_{\calC},\calP'_{\calC})$ for any $\calC\subseteq[n-1]$ and then eliminate the distributions in $\widetilde{\calP}_{[n-1]\setminus\calC}$ one by one. The result is stated in the following corollary.
\begin{corollary}\label{corol:cvx_rdp_repeat}
Consider any $m\in\{0,1,\hdots,n-1\}$.
Let $\calD_{m+1}^{(n)}=\left(d'_n,\ldots,d'_n,d_n\right)$ and $\calD_{m+1}'^{(n)}=\left(d'_n,\ldots,d'_n\right)$.
Then, for any $\calC\in\binom{[n-1]}{m}$ (i.e., $\calC\subseteq[n-1]$ such that $|\calC|=m$), we have
\begin{align}
\bbE_{\bh\sim F(\calP'_{\calC})}\left[\left(\frac{F(\calP_{\calC})(\bh)}{F(\calP'_{\calC})(\bh)}\right)^{\lambda}\right] &\leq \bbE_{\bh\sim \calM(\calD_{m+1}'^{(n)})}\left[\left(\frac{\calM(\calD_{m+1}^{(n)})(\bh)}{\calM(\calD_{m+1}'^{(n)})(\bh)}\right)^{\lambda}\right]. \label{eqn:bound_C}
\end{align}
\end{corollary}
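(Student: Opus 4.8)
The plan is to apply Lemma~\ref{lem:cvx_rdp} repeatedly, peeling off one at a time exactly the distributions indexed by $[n-1]\setminus\calC$. Fix $m\in\{0,\hdots,n-1\}$ and a subset $\calC\in\binom{[n-1]}{m}$. Recall from \eqref{eq:defn_P_C}--\eqref{eq:defn_hatP} that $\calP_{\calC}=\{\hat\bp_1,\hdots,\hat\bp_{n-1}\}\cup\{\bp_n\}$ and $\calP'_{\calC}=\{\hat\bp_1,\hdots,\hat\bp_{n-1}\}\cup\{\bp'_n\}$, where $\hat\bp_i=\bp'_n$ for $i\in\calC$ and $\hat\bp_i=\tilde\bp_i$ for $i\in[n-1]\setminus\calC$. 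In particular, $(\calP_{\calC},\calP'_{\calC})$ is a pair of neighboring collections of $n$ distributions that agree on the first $n-1$ coordinates and differ only in the last one ($\bp_n$ vs.\ $\bp'_n$), so Lemma~\ref{lem:cvx_rdp} applies to it. The key observation is that the elements of $\widetilde{\calP}_{[n-1]\setminus\calC}=\{\tilde\bp_i:i\in[n-1]\setminus\calC\}$ sit among the first $n-1$ distributions of $\calP_{\calC}$ and $\calP'_{\calC}$, i.e.\ among the ``non-differing'' coordinates, which is exactly the regime in which Lemma~\ref{lem:cvx_rdp} lets us drop a distribution without decreasing $\bbE_{\bh\sim F(\cdot')}[(F(\cdot)(\bh)/F(\cdot')(\bh))^\lambda]$.

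Concretely, I would enumerate $[n-1]\setminus\calC=\{i_1,\hdots,i_{n-1-m}\}$ and define a decreasing chain of pairs of collections: start from $(\calP^{(0)},\calP'^{(0)}):=(\calP_{\calC},\calP'_{\calC})$, and for $t=1,\hdots,n-1-m$ set $\calP^{(t)}:=\calP^{(t-1)}\setminus\{\tilde\bp_{i_t}\}$ and $\calP'^{(t)}:=\calP'^{(t-1)}\setminus\{\tilde\bp_{i_t}\}$. At each step the pair $(\calP^{(t-1)},\calP'^{(t-1)})$ is a valid pair of neighboring collections (they still differ only in the last coordinate, $\bp_n$ vs.\ $\bp'_n$, since we only ever remove a shared $\tilde\bp_{i_t}$), and $\tilde\bp_{i_t}$ is one of the shared distributions, so Lemma~\ref{lem:cvx_rdp} gives
\begin{equation*}
\bbE_{\bh\sim F(\calP'^{(t-1)})}\left[\left(\frac{F(\calP^{(t-1)})(\bh)}{F(\calP'^{(t-1)})(\bh)}\right)^{\lambda}\right] \leq \bbE_{\bh\sim F(\calP'^{(t)})}\left[\left(\frac{F(\calP^{(t)})(\bh)}{F(\calP'^{(t)})(\bh)}\right)^{\lambda}\right].
\end{equation*}
Chaining these $n-1-m$ inequalities from $t=1$ to $t=n-1-m$ yields
\begin{equation*}
\bbE_{\bh\sim F(\calP'_{\calC})}\left[\left(\frac{F(\calP_{\calC})(\bh)}{F(\calP'_{\calC})(\bh)}\right)^{\lambda}\right] \leq \bbE_{\bh\sim F(\calP'^{(n-1-m)})}\left[\left(\frac{F(\calP^{(n-1-m)})(\bh)}{F(\calP'^{(n-1-m)})(\bh)}\right)^{\lambda}\right].
\end{equation*}

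Finally I would identify the terminal collections. After removing all of $\widetilde{\calP}_{[n-1]\setminus\calC}$, we are left with $\calP^{(n-1-m)}=\{\bp'_n,\hdots,\bp'_n\}\cup\{\bp_n\}$ ($m$ copies of $\bp'_n$ together with $\bp_n$) and $\calP'^{(n-1-m)}=\{\bp'_n,\hdots,\bp'_n\}\cup\{\bp'_n\}$ ($m+1$ copies of $\bp'_n$). By the definition of $F$ in \eqref{general-distribution} and the identity $F(\calP)=\calM(\calD)$, these are exactly the output distributions $\calM(\calD_{m+1}^{(n)})$ and $\calM(\calD_{m+1}'^{(n)})$ for $\calD_{m+1}^{(n)}=(d'_n,\hdots,d'_n,d_n)$ and $\calD_{m+1}'^{(n)}=(d'_n,\hdots,d'_n)$, which gives the right-hand side of \eqref{eqn:bound_C}. (Here I use that $F$ is invariant under permutation of its argument collection, which is immediate from \eqref{general-distribution}, so it does not matter in which order the copies of $\bp'_n$ and $\bp_n$ are listed.) The one point that needs a little care — and is the main thing to get right rather than a genuine obstacle — is to check at every step of the chain that the pair $(\calP^{(t-1)},\calP'^{(t-1)})$ really is of the form $(\calP,\calP')$ required by Lemma~\ref{lem:cvx_rdp}, namely a collection of size $n-t+1\geq m+1\geq 1$ whose members coincide except in the distinguished last slot, and that $\tilde\bp_{i_t}$ occupies one of the non-distinguished slots; both are clear from the construction since we never touch the last slot.
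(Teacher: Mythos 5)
Your proposal is correct and follows essentially the same route as the paper: the paper's proof also repeatedly applies Lemma~\ref{lem:cvx_rdp} to peel off the shared distributions $\tilde{\bp}_i\in\widetilde{\calP}_{[n-1]\setminus\calC}$ one at a time and then identifies the surviving pair $\bigl(\{\bp'_n,\hdots,\bp'_n\}\cup\{\bp_n\},\ \{\bp'_n,\hdots,\bp'_n\}\cup\{\bp'_n\}\bigr)$ with $\bigl(\calM(\calD_{m+1}^{(n)}),\calM(\calD_{m+1}'^{(n)})\bigr)$. Your write-up is merely more explicit about the intermediate chain and the permutation-invariance of $F$.
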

We prove Corollary~\ref{corol:cvx_rdp_repeat} in Appendix~\ref{app:proof_cvx_rdp_repeat}.

Substituting from~\eqref{eqn:bound_C} into~\eqref{eqn:rdp_bound} and noting that for every $\bh\in\calA_B^{n}$, $F(\calP)(\bh)$ and $F(\calP')(\bh)$ are distributionally equal to $\calM(\calD)(\bh)$ and $\calM(\calD')(\bh)$, respectively, we get
\begin{align*}
\mathbb{E}_{\bh\sim\calM\left(\calD'\right)}\left[\left(\frac{\calM\left(\calD\right)\left(\bh\right)}{\calM\left(\calD'\right)\left(\bh\right)}\right)^{\lambda}\right] &\stackrel{\text{(a)}}{\leq} \sum_{m=0}^{n-1}\sum_{\calC\in\binom{[n-1]}{m}} q^{m}\left(1-q\right)^{n-m-1} \mathbb{E}_{\bh\sim F\left(\calP'_{\calC}\right)}\left[\left(\frac{F\left(\calP_{\calC}\right)\left(\bh\right)}{F\left(\calP'_{\calC}\right)\left(\bh\right)}\right)^{\lambda}\right] \\
&\stackrel{\text{(b)}}{\leq} \sum_{m=0}^{n-1}\sum_{\calC\in\binom{[n-1]}{m}} q^{m}\left(1-q\right)^{n-m-1} \bbE_{\bh\sim \calM(\calD_{m+1}'^{(n)})}\left[\left(\frac{\calM(\calD_{m+1}^{(n)})(\bh)}{\calM(\calD_{m+1}'^{(n)})(\bh)}\right)^{\lambda}\right] \\
&\stackrel{\text{(c)}}{=} \sum_{m=0}^{n-1} \binom{n-1}{m} q^{m}\left(1-q\right)^{n-m-1} \bbE_{\bh\sim \calM(\calD_{m+1}'^{(n)})}\left[\left(\frac{\calM(\calD_{m+1}^{(n)})(\bh)}{\calM(\calD_{m+1}'^{(n)})(\bh)}\right)^{\lambda}\right] \\
&= \bbE_{m\sim \text{Bin}(n-1,q)}\left[ \bbE_{\bh\sim \calM(\calD_{m+1}'^{(n)})}\left[\left(\frac{\calM(\calD_{m+1}^{(n)})(\bh)}{\calM(\calD_{m+1}'^{(n)})(\bh)}\right)^{\lambda}\right]\right].
\end{align*}
The inequality (a) is the same as \eqref{eqn:rdp_bound} -- just writing it differently.
In (b) we used \eqref{eqn:bound_C} and in (c) we used the fact that number of $m$-sized subsets of $[n-1]$ is equal to $\binom{n-1}{m}$.

This completes the proof of Theorem~\ref{Thm:reduce_special_case}.

\section{Proof of Theorem~\ref{thm:RDP_same} -- RDP for the Special Form}\label{sec:special_form}

Fix an arbitrary $m\in\bbN$ and consider any pair of neighboring datasets $(\calD_{m},\calD'_{m})\in\calD_{\same}^{m}$. Let $\calD_m=(d,\hdots,d)\in\calX^m$ and $\calD'_m=(d,\hdots,d,d')\in\calX^m$. 
Let $\bp=(p_1,\ldots,p_B)$ and $\bp'=(p'_1,\ldots,p'_B)$ be the probability distributions of the discrete $\eps_0$-LDP mechanism $\calR:\calX\to\calY=[B]$ when its inputs are $d$ and $d'$, respectively, where $p_j=\Pr[\calR(d)=j]$ and $p_j'=\Pr[\calR(d')=j]$ for all $j\in[B]$. Since $\calR$ is $\eps_0$-LDP, we have 
\begin{equation}\label{LDP_constraints}
e^{-\eps_0}\leq \frac{p_j}{p'_j}\leq e^{\eps_0},\qquad \forall j\in[B].
\end{equation}
%
%As argued in the proof of Theorem~\ref{Thm:reduce_special_case} in Section~\ref{sec:proof_reduce_special_case}, since $\calM$ is a shuffled mechanism, we can assume, w.l.o.g., that on any input dataset consisting of $m$ data points, output of $\calM$ is a distribution over $\calA_B^{m}$, which is the set of all possible histograms on $B$ bins with $m$ elements and is defined in \eqref{histogram-set}.

%Observe that the output of $\calM$ is a random permutation of the $m$ outputs of the local mechanism $\calR$. So, the server cannot associate the $m$ messages to the clients, and the only information it can use from the messages is the histogram, i.e., the number of messages that give any particular output in $[B]$. 
%We define the $\calA_{B}^{m}$ by
%\begin{equation}\label{histogram-set}
%\calA_B^{m}=\lbrace \bh=\left(h_1,\ldots,h_B\right):\sum_{j=1}^{B}h_j=m\rbrace,
%\end{equation}
%to denote the set of all possible histograms of the output of the shuffler with $m$ inputs. Without loss of generality, we can assume that the output of $\calM$ is a distribution over $\calA_B^{m}$ for input dataset $\calD$ of $m$ data points.

Since $\calM$ is a shuffled mechanism, it induces a distribution on $\calA_B^m$ for any input dataset. 
So, for any $\bh\in\calA_B^m$, $\calM(\calD_m)(\bh)$ and $\calM(\calD'_m)(\bh)$ are equal to the probabilities of seeing $\bh$ when the inputs to $\calM$ are $\calD_m$ and $\calD'_m$, respectively.
%Let $\mu_0$ and $\mu_1$ denote the induced distributions of $\calM(\calD_m)$ and $\calM(\calD'_m)$ over $\calA_B^{m}$. i.e., for any $\bh\in\calA_B^m$, $\mu_0(\bh),\mu_1(\bh)$ are equal to the probabilities of seeing $\bh$ when the inputs to $\calM$ are $\calD_m,\calD'_m$, respectively.
%, where $\calM\left(\calD_{m}\right)=\mu_0$ is the distribution of the output of the mechanism $\calM$ for input dataset $\calD_{m}$ and $\calM\left(\calD'_{m}\right)=\mu_1$ is the distribution of the output of the mechanism $\calM$ for input dataset $\calD'_{m}$. 
Thus, for a given histogram $\bh=(h_1,\hdots,h_B)\in\calA^m_B$ with $m$ elements and $B$ bins, we have
\begin{align}
\calM(\calD_m)\left(\bh\right)&=\textsl{MN}\left(m,\bp,\bh\right)={m \choose \bh}  \prod_{j=1}^{B} p_j^{h_j} \label{defn-mu_0},
\end{align}
where $\textsl{MN}\left(m,\bp,\bh\right)$ denotes the Multinomial distribution with ${m \choose \bh}=\frac{m!}{h_1!\cdots h_B!}$.
Note that \eqref{defn-mu_0} can be obtained as a special case of the general distribution in \eqref{general-distribution} by putting $\bp_j=\bp$ for each client $j$.

For $\calM(\calD'_m)$, note that the last client (independent of the other clients) maps its input data point $d'$ to the $j$'th bin with probability $p'_j$, and the remaining $(m-1)$ clients' mappings induce a distribution on $\calA_B^{m-1}$. Thus, $\calM(\calD'_m)(\bh)$ for any $\bh\in\calA_B^m$ can be written as
\begin{align}
\calM(\calD'_m)(\bh)&=\sum_{j=1}^{B} p'_j\textsl{MN}\left(m-1,\bp,\tbh_j\right), \label{defn-mu_1}
\end{align}
where 
%\begin{align}
$\tbh_j=\left(h_1,\ldots,h_{j-1},h_j-1,h_{j+1},\ldots,h_B\right)\in\calA_B^{m-1}$. %\label{h-tilde}
%\end{align}
%and $\textsl{MN}\left(m,\bp,\bh\right)$ denotes the Multinomial distribution with ${m \choose \bh}=\frac{m!}{h_1!\cdots h_B!}$.\footnote{We assume that ${m \choose \bh}=0$ if $h_j<0$ for some $j\in\left[B\right]$.} Note that $\mu_0(\bh)$ is equal to the probability of seeing $\bh\in\calA_B^m$ when the dataset is $\calD_m$;
%similarly, $\mu_1(\bh)$ is equal to the probability of seeing $\bh\in\calA_B^m$ when the dataset is $\calD'_m$.
Note that similar to \eqref{defn-mu_0}, \eqref{defn-mu_1} can also be obtained from \eqref{general-distribution} as a special case.

%We bound the expectation term $\mathbb{E}_{\bh\sim\calM\left(\calD_{m}\right)}\left[\left(\frac{\calM\left(\calD'_{m}\right)\left(\bh\right)}{\calM\left(\calD_{m}\right)\left(\bh\right)}\right)^{\lambda}\right]$. 
Using the polynomial expansion $(1+x)^n=\sum_{k=0}^{n}\binom{n}{k}x^k$, we have: 
\begin{align}
\mathbb{E}_{\bh\sim\calM(\calD_{m})}\left[\left(\frac{\calM(\calD'_{m})(\bh)}{\calM(\calD_{m})(\bh)}\right)^{\lambda}\right] %&=\mathbb{E}_{\bh\sim\mu_0}\left[\left(\frac{\mu_1\left(\bh\right)}{\mu_0\left(\bh\right)}\right)^{\lambda}\right] \notag \\
&=\bbE_{\bh\sim\calM(\calD_m)}\left[\left(1+\frac{\calM(\calD'_m)(\bh)}{\calM(\calD_m)(\bh)}-1\right)^{\lambda}\right] \notag \\
&= \sum_{i=0}^{\lambda}\binom{\lambda}{i}\bbE_{\bh\sim\calM(\calD_m)}\left[\left(\frac{\calM(\calD'_m)(\bh)}{\calM(\calD_m)(\bh)}-1\right)^{i}\right]. \label{eqn:MomAcc_Taylor-exp}
%&= 1 + \sum_{i=1}^{\lambda} T_i, \label{eqn:MomAcc_Taylor-exp}
\end{align}
%where 
%\begin{equation}
%T_i=\binom{\lambda}{i}\bbE_{\bh\sim\mu_{0}}\left[\left(\frac{\mu_{1}\left(\bh\right)-\mu_{0}\left(\bh\right)}{\mu_{0}\left(\bh\right)}\right)^{i}\right].
%\end{equation}

Let $X:\calA_B^m\to\bbR$ be a random variable associated with the distribution $\calM(\calD_m)$ on $\calA_B^m$, and for any $\bh\in\calA_B^{m}$, define $X(\bh):=m\(\frac{\calM(\calD'_m)(\bh)}{\calM(\calD_m)(\bh)}-1\)$. 
Substituting this in \eqref{eqn:MomAcc_Taylor-exp} gives:
\begin{align}
\mathbb{E}_{\bh\sim\calM(\calD_{m})}\left[\left(\frac{\calM(\calD'_{m})(\bh)}{\calM(\calD_{m})(\bh)}\right)^{\lambda}\right] = 1 + \sum_{i=1}^{\lambda}\binom{\lambda}{i}\frac{1}{m^{i}} \bbE_{\bh\sim\calM(\calD_m)}\left[\left(X(\bh)\right)^{i}\right]. \label{eqn:MomAcc_Taylor-exp_X}
\end{align}
The RHS of \eqref{eqn:MomAcc_Taylor-exp_X} is in terms of the moments of $X$, which we bound in the following lemma. Before stating the lemma, first we simplify the expression for $X(\bh)$ by computing the ratio $\frac{\calM(\calD'_m)(\bh)}{\calM(\calD_m)(\bh)}$ for any $\bh\in\calA_B^m$:
%First we compute the ratio $\frac{\calM(\calD'_m)(\bh)}{\calM(\calD_m)(\bh)}$ for any $\bh\in\calA_B^m$:
%First, we compute the ratio $\frac{\mu_{1}(\bh)}{\mu_{0}(\bh)}$:
\begin{equation}\label{ratio_mu}
\frac{\calM(\calD'_m)(\bh)}{\calM(\calD_m)(\bh)}
=\sum_{j=1}^{B}p'_j\frac{\textsl{MN}\,(m-1,\bp,\tbh_j)}{\textsl{MN}\,(m,\bp,\bh)}
=\sum_{j=1}^{B}\frac{p'_j}{p_j}\frac{h_j}{m}.
\end{equation}
Thus, we get 
$X(\bh)=m\(\frac{\calM(\calD'_m)(\bh)}{\calM(\calD_m)(\bh)} -1\) = \left(\sum_{j=1}^{B}\frac{p'_j}{p_j}h_j\right)-m$. %, where $a_j=\frac{p'_j}{p_j}\in\left[e^{-\eps_0},e^{\eps_0}\right]$ for all $j\in\left[B\right]$.

\begin{remark}\label{rem:tightenBnd}  As mentioned in Remark \ref{rem:Pot-Impr},
we could tighten our upper bounds for specific mechanisms.  As shown in
\eqref{eqn:MomAcc_Taylor-exp_X} above, the Renyi divergence of a mechanism
between two neighboring datasets can be written in terms of the
moments of a r.v.\ $X$, which is defined as the ratio of distributions
of the mechanism on these two neighboring datasets. However, since our
goal is to bound RDP for all $\eps_0$-LDP mechanisms, we prove the
worse-case bound on the moments of $X$ that hold for all mechanisms;
see \eqref{ith-moment-Xh} in Lemma~\ref{lemm:MomAcc_RV} for bound on
the $i\geq3$'rd moments of $X$ and \eqref{eq:sup_lemma_bound} in
Lemma~\ref{lemma_bound_sup} for bound on the variance of $X$.

\end{remark}

%% = \(\sum_{j=1}^{B}a_jh_j\)-m$, where $a_j=\frac{p'_j}{p_j}\in\left[e^{-\epsilon_0},e^{\epsilon_0}\right]$.
%Now each $T_i$ can be expressed in terms of the $i$'th moment of $X(\bh)$:
%\begin{align}\label{MomAcc_Ti-bound}
%T_i = \binom{\lambda}{i}\frac{1}{m^{i}} \bbE_{\bh\sim\mu_{0}}\left[\left(X\left(\bh\right)\right)^{i}\right].
%\end{align}
%%We prove this in Appendix~\ref{app:supp_MomAcc_claim_1}.
%%
\iffalse
\begin{claim}\label{claim:MomAcc_Ti}
For any $i\in\bbN$, we have 
\begin{align}\label{MomAcc_Ti-bound}
T_i = \binom{\lambda}{i}\frac{1}{m^{i}} \bbE_{\bh\sim\mu_{0}}\left[\left(X\left(\bh\right)\right)^{i}\right].
\end{align}
\end{claim}
We prove Claim~\ref{claim:MomAcc_Ti} in Appendix~\ref{app:supp_MomAcc_claim_1}. 
\fi
%In the following Lemma, we provide some useful properties of the random variable $X\left(\bh\right)$.
\begin{lemma}\label{lemm:MomAcc_RV}
The random variable $X$ has the following properties:
\begin{enumerate}
\item $X$ has zero mean, i.e., $\mathbb{E}_{\bh\sim\calM(\calD_m)}\left[X(\bh)\right]=0$.
\item The variance of $X$ is equal to 
\begin{equation}\label{eqn:variance}
\mathbb{E}_{\bh\sim\calM(\calD_m)}\left[X(\bh)^{2}\right]=m\left(\sum_{j=1}^{B}\frac{p_j'^2}{p_j} -1\right).
\end{equation}
%where $a_j=\frac{p'_j}{p_j}\in\left[e^{-\epsilon_0},e^{\epsilon_0}\right]$.
\item For $i\geq3$, the $i$'th moment of $X$ is bounded by 
\begin{equation}\label{ith-moment-Xh}
\mathbb{E}_{\bh\sim\calM(\calD_m)}\left[(X(\bh))^{i}\right] \leq \mathbb{E}_{\bh\sim\calM(\calD_m)}\left[|X(\bh)|^{i}\right]\leq i \Gamma\left(i/2\right)\left(2m\nu^2\right)^{i/2},
\end{equation}
where $\nu^2=\frac{\left(e^{\epsilon_0}-e^{-\epsilon_0}\right)^2}{4}$ and $\Gamma\left(z\right)=\int_{0}^{\infty}x^{z-1}e^{-x}dx$ is the Gamma function. 
\end{enumerate}
\end{lemma}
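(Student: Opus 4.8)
The plan is to reduce all three claims to a single observation: under $\calM(\calD_m)$ the histogram $\bh=(h_1,\dots,h_B)$ is the count vector of $m$ i.i.d.\ categorical draws $Y_1,\dots,Y_m\sim\bp$ on $[B]$ (this is exactly what \eqref{defn-mu_0} asserts). Writing $h_j=\sum_{\ell=1}^m\mathbf{1}[Y_\ell=j]$ and using the closed form $X(\bh)=\big(\sum_{j=1}^B\tfrac{p'_j}{p_j}h_j\big)-m$ derived just above the lemma, one obtains the decomposition $X=\sum_{\ell=1}^m Z_\ell$, where $Z_\ell:=\tfrac{p'_{Y_\ell}}{p_{Y_\ell}}-1$ are i.i.d.\ with $\Pr[Z_\ell=\tfrac{p'_j}{p_j}-1]=p_j$. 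Every item then follows from elementary facts about an i.i.d.\ sum of bounded, mean-zero random variables.

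For items~1 and~2 I would simply compute: $\bbE[Z_\ell]=\sum_j p_j\big(\tfrac{p'_j}{p_j}-1\big)=\sum_j p'_j-\sum_j p_j=0$, so $\bbE[X]=0$; and by independence $\bbE[X^2]=m\,\bbE[Z_\ell^2]=m\sum_j p_j\big(\tfrac{p'_j}{p_j}-1\big)^2=m\big(\sum_j\tfrac{p_j'^2}{p_j}-2\sum_j p'_j+\sum_j p_j\big)=m\big(\sum_j\tfrac{p_j'^2}{p_j}-1\big)$, which is \eqref{eqn:variance}.

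For item~3, the first step is to note that \eqref{LDP_constraints} forces $\tfrac{p'_j}{p_j}\in[e^{-\eps_0},e^{\eps_0}]$, so each $Z_\ell$ lies in an interval of length $e^{\eps_0}-e^{-\eps_0}$ and has zero mean; Hoeffding's lemma then gives $\bbE[e^{tZ_\ell}]\le e^{t^2\nu^2/2}$ with $\nu^2=(e^{\eps_0}-e^{-\eps_0})^2/4$, i.e.\ $Z_\ell$ is $\nu$-sub-Gaussian, and by tensorization $X$ is $\sqrt{m}\,\nu$-sub-Gaussian, so a Chernoff argument yields the two-sided tail bound $\Pr[|X|>t]\le 2e^{-t^2/(2m\nu^2)}$ for all $t\ge0$. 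The second step is to convert this into a moment bound: write $\bbE[|X|^i]=\int_0^\infty i\,t^{i-1}\Pr[|X|>t]\,dt$, substitute the tail bound, and evaluate the resulting integral via the change of variable $u=t^2/(2m\nu^2)$, which collapses it to $i\,(2m\nu^2)^{i/2}\int_0^\infty u^{i/2-1}e^{-u}\,du=i\,\Gamma(i/2)\,(2m\nu^2)^{i/2}$; combining with the trivial $\bbE[X^i]\le\bbE[|X|^i]$ gives \eqref{ith-moment-Xh}.

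The routine parts are the two moment computations and the layer-cake integral. The one place to be careful is the sub-Gaussian constant in item~3: one must invoke Hoeffding's lemma on the \emph{half-width} $(e^{\eps_0}-e^{-\eps_0})/2$ of the range of $Z_\ell$, rather than the cruder bound $|Z_\ell|\le e^{\eps_0}-1$, so that the variance proxy comes out exactly as $\nu^2$; then the constants picked up from tensorization (giving $\sqrt{m}\nu$) and from the Gamma integral must be tracked so that they cancel precisely into the stated $i\,\Gamma(i/2)\,(2m\nu^2)^{i/2}$ rather than a version with an extra numerical factor.
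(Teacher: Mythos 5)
Your proposal is correct and follows essentially the same route as the paper: the paper likewise represents $X=\sum_{i=1}^m X_i$ with $X_i=\sum_j \frac{p'_j}{p_j}\mathbbm{1}_{\{Y_i=j\}}-1$ bounded in $[e^{-\eps_0}-1,e^{\eps_0}-1]$, invokes the bounded-variable sub-Gaussian bound with parameter $\nu^2=(e^{\eps_0}-e^{-\eps_0})^2/4$, tensorizes, and converts the two-sided Chernoff tail into moments via the same layer-cake integral and substitution $u=t^2/(2m\nu^2)$. The only (cosmetic) difference is that you reuse the i.i.d.\ decomposition for items 1 and 2, whereas the paper verifies the mean by a telescoping sum and the variance by a direct multinomial-moment calculation; your version is a slight streamlining of the same argument.
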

A proof of Lemma~\ref{lemm:MomAcc_RV} is presented in Appendix~\ref{app:supp_MomAcc_rv}. 

%Substituting the bounds from Lemma~\ref{lemm:MomAcc_RV} into \eqref{eqn:MomAcc_Taylor-exp_X}, and taking the supremum over $(\calD_m,\calD'_m)\in\calD_{\same}^m$, we get
%\begin{align}
%\sup_{(\calD_m,\calD'_m)\in\calD_{\same}^m}\mathbb{E}_{\bh\sim\calM\left(\calD_{m}\right)}\left[\left(\frac{\calM\left(\calD'_{m}\right)\left(\bh\right)}{\calM\left(\calD_{m}\right)\left(\bh\right)}\right)^{\lambda}\right] &\leq 1+\binom{\lambda}{2} \frac{1}{m}\left[\sup_{(\calD_m,\calD'_m)\in\calD_{\same}^m}\left(\sum_{j=1}^{B}\frac{p_j'^2}{p_j} -1\right)\right] \notag \\
%&\hspace{2cm} + \sum_{i=3}^{\lambda} \binom{\lambda}{i} i\Gamma\left(i/2\right) \left(\frac{\left(e^{\epsilon_0}-e^{-\epsilon_0}\right)^2}{2m}\right)^{i/2} \label{special-proof_interim1}
%\end{align}
Substituting the bounds from Lemma~\ref{lemm:MomAcc_RV} into \eqref{eqn:MomAcc_Taylor-exp_X}, we get
\begin{align}
\mathbb{E}_{\bh\sim\calM\left(\calD_{m}\right)}\left[\left(\frac{\calM\left(\calD'_{m}\right)\left(\bh\right)}{\calM\left(\calD_{m}\right)\left(\bh\right)}\right)^{\lambda}\right] &\leq 1+\binom{\lambda}{2} \frac{1}{m}\left(\sum_{j=1}^{B}\frac{p_j'^2}{p_j} -1\right) + \sum_{i=3}^{\lambda} \binom{\lambda}{i} i\Gamma\left(i/2\right) \left(\frac{\left(e^{\epsilon_0}-e^{-\epsilon_0}\right)^2}{2m}\right)^{i/2} \label{special-proof_interim1}
\end{align}
Note that $p_1,\hdots,p_m,p_1',\hdots,p_m'$ are defined for the fixed pair of datasets $(\calD_m,\calD'_m)\in\calD_{\same}^m$ that we started with. So, the term containing $\left(\sum_{j=1}^{B}\frac{p_j'^2}{p_j} -1\right)$ in the RHS of \eqref{special-proof_interim1} depends on $(\calD_m,\calD'_m)$, and that is the only term in \eqref{special-proof_interim1} that depends on $(\calD_m,\calD'_m)$. Since Theorem~\ref{thm:RDP_same} requires us to bound \eqref{special-proof_interim1} for any pair of neighboring datasets $(\calD_m,\calD'_m)\in\calD_{\same}^m$, so, in order to prove Theorem~\ref{thm:RDP_same}, we need to compute $\sup_{(\calD_m,\calD'_m)\in\calD_{\same}^m}\left(\sum_{j=1}^{B}\frac{p_j'^2}{p_j} -1\right)$. We bound this in the following.

Define a set $\calT_{\eps_0}$ consisting of all pairs of $B$-dimensional probability vectors satisfying the $\eps_0$-LDP constraints as follows:
\begin{equation}\label{setT-pairs}
\calT_{\eps_0}=\left\lbrace(\bp,\bp')\in\bbR^B\times\bbR^B : p_j,p'_j\geq0, \forall j\in[B], \sum_{j=1}^Bp_j = \sum_{j=1}^Bp'_j = 1, \text{ and } e^{-\eps_0}\leq \frac{p'_j}{p_j}\leq e^{\eps_0}, \forall j\in[B]\right\rbrace
\end{equation}
Note that $\calT_{\eps_0}$ contains {\em all} pairs of the output probability distributions $(\bp,\bp')$ of {\em all} $\eps_0$-LDP mechanisms $\calR$ on {\em all} neighboring data points $d,d'\in\calX$.
Since any $(\calD_{m},\calD'_{m})\in\calD_{\same}^{m}$ generates a pair of probability distributions $(\bp,\bp')\in\calT_{\eps_0}$ (because $\calD_m=(d,\hdots,d)$ and $\calD_m'=(d,\hdots,d,d')$ together contain only two distinct data points $d,d'$), we have
\begin{align}\label{eq:variance-inequality}
\sup_{(\calD_{m},\calD'_{m})\in\calD_{\same}^{m}} \(\sum_{j=1}^{B}\frac{p_j'^2}{p_j}-1\) \leq \sup_{(\bp,\bp')\in\calT_{\eps_0}} \(\sum_{j=1}^{B}\frac{p_j'^2}{p_j}-1\).
\end{align}
In the following lemma, we bounds the RHS of \eqref{eq:variance-inequality}.
\begin{lemma}\label{lemma_bound_sup} 
We have the following bound:
\begin{equation}\label{eq:sup_lemma_bound}
\sup_{(\bp,\bp')\in\calT_{\eps_0}}\(\sum_{j=1}^{B}\frac{p_j'^2}{p_j}-1\) = \frac{\left(e^{\eps_0}-1\right)^2}{e^{\eps_0}}.
\end{equation}
\end{lemma}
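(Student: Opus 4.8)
The plan is to solve the constrained optimization problem
\[
\sup_{(\bp,\bp')\in\calT_{\eps_0}}\(\sum_{j=1}^{B}\frac{p_j'^2}{p_j}-1\)
\]
directly. First I would observe that the objective decomposes as a sum of per-coordinate terms $\frac{p_j'^2}{p_j}$, each of which, for a fixed ratio $r_j := p'_j/p_j \in [e^{-\eps_0}, e^{\eps_0}]$, equals $r_j^2 p_j = r_j p'_j$. So it is natural to reparametrize: think of choosing, for each $j$, a ratio $r_j\in[e^{-\eps_0},e^{\eps_0}]$ together with the mass $p_j$ (with $\sum_j p_j=1$), subject to the normalization constraint $\sum_j r_j p_j = \sum_j p'_j = 1$. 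With this reparametrization the objective becomes $\sum_j r_j^2 p_j - 1$, a linear function of the ``measure'' $p_j$ weighted by $r_j^2$, but coupled through the two constraints $\sum_j p_j = 1$ and $\sum_j r_j p_j = 1$.

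Next I would argue that the supremum is attained at an extreme configuration where each ratio $r_j$ is pushed to an endpoint, i.e.\ $r_j\in\{e^{-\eps_0},e^{\eps_0}\}$. This is because, holding the $p_j$'s fixed and also holding $\sum_j r_j p_j$ fixed, the map $r_j\mapsto r_j^2$ is convex, so on the segment of feasible $r$-vectors the objective is maximized at a vertex; and any interior $r_j$ can be split/moved toward the endpoints while preserving both the total mass and the weighted sum $\sum_j r_j p_j = 1$. Thus it suffices to consider mechanisms with only two ``types'' of coordinates: those with $p'_j = e^{\eps_0} p_j$ (call their total $\bp$-mass $a$) and those with $p'_j = e^{-\eps_0} p_j$ (total $\bp$-mass $1-a$). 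The constraint $\sum_j p'_j = 1$ becomes $e^{\eps_0} a + e^{-\eps_0}(1-a) = 1$, which pins down $a = \frac{1 - e^{-\eps_0}}{e^{\eps_0} - e^{-\eps_0}} = \frac{1}{e^{\eps_0}+1}$. The objective then evaluates to $e^{2\eps_0} a + e^{-2\eps_0}(1-a) - 1$, and substituting $a=\frac{1}{e^{\eps_0}+1}$ gives, after simplification, $\frac{(e^{\eps_0}-1)^2}{e^{\eps_0}}$.

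For the matching lower bound (achievability of the supremum), I would exhibit an explicit pair: take $B=2$, $\bp = (a, 1-a)$ with $a=\frac{1}{e^{\eps_0}+1}$, and $\bp' = (e^{\eps_0}a, e^{-\eps_0}(1-a)) = (\frac{e^{\eps_0}}{e^{\eps_0}+1}, \frac{1}{e^{\eps_0}+1})$. One checks $\bp,\bp'$ are valid distributions, satisfy the LDP ratio constraints with equality, and attain $\sum_j \frac{p_j'^2}{p_j} - 1 = \frac{(e^{\eps_0}-1)^2}{e^{\eps_0}}$; since $\calT_{\eps_0}$ is closed, the sup is a max. The main obstacle is making the extreme-point argument fully rigorous — in particular justifying that pushing ratios to the endpoints while simultaneously respecting both linear constraints ($\sum p_j = 1$ and $\sum r_j p_j = 1$) does not decrease the objective. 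A clean way to do this is to fix the pair $(\bp,\bp')$ and directly bound $\sum_j \frac{p_j'^2}{p_j}$: write $\frac{p_j'^2}{p_j} = p_j'\cdot\frac{p_j'}{p_j} \le p_j' \cdot \big(e^{\eps_0} - \text{(something nonneg)}\big)$ is too lossy, so instead use the tangent-line/secant trick: on $[e^{-\eps_0},e^{\eps_0}]$, convexity of $t\mapsto t^2$ gives $t^2 \le (e^{\eps_0}+e^{-\eps_0})t - 1$ (the secant line through the two endpoints, using $e^{\eps_0}e^{-\eps_0}=1$). Applying this with $t = r_j = p_j'/p_j$ and multiplying by $p_j$ yields $\frac{p_j'^2}{p_j} \le (e^{\eps_0}+e^{-\eps_0})p_j' - p_j$, and summing over $j$ gives $\sum_j \frac{p_j'^2}{p_j} \le (e^{\eps_0}+e^{-\eps_0}) - 1$, hence $\sum_j \frac{p_j'^2}{p_j} - 1 \le e^{\eps_0}+e^{-\eps_0} - 2 = \frac{(e^{\eps_0}-1)^2}{e^{\eps_0}}$. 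Combined with the explicit achieving pair above, this proves the identity and is the route I would write up.
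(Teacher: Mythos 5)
Your proposal is correct, and the route you settle on at the end is genuinely different from the paper's. The paper argues via convex maximization over a polytope: it notes that $(x,y)\mapsto x^2/y$ is jointly convex, concludes the supremum of $f(\bp,\bp')=\sum_j (p'_j)^2/p_j$ over $\calT_{\eps_0}$ is attained at a vertex, asserts that at a vertex every ratio constraint $e^{-\eps_0}\le p'_j/p_j\le e^{\eps_0}$ is tight, and then solves for the masses on the two groups of coordinates exactly as in your "two types" calculation. Your final argument instead applies the secant inequality $t^2\le (e^{\eps_0}+e^{-\eps_0})t-1$ for $t\in[e^{-\eps_0},e^{\eps_0}]$ (valid since $e^{\eps_0}e^{-\eps_0}=1$) pointwise with $t=p'_j/p_j$, multiplies by $p_j$, and sums using only $\sum_j p_j=\sum_j p'_j=1$; together with the explicit two-point achiever this gives the identity in a few lines. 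What your approach buys is that it sidesteps the structural claim about the vertices of $\calT_{\eps_0}$ entirely — that claim is the least rigorous step of the paper's proof, since a vertex of a polytope in $2B$ variables with two equality constraints could a priori be supported on nonnegativity constraints rather than on all $B$ ratio constraints being tight — whereas the paper's route generalizes more readily if one wanted to optimize other convex functionals over the same constraint set. One small shared technicality: both arguments implicitly assume $p_j>0$ whenever the term $\frac{(p'_j)^2}{p_j}$ appears; the LDP constraint forces $p'_j=0$ whenever $p_j=0$, so such coordinates contribute nothing and can be dropped.
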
 
We prove Lemma~\ref{lemma_bound_sup} in Appendix~\ref{app:supp_MomAcc_bound_sup}.

Taking supremum over $(\calD_{m},\calD'_{m})\in\calD_{\same}^{m}$ in \eqref{special-proof_interim1} and then using \eqref{eq:variance-inequality} and \eqref{eq:sup_lemma_bound}, we get
\iffalse
Substituting the bounds from Lemma~\ref{lemm:MomAcc_RV} and Lemma~\ref{lemma_bound_sup} into \eqref{MomAcc_Ti-bound}, we get 
\begin{equation}\label{eqn:bounds_Ti}
\begin{aligned}
T_1 &= 0\\
T_2 &\leq \binom{\lambda}{2} \frac{\left(e^{\epsilon_0}-1\right)^2}{m e^{\epsilon_0}}\\
T_{i} &\leq \binom{\lambda}{i} i\Gamma\left(i/2\right) \left(\frac{\left(e^{\epsilon_0}-e^{-\epsilon_0}\right)^2}{2m}\right)^{i/2} \text{ for } i\geq 3.
\end{aligned}
\end{equation}
Now we are ready to bound the expectation term $\mathbb{E}_{\bh\sim\calM\left(\calD_{m}\right)}\left[\left(\frac{\calM\left(\calD'_{m}\right)\left(\bh\right)}{\calM\left(\calD_{m}\right)\left(\bh\right)}\right)^{\lambda}\right]$. 
Substituting from~\eqref{eqn:bounds_Ti} into~\eqref{eqn:MomAcc_Taylor-exp}, we get
\fi
\begin{equation*} %\label{special_interimBound1}
\sup_{(\calD_{m},\calD'_{m})\in\calD_{\same}^{m}}\mathbb{E}_{\bh\sim\calM\left(\calD_{m}\right)}\left[\left(\frac{\calM\left(\calD'_{m}\right)\left(\bh\right)}{\calM\left(\calD_{m}\right)\left(\bh\right)}\right)^{\lambda}\right]\leq 1+\binom{\lambda}{2} \frac{\left(e^{\epsilon_0}-1\right)^2}{m e^{\epsilon_0}} + \sum_{i=3}^{\lambda} \binom{\lambda}{i} i\Gamma\left(i/2\right) \left(\frac{\left(e^{2\epsilon_0}-1\right)^2}{2me^{2\epsilon_0}}\right)^{i/2}
\end{equation*}
%In Appendix~\ref{subsec:simple-bound}, we present a simplified bound by showing that
%\begin{equation}
%\mathbb{E}_{\bh\sim\calM\left(\calD_{m}\right)}\left[\left(\frac{\calM\left(\calD'_{m}\right)\left(\bh\right)}{\calM\left(\calD_{m}\right)\left(\bh\right)}\right)^{\lambda}\right]\leq \exp\left(\lambda^2\frac{\left(e^{\epsilon_0}-1\right)^2}{m}\right).
%\end{equation}
This completes the proof of Theorem~\ref{thm:RDP_same}.
%\end{proof}

%
%Note that $\binom{\lambda}{k}k\Gamma(k/2)\leq  \lambda^k$ always holds (see Claim~\ref{claim:gamma_ineq} in Appendix~\ref{app:gamma_ineq}), and when $\lambda\leq  \sqrt{\frac{m}{2\left(e^{\epsilon_0}-e^{-\epsilon_0}\right)^2}}$, we have $\frac{\lambda^2\left(e^{\epsilon_0}-e^{-\epsilon_0}\right)^2}{2m}\leq \frac{1}{4}$. 
%With this, the infinite series (denoted by $A$) in \eqref{special_interimBound1} can be bounded as: 
%\begin{align*}
%A &\leq \sum_{k=3}^{\infty} \lambda^k \left(\frac{\left(e^{\epsilon_0}-e^{-\epsilon_0}\right)^2}{2m}\right)^{k/2} \\
%&= \sum_{k=3}^{\infty}\left(\frac{\lambda^2 \left(e^{\epsilon_0}-e^{-\epsilon_0}\right)^2}{2m}\right)^{k/2}\\
%&=\frac{\lambda^2 \left(e^{\epsilon_0}-e^{-\epsilon_0}\right)^2}{2m}\sum_{k=3}^{\infty}\left(\frac{\lambda^2 \left(e^{\epsilon_0}-e^{-\epsilon_0}\right)^2}{2m}\right)^{\frac{k-2}{2}}\\
%&\leq\frac{\lambda^2 \left(e^{\epsilon_0}-e^{-\epsilon_0}\right)^2}{2m}\sum_{l=1}^{\infty}\left(\frac{1}{2}\right)^{l}\\
%&=\frac{\lambda^2 \left(e^{\epsilon_0}-e^{-\epsilon_0}\right)^2}{2m}
%\end{align*}
%Substituting this back in \eqref{special_interimBound1} gives:
%\begin{equation*}
%\mathbb{E}_{\bh\sim\calM\left(\calD_{m}\right)}\left[\left(\frac{\calM\left(\calD'_{m}\right)\left(\bh\right)}{\calM\left(\calD_{m}\right)\left(\bh\right)}\right)^{\lambda}\right]\leq 1+\lambda^2 \frac{\left(e^{\epsilon_0}-1\right)^2}{2me^{\epsilon_0}} + \frac{\lambda^2 \left(e^{\epsilon_0}-e^{-\epsilon_0}\right)^2}{2m}.
%\end{equation*}
 
\section{Proofs of Theorem~\ref{thm:general_case} and Theorem~\ref{thm:general_case_2} -- Upper Bounds}\label{sec:general_case}

We prove Theorems~\ref{thm:general_case} and \ref{thm:general_case_2} in Sections~\ref{subsec:proof_general_case} and \ref{subsec:proof_general_case_2}, respectively.
%In this section, we analyze the RDP of the mechanism $\calM$ (defined in \eqref{shuffle-mech}) for arbitrary pairs of neighboring databsets and prove Theorem~\ref{thm:general_case}.
%
%
% Consider arbitrary neighboring datasets $\calD=\left(d_1,\ldots,d_n\right)\in\calX^{n}$ and $\calD'=\left(d_1,\ldots,d_{n-1},d_n'\right)\in\calX^{n}$. For any $m\in\left\{0,\ldots,n-1\right\}$, define new neighboring datasets $\calD_{m+1}=\left(d'_n,\ldots,d'_n,d_{n}\right)\in \calX^{m+1}$ and  $\calD'_{m+1}=\left(d'_n,\ldots,d'_n,d_n'\right)\in \calX^{m+1}$. Observe that $\left(\calD'_{m+1},\calD_{m+1}\right)\in\calD^{m+1}_{\same}$. 
 
\subsection{Proof of Theorem~\ref{thm:general_case}}\label{subsec:proof_general_case}
Consider arbitrary neighboring datasets $\calD=\left(d_1,\ldots,d_n\right)\in\calX^{n}$ and $\calD'=\left(d_1,\ldots,d_{n-1},d_n'\right)\in\calX^{n}$. As mentioned in Section~\ref{subsec:proof-sketch_general-case}, for any $m\in\left\{0,\ldots,n-1\right\}$, we define new neighboring datasets $\calD_{m+1}^{(n)}=\left(d'_n,\ldots,d'_n,d_n\right)\in \calX^{m+1}$ and $\calD_{m+1}'^{(n)}=\left(d'_n,\ldots,d'_n,d'_{n}\right)\in \calX^{m+1}$, each having $(m+1)$ elements. Observe that $\left(\calD_{m+1}'^{(n)},\calD_{m+1}^{(n)}\right)\in\calD^{m+1}_{\same}$.

\iffalse
\begin{theorem}\label{Thm:reduce_special_case} 
Let $p=\frac{1}{e^{\epsilon_0}}$ and $m\sim\text{Bin}\left(n-1,p\right)$ be a binomial random variable. We have: 
\begin{align*}
\mathbb{E}_{\bh\sim\calM\left(\calD'\right)}\left[\left(\frac{\calM\left(\calD\right)\left(\bh\right)}{\calM\left(\calD'\right)\left(\bh\right)}\right)^{\lambda}\right]&\leq \mathbb{E}_{m\sim\text{Bin}\left(n-1,p\right)}\left[\mathbb{E}_{\bh\sim\calM\left(\calD_{m+1}^{(n)}\right)}\left[\left(\frac{\calM\left(\calD_{m+1}'^{(n)}\right)\left(\bh\right)}{\calM\left(\calD_{m+1}^{(n)}\right)\left(\bh\right)}\right)^{\lambda}\right]\right]. 
\end{align*}
%\begin{align*}
%\mathbb{E}_{\bh\sim\calM\left(\calD'\right)}\left[\left(\frac{\calM\left(\calD\right)\left(\bh\right)}{\calM\left(\calD'\right)\left(\bh\right)}\right)^{\lambda}\right]&\leq \mathbb{E}_{m\sim\text{Bin}\left(n-1,p\right)}\left[\mathbb{E}_{\bh\sim\calM\left(\calD'_{m+1}\right)}\left[\left(\frac{\calM\left(\calD_{m+1}\right)\left(\bh\right)}{\calM\left(\calD'_{m+1}\right)\left(\bh\right)}\right)^{\lambda}\right]\right]\\
%&=\sum_{m=0}^{n-1} \binom{n-1}{m} p^m\left(1-p\right)^{n-m-1}\mathbb{E}_{\bh\sim\calM\left(\calD'_{m+1}\right)}\left[\left(\frac{\calM\left(\calD_{m+1}\right)\left(\bh\right)}{\calM\left(\calD'_{m+1}\right)\left(\bh\right)}\right)^{\lambda}\right]
%\end{align*}
\end{theorem} 
We prove Theorem~\ref{Thm:reduce_special_case} in Section~\ref{subsec:proof_reduce_special_case}.
In the following, we show how the proof of Theorem~\ref{thm:general_case} follows from Theorem~\ref{Thm:reduce_special_case}.
\fi

Recall from Theorem~\ref{Thm:reduce_special_case}, we have
\begin{align}
\mathbb{E}_{\bh\sim\calM(\calD')}\left[\left(\frac{\calM(\calD)(\bh)}{\calM(\calD')(\bh)}\right)^{\lambda}\right] &\leq \mathbb{E}_{m\sim\text{Bin}\left(n-1,q\right)}\left[\mathbb{E}_{\bh\sim\calM(\calD_{m+1}'^{(n)})}\left[\left(\frac{\calM(\calD_{m+1}^{(n)})(\bh)}{\calM(\calD_{m+1}'^{(n)})(\bh)}\right)^{\lambda}\right]\right] \notag \\
&= \sum_{m=0}^{n-1}\binom{n-1}{m} q^m(1-q)^{n-m-1}\left[\mathbb{E}_{\bh\sim\calM(\calD_{m+1}'^{(n)})}\left[\left(\frac{\calM(\calD_{m+1}^{(n)})(\bh)}{\calM(\calD_{m+1}'^{(n)})(\bh)}\right)^{\lambda}\right]\right] \label{eq:proof-reduce_special-case_bound}
\end{align}
For simplicity of notation, for any $m\in\{0,1,\hdots,n-1\}$, define 
%$p_m := \binom{n-1}{m} p^m(1-p)^{n-m-1}$ and 
%$E_m := \mathbb{E}_{\bh\sim\calM(\calD_{m+1}'^{(n)})}\left[\left(\frac{\calM(\calD_{m+1}^{(n)})(\bh)}{\calM(\calD_{m+1}'^{(n)})(\bh)}\right)^{\lambda}\right]$.
%We need to bound the Renyi divergence between $\calM\left(\calD\right)$ and $\calM\left(\calD'\right)$. 
%Let $M$ denote the binomial r.v.\ Bin($n-1,p$). Define
\begin{align*}
q_m &:= \binom{n-1}{m} q^m(1-q)^{n-m-1} \\
E_m &:= \mathbb{E}_{\bh\sim\calM(\calD_{m+1}'^{(n)})}\left[\left(\frac{\calM(\calD_{m+1}^{(n)})(\bh)}{\calM(\calD_{m+1}'^{(n)})(\bh)}\right)^{\lambda}\right].
\end{align*} 
First we show an important property of $E_m$ that we will use in the proof.
\begin{lemma}\label{lem:E_m-decreasing}
$E_m$ is a non-increasing function of $m$, i.e., 
\begin{align}\label{eq:E_m-decreasing}
\mathbb{E}_{\bh\sim\calM(\calD_{m+1}'^{(n)})}\left[\left(\frac{\calM(\calD_{m+1}^{(n)})(\bh)}{\calM(\calD_{m+1}'^{(n)})(\bh)}\right)^{\lambda}\right] \leq \mathbb{E}_{\bh\sim\calM(\calD_{m}'^{(n)})}\left[\left(\frac{\calM(\calD_{m}^{(n)})(\bh)}{\calM(\calD_{m}^{(n)})(\bh)}\right)^{\lambda}\right],
\end{align}
where, for any $k\in\{m,m+1\}$, $\calD_{k}^{(n)}=\left(d'_n,\ldots,d'_n,d_{n}\right)$ and  $\calD_{k}'^{(n)}=\left(d'_n,\ldots,d'_n,d_n'\right)$ with $|\calD_k|=|\calD'_k|=k$.
\end{lemma}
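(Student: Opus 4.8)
The plan is to obtain this monotonicity statement as an immediate consequence of Lemma~\ref{lem:cvx_rdp} (the Monotonicity lemma, which already holds for arbitrary pairs of neighboring distribution collections). The key observation is that the two datasets $\calD_{m+1}^{(n)}=(d'_n,\ldots,d'_n,d_n)$ and $\calD_{m+1}'^{(n)}=(d'_n,\ldots,d'_n,d'_n)$ differ only in their last coordinate, while their first $m$ coordinates are identical copies of $d'_n$. Hence each of those $m$ ``background'' clients contributes the same local output distribution $\bp'_n$ on both sides, and deleting any one of them is exactly the operation covered by Lemma~\ref{lem:cvx_rdp}; each such deletion can only increase (not decrease) the relevant Hellinger-type integral, so passing from size $m$ to size $m+1$ can only decrease it, which is precisely the claim.

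Concretely, I would first translate to the distribution-collection language of Section~\ref{sec:proof_reduce_special_case}: let $\calQ$ be the collection of $m+1$ distributions $\{\bp'_n,\ldots,\bp'_n,\bp_n\}$ (with $m$ copies of $\bp'_n$) corresponding to $\calD_{m+1}^{(n)}$, and let $\calQ'=\{\bp'_n,\ldots,\bp'_n\}$ be the collection of $m+1$ copies of $\bp'_n$ corresponding to $\calD_{m+1}'^{(n)}$; by \eqref{general-distribution} we have $F(\calQ)=\calM(\calD_{m+1}^{(n)})$ and $F(\calQ')=\calM(\calD_{m+1}'^{(n)})$, and $(\calQ,\calQ')$ is a pair of neighboring collections agreeing on their first $m$ entries. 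Next I would apply Lemma~\ref{lem:cvx_rdp} with $n$ there replaced by $m+1$ and with removed index $i$ taken to be any element of $[m]$ (so that $\bp_i=\bp'_n$ is a background distribution), giving
\[
\mathbb{E}_{\bh\sim F(\calQ')}\left[\left(\frac{F(\calQ)(\bh)}{F(\calQ')(\bh)}\right)^{\lambda}\right]\ \le\ \mathbb{E}_{\bh\sim F(\calQ'_{-i})}\left[\left(\frac{F(\calQ_{-i})(\bh)}{F(\calQ'_{-i})(\bh)}\right)^{\lambda}\right].
\]
Finally I would identify $\calQ_{-i}$ and $\calQ'_{-i}$: they are collections of $m$ distributions, namely $\{\bp'_n,\ldots,\bp'_n,\bp_n\}$ with $m-1$ copies of $\bp'_n$, and $m$ copies of $\bp'_n$, respectively, which are exactly the collections associated with $\calD_m^{(n)}$ and $\calD_m'^{(n)}$; hence $F(\calQ_{-i})=\calM(\calD_m^{(n)})$ and $F(\calQ'_{-i})=\calM(\calD_m'^{(n)})$. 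Substituting yields \eqref{eq:E_m-decreasing} (equivalently $E_m\le E_{m-1}$), and iterating over $m$ shows $E_m$ is non-increasing.

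I do not anticipate a real obstacle; this is essentially bookkeeping, and the only point requiring care is lining up the roles of numerator and denominator with the hypothesis of Lemma~\ref{lem:cvx_rdp}. There the numerator $F(\calP)$ plays the role of the dataset with the ``extra'' point and the denominator $F(\calP')$ the role of the perturbed dataset, and the client being removed must be one on which $\calP$ and $\calP'$ agree. In our application those agreeing indices are exactly the background copies of $d'_n$, so the hypothesis is met; note also that this forces $m\ge 1$ (there must be at least one background client to delete), which is precisely the range in which \eqref{eq:E_m-decreasing} is meaningful, since both $\calD_m^{(n)}$ and $\calD_{m+1}^{(n)}$ must be well-defined. (One should also read the denominator on the right-hand side of \eqref{eq:E_m-decreasing} as $\calM(\calD_m'^{(n)})$, which is what the argument above produces.)
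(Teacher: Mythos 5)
Your proposal is correct and is exactly the paper's argument: the paper proves Lemma~\ref{lem:E_m-decreasing} by noting it "follows from Lemma~\ref{lem:cvx_rdp} in a straightforward manner," i.e., by deleting one of the common background copies of $\bp'_n$ from the size-$(m+1)$ collections, which is precisely the bookkeeping you spell out (including the correct reading of the typo in the denominator on the right-hand side of \eqref{eq:E_m-decreasing}).
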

\begin{proof}
%\subsection{$E_m$ is a Non-Increasing Function of $m$}\label{subsec:E_m-decreasing}
%Recall that $E_m = \mathbb{E}_{\bh\sim\calM\left(\calD'_{m+1}\right)}\left[\left(\frac{\calM\left(\calD_{m+1}\right)\left(\bh\right)}{\calM\left(\calD'_{m+1}\right)\left(\bh\right)}\right)^{\lambda}\right]$, where $\calD_{m+1}=\left(d'_n,\ldots,d'_n,d_{n}\right)$ and  $\calD'_{m+1}=\left(d'_n,\ldots,d'_n,d_n'\right)$.
%In this subsection, 
Lemma~\ref{lem:E_m-decreasing} follows from Lemma~\ref{lem:cvx_rdp} in a straightforward manner, as, unlike Lemma~\ref{lem:E_m-decreasing}, in Lemma~\ref{lem:cvx_rdp} we consider arbitrary pairs of neighboring datasets.
%This completes the proof of Lemma~\ref{lem:E_m-decreasing}.
\end{proof}
Continuing from \eqref{eq:proof-reduce_special-case_bound}, and using concentration properties of the Binomial r.v.\ together with Lemma~\ref{lem:E_m-decreasing}, we get
%In the following, $\gamma>0$ is arbitrary, and we will choose $\gamma=\frac{1}{2}$ later on.
\begin{align}
\mathbb{E}_{\bh\sim\calM(\calD')}\left[\left(\frac{\calM(\calD)(\bh)}{\calM(\calD')(\bh)}\right)^{\lambda}\right]&\leq \sum_{m=0}^{n-1}q_mE_m \notag \\ % \tag{Using Theorem~\ref{Thm:reduce_special_case}} \\%\mathbb{E}_{m\sim\text{Bin}\left(n-1,p\right)}\left[\mathbb{E}_{\bh\sim\calM\left(\calD'_{m+1}\right)}\left[\left(\frac{\calM\left(\calD_{m+1}\right)\left(\bh\right)}{\calM\left(\calD'_{m+1}\right)\left(\bh\right)}\right)^{\lambda}\right]\right] \notag \\
&=\sum_{m<\floor{(1-\gamma)q(n-1)}} q_m E_m+ \sum_{m\geq\floor{(1-\gamma)q(n-1)}} q_m E_m\notag \\
&\stackrel{\text{(a)}}{\leq} E_{0}\sum_{m<\floor{(1-\gamma)q(n-1)}}q_m+\sum_{m\geq\floor{(1-\gamma)q(n-1)}} q_m E_m \notag \\
&\stackrel{\text{(b)}}{\leq}  E_{0}e^{-\frac{q(n-1)\gamma^2}{2}}+\sum_{m\geq\floor{(1-\gamma)q(n-1)}} q_m E_m \notag \\
&\stackrel{\text{(c)}}{\leq} e^{\epsilon_0\lambda}e^{-\frac{q(n-1)\gamma^2}{2}}+\sum_{m\geq\floor{(1-\gamma)q(n-1)}} q_m E_m \notag \\
&\stackrel{\text{(d)}}{\leq} e^{\epsilon_0\lambda}e^{-\frac{q(n-1)\gamma^2}{2}}+ E_{(1-\gamma)q(n-1)}. \label{proof_main-result_interim2}
\end{align}
Here, steps (a) and (d) follow from the fact that $E_m$ is a non-increasing function of $m$ (see Lemma~\ref{lem:E_m-decreasing}). Step (b) follows from the Chernoff bound. In step (c), we used that $\calM(d_n)=\calR(d_n)$ and $\calM(d'_n)=\calR(d'_n)$, which together imply that $E_0=\mathbb{E}\left[\left(\frac{\calM(d_n)}{\calM(d'_n)}\right)^{\lambda}\right] = \mathbb{E}\left[\left(\frac{\calR(d_n)}{\calR(d'_n)}\right)^{\lambda}\right] \leq e^{\eps_0\lambda}$, where the inequality follows because $\calR$ is an $\eps_0$-LDP mechanism. 

Note that we have already bounded $E_m$ for all $m$ in Theorem~\ref{thm:RDP_same}. For convenience, we state the bound below. 
\begin{equation*} %\label{RDP_same_bound1_proof}
E_{m-1} \leq \sup_{(\calD_m,\calD'_m)\in\calD_{\text{same}}^m}\mathbb{E}_{\bh\sim\calM(\calD_{m})}\left[\left(\frac{\calM(\calD'_{m})(\bh)}{\calM(\calD_{m})(\bh)}\right)^{\lambda}\right]\leq 1+\binom{\lambda}{2} \frac{\left(e^{\epsilon_0}-1\right)^2}{m e^{\epsilon_0}} + \sum_{i=3}^{\lambda} \binom{\lambda}{i} i\Gamma(i/2) \Bigg(\frac{\left(e^{2\epsilon_0}-1\right)^2}{2m e^{2\epsilon_0}}\Bigg)^{i/2}
\end{equation*}
By setting $\gamma=\frac{1}{2}$ and $\overline{n}=\floor{(1-\gamma)q(n-1)}+1=\floor{\frac{n-1}{2e^{\epsilon_0}}}+1$, we get
\begin{align}
\mathbb{E}_{\bh\sim\calM\left(\calD'\right)}\left[\left(\frac{\calM\left(\calD\right)\left(\bh\right)}{\calM\left(\calD'\right)\left(\bh\right)}\right)^{\lambda}\right] &\leq E_{\overline{n}-1} + e^{\epsilon_0\lambda-\frac{n-1}{8e^{\epsilon_0}}} \label{proof_main-result_interim3} \\
&\leq 1+\binom{\lambda}{2}\frac{\left(e^{\epsilon_0}-1\right)^2}{\overline{n}e^{\epsilon_0}}+\sum_{i=3}^{\lambda} \binom{\lambda}{i} i\Gamma\left(i/2\right) \left(\frac{\left(e^{2\epsilon_0}-1\right)^2}{2\overline{n}e^{2\epsilon_0}}\right)^{i/2}+e^{\epsilon_0\lambda-\frac{n-1}{8e^{\epsilon_0}}}. \notag %\label{proof_main-result_interim4}
\end{align}
Since the above bound holds for arbitrary pairs of neighboring datasets $\calD$ and $\calD'$, 
this completes the proof of Theorem~\ref{thm:general_case}.

\subsection{Proof of Theorem~\ref{thm:general_case_2}}\label{subsec:proof_general_case_2}
The proof of Theorem~\ref{thm:general_case_2} follows the same steps as that of the proof of Theorem~\ref{thm:general_case} that we outlined in Section~\ref{subsec:proof-sketch_general-case} and also gave formally in Section~\ref{subsec:proof_general_case}, except for the following change. Instead of using Theorem~\ref{thm:RDP_same} for bounding the RDP for specific neighboring datasets, we will use the following theorem.
\begin{theorem}\label{thm:RDP_same_2nd-bound} 
Let $m\in\bbN$ be arbitrary. For % any $(\calD_{m},\calD'_{m})\in\calD_{\emph{same}}^{m}$ and 
any $\lambda\geq 2$ (including the non-integral $\lambda$), we have 
\begin{equation}\label{RDP_same_bound2}
\sup_{(\calD_{m},\calD'_{m})\in\calD_{\emph{same}}^{m}}\mathbb{E}_{\bh\sim\calM\left(\calD_{m}\right)}\left[\left(\frac{\calM\left(\calD'_{m}\right)\left(\bh\right)}{\calM\left(\calD_{m}\right)\left(\bh\right)}\right)^{\lambda}\right]\leq \exp\left(\lambda^2 \frac{\left(e^{\epsilon_0}-1\right)^2}{m}\right).
\end{equation}
\end{theorem}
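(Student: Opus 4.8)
The statement to establish is Theorem~\ref{thm:RDP_same_2nd-bound}; granting it, Theorem~\ref{thm:general_case_2} follows by repeating \emph{verbatim} the reduction used for Theorem~\ref{thm:general_case} in Section~\ref{subsec:proof_general_case} — Theorem~\ref{Thm:reduce_special_case}, monotonicity of $E_m$ (Lemma~\ref{lem:E_m-decreasing}), and the Chernoff bound on $\text{Bin}(n-1,q)$ with $\gamma=\tfrac12$, $\overline n=\floor{\frac{n-1}{2e^{\epsilon_0}}}+1$ — only replacing the special-case estimate of Theorem~\ref{thm:RDP_same} by that of Theorem~\ref{thm:RDP_same_2nd-bound} in the final substitution. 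So I will focus on proving Theorem~\ref{thm:RDP_same_2nd-bound}.

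Fix $(\calD_m,\calD'_m)\in\calD_{\same}^m$ and let $\bp,\bp'$ be the output distributions of $\calR$ on the two data points, so that $e^{-\epsilon_0}\le p'_j/p_j\le e^{\epsilon_0}$ for every $j\in[B]$. By \eqref{ratio_mu}, for $\bh\sim\calM(\calD_m)=\textsl{MN}(m,\bp,\cdot)$ the likelihood ratio is $\frac{\calM(\calD'_m)(\bh)}{\calM(\calD_m)(\bh)}=\sum_{j=1}^B\frac{p'_j}{p_j}\frac{h_j}{m}$, which lies in $[e^{-\epsilon_0},e^{\epsilon_0}]$ as a convex combination of the ratios $p'_j/p_j$. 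Writing it as $1+\frac{X(\bh)}{m}$ with $X(\bh)=\big(\sum_j\frac{p'_j}{p_j}h_j\big)-m$ (so $1+X(\bh)/m>0$) and using the elementary inequality $(1+t)^\lambda\le e^{\lambda t}$, valid for $t>-1$ and $\lambda\ge0$ since $\log(1+t)\le t$, we get
\[
\mathbb{E}_{\bh\sim\calM(\calD_m)}\!\left[\Big(\tfrac{\calM(\calD'_m)(\bh)}{\calM(\calD_m)(\bh)}\Big)^{\lambda}\right]\;\le\;\mathbb{E}_{\bh\sim\calM(\calD_m)}\!\left[e^{\frac{\lambda}{m}X(\bh)}\right].
\]

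To bound the right-hand side I would invoke the multinomial moment generating function. Put $t_j:=\frac{\lambda}{m}\big(\frac{p'_j}{p_j}-1\big)$; then $\frac{\lambda}{m}X(\bh)=\sum_{j=1}^B t_jh_j$, and since $\bh$ is the vector of cell counts of $m$ i.i.d.\ draws from $\bp$, $\mathbb{E}_{\bh\sim\textsl{MN}(m,\bp)}\big[e^{\sum_j t_jh_j}\big]=\big(\sum_{j=1}^B p_je^{t_j}\big)^m$. The numbers $\{t_j\}$ with weights $\{p_j\}$ form a \emph{mean-zero} distribution, since $\sum_j p_jt_j=\frac{\lambda}{m}\sum_j(p'_j-p_j)=0$, and it is supported in the interval $\big[\frac{\lambda}{m}(e^{-\epsilon_0}-1),\frac{\lambda}{m}(e^{\epsilon_0}-1)\big]$ of length $\frac{\lambda}{m}(e^{\epsilon_0}-e^{-\epsilon_0})$; Hoeffding's lemma then gives $\sum_j p_je^{t_j}\le\exp\!\big(\frac{\lambda^2(e^{\epsilon_0}-e^{-\epsilon_0})^2}{8m^2}\big)$, hence $\mathbb{E}_{\bh}\big[e^{\lambda X(\bh)/m}\big]\le\exp\!\big(\frac{\lambda^2(e^{\epsilon_0}-e^{-\epsilon_0})^2}{8m}\big)$. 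Finally, $e^{\epsilon_0}-e^{-\epsilon_0}=(e^{\epsilon_0}-1)(1+e^{-\epsilon_0})\le 2(e^{\epsilon_0}-1)$, so this exponent is at most $\frac{\lambda^2(e^{\epsilon_0}-1)^2}{2m}\le\frac{\lambda^2(e^{\epsilon_0}-1)^2}{m}$, and since the estimate is uniform over $\calD_{\same}^m$, taking the supremum proves the theorem.

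The argument is essentially calculation-free, so I do not anticipate a genuine obstacle; the two points that need care are (i) verifying $1+X(\bh)/m>0$ before applying $(1+t)^\lambda\le e^{\lambda t}$ for real, possibly non-integer, $\lambda$ — which is precisely where the $\epsilon_0$-LDP lower bound $p'_j/p_j\ge e^{-\epsilon_0}$ enters — and (ii) recognizing that the multinomial MGF collapses the whole expectation into a single one-dimensional centered MGF, to which Hoeffding's lemma applies with no restriction on $m,\lambda,\epsilon_0$. (If one wanted the stated constant without the factor-$2$ slack one could instead bound $e^{t_j}\le 1+t_j+t_j^2$ termwise and use $\sum_j p_jt_j^2\le(\lambda(e^{\epsilon_0}-1)/m)^2$, but that inequality requires $|t_j|$ bounded, whereas the Hoeffding route is unconditional and hence preferable.)
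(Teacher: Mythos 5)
Your proof is correct, and its skeleton coincides with the paper's: both start from the likelihood ratio in \eqref{ratio_mu}, pass to an exponential via $1+x\le e^x$, and factor the expectation over the multinomial $\bh$ into the $m$-th power of a single-draw moment generating function $\big(\sum_j p_j e^{t_j}\big)^m$ (the paper phrases this as an i.i.d.\ decomposition in \eqref{eqn:ratio-i.i.d.}--\eqref{eqn:i.i.d-bound} rather than as the multinomial MGF, but it is the same identity). Where you diverge is in the final estimate of that one-dimensional centered MGF: the paper Taylor-expands, bounds each $k\ge 2$ term by $\big(\tfrac{\lambda(e^{\eps_0}-1)}{m}\big)^k/k!$ to get $e^{a}-a$ with $a=\tfrac{\lambda(e^{\eps_0}-1)}{m}$, and then runs the chain $(e^{a}-a)^m\le e^{ma(1-e^{-a})}\le e^{ma^2}$; you instead apply Hoeffding's lemma to the mean-zero variable supported on an interval of length $\tfrac{\lambda}{m}(e^{\eps_0}-e^{-\eps_0})$ and then use $e^{\eps_0}-e^{-\eps_0}\le 2(e^{\eps_0}-1)$. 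Your route is shorter, is equally unconditional in $m,\lambda,\eps_0$, and in fact yields the sharper exponent $\tfrac{\lambda^2(e^{\eps_0}-1)^2}{2m}$ before you relax it to match the stated bound — a factor-of-$2$ improvement in the leading term of Theorem~\ref{thm:general_case_2} that the paper's argument does not deliver. Your care in checking $1+X(\bh)/m>0$ before applying $(1+t)^{\lambda}\le e^{\lambda t}$ for real $\lambda$ is also warranted and matches what the paper implicitly relies on.
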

First we show how Theorem~\ref{thm:general_case_2} follows from Theorem~\ref{thm:RDP_same_2nd-bound}, and then we prove Theorem~\ref{thm:RDP_same_2nd-bound}. \\

Note that Theorem~\ref{thm:RDP_same_2nd-bound} implies that $E_{m-1} \leq \exp\left(\lambda^2 \frac{\left(e^{\epsilon_0}-1\right)^2}{m}\right)$ holds for every integer $m\geq2$.
Substituting this in \eqref{proof_main-result_interim3} (by putting $m=\overline{n}=\floor{\frac{n-1}{2e^{\epsilon_0}}}+1$), we get
%Note that the LHS of \eqref{RDP_same_bound2} is equal to $E_{m-1}$
%Substituting the bound on $E_{\overline{n}-1}$ from \eqref{RDP_same_bound2} (by putting $m=\overline{n}=\frac{n-1}{2e^{\epsilon_0}}+1$) into \eqref{proof_main-result_interim3}, we get
\begin{equation*}
\mathbb{E}_{\bh\sim\calM\left(\calD'\right)}\left[\left(\frac{\calM\left(\calD\right)\left(\bh\right)}{\calM\left(\calD'\right)\left(\bh\right)}\right)^{\lambda}\right]\leq e^{\lambda^2\frac{\left(e^{\epsilon_0}-1\right)^2}{\overline{n}}}+e^{\epsilon_0\lambda-\frac{n-1}{8e^{\epsilon_0}}}.
\end{equation*}
This proves Theorem~\ref{thm:general_case_2}.
%where $\overline{n}=\frac{n-1}{2e^{\epsilon_0}}+1$. 

%We have that $\overline{n}=\frac{n-1}{2e^{\epsilon_0}}+1\geq \frac{n}{2e^{\epsilon_0}}$ and $\lambda^2\frac{2e^{\epsilon_0}\left(e^{\epsilon_0}-1\right)^2}{n}\geq \lambda\epsilon_0-\frac{n-1}{8e^{\epsilon_0}}$, since the roots of the quadratic function $\lambda^2\frac{2e^{\epsilon_0}\left(e^{\epsilon_0}-1\right)^2}{n}-\lambda\epsilon_0+\frac{n-1}{8e^{\epsilon_0}}$ are imaginary. Thus, we get that $e^{\frac{2\lambda^2e^{\epsilon_0}\left(e^{\epsilon_0}-1\right)^2}{n}}\geq e^{\epsilon_0\lambda-\frac{n-1}{8e^{\epsilon_0}}}$. As a result, we have
%\begin{equation*}
%\mathbb{E}_{\bh\sim\calM\left(\calD'\right)}\left[\left(\frac{\calM\left(\calD\right)\left(\bh\right)}{\calM\left(\calD'\right)\left(\bh\right)}\right)^{\lambda}\right]\leq 2e^{\lambda^2\frac{2e^{\epsilon_0}\left(e^{\epsilon_0}-1\right)^2}{n}}.
%\end{equation*} 
%

%\section{Proof of Theorem~\ref{thm:RDP_same_2nd-bound}}\label{sec:simple-bound}
\begin{proof}[Proof of Theorem~\ref{thm:RDP_same_2nd-bound}]
%In this section, we present a simplified bound on RDP by showing that 
%\begin{equation}~\label{eqn:simplified-bound}
%\mathbb{E}_{\bh\sim\calM\left(\calD_{m}\right)}\left[\left(\frac{\calM\left(\calD'_{m}\right)\left(\bh\right)}{\calM\left(\calD_{m}\right)\left(\bh\right)}\right)^{\lambda}\right]\leq \exp\left(\lambda^2\frac{\left(e^{\epsilon_0}-1\right)^2}{m}\right).
%\end{equation} 
Fix an arbitrary $m\in\bbN$. Let $(\calD_{m},\calD'_{m})\in\calD_{\same}^{m}$ and $\bp=(p_1,\ldots,p_B),\bp'=(p'_1,\ldots,p'_B)$ be the same as defined in the proof of Theorem~\ref{thm:RDP_same} in Section~\ref{sec:special_form}.

% and consider any pair of special structure neighboring datasets $(\calD_{m},\calD'_{m})\in\calD_{\same}^{m}$. Let $\calD_m=(d,\hdots,d)\in\calX^m$ and $\calD'_m=(d,\hdots,d,d')\in\calX^m$. Let $\bp=(p_1,\ldots,p_B)$ and $\bp'=(p'_1,\ldots,p'_B)$ be the probability distributions of the discrete $\eps_0$-LDP mechanism $\calR:\calX\to\calY=[B]$ when its inputs are $d$ and $d'$, respectively, where $p_j=\Pr[\calR(d)=j]$ and $p_j'=\Pr[\calR(d')=j]$ for all $j\in[B]$.
%First, we have
\begin{align}
\mathbb{E}_{\bh\sim\calM(\calD_{m})}\left[\left(\frac{\calM\left(\calD'_{m}\right)\left(\bh\right)}{\calM\left(\calD_{m}\right)\left(\bh\right)}\right)^{\lambda}\right] &= \mathbb{E}_{\bh\sim\calM(\calD_{m})}\left[\left(\sum_{j=1}^{B}\frac{p'_j}{p_j}\frac{h_j}{m}\right)^{\lambda}\right] \tag{from \eqref{ratio_mu}} \\
&=\mathbb{E}_{\bh\sim\calM(\calD_{m})}\left[\left(1+\sum_{j=1}^{B}\frac{p'_j}{p_j}\frac{h_j}{m}-1\right)^{\lambda}\right] \notag \\
&\leq \mathbb{E}_{\bh\sim\calM(\calD_{m})}\left[ \exp\left(\lambda\left(\sum_{j=1}^{B}\frac{p'_j}{p_j}\frac{h_j}{m}-1\right)\right)\right], \label{eqn:main-bound}
\end{align}
where the last inequality follows from $1+x \leq e^{x}$.

In \eqref{eqn:main-bound}, $\bh$ is distributed according to $\calM(\calD_{m})=\calH_m(\calR(d),\hdots,\calR(d))$, where $\calH_m$ denotes the shuffling operation on $m$ elements and range of $\calR$ is equal to $[B]$. Since all the $m$ data points are identical, and all clients use independent randomness for computing $\calR(d)$, we can assume, w.l.o.g., that $\calM(\calD_m)$ is a collection of $m$ i.i.d.\ random variables $X_1,\hdots,X_m$, where $\Pr\left[X_i=j\right]=p_j$ for $j\in\left[B\right]$. Thus, we have (in the following, note that $\bh=(h_1,\hdots,h_B)$ is a r.v.)
\begin{equation}~\label{eqn:ratio-i.i.d.}
\frac{1}{m}\sum_{j=1}^{B}\frac{p'_j}{p_j}h_j = \frac{1}{m}\sum_{j=1}^{B}\frac{p'_j}{p_j}\sum_{i=1}^{m}\mathbbm{1}_{\{X_i=j\}} = \frac{1}{m}\sum_{i=1}^{m}\sum_{j=1}^{B}\frac{p'_j}{p_j}\mathbbm{1}_{\{X_i=j\}} = \frac{1}{m}\sum_{i=1}^{m}\frac{p'_{X_i}}{p_{X_i}},
\end{equation} 
where $\mathbbm{1}_{\{\cdot\}}$ denotes the indicator r.v.
Substituting from~\eqref{eqn:ratio-i.i.d.} into~\eqref{eqn:main-bound}, we get
\begin{align}
\mathbb{E}_{\bh\sim\calM(\calD_{m})}\left[ \exp\left(\lambda\left(\sum_{j=1}^{B}\frac{p'_j}{p_j}\frac{h_j}{m}-1\right)\right)\right] &= \mathbb{E}_{X_1,\hdots,X_m}\left[ \exp\left(\frac{\lambda}{m}\sum_{i=1}^{m}\left(\frac{p'_{X_i}}{p_{X_i}}-1\right)\right)\right] \notag \\
%&= \mathbb{E}_{X_1,\hdots,X_m}\left[ \exp\left(\sum_{i=1}^{m}\frac{\lambda}{m}\left(\frac{p'_{X_i}}{p_{X_i}}-1\right)\right)\right] \notag \\
&= \prod_{i=1}^m\mathbb{E}_{X_i}\left[ \exp\left(\frac{\lambda}{m}\left(\frac{p'_{X_i}}{p_{X_i}}-1\right)\right)\right] \notag \\
&= \(\mathbb{E}_{X\sim\bp}\left[ e^{\frac{\lambda}{m}\left(\frac{p'_{X}}{p_{X}}-1\right)}\right]\)^m \label{eqn:i.i.d-bound} 
%&= \left(\mathbb{E}_{X\sim \bp}\left[e^{\frac{\lambda}{m}\frac{p'_{X}-p_{X}}{p_{X}}}\right]\right)^{m}, \label{eqn:i.i.d-bound}
\end{align}
where $\bp=\left[p_1,\ldots,p_B\right]$. From Taylor expansion of $e^x=1+\sum_{k=1}^{\infty}\frac{x^k}{k!}$, we get
\begin{align}
\mathbb{E}_{X\sim\bp}\left[ e^{\frac{\lambda}{m}\left(\frac{p'_{X}}{p_{X}}-1\right)}\right] &= 1+\sum_{k=1}^{\infty}\frac{1}{k!}\mathbb{E}_{X\sim\bp}\left[\(\frac{\lambda}{m}\left(\frac{p'_{X}}{p_{X}}-1\right)\)^k\right] \notag \\
&= 1+\sum_{k=1}^{\infty}\frac{1}{k!}\sum_{j=1}^B p_j \(\frac{\lambda}{m}\left(\frac{p'_j}{p_j}-1\right)\)^k \notag \\
&= 1+\sum_{k=2}^{\infty}\frac{1}{k!}\sum_{j=1}^B p_j \(\frac{\lambda}{m}\left(\frac{p'_j}{p_j}-1\right)\)^k \notag \\
&\leq 1+\sum_{k=2}^{\infty}\frac{1}{k!}\sum_{j=1}^B p_j \(\frac{\lambda(e^{\eps_0}-1)}{m}\)^k \tag{since $\frac{p_j'}{p_j}\leq e^{\eps_0}, \forall j\in[B]$} \\
&= 1+\sum_{k=1}^{\infty}\frac{1}{k!}\(\frac{\lambda(e^{\eps_0}-1)}{m}\)^k - \frac{\lambda(e^{\eps_0}-1)}{m} \notag \\
&= e^{\frac{\lambda \left(e^{\epsilon_0}-1\right)}{m}}-\frac{\lambda \left(e^{\epsilon_0}-1\right)}{m}. \label{eqn:exp-bound}
\end{align}
Substituting from~\eqref{eqn:exp-bound} into~\eqref{eqn:i.i.d-bound}, we get
\begin{align*}
\mathbb{E}_{\bh\sim\calM\left(\calD_{m}\right)}\left[\left(\frac{\calM\left(\calD'_{m}\right)\left(\bh\right)}{\calM\left(\calD_{m}\right)\left(\bh\right)}\right)^{\lambda}\right]&\leq \left(e^{\frac{\lambda \left(e^{\epsilon_0}-1\right)}{m}}-\frac{\lambda \left(e^{\epsilon_0}-1\right)}{m}\right)^{m}\\
&=e^{\lambda \left(e^{\epsilon_0}-1\right)}\left[1-\frac{\lambda \left(e^{\epsilon_0}-1\right)}{m}e^{\frac{-\lambda \left(e^{\epsilon_0}-1\right)}{m}}\right]^{m} \\
&\leq e^{\lambda \left(e^{\epsilon_0}-1\right)} e^{-\lambda \left(e^{\epsilon_0}-1\right)e^{\frac{-\lambda \left(e^{\epsilon_0}-1\right)}{m}}} \tag{since $1-x\leq e^{-x}$} \\
&=e^{\lambda \left(e^{\epsilon_0}-1\right)\left[1-e^{\frac{-\lambda \left(e^{\epsilon_0}-1\right)}{m}}\right]}\\
&\leq e^{\frac{\lambda^2 \left(e^{\epsilon_0}-1\right)^2}{m}} \tag{since $1-e^{-x} \leq x$}.
\end{align*}
This completes the proof of Theorem~\ref{thm:RDP_same_2nd-bound}.
\end{proof}

\section{Proof of Theorem~\ref{thm:lower_bound} -- Lower Bound}\label{sec:lower-bound}

%In this section, we prove our lower bound on the RDP of the shuffled model stated in Theorem~\ref{thm:lower_bound}. \\

%\begin{proof}[Proof of Theorem~\ref{thm:lower_bound}]
Consider the binary case, where each data point $d$ can take a value from $\calX=\lbrace 0,1\rbrace$. Let the local randomizer $\calR$ be the binary randomized response (2RR) mechanism, where $\Pr\left[\calR\left(d\right)=d\right]=\frac{e^{\epsilon_0}}{e^{\epsilon_0}+1}$ for $d\in\calX$. It is easy to verify that $\calR$ is an $\eps_0$-LDP mechanism. For simplicity, let $p=\frac{1}{e^{\epsilon_0}+1}$. Consider two neighboring datasets $\calD,\ \calD' \in\{0,1\}^n$, where $\calD=\left(0,\ldots,0,0\right)$ and $\calD'=\left(0,\ldots,0,1\right)$. Let $k\in\left\{0,\ldots,n\right\}$ denote the number of ones in the output of the shuffler. As argued in Section~\ref{subsec:prob-form} on page~\pageref{histogram-set}, since the output of the shuffled mechanism $\calM$ can be thought of as the distribution of the number of ones in the output, we have that $k\sim\calM(\calD)$ is distributed as a Binomial random variable Bin$(n,p)$. Thus, we have
\begin{align*}
\calM(\calD)(k)&= \binom{n}{k} p^{k} (1-p)^{n-k} \\ 
\calM(\calD')(k)&= (1-p) \binom{n-1}{k-1} p^{k-1} (1-p)^{n-k}+ p\binom{n-1}{k} p^{k} (1-p)^{n-k-1}.
\end{align*}
It will be useful to compute $\frac{\calM(\calD)(k)}{\calM(\calD')(k)}-1$ for the calculations later.
\begin{align}
\frac{\calM(\calD')(k)}{\calM(\calD)(k)} -1 &= \frac{(1-p) \binom{n-1}{k-1} p^{k-1} (1-p)^{n-k} + p\binom{n-1}{k} p^{k} (1-p)^{n-k-1}}{\binom{n}{k} p^{k} (1-p)^{n-k}} -1 \notag \\
&= \frac{k}{n}\frac{(1-p)}{p} + \frac{(n-k)}{n}\frac{p}{(1-p)} -1 \notag \\
&= \frac{k}{n} e^{\eps_0} + \frac{(n-k)}{n} e^{-\eps_0} -1 \notag \\
&= \frac{k}{n}\(e^{\eps_0}-e^{-\eps_0}\) + e^{-\eps_0} -1 \notag \\
&= \frac{k}{n}\(\frac{e^{2\eps_0}-1}{e^{\eps_0}}\) - \(\frac{e^{\eps_0} -1}{e^{\eps_0}}\) \notag \\
&= \(\frac{e^{2\eps_0}-1}{ne^{\eps_0}}\)\(k - \frac{n}{e^{\eps_0}+1}\) \label{eq:compute_ratio_lb}
\end{align}
Thus, we have that 
\begin{align*}
\mathbb{E}_{k\sim\calM(\calD)}\left[\left(\frac{\calM(\calD')(k)}{\calM(\calD)(k)}\right)^{\lambda}\right]&=\mathbb{E}\left[\left(1+\frac{\calM(\calD')(k)}{\calM(\calD)(k)}-1\right)^{\lambda}\right]\\
&\stackrel{\text{(a)}}{=} 1+\sum_{i=1}^{\lambda} \binom{\lambda}{i} \mathbb{E}\left[\left(\frac{\calM(\calD')(k)}{\calM(\calD)(k)}-1\right)^{i}\right] \\
&\stackrel{\text{(b)}}{=} 1+\sum_{i=2}^{\lambda} \binom{\lambda}{i} \mathbb{E}\left[\left(\frac{\calM(\calD')(k)}{\calM(\calD)(k)}-1\right)^{i}\right] \\
%&=1+\sum_{i=2}^{\lambda} \binom{\lambda}{i} \mathbb{E}\left[\left(\frac{k}{n}e^{\epsilon_0}+\frac{(n-k)}{n}e^{-\epsilon_0}-1\right)^{i}\right]\\
&=1+\sum_{i=2}^{\lambda} \binom{\lambda}{i} \left(\frac{\left(e^{2\epsilon_0}-1\right)}{ne^{\epsilon_0}}\right)^{i} \mathbb{E}\left[\left(k-\frac{n}{e^{\epsilon_0}+1}\right)^{i}\right] \tag{from \eqref{eq:compute_ratio_lb}} \\
&\stackrel{\text{(c)}}{=} 1+\binom{\lambda}{2}\frac{\left(e^{\epsilon_0}-1\right)^2}{ne^{\epsilon_0}}+\sum_{i=3}^{\lambda} \binom{\lambda}{i} \left(\frac{\left(e^{2\epsilon_0}-1\right)}{ne^{\epsilon_0}}\right)^{i} \mathbb{E}\left[\left(k-\frac{n}{e^{\epsilon_0}+1}\right)^{i}\right].
\end{align*}
Here, step (a) from the polynomial expansion $(1+x)^n=\sum_{k=0}^{n}\binom{n}{k}x^k$, 
step (b) follows because the term corresponding to $i=1$ is zero (i.e., $\mathbb{E}_{k\sim\calM(\calD)}\left[\left(\frac{\calM(\calD')(k)}{\calM(\calD)(k)}-1\right)\right]=0$),
and step (c) from the from the fact that $\mathbb{E}_{k\sim\calM(\calD)}\left[\left(k-\frac{n}{e^{\epsilon_0}+1}\right)^{2}\right]=np(1-p)=\frac{ne^{\eps_0}}{(e^{\eps_0}+1)^2}$, which is equal to the variance of the Binomial random variable. This completes the proof of Theorem~\ref{thm:lower_bound}.
%\end{proof} 
\section{Conclusion}\label{sec:conclusion}

The analysis of the RDP for the shuffle model presented in this paper
was based on some new analysis techniques that may be of independent
interest. The utility of these bounds were also demonstrated
numerically, where we saw that in important regimes of interest, we
get $8\times$ improvement over the state-of-the-art without sampling
and at least $10\times$ improvement with sampling (see Section
\ref{sec:numerics} for more details).

A simple extension of the results would be to work with local
approximate DP guarantees instead of pure LDP. This can be seen by
using the tight conversion between approximate DP and pure DP given in
\cite{feldman2020hiding}. However, there are several open problems of
interest. Perhaps the most important one is mentioned in Remark~\ref{remark:gap-up-lb}.
There is a multiplicative gap of the order
$e^{\eps_0}$ in our upper and lower bounds, and closing this gap is an
important open problem. We believe that our lower bound is tight
(at least for the first order term) and the upper bound is
loose. Showing this or getting a tighter upper bound may require new
proof techniques. A second question could be how to get an overall RDP
guarantee if we are given local RDP guarantees instead of local LDP
guarantees.

%In this work, we analyzed the Renyi differential privacy (RDP) of any discrete local mechanism in the shuffle model and provided upper and lower bounds on the privacy parameter. We also numerically demonstrated that our upper bounds (when converted to the approximate DP parameters after composition using standard conversion methods) outperform the state-of-the-art bounds on the privacy parameters of the approximate DP using strong composition. In order to prove our upper bounds, we developed new proof techniques that may be of independent interest. As mentioned on page~\pageref{gap-ub-lb}, there is a multiplicative gap of the order $e^{\eps_0}$ in our upper and lower bounds, and closing this gap is an important open problem. We believe that our lower bound is tight (for the first order term) and the upper bound is loose, and also that proving a tighter upper bound may require new proof techniques.

\bibliographystyle{alpha}
\bibliography{RDPRefs}

%\newpage
\appendix

\section{Proof of Corollary~\ref{corol:simplified_general_case}}\label{app:simplied_1st-bound-proof}
In this section, we prove the simplified bound (stated in \eqref{eq:simplified_1st-bound}) on the RDP of the shuffle model, provided that $\lambda,\eps_0,n$ satisfy a certain condition. In particular, we will show that 
if $\lambda,\eps_0,n$ satisfy $\lambda^4e^{5\eps_0}<\frac{n}{9}$, then 
\begin{equation}\label{eq:simplified_1st-bound_proof}
\epsilon(\lambda) \leq \frac{1}{\lambda-1}\log\left(1+\frac{\lambda^2\left(e^{\epsilon_0}-1\right)^2}{\overline{n}e^{\epsilon_0}} \right),
\end{equation}
where $\overline{n}=\frac{n-1}{2e^{\epsilon_0}}+1$.
In order to show \eqref{eq:simplified_1st-bound_proof}, it suffices to prove the following (using which in \eqref{eqn:1st_bound} will yield \eqref{eq:simplified_1st-bound}):
\begin{align}\label{proof_simplied_1st-bound-1}
\sum_{i=3}^{\lambda} \binom{\lambda}{i} i\Gamma\left(i/2\right) \left(\frac{\left(e^{2\epsilon_0}-1\right)^2}{2e^{2\epsilon_0}\overline{n}}\right)^{i/2}+e^{\epsilon_0\lambda-\frac{n-1}{8e^{\epsilon_0}}} \leq \binom{\lambda}{2} \frac{\left(e^{\epsilon_0}-1\right)^2}{\overline{n}e^{\epsilon_0}}.
\end{align}
%First we show 
%\begin{align}\label{proof_simplied_1st-bound-2}
%e^{\epsilon_0\lambda-\frac{n-1}{8e^{\epsilon_0}}} \leq \binom{\lambda}{2} 2\Gamma\left(1\right) \left(\frac{\left(e^{2\epsilon_0}-1\right)^2}{2e^{2\epsilon_0}\overline{n}}\right)
%\end{align}
First notice that $\binom{\lambda}{i} i\Gamma\left(i/2\right) \leq \lambda^i$ (see Claim~\ref{claim:gamma_ineq} on page~\pageref{claim:gamma_ineq}).
In order to show \eqref{proof_simplied_1st-bound-1}, it suffices to show 
\begin{align}\label{proof_simplied_1st-bound-2}
\sum_{i=3}^{\lambda} \left(\frac{\lambda\left(e^{2\epsilon_0}-1\right)}{(2e^{2\epsilon_0}\overline{n})^{1/2}}\right)^i + e^{\epsilon_0\lambda-\frac{n-1}{8e^{\epsilon_0}}} \leq \binom{\lambda}{2} \frac{\left(e^{\epsilon_0}-1\right)^2}{\overline{n}e^{\epsilon_0}}.
\end{align}
Note that there are $(\lambda-2)$ terms inside the summation. If we show that each of those terms is smaller than 1 (which would imply that the term corresponding to $i=3$ is the largest one), then the summation is at most $(\lambda-2)$ times the term with $i=3$. Further, if the additional exponential term in the LHS is upper-bounded by the term with $i=3$, then we can prove \eqref{proof_simplied_1st-bound-2} by showing that $(\lambda-1)$ times the term with $i=3$ is upper-bounded by the RHS. These arguments are summarized in the following set of three inequalities:
\begin{align}
\frac{\lambda\left(e^{2\epsilon_0}-1\right)}{(2e^{2\epsilon_0}\overline{n})^{1/2}} &< 1 \label{proof_simplied_1st-bound-3} \\
e^{\epsilon_0\lambda-\frac{n-1}{8e^{\epsilon_0}}} &\leq \left(\frac{\lambda\left(e^{2\epsilon_0}-1\right)}{(2e^{2\epsilon_0}\overline{n})^{1/2}}\right)^3 \label{proof_simplied_1st-bound-4} \\
(\lambda-1)\left(\frac{\lambda\left(e^{2\epsilon_0}-1\right)}{(2e^{2\epsilon_0}\overline{n})^{1/2}}\right)^3 &\leq \binom{\lambda}{2} \frac{\left(e^{\epsilon_0}-1\right)^2}{\overline{n}e^{\epsilon_0}} \label{proof_simplied_1st-bound-5}
\end{align}
In the rest of this proof, we will derive the condition on $\eps_0,\lambda,n$ such that \eqref{proof_simplied_1st-bound-5} is satisfied. As we see later, the values of $\eps_0,\lambda$ thus obtained will automatically satisfy \eqref{proof_simplied_1st-bound-3} and \eqref{proof_simplied_1st-bound-4}.

By cancelling same terms from both sides of \eqref{proof_simplied_1st-bound-5}, we get

\begin{align}
\frac{\lambda^2\left(e^{2\epsilon_0}-1\right)^3}{(2e^{\epsilon_0}\overline{n})^{3/2}e^{3\eps_0/2}} &\leq  \frac{\left(e^{\epsilon_0}-1\right)^2}{2\overline{n}e^{\epsilon_0}} \notag \\
\iff \lambda^2(e^{2\epsilon_0}-1)(e^{\epsilon_0}+1)^2 &\leq \sqrt{2\overline{n}e^{\epsilon_0}}e^{3\eps_0/2} \label{proof_simplied_1st-bound-6}
\end{align}
For the LHS and the RHS, we respectively have
\begin{align}
(e^{2\epsilon_0}-1)(e^{\epsilon_0}+1)^2 &= (e^{2\epsilon_0}-1)(e^{2\epsilon_0}+2e^{\eps_0}+1) \leq e^{4\eps_0}+2e^{3\eps_0} \leq 3e^{4\eps_0} \label{proof_simplied_1st-bound-7} \\
2\overline{n}e^{\epsilon_0} &= n-1+2e^{\eps_0}\geq n. \label{proof_simplied_1st-bound-8} 
\end{align} 
Therefore, in order to show \eqref{proof_simplied_1st-bound-6}, it suffices to show
%\begin{align}\label{proof_simplied_1st-bound-9}
$3\lambda^2 e^{4\eps_0} \leq \sqrt{ne^{3\eps_0}}$, 
%\end{align}
%Substituting $\eps_0=c\ln n$ in \eqref{proof_simplied_1st-bound-9}, we get $\lambda^2 \leq \frac{n^{\frac{1-5c}{2}}}{3}$, which can be satisfied by taking $\lambda \leq \frac{n^{\frac{1-5c}{4}}}{2}$.
%\begin{align}
%3\lambda^2n^{4c} &\leq n^{\frac{1+3c}{2}} \notag \\
%\iff \lambda^2 &\leq \frac{n^{\frac{1-5c}{2}}}{3} \label{proof_simplied_1st-bound-8}
%\end{align}
%Note that \eqref{proof_simplied_1st-bound-8} can be satisfied by taking $\lambda \leq \frac{n^{\frac{1-5c}{4}}}{2}$.
which is equivalent to $\lambda^4e^{5\eps_0}<\frac{n}{9}$. Thus, we have shown that $\lambda^4e^{5\eps_0}<\frac{n}{9}$ implies \eqref{proof_simplied_1st-bound-5}. \\

Now we show that when $\lambda^4e^{5\eps_0}<\frac{n}{9}$, \eqref{proof_simplied_1st-bound-3} and \eqref{proof_simplied_1st-bound-4} are automatically satisfied:
\begin{enumerate}
\item {Proof of \eqref{proof_simplied_1st-bound-3}:} %From \eqref{proof_simplied_1st-bound-8}, we have $2e^{\epsilon_0}\overline{n} \geq n$. 
\begin{align*}
\frac{\lambda\left(e^{2\epsilon_0}-1\right)}{\sqrt{2e^{2\epsilon_0}\overline{n}}} &\leq \frac{\lambda e^{2\epsilon_0}}{\sqrt{2e^{\epsilon_0}\overline{n}}} \leq \sqrt{\frac{\lambda^4 e^{5\epsilon_0}}{2e^{\epsilon_0}\overline{n}}} \leq \sqrt{\frac{n/9}{n}} < 1.
\end{align*}
In the second inequality we used $\lambda\geq1$ and in the penultimate inequality we used $2e^{\epsilon_0}\overline{n} \geq n$ from \eqref{proof_simplied_1st-bound-8}.
\item {Proof of \eqref{proof_simplied_1st-bound-4}:} For this, first we upper-bound the LHS and lower-bound the RHS, and then note that the upper-bound is smaller than the lower-bound. For the upper-bound on $\exp(\eps_0\lambda - \frac{n-1}{8e^{\eps_0}})$, note that $\eps_0\lambda \leq e^{5\eps_0/4}\lambda = \(e^{5\eps_0}\lambda^4\)^{1/4} < \(\frac{n}{9}\)^{1/4}= \frac{n^{1/4}}{\sqrt{3}}$. Also note that $e^{\eps_0}\leq e^{5\eps_0/4}\lambda < \frac{n^{1/4}}{\sqrt{3}}$, which implies $\frac{n-1}{8e^{\eps_0}}=\frac{\sqrt{3}}{8}\frac{n-1}{n^{1/4}}\geq \frac{\sqrt{3}}{16}n^{3/4}$. Substituting these bounds in the exponent of $\exp(\eps_0\lambda - \frac{n-1}{8e^{\eps_0}})$, we get:
\begin{align}\label{proof_simplied_1st-bound-10}
\exp\(\eps_0\lambda - \frac{n-1}{8e^{\eps_0}}\) &\leq \exp\(\frac{n^{1/4}}{\sqrt{3}} - \frac{\sqrt{3}}{16}n^{3/4}\) 
= \exp\(-n^{3/4}\(\frac{\sqrt{3}}{16} - \frac{1}{\sqrt{3n}}\)\) \leq \exp\(-c'n^{3/4}\),
\end{align}
where $c'>0$ is a constant even for small values of $n$. For example, for $n=100$, we get $c'\geq\frac{1}{20}$. 

For the lower-bound on $\left(\frac{\lambda\left(e^{2\epsilon_0}-1\right)}{(2e^{2\epsilon_0}\overline{n})^{1/2}}\right)^3$, note that $2e^{\eps_0}\overline{n}=n-1+2e^{\eps_0}\leq n-1+2\(\frac{n}{9}\)^{1/5}\leq 2n$, where $e^{\eps_0}\leq \(\frac{n}{9}\)^{1/5}$ follows from $e^{5\eps_0}\leq \lambda^4e^{5\eps_0}<\frac{n}{9}$. Now we show the lower bound:
\begin{align}
\frac{\lambda^3(e^{2\epsilon_0}-1)^3}{(2e^{2\epsilon_0}\overline{n})^{3/2}} &\geq \frac{(e^{\epsilon_0}-1)^3(e^{\epsilon_0}+1)^3}{(2e^{\eps_0}\overline{n})^{3/2}e^{3\eps_0/2}} \geq \frac{(e^{\epsilon_0}-1)^3e^{3\epsilon_0}}{(2n)^{3/2}e^{3\eps_0/2}} \geq \frac{(e^{\epsilon_0}-1)^3}{(2n)^{3/2}} \geq \frac{\eps_0^3}{(2n)^{3/2}} \label{proof_simplied_1st-bound-11}
\end{align}
Note that the upper-bound on $\exp(\eps_0\lambda - \frac{n-1}{8e^{\eps_0}})$ is exponentially small in $n^{3/4}$, whereas, the lower-bound on $\frac{\lambda^3(e^{2\epsilon_0}-1)^3}{(2e^{2\epsilon_0}\overline{n})^{3/2}}$ is inverse-polynomial in $n$. So, for sufficiently large $n$, \eqref{proof_simplied_1st-bound-4} will be satisfied.
\end{enumerate}
This completes the proof of Corollary~\ref{corol:simplified_general_case}

%\subsection{An Inequality for the Gamma Function}\label{app:gamma_ineq}
\begin{claim}[An Inequality for the Gamma Function]\label{claim:gamma_ineq}
For any $\lambda\in\bbN$ and $k\geq3$, we have $\binom{\lambda}{k}k\Gamma(k/2)\leq  \lambda^k$.
\end{claim}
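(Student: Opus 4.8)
The goal is to prove the elementary inequality $\binom{\lambda}{k}k\Gamma(k/2)\leq \lambda^k$ for integers $\lambda\geq k\geq 3$. (For $k>\lambda$ the binomial coefficient is zero, so the claim is trivial; assume $3\leq k\leq \lambda$.) The plan is to bound each of the three factors on the left separately by powers of $\lambda$ and small constants, and then check that the product of the constants is at most $1$. First I would use the crude bound $\binom{\lambda}{k}\leq \frac{\lambda^k}{k!}$, which follows directly from $\binom{\lambda}{k}=\frac{\lambda(\lambda-1)\cdots(\lambda-k+1)}{k!}\leq \frac{\lambda^k}{k!}$. This already extracts the full $\lambda^k$, so it remains to show $\frac{k\,\Gamma(k/2)}{k!}\leq 1$ for all integers $k\geq 3$, i.e.\ $k\,\Gamma(k/2)\leq k!=\Gamma(k+1)$.

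Next I would verify $k\,\Gamma(k/2)\leq \Gamma(k+1)$ for $k\geq 3$. The cleanest route is to split into the two parity cases and use the explicit values of the Gamma function at integers and half-integers. For even $k=2m$ with $m\geq 2$: $k\,\Gamma(k/2)=2m\,\Gamma(m)=2m\,(m-1)!=2\cdot m!$, and we need $2\cdot m!\leq (2m)!$, which holds for $m\geq 2$ since $(2m)!/m! = (m+1)(m+2)\cdots(2m)\geq (m+1)\cdot 2^{m-1}\geq 3\cdot 2 = 6 > 2$ already at $m=2$, and grows thereafter. For odd $k=2m+1$ with $m\geq 1$: $\Gamma(k/2)=\Gamma(m+\tfrac12)=\frac{(2m)!}{4^m m!}\sqrt{\pi}$, so $k\,\Gamma(k/2) = (2m+1)\frac{(2m)!\sqrt\pi}{4^m m!}$, and we need this to be at most $(2m+1)!$, i.e.\ $\frac{\sqrt\pi}{4^m m!}\leq \frac{(2m+1)!}{(2m+1)(2m)!}=1$, equivalently $\sqrt\pi\leq 4^m m!$. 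Since $\sqrt\pi<2$ and $4^m m!\geq 4$ for $m\geq 1$, this holds. Alternatively, one can avoid case analysis entirely by noting $\Gamma(k/2)\leq \Gamma(k-1)=(k-2)!$ for $k\geq 3$ (since $k/2\leq k-1$ when $k\geq 2$ and $\Gamma$ is increasing on $[2,\infty)$, with $\Gamma(3/2)<1=\Gamma(2)$ handling $k=3$), whence $k\,\Gamma(k/2)\leq k\,(k-2)! \leq (k-1)\cdot k\cdot(k-2)! = k!$ using $k\geq 3$ so $k-1\geq 2$... one must be slightly careful here, so I would present the parity-based argument as the primary one and mention the monotonicity shortcut only if it cleanly works.

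Putting the pieces together: $\binom{\lambda}{k}k\,\Gamma(k/2)\leq \frac{\lambda^k}{k!}\cdot k\,\Gamma(k/2) = \lambda^k\cdot\frac{k\,\Gamma(k/2)}{k!}\leq \lambda^k$, where the last inequality is exactly the fact established above. The only mild subtlety — and the place I would be most careful — is handling the boundary cases $k=3$ and $k=4$ by hand, since the asymptotic-looking arguments are weakest there; explicitly, $k=3$ gives $3\,\Gamma(3/2)=\tfrac{3\sqrt\pi}{4}\approx 1.33\leq 6=3!$, and $k=4$ gives $4\,\Gamma(2)=4\leq 24=4!$, both comfortably true. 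No deep obstacle is expected; the "hard part" is merely organizing the parity cases of the Gamma-function estimate cleanly so that the small-$k$ cases are visibly covered.
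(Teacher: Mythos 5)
Your proposal is correct and follows essentially the same route as the paper: both reduce to $\binom{\lambda}{k}\leq \frac{\lambda(\lambda-1)\cdots(\lambda-k+1)}{k!}\leq\frac{\lambda^k}{k!}$ and then verify $k\,\Gamma(k/2)\leq k!$ by splitting on the parity of $k$, using $\Gamma(m)=(m-1)!$ in the even case and $\Gamma(m+\tfrac12)=\frac{(2m)!}{4^m m!}\sqrt{\pi}$ in the odd case. The only difference is cosmetic bookkeeping (you factor out $\lambda^k/k!$ up front, the paper cancels against the falling factorial), so nothing further is needed.
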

\begin{proof}
Note that for any $\lambda\in\bbN$ and $k\leq\lambda$, we have
$\binom{\lambda}{k} = \frac{\lambda(\lambda-1)(\lambda-2)\hdots(\lambda-k+1)}{k!}$.

We show the claim separately for the cases when $k$ is an even integer or not.
\begin{enumerate}
\item {\it When $k$ is an even integer:}
Since for any integer $n\in\bbN$, $\Gamma(n)=(n-1)!$, so when $k$ is an even integer, we have 
\begin{align*}
\binom{\lambda}{k}k\Gamma(k/2) &= \frac{\lambda(\lambda-1)(\lambda-2)\hdots(\lambda-k+1)}{k!}\times k\times (\frac{k}{2}-1)! \\
&\leq \lambda(\lambda-1)(\lambda-2)\hdots(\lambda-k+1) \\
&\leq \lambda^k.
\end{align*}
\item {\it When $k$ is an odd integer:} Note that for any integer $n\in\bbN$, we have 
$\Gamma\(n+\frac{1}{2}\) = \frac{(2n)!}{4^nn!}\sqrt{\pi}$; see \cite{gamma-wiki}. Let $k=2a+1$. Then
\begin{align*}
\binom{\lambda}{k}k\Gamma(k/2) &= \binom{\lambda}{k}k\Gamma(a+\frac{1}{2}) \\
&= \frac{\lambda(\lambda-1)(\lambda-2)\hdots(\lambda-k+1)}{k!}\times k\times \frac{(2a)!}{4^aa!}\sqrt{\pi} \\
%&= \frac{\lambda(\lambda-1)(\lambda-2)\hdots(\lambda-k+1)}{(2a+1)!}\times (2a+1)\times \frac{(2a)!}{4^aa!}\sqrt{\pi} \\
&= \lambda(\lambda-1)(\lambda-2)\hdots(\lambda-k+1)\frac{\sqrt{\pi}}{4^aa!} \\
&\stackrel{\text{(a)}}{\leq} \lambda(\lambda-1)(\lambda-2)\hdots(\lambda-k+1) \\
&\leq \lambda^k
\end{align*}
where (a) follows because $\frac{\sqrt{\pi}}{4^aa!}\leq 1$ when $a\geq1 \iff k\geq3$.
\end{enumerate}
This proves Claim~\ref{claim:gamma_ineq}.
\end{proof}

\section{Omitted Details from Section~\ref{sec:proof_reduce_special_case}}\label{app:reduce_special_case}
\subsection{Proof of Lemma~\ref{lem:convex-combinations}}\label{app:proof_convex-combinations}
\begin{lemma*}[Restating Lemma~\ref{lem:convex-combinations}]
$F(\calP)$ and $F(\calP')$ can be written as the following convex combinations:
\begin{align}
F(\calP)&=\sum_{\calC\subseteq [n-1]} q^{|\calC|}(1-q)^{n-|\calC|-1}F(\calP_{\calC}), \label{app:P_mixture} \\
F(\calP')&=\sum_{\calC\subseteq [n-1]} q^{|\calC|}(1-q)^{n-|\calC|-1}F(\calP'_{\calC}), \label{app:P-prime_mixture}
\end{align}
where $\calP_{\calC},\calP_{\calC}'$ are defined in \eqref{eq:defn_P_C}-\eqref{eq:defn_hatP}.
\end{lemma*}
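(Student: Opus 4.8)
The plan is to prove~\eqref{app:P_mixture}; then~\eqref{app:P-prime_mixture} follows by the identical argument with $\bp_n$ replaced by $\bp'_n$ in the last coordinate. The key idea is to expand each mixture distribution $\bp_i = q\bp'_n + (1-q)\tilde{\bp}_i$ (valid for $i \in [n-1]$ by~\eqref{eq:mixture-dist}) and use the explicit product formula for $F(\calP)(\bh)$ from~\eqref{general-distribution}. First I would fix an arbitrary $\bh \in \calA_B^n$ and recall $F(\calP)(\bh) = \sum_{(\calU_1,\hdots,\calU_B)\in\calU_{\bh}}\prod_{j=1}^B\prod_{i\in\calU_j}p_{ij}$. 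For each client $i \in [n-1]$ and each bin index $j \in [B]$, substitute $p_{ij} = q\,p'_{nj} + (1-q)\,\tilde{p}_{ij}$; the factor $p_{nj}$ for the last client is left untouched.

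The main step is a distributive expansion. Expanding the product $\prod_{j=1}^B\prod_{i\in\calU_j}p_{ij}$ after substitution, each of the $n-1$ factors coming from clients $i \in [n-1]$ splits into a "$q\bp'_n$-term" and a "$(1-q)\tilde{\bp}_i$-term." A term in the full expansion is therefore indexed by a choice, for each client $i \in [n-1]$, of which of the two summands to take; equivalently, by the subset $\calC \subseteq [n-1]$ of clients for which we pick the $\bp'_n$-summand. For a fixed $\calC$, the corresponding term contributes a scalar $q^{|\calC|}(1-q)^{n-1-|\calC|}$ (there are $|\calC|$ clients contributing a factor $q$ and $n-1-|\calC|$ clients contributing $1-q$), multiplied by $\prod_{j=1}^B\prod_{i\in\calU_j}\hat{p}_{ij}$, where $\hat{p}_{ij} = p'_{nj}$ if $i \in \calC$, $\hat{p}_{ij} = \tilde{p}_{ij}$ if $i \in [n-1]\setminus\calC$, and $\hat{p}_{nj} = p_{nj}$. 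This is exactly the generic-distribution formula for $F(\calP_\calC)$ with $\calP_\calC$ as defined in~\eqref{eq:defn_P_C}-\eqref{eq:defn_hatP}.

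Summing over all $(\calU_1,\hdots,\calU_B)\in\calU_{\bh}$ and interchanging the (finite) sums over $\calU_{\bh}$ and over $\calC \subseteq [n-1]$ yields
\[
F(\calP)(\bh) = \sum_{\calC\subseteq[n-1]} q^{|\calC|}(1-q)^{n-1-|\calC|}\sum_{(\calU_1,\hdots,\calU_B)\in\calU_{\bh}}\prod_{j=1}^B\prod_{i\in\calU_j}\hat{p}_{ij} = \sum_{\calC\subseteq[n-1]} q^{|\calC|}(1-q)^{n-1-|\calC|} F(\calP_\calC)(\bh).
\]
Since this holds for every $\bh \in \calA_B^n$, we get~\eqref{app:P_mixture} as an identity of distributions; and since $\sum_{\calC\subseteq[n-1]} q^{|\calC|}(1-q)^{n-1-|\calC|} = (q + (1-q))^{n-1} = 1$ by the binomial theorem, it is genuinely a convex combination. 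The only mild subtlety to watch is the bookkeeping in the interchange of sums and making sure the exponent of $(1-q)$ is written consistently (the statement uses $n - |\calC| - 1$, matching $n - 1 - |\calC|$), but since all sums are finite there is no convergence issue; this routine re-indexing is the one place where care is needed rather than any real obstacle.
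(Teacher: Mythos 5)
Your proof is correct. Conceptually it is the same decomposition as the paper's: expand each $\bp_i=q\bp'_n+(1-q)\tilde{\bp}_i$ for $i\in[n-1]$ and index the resulting $2^{n-1}$ terms by the subset $\calC$ of clients taking the $\bp'_n$ component, which produces the coefficient $q^{|\calC|}(1-q)^{n-1-|\calC|}$ in front of $F(\calP_{\calC})$. The execution differs: the paper never touches the explicit formula \eqref{general-distribution} in its proof, but instead introduces auxiliary random variables $\bhp_i$ (equal to $\bp'_n$ with probability $q$ and $\btp_i$ with probability $1-q$), defines the events $\calE_{\calC}$, and applies the law of total probability conditioned on which component each client sampled. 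You instead substitute the mixture into \eqref{general-distribution}, expand the product distributively, and interchange the two finite sums. Your route is arguably the more self-contained one, since the paper's conditioning step tacitly relies on exactly the two-stage-sampling interpretation that your distributive expansion verifies algebraically; the paper's version buys a shorter write-up and a probabilistic intuition (the number of clients landing in $\calC$ is $\mathrm{Bin}(n-1,q)$) that is reused immediately afterwards in the proof of Theorem~\ref{Thm:reduce_special_case}. Both are valid, and your bookkeeping of the exponent $n-1-|\calC|$ and the normalization $\sum_{\calC}q^{|\calC|}(1-q)^{n-1-|\calC|}=1$ is right.
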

\begin{proof}
We only show \eqref{app:P_mixture}; \eqref{app:P-prime_mixture} can be shown similarly.

For convenience, for any $\calC\subseteq[n-1]$, define 
\begin{align*}
\calP'_{|\calC|,n} &= \{\bp'_n,\hdots,\bp'_n\} \text{ with } |\calP'_{|\calC|,n}|=|\calC|, \\
\widetilde{\calP}_{[n-1]\setminus\calC} &= \{\tilde{\bp}_i:i\in\left[n-1\right]\setminus \calC\}.
\end{align*}
With these notations, we can write $\calP_{\calC} = \calP'_{|\calC|,n} \bigcup \widetilde{\calP}_{[n-1]\setminus\calC} \bigcup \{\bp_n\}$ and $\calP'_{\calC} = \calP'_{|\calC|,n} \bigcup \widetilde{\calP}_{[n-1]\setminus\calC} \bigcup \{\bp'_n\}$.

Note that $\bp_i=q \bp'_n+(1-q) \tilde{\bp}_i$ for all $i\in[n-1]$. 
For any $i\in[n-1]$, define the following random variable $\bhp_i$:
\[
\bhp_i = 
\begin{cases}
\bp'_n & \text{ w.p. } q, \\
\btp_i & \text{ w.p. } 1-q. \\
\end{cases}
\]
Note that $\bbE[\bhp_i]=\bp_i$.

For any subset $\calC\subseteq[n-1]$, define an event $\calE_{\calC}:=\{\bhp_i=\bp'_n \text{ for } i\in\calC \text{ and } \bhp_i=\btp_i \text{ for } i\in[n-1]\setminus\calC\}$. Since $\bhp_1,\hdots,\bhp_{n-1}$ are independent random variables, we have $\Pr[\calE_{\calC}]=q^{|\calC|}(1-q)^{n-|\calC|-1}$.

Consider an arbitrary $\bh\in\calA_B^n$. 
Define a random variable $U(\calP)$ over $\calA_B^n$ whose distribution is equal to $F(\calP)$.
%Note that $F(\calP)(\bh)=\Pr[U(\calP)=\bh]$. %, where $\calP=(\bp_1,\hdots\bp_n)$.
\begin{align}
F(\calP)(\bh) &= \Pr[U(\calP)=\bh] \notag \\
&= \Pr[U(\bp_1,\hdots,\bp_{n-1},\bp_n)=\bh] \notag \\
&= \Pr\big[U\(\bbE[\bhp_1],\hdots,\bbE[\bhp_{n-1}],\bp_n\)=\bh\big] \notag \\
&= \sum_{\calC\subseteq[n-1]}\Pr[\calE_{\calC}]\Pr\big[U\(\bbE[\bhp_1],\hdots,\bbE[\bhp_{n-1}],\bp_n\)=\bh \ | \ \calE_{\calC}\big] \notag \\
%&= \sum_{\calC\subseteq[n-1]}\Pr[\calE_{\calC}]\Pr\Big[F\Big((\bbE[\bp'_n]: i\in\calC),(\bbE[\btp_{i}]: i\in[n-1]\setminus\calC),\bp_n\Big)=\bh \ | \ \calE_{\calC}\Big] \notag \\
&\stackrel{\text{(e)}}{=} \sum_{\calC\subseteq[n-1]}\Pr[\calE_{\calC}]\Pr\Big[U\Big(\calP'_{|\calC|,n} \bigcup \widetilde{\calP}_{[n-1]\setminus\calC} \bigcup \{\bp_n\}\Big)=\bh\Big] \notag \\
%&=\sum_{\calC\subseteq[n-1]}\Pr[\calE_{\calC}]\Pr\big[F(\calP)=\bh \ | \ \calE_{\calC}\big] \notag \\
&= \sum_{\calC\subseteq[n-1]}\Pr[\calE_{\calC}]\Pr\big[U(\calP_{\calC})=\bh\big] \notag \\
&= \sum_{\calC\subseteq[n-1]}q^{|\calC|}(1-q)^{n-|\calC|-1}\Pr\big[U(\calP_{\calC})=\bh\big], \notag \\
&= \sum_{\calC\subseteq[n-1]}q^{|\calC|}(1-q)^{n-|\calC|-1}F(\calP_{\calC})(\bh)
%&= \sum_{m=0}^{n-1}\sum_{\substack{\bfb\in\{0,1\}^{n-1} : \\ \|\bfb\|_0=n-1-m}}\Pr\big[X_i=b_i, i\in[n-1]\big]\Pr\big[F(\bp_1,\hdots,\bp_n)=\bh | X_i=b_i, i\in[n-1]\big] \notag \\
%&= \sum_{m=0}^{n-1}\sum_{\substack{\bfb\in\{0,1\}^{n-1} : \\ \|\bfb\|_0=n-1-m}}p^m(1-p)^{n-1-m}\Pr\big[F(\bp_1,\hdots,\bp_n)=\bh | X_i=b_i, i\in[n-1]\big] \notag \\
%&= \sum_{m=0}^{n-1}\binom{n-1}{m}p^m(1-p)^{n-1-m}\Pr[F(\calP_m)=\bh] \notag
\end{align}
where, $\calP'_{|\calC|,n}$ and $\widetilde{\calP}_{[n-1]\setminus\calC}$ in the RHS of (e) are defined in the statement of the claim.

Since the above calculation holds for every $\bh\in\calA_B^n$, we have proved \eqref{app:P_mixture}.
\end{proof}

\subsection{Proof of Lemma~\ref{lem:joint-con_renyi-exp}}\label{app:proof_joint-con_renyi-exp}
\begin{lemma*}[Restating Lemma~\ref{lem:joint-con_renyi-exp}]
For any $\lambda>1$, the function $\mathbb{E}_{\bh\sim F\left(\calP'\right)}\left[\left(\frac{F\left(\calP\right)\left(\bh\right)}{F\left(\calP'\right)\left(\bh\right)}\right)^{\lambda}\right]$ is jointly convex in $(F(\calP),F(\calP'))$, i.e.,
\begin{align}
\mathbb{E}_{\bh\sim F\left(\calP'\right)}\left[\left(\frac{F\left(\calP\right)\left(\bh\right)}{F\left(\calP'\right)\left(\bh\right)}\right)^{\lambda}\right]
&\leq \sum_{\calC\subseteq \left[n-1\right]} q^{|\calC|}\left(1-q\right)^{n-|\calC|-1} \mathbb{E}_{\bh\sim F\left(\calP'_{\calC}\right)}\left[\left(\frac{F\left(\calP_{\calC}\right)\left(\bh\right)}{F\left(\calP'_{\calC}\right)\left(\bh\right)}\right)^{\lambda}\right]. \label{app:rdp_bound}
\end{align}
\end{lemma*}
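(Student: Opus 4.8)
The plan is to reduce the statement to a single pointwise convexity fact and then feed in Lemma~\ref{lem:convex-combinations}. For probability vectors $P,Q$ on $\calA_B^n$ write the Hellinger-type integral
\[
H_\lambda(P,Q):=\bbE_{\bh\sim Q}\left[\left(\frac{P(\bh)}{Q(\bh)}\right)^{\lambda}\right]=\sum_{\bh\in\calA_B^n}P(\bh)^{\lambda}Q(\bh)^{1-\lambda}.
\]
By Lemma~\ref{lem:convex-combinations}, $F(\calP)=\sum_{\calC\subseteq[n-1]}w_\calC F(\calP_\calC)$ and $F(\calP')=\sum_{\calC\subseteq[n-1]}w_\calC F(\calP'_\calC)$ with the \emph{same} weights $w_\calC=q^{|\calC|}(1-q)^{n-|\calC|-1}$, and these weights are nonnegative and sum to $1$ by the binomial theorem $\sum_{k=0}^{n-1}\binom{n-1}{k}q^k(1-q)^{n-1-k}=1$. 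Hence, once I establish that $(P,Q)\mapsto H_\lambda(P,Q)$ is jointly convex, applying Jensen's inequality for this jointly convex function to the finite family $\{(F(\calP_\calC),F(\calP'_\calC))\}_{\calC\subseteq[n-1]}$ with the weights $\{w_\calC\}$ yields \eqref{app:rdp_bound} directly.

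Second, I would reduce the joint convexity of $H_\lambda$ to the claim that $\phi(x,y):=x^{\lambda}y^{1-\lambda}$ is jointly convex on $(0,\infty)^2$ for every $\lambda>1$: since $H_\lambda(P,Q)=\sum_{\bh}\phi\big(P(\bh),Q(\bh)\big)$ is a sum of such functions composed with the linear maps $(P,Q)\mapsto(P(\bh),Q(\bh))$, joint convexity of $H_\lambda$ follows from that of $\phi$. For $\phi$ itself a one-line Hessian computation suffices:
\[
\nabla^2\phi(x,y)=\lambda(\lambda-1)\,x^{\lambda-2}y^{-\lambda-1}\begin{pmatrix} y \\ -x\end{pmatrix}\begin{pmatrix} y & -x\end{pmatrix},
\]
which is positive semidefinite because $\lambda(\lambda-1)>0$, $x^{\lambda-2}y^{-\lambda-1}>0$, and the outer product is rank-one PSD. (Equivalently, $\phi$ is the perspective of the convex function $t\mapsto t^{\lambda}$ on $t>0$, hence jointly convex; this is also \cite[Theorem~13]{HarremoesRenyiKL14}, which I would cite for the general statement.)

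Third — and this is the only place where care is genuinely needed, so it is the part I would write out most carefully — I would handle the atoms $\bh$ with $F(\calP')(\bh)=0$, for which the Hessian argument does not literally apply. If $F(\calP')(\bh)=0$ then, being a strictly-positively-weighted sum of the nonnegative numbers $F(\calP'_\calC)(\bh)$, each $F(\calP'_\calC)(\bh)=0$; since $\calP_\calC$ and $\calP'_\calC$ differ only in the last client's distribution ($\bp_n$ versus $\bp'_n$), which share the same support whenever $\eps_0<\infty$ (the LDP constraint forces $p_{nj}>0\iff p'_{nj}>0$), we also get $F(\calP_\calC)(\bh)=0$ and hence $F(\calP)(\bh)=0$. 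Adopting the convention $\phi(0,0)=0$, such atoms contribute $0$ to both sides of \eqref{app:rdp_bound} and may be discarded, after which all entries lie in $(0,\infty)$ and the Hessian argument applies verbatim. I do not expect any serious obstacle beyond this measure-theoretic bookkeeping; the convexity itself and the combinatorial identity are both elementary.
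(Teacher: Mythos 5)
Your proposal is correct and follows essentially the same route as the paper: both reduce \eqref{app:rdp_bound} to the joint convexity of the pointwise map $(x,y)\mapsto x^{\lambda}y^{1-\lambda}$ and then apply it to the convex combination from Lemma~\ref{lem:convex-combinations}, the only (cosmetic) difference being that you verify convexity via the rank-one Hessian while the paper uses the weighted-Jensen/perspective argument of its Lemma~\ref{lem:hellinger_convex}. Your explicit treatment of the atoms with $F(\calP')(\bh)=0$ is a welcome bit of extra care that the paper's proof glosses over.
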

%
%\begin{lemma}[\cite{HarremoesRenyiKL14}]
%$\bbE_Q\left[\left(\frac{P}{Q}\right)^{\lambda}\right]=\int P^{\lambda}Q^{\lambda-1}d\mu$ is jointly convex in $P$ and $Q$, i.e., if $P_{\alpha}=\alpha P_0 + (1-\alpha)P_1$ and $Q_{\alpha}=\alpha Q_0 + (1-\alpha)Q_1$ for some $\alpha\in[0,1]$, then we have 
%\begin{align}
%\int P_{\alpha}^{\lambda}Q_{\alpha}^{\lambda-1}d\mu \leq \alpha\int P_0^{\lambda}Q_0^{\lambda-1}d\mu + (1-\alpha)\int P_1^{\lambda}Q_1^{\lambda-1}d\mu.
%\end{align}
%\end{lemma}
\begin{proof}
For simplicity of notation, let $P=F(\calP)$ and $Q=F(\calP')$.
Note that $\bbE_Q\left[\left(\frac{P}{Q}\right)^{\lambda}\right]=\int P^{\lambda}Q^{1-\lambda}d\mu$, which is also called the Hellinger integral. 
In order to prove the lemma, it suffices to show that $\int P^{\lambda}Q^{1-\lambda}d\mu$ is jointly convex in $(P,Q)$, i.e., if $P_{\alpha}=\alpha P_0 + (1-\alpha)P_1$ and $Q_{\alpha}=\alpha Q_0 + (1-\alpha)Q_1$ for some $\alpha\in[0,1]$, then the following holds
\begin{align}
\int P_{\alpha}^{\lambda}Q_{\alpha}^{1-\lambda}d\mu \leq \alpha\int P_0^{\lambda}Q_0^{1-\lambda}d\mu + (1-\alpha)\int P_1^{\lambda}Q_1^{1-\lambda}d\mu. \label{hellinger_joint-convexity}
\end{align}
Proof of \eqref{hellinger_joint-convexity} is implicit in the proof of \cite[Theorem 13]{HarremoesRenyiKL14}. However, for completeness, we prove \eqref{hellinger_joint-convexity} in Lemma~\ref{lem:hellinger_convex} below.

Since $P=F(\calP)$ and $Q=F(\calP')$ are convex combinations of $P_{\calC}=F(\calP_{\calC})$ and $Q_{\calC}=F(\calP'_{\calC})$, respectively, with same coefficients, repeated application of \eqref{hellinger_joint-convexity} implies \eqref{app:rdp_bound}.
\end{proof}

\begin{lemma}\label{lem:hellinger_convex}
For $\lambda\geq 1$, the Hellinger integral
$\int P^{\lambda}Q^{1-\lambda}d\mu$ is jointly convex in $(P,Q)$, i.e., if $P_{\alpha}=\alpha P_0 + (1-\alpha)P_1$ and $Q_{\alpha}=\alpha Q_0 + (1-\alpha)Q_1$ for some $\alpha\in[0,1]$, then we have 
\begin{align}
\int P_{\alpha}^{\lambda}Q_{\alpha}^{1-\lambda}d\mu \leq \alpha\int P_0^{\lambda}Q_0^{1-\lambda}d\mu + (1-\alpha)\int P_1^{\lambda}Q_1^{1-\lambda}d\mu. \label{app_hellinger_joint-convexity}
\end{align}
\end{lemma}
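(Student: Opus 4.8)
The plan is to deduce the integral inequality \eqref{app_hellinger_joint-convexity} from the corresponding pointwise inequality and then integrate. To this end, define $\phi:[0,\infty)\times[0,\infty)\to[0,\infty]$ by $\phi(x,y)=x^{\lambda}y^{1-\lambda}$ for $x,y>0$, together with the usual conventions $\phi(x,0)=+\infty$ for $x>0$ and $\phi(0,0)=0$, which are exactly the ones under which the Hellinger integral is defined (and which make $\phi$ lower semicontinuous). If $\phi$ is jointly convex, i.e.
\[
\phi\bigl(\alpha x_0+(1-\alpha)x_1,\ \alpha y_0+(1-\alpha)y_1\bigr)\ \le\ \alpha\,\phi(x_0,y_0)+(1-\alpha)\,\phi(x_1,y_1)
\]
for all $x_0,x_1,y_0,y_1\ge 0$ and $\alpha\in[0,1]$, then, since $P_{\alpha}=\alpha P_0+(1-\alpha)P_1$ and $Q_{\alpha}=\alpha Q_0+(1-\alpha)Q_1$ pointwise ($\mu$-a.e.), applying the displayed inequality with $x_i=P_i(\omega)$, $y_i=Q_i(\omega)$ and integrating against $\mu$ (integration preserves pointwise inequalities, and both sides may equal $+\infty$) gives precisely \eqref{app_hellinger_joint-convexity}. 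So the whole content of the lemma is the two-point convexity of $\phi$.

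To prove that, I would use a weighted H\"older inequality. The cases $\alpha\in\{0,1\}$ and $\lambda=1$ (where $\phi(x,y)=x$) are immediate, so fix $\alpha\in(0,1)$ and $\lambda>1$, and set $w_0=\alpha$, $w_1=1-\alpha$. If $y_i=0$ with $x_i>0$ the right-hand side is $+\infty$ and there is nothing to prove; if $x_i=y_i=0$ that index contributes $0$ on both sides and can be dropped; so we may assume $y_0,y_1>0$. Let $p=\lambda$ and $q=\lambda/(\lambda-1)$, so that $1/p+1/q=1$ and $-\lambda/q=1-\lambda$. Put $a_i=x_i\,y_i^{-1/q}$ and $b_i=y_i^{1/q}$, so that $a_ib_i=x_i$, $a_i^{p}=x_i^{\lambda}y_i^{1-\lambda}$, and $b_i^{q}=y_i$. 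The weighted H\"older inequality $\sum_i w_i a_i b_i\le\bigl(\sum_i w_i a_i^{p}\bigr)^{1/p}\bigl(\sum_i w_i b_i^{q}\bigr)^{1/q}$ becomes
\[
\alpha x_0+(1-\alpha)x_1\ \le\ \bigl(\alpha x_0^{\lambda}y_0^{1-\lambda}+(1-\alpha)x_1^{\lambda}y_1^{1-\lambda}\bigr)^{1/\lambda}\,\bigl(\alpha y_0+(1-\alpha)y_1\bigr)^{(\lambda-1)/\lambda};
\]
raising both sides to the power $\lambda$ and dividing by $(\alpha y_0+(1-\alpha)y_1)^{\lambda-1}>0$ yields exactly $\phi(\alpha x_0+(1-\alpha)x_1,\ \alpha y_0+(1-\alpha)y_1)\le \alpha\phi(x_0,y_0)+(1-\alpha)\phi(x_1,y_1)$.

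Equivalently one could observe that $\phi(x,y)=y\,g(x/y)$ is the perspective of the convex function $g(t)=t^{\lambda}$ (convex since $\lambda\ge 1$) and invoke the standard fact that the perspective of a convex function is jointly convex, or simply check that the Hessian of $\phi$ on the open quadrant is positive semidefinite; I would present the H\"older argument since it is self-contained and disposes of the boundary cases cleanly. I do not expect a genuine obstacle here: the only things needing care are the bookkeeping for the degenerate values $y_i=0$ and confirming that the conventions for $0^{1-\lambda}$ are compatible with monotone integration, both of which are routine.
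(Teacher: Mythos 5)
Your argument is correct. Structurally it is the same as the paper's: both reduce \eqref{app_hellinger_joint-convexity} to a pointwise two-point inequality and then integrate. The difference is only in how the pointwise inequality is established. The paper writes $P_{\alpha}(\omega)/Q_{\alpha}(\omega)$ as a convex combination of $P_0(\omega)/Q_0(\omega)$ and $P_1(\omega)/Q_1(\omega)$ with weights $\alpha Q_0(\omega)/Q_{\alpha}(\omega)$ and $(1-\alpha)Q_1(\omega)/Q_{\alpha}(\omega)$, applies Jensen's inequality to $t\mapsto t^{\lambda}$, and multiplies through by $Q_{\alpha}(\omega)$ --- i.e.\ exactly the perspective-function argument you mention as an alternative and choose not to present. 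Your primary route instead derives the same pointwise inequality from the weighted H\"older inequality with exponents $p=\lambda$, $q=\lambda/(\lambda-1)$; this is the dual of the Jensen argument and is equally valid. What your write-up buys over the paper's is an explicit treatment of the degenerate values $Q_i(\omega)=0$ (the paper's division by $Q_0(\omega)$ and $Q_1(\omega)$ silently assumes positivity); what the paper's version buys is a slightly shorter computation with no exponent bookkeeping. Either way the lemma is established.
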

\begin{proof}
%As mentioned in Section~\ref{subsec:proof_reduce_special_case}, the proof of this lemma is implicit in the proof of \cite[Theorem 13]{HarremoesRenyiKL14}, and we present it below for completeness. 
Let $f(x)=x^{\lambda}$. It is easy to show that for any $\lambda\geq1$, $f(x)$ is a convex function when $x>0$. 
This implies that for any point $\omega\in\Omega$ in the sample space, we have
\begin{align*}
f\left(\frac{P_{\alpha}(\omega)}{Q_{\alpha}(\omega)}\right) &= f\left(\frac{\alpha P_{0}(\omega)}{Q_{\alpha}(\omega)} + \frac{(1-\alpha)P_{1}(\omega)}{Q_{\alpha}(\omega)} \right) \\
&= f\left(\frac{\alpha Q_{0}(\omega)}{Q_{\alpha}(\omega)} \frac{P_{0}(\omega)}{Q_{0}(\omega)} + \frac{(1-\alpha)Q_{1}(\omega)}{Q_{\alpha}(\omega)} \frac{P_{1}(\omega)}{Q_{1}(\omega)} \right) \\
&\leq \frac{\alpha Q_{0}(\omega)}{Q_{\alpha}(\omega)} f\left(\frac{P_{0}(\omega)}{Q_{0}(\omega)}\right) + \frac{(1-\alpha)Q_{1}(\omega)}{Q_{\alpha}(\omega)} f\left(\frac{P_{1}(\omega)}{Q_{1}(\omega)} \right), %\label{hellinger_convex-interim1}
%&= f\left(\frac{\alpha P_{0}(\omega) + (1-\alpha)P_{1}(\omega)}{Q_{\alpha}(\omega)}\right) \\
%&\stackrel{\text{(b)}}{\leq} \frac{\alpha Q_0}{Q_{\alpha}} f\left(\frac{P_{0}}{Q_{0}}\right)+\frac{(1-\alpha) Q_1}{Q_{\alpha}} f\left(\frac{P_{1}}{Q_{1}}\right),
\end{align*}
where the last inequality follows from the convexity of $f(x)$. By multiplying both sides with $Q_{\alpha}(\omega)$ and substituting the definition of $f(x)=x^{\lambda}$, we get
\begin{equation*}
P_{\alpha}^{\lambda}(\omega)Q_{\alpha}^{1-\lambda}(\omega) \leq \alpha P_0^{\lambda}(\omega)Q_0^{1-\lambda}(\omega) + (1-\alpha) P_1^{\lambda}(\omega)Q_1^{1-\lambda}(\omega).
\end{equation*}
By integrating this equality, we get~\eqref{app_hellinger_joint-convexity}.
\end{proof}

\subsection{Proof of Lemma~\ref{lem:cvx_rdp}}\label{app:proof_cvx_rdp}
\begin{lemma*}[Restating Lemma~\ref{lem:cvx_rdp}]
For any $i\in\left[n-1\right]$, we have
\begin{equation*} %\label{eq:cvx_rdp} 
\mathbb{E}_{\bh\sim F\left(\calP'\right)}\left[\left(\frac{F\left(\calP\right)\left(\bh\right)}{F\left(\calP'\right)\left(\bh\right)}\right)^{\lambda}\right]\leq \mathbb{E}_{\bh\sim F\left(\calP'_{-i}\right)}\left[\left(\frac{F\left(\calP_{-i}\right)\left(\bh\right)}{F\left(\calP'_{-i}\right)\left(\bh\right)}\right)^{\lambda}\right],
\end{equation*}
where, for $i\in[n-1]$, $\calP_{-i}=\calP\setminus\{\bp_i\}$ and $\calP'_{-i}=\calP'\setminus\{\bp_i\}$. 
Note that in the LHS, $F(\calP),F(\calP')$ are distributions over $\calA_B^n$, whereas, in the RHS, $F(\calP_{-i}),F(\calP'_{-i})$ for any $i\in[n-1]$ are distributions over $\calA_B^{n-1}$.
%Also define $\calP_{-n}=\calP\setminus\{\bp_n\}$ and $\calP'_{-n}=\calP'\setminus\{\bp_n'\}$.
\end{lemma*}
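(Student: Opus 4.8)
Here is a plan for proving Lemma~\ref{lem:cvx_rdp}.

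The plan is to fix an index $i\in[n-1]$ and exploit the fact that, because $\calP$ and $\calP'$ differ only in their last distribution ($\bp_n$ versus $\bp'_n$), the $i$-th client maps its point according to the \emph{same} distribution $\bp_i=(p_{i1},\dots,p_{iB})$ in both collections. Write $P=F(\calP)$, $Q=F(\calP')$ for the distributions on $\calA_B^n$ and $P_{-i}=F(\calP_{-i})$, $Q_{-i}=F(\calP'_{-i})$ for the distributions on $\calA_B^{n-1}$, and recall that $\bbE_{\bh\sim Q}\bigl[(P(\bh)/Q(\bh))^{\lambda}\bigr]=\sum_{\bh\in\calA_B^n}P(\bh)^{\lambda}Q(\bh)^{1-\lambda}$, and that $P,Q$ (and likewise $P_{-i},Q_{-i}$) are mutually absolutely continuous since $\calR$ is $\eps_0$-LDP. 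The first step is to record the one-step decomposition obtained by conditioning on the bin that receives client $i$'s message: reading off \eqref{general-distribution},
\begin{equation*}
P(\bh)=\sum_{j=1}^{B}p_{ij}\,P_{-i}(\tbh_j),\qquad Q(\bh)=\sum_{j=1}^{B}p_{ij}\,Q_{-i}(\tbh_j)\qquad\text{for every }\bh\in\calA_B^n,
\end{equation*}
where $\tbh_j=(h_1,\dots,h_{j-1},h_j-1,h_{j+1},\dots,h_B)$ and, by convention, $P_{-i}(\tbh_j)=Q_{-i}(\tbh_j)=0$ whenever $h_j=0$. Together with $\sum_{j=1}^{B}p_{ij}=1$, this is the only place the hypothesis $i\le n-1$ enters.

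Second, I would apply the pointwise joint convexity of $(u,v)\mapsto u^{\lambda}v^{1-\lambda}$ for $\lambda>1$ --- exactly the inequality established inside Lemma~\ref{lem:hellinger_convex}, iterated over the $B$ coordinates --- with the convex weights $p_{i1},\dots,p_{iB}$. Since $P(\bh)$ and $Q(\bh)$ are those same convex combinations of the pairs $\bigl(P_{-i}(\tbh_j),Q_{-i}(\tbh_j)\bigr)$, this yields, for every $\bh\in\calA_B^n$,
\begin{equation*}
\frac{P(\bh)^{\lambda}}{Q(\bh)^{\lambda-1}}\ \le\ \sum_{j=1}^{B}p_{ij}\,\frac{P_{-i}(\tbh_j)^{\lambda}}{Q_{-i}(\tbh_j)^{\lambda-1}},
\end{equation*}
where each summand with $h_j=0$ is read as $0$; this convention is consistent because $P_{-i}(\tbh_j)=0\Rightarrow Q_{-i}(\tbh_j)=0$ by mutual absolute continuity, so no genuine division by zero occurs.

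Finally, I would sum this inequality over $\bh\in\calA_B^n$, interchange the two finite sums, and observe that for each fixed $j$ the map $\bh\mapsto\tbh_j$ is a bijection from $\{\bh\in\calA_B^n:h_j\ge1\}$ onto $\calA_B^{n-1}$; hence $\sum_{\bh:\,h_j\ge1}P_{-i}(\tbh_j)^{\lambda}Q_{-i}(\tbh_j)^{1-\lambda}=\sum_{\bg\in\calA_B^{n-1}}P_{-i}(\bg)^{\lambda}Q_{-i}(\bg)^{1-\lambda}$, independently of $j$. Pulling this common factor out and using $\sum_{j}p_{ij}=1$ gives
\begin{equation*}
\bbE_{\bh\sim F(\calP')}\!\left[\left(\frac{F(\calP)(\bh)}{F(\calP')(\bh)}\right)^{\lambda}\right]\ \le\ \sum_{\bg\in\calA_B^{n-1}}\frac{P_{-i}(\bg)^{\lambda}}{Q_{-i}(\bg)^{\lambda-1}}\ =\ \bbE_{\bh\sim F(\calP'_{-i})}\!\left[\left(\frac{F(\calP_{-i})(\bh)}{F(\calP'_{-i})(\bh)}\right)^{\lambda}\right],
\end{equation*}
which is precisely \eqref{eq:cvx_rdp}. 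I expect the only real obstacle to be the boundary bookkeeping: one must check that histograms with an empty bin ($h_j=0$) are handled consistently both in the convexity step and in the bijection, and (for full rigor) that $P_{-i},Q_{-i}$ are indeed mutually absolutely continuous so that the displayed ratios are well defined; everything else is a routine rearrangement of finite sums.
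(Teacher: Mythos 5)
Your proposal is correct, and it rests on the same two ingredients as the paper's proof: the one-step decomposition $F(\calP)(\bh)=\sum_{j=1}^{B}p_{ij}F(\calP_{-i})(\tbh_j)$ (and likewise for $F(\calP')$), and the joint convexity of $(u,v)\mapsto u^{\lambda}v^{1-\lambda}$ from Lemma~\ref{lem:hellinger_convex}. The packaging differs, though. The paper first uses these ingredients to show that the Hellinger integral is a convex function of $\bp_i$ over the probability simplex $\Delta_B$, then invokes the principle that a convex function over a polytope is maximized at a vertex (a point mass $p_{ij^*}=1$), and finally argues that at such a vertex client $i$'s deterministic output can be deleted without changing the integral. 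You instead apply Jensen's inequality pointwise at each histogram $\bh$ with the weights $p_{i1},\dots,p_{iB}$, and then show by the explicit bijection $\bh\mapsto\tbh_j$ between $\{\bh\in\calA_B^n:h_j\geq 1\}$ and $\calA_B^{n-1}$ that every ``vertex value'' equals the reduced Hellinger integral, so the convex combination collapses via $\sum_j p_{ij}=1$. Your route is slightly more elementary (no appeal to vertex maximization, no need to relax the LDP constraints to the full simplex) and makes rigorous the step the paper states informally, namely that the $i$'th client's contribution ``can be safely removed'' at a vertex; your attention to the $h_j=0$ boundary terms and to mutual absolute continuity (which holds here because the $\eps_0$-LDP condition forces all the $\bp_{i'}$ and $\bp'_n$ to share a common support) is also a welcome bit of care that the paper glosses over.
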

\begin{proof}
%Fix $\bp_j$ for $j\neq i$. 
%We show in Lemma~\ref{lem:objective_marginal_convex} in Appendix~\ref{app:general_case} that $\bbE_{\bh\sim F(\calP')}\left[\left(\frac{F(\calP)(\bh)}{F(\calP')(\bh)}\right)^{\lambda}\right]$ is convex in $\bp_i$ for any $i\in[n-1]$. 
First we show that $\bbE_{\bh\sim F(\calP')}\left[\left(\frac{F(\calP)(\bh)}{F(\calP')(\bh)}\right)^{\lambda}\right]$ is convex in $\bp_i$ for any $i\in[n-1]$. 

Note that due to the independence of $\calR$ on different data points, for any $\bh=(h_1,\hdots,h_B)\in\calA^n_B$, we can recursively write the distributions $F(\calP)(\bh)$ and $F(\calP')(\bh)$ (which are defined in \eqref{general-distribution}) as follows:
\begin{align}
F(\calP)(\bh) &= \sum_{j=1}^{B}p_{ij}F(\calP_{-i})(\tbh_j), \qquad \forall i\in\left[n\right] \label{decomposition_P} \\
%\end{align}
%\begin{align}
F(\calP')(\bh)&=\sum_{j=1}^{B}p_{ij}F(\calP'_{-i})(\tbh_j)=\sum_{j=1}^{B}p'_{nj}F(\calP'_{-n})(\tbh_j), \qquad \forall i\in\left[n-1\right], \label{decomposition_P-prime}
\end{align}
where $\tbh_j=(h_1,\ldots,h_{j-1},h_j-1,h_{j+1},\ldots,h_B)$ for any $j\in[B]$. %, $\calP_{-i}=\calP\setminus \lbrace \bp_i\rbrace$, and  $\calP'_{-i}=\calP'\setminus \lbrace \bp_i\rbrace$. 
Here, $F(\calP_{-i})$, $F(\calP'_{-i})$ are distributions over $\calA_{B}^{n-1}.$\footnote{We assume that $F(\calP_{-i})(\tbh_j)=0$ and $F(\calP'_{-i})(\tbh_j)=0$ if $h_j-1<0$.}

Fix any $i\in\left[n-1\right]$ and also fix arbitrary $\bp_1,\ldots,\bp_{i-1},\bp_{i+1},\ldots,\bp_{n},\bp'_{n}$. Take any $\alpha\in[0,1]$, and consider $\bp^{\alpha}_i=\alpha \bp^{0}_{i} + (1-\alpha)\bp^{1}_i$.
Let $\calP_{\alpha}=(\bp_1,\ldots,\bp_{i}^{\alpha},\ldots,\bp_n)$, $\calP_{0}=(\bp_1,\ldots,\bp_{i}^{0},\ldots,\bp_n)$, and $\calP_{1}=(\bp_1,\ldots,\bp_{i}^{1},\ldots,\bp_n)$. Similarly, let $\calP'_{\alpha}=(\bp_1,\ldots,\bp_{i}^{\alpha},\ldots,\bp'_n)$, $\calP'_{0}=(\bp_1,\ldots,\bp_{i}^{0},\ldots,\bp'_n)$, and $\calP'_{1}=(\bp_1,\ldots,\bp_{i}^{1},\ldots,\bp'_n)$. With these definitions, we have $\calP_{\alpha}=\alpha\calP_0+(1-\alpha)\calP_1$. Note that $\(\calP_{\alpha}\)_{-i}=\(\calP_0\)_{-i}=\(\calP_1\)_{-i}$.

 Then, from the recursive definitions of $F\left(\calP\right)$ and $F\left(\calP'\right)$ (given in \eqref{decomposition_P} and \eqref{decomposition_P-prime}, respectively), for any $\bh\in\calA_B^n$, we get
\begin{align*}
F(\calP_{\alpha})(\bh) &= \sum_{j=1}^{B}p_{ij}^{\alpha}F\((\calP_{\alpha})_{-i}\)(\tbh_j) \\
&= \alpha\sum_{j=1}^{B}p_{ij}^0F\((\calP_{\alpha})_{-i}\)(\tbh_j) + (1-\alpha)\sum_{j=1}^{B}p_{ij}^1F\((\calP_{\alpha})_{-i}\)(\tbh_j) \tag{since $\bp^{\alpha}_i=\alpha \bp^{0}_{i} + (1-\alpha)\bp^{1}_i$} \\
&= \alpha\sum_{j=1}^{B}p_{ij}^0F\((\calP_{0})_{-i}\)(\tbh_j) + (1-\alpha)\sum_{j=1}^{B}p_{ij}^1F\((\calP_{1})_{-i}\)(\tbh_j) \tag{since $\(\calP_{\alpha}\)_{-i}=\(\calP_0\)_{-i}=\(\calP_1\)_{-i}$} \\
&= \alpha F(\calP_{0})(\bh) + (1-\alpha)F(\calP_{1})(\bh).
\end{align*}
Similarly, we can show that $F(\calP'_{\alpha})(\bh)=\alpha F(\calP'_{0})(\bh) + (1-\alpha)F(\calP'_{1})(\bh)$.

Thus we have shown that
\begin{align*}
F\left(\calP_{\alpha}\right)&=\alpha F\left(\calP_0\right)+\left(1-\alpha\right)F\left(\calP_1\right) \\ %\label{eq:cvx_rdp_interim1} \\
F\left(\calP'_{\alpha}\right)&=\alpha F\left(\calP'_0\right)+\left(1-\alpha\right)F\left(\calP'_1\right). %\label{eq:cvx_rdp_interim2}
\end{align*}

From Lemma~\ref{lem:hellinger_convex}, we have that $\mathbb{E}_{\bh\sim F\left(\calP'\right)}\left[\left(\frac{F\left(\calP\right)\left(\bh\right)}{F\left(\calP'\right)\left(\bh\right)}\right)^{\lambda}\right]$ is jointly convex in $F\left(\calP\right)$ and $F\left(\calP'\right)$. As a result, we get
\begin{equation}
\mathbb{E}_{\bh\sim F\left(\calP'_{\alpha}\right)}\left[\left(\frac{F\left(\calP_{\alpha}\right)\left(\bh\right)}{F\left(\calP'_{\alpha}\right)\left(\bh\right)}\right)^{\lambda}\right]\leq \alpha \mathbb{E}_{\bh\sim F\left(\calP'_0\right)}\left[\left(\frac{F\left(\calP_0\right)\left(\bh\right)}{F\left(\calP'_0\right)\left(\bh\right)}\right)^{\lambda}\right]+\left(1-\alpha\right) \mathbb{E}_{\bh\sim F\left(\calP'_1\right)}\left[\left(\frac{F\left(\calP_1\right)\left(\bh\right)}{F\left(\calP'_1\right)\left(\bh\right)}\right)^{\lambda}\right]
\end{equation} 
Thus, we have shown that $\bbE_{\bh\sim F(\calP')}\left[\left(\frac{F(\calP)(\bh)}{F(\calP')(\bh)}\right)^{\lambda}\right]$ is convex in $\bp_i$ for any $i\in[n-1]$.

%Note that if $\calH,\calH'$ are two polyhedron such that $\calH'\subseteq\calH$ and $f(\bx)$ is a convex function over $\calH$, then $\max_{\bx\in\calH'}f(\bx) \leq \max_{\bx\in\calH}f(\bx)$, i.e., the maximum value a convex function takes over a polyhedron cannot decrease if we expand the polyhedron.

Now we are ready to prove Lemma~\ref{lem:cvx_rdp}.

The LDP constraints put some restrictions on the set of values that the distribution $\bp_i$ can take; however, %due to the above fact, since  is convex in $\bp_i$, 
the maximum value that $\bbE_{\bh\sim F(\calP')}\left[\left(\frac{F(\calP)(\bh)}{F(\calP')(\bh)}\right)^{\lambda}\right]$ takes can only increase when we remove those constraints. We instead maximize it w.r.t.\ $\bp_i$ over the simplex $\Delta_B:=\{(p_{i1},\hdots,p_{iB}):p_{ij}\geq0\text{ for }j\in[B] \text{ and } \sum_{j=1}^Bp_{ij}=1\}$. This implies
\begin{align}\label{cvx_rdp-interim1}
\bbE_{\bh\sim F(\calP')}\left[\left(\frac{F(\calP)(\bh)}{F(\calP')(\bh)}\right)^{\lambda}\right] &\leq \max_{\bp_i\in\Delta_B} \bbE_{\bh\sim F(\calP')}\left[\left(\frac{F(\calP)(\bh)}{F(\calP')(\bh)}\right)^{\lambda}\right]
\end{align}

%where $F(\calP_{-i}),F(\calP'_{-i})$ for any $i\in[n-1]$ are distributions over $\calA_B^{n-1}$.
Substituting from \eqref{decomposition_P} and \eqref{decomposition_P-prime} into \eqref{cvx_rdp-interim1}, we get
\begin{align}\label{cvx_rdp-interim2}
\bbE_{\bh\sim F(\calP')}\left[\left(\frac{F(\calP)(\bh)}{F(\calP')(\bh)}\right)^{\lambda}\right] &\leq %\max_{\bp_i\in\Delta_B} \bbE_{\bh\sim F(\calP')}\left[\left(\frac{F(\calP)(\bh)}{F(\calP')(\bh)}\right)^{\lambda}\right] \\
\max_{\bp_i\in\Delta_B} \bbE_{\bh\sim F(\calP')}\left[\left(\frac{\sum_{j=1}^{B}p_{ij}F(\calP_{-i})(\tbh_j)}{\sum_{j=1}^{B}p_{ij}F(\calP'_{-i})(\tbh_j)}\right)^{\lambda}\right]
\end{align}
Since maximizing a convex function over a polyhedron attains its maximum value at one of its vertices, and there are $B$ vertices in the simplex $\Delta_B$, which are of the form $p_{ij^*}=1$ for some $j^*\in[B]$ and $p_{ik}=0$ for all $k\neq j^*$, we have
\begin{align*}
\max_{\bp_i\in\Delta_B} \bbE_{\bh\sim F(\calP')}\left[\left(\frac{\sum_{j=1}^{B}p_{ij}F(\calP_{-i})(\tbh_j)}{\sum_{j=1}^{B}p_{ij}F(\calP'_{-i})(\tbh_j)}\right)^{\lambda}\right] &\stackrel{\text{(a)}}{=} \bbE_{\bh\sim F(\calP')}\left[\left(\frac{F(\calP_{-i})(\tbh_{j^*})}{F(\calP'_{-i})(\tbh_{j^*})}\right)^{\lambda}\right] \\
&\stackrel{\text{(b)}}{=} \bbE_{\bh\sim F(\calP'_{-i})}\left[\left(\frac{F(\calP_{-i})(\bh)}{F(\calP'_{-i})(\bh)}\right)^{\lambda}\right]
\end{align*}
Since the $i$'th data point deterministically maps to the $j^*$'th output by the mechanism $\calR$, the expectation term in the RHS of (a) has no dependence on the $i$'th data point, so we can safely remove that, which gives (b).
This proves Lemma~\ref{lem:cvx_rdp}.
\end{proof}

\subsection{Proof of Corollary~\ref{corol:cvx_rdp_repeat}}\label{app:proof_cvx_rdp_repeat}
\begin{corollary*}[Restating Corollary~\ref{corol:cvx_rdp_repeat}]
Consider any $m\in\{0,1,\hdots,n-1\}$.
Let $\calD_{m+1}^{(n)}=\left(d'_n,\ldots,d'_n,d_n\right)$ and $\calD_{m+1}'^{(n)}=\left(d'_n,\ldots,d'_n\right)$.
Then for any $\calC\in\binom{[n-1]}{m}$, we have
\begin{align*}
\bbE_{\bh\sim F(\calP'_{\calC})}\left[\left(\frac{F(\calP_{\calC})(\bh)}{F(\calP'_{\calC})(\bh)}\right)^{\lambda}\right] &\leq \bbE_{\bh\sim \calM(\calD_{m+1}'^{(n)})}\left[\left(\frac{\calM(\calD_{m+1}^{(n)})(\bh)}{\calM(\calD_{m+1}'^{(n)})(\bh)}\right)^{\lambda}\right]. 
\end{align*}
\end{corollary*}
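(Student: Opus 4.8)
The statement is obtained by iterating the monotonicity lemma (Lemma~\ref{lem:cvx_rdp}) so as to strip off, one at a time, the ``mixture'' distributions $\tilde{\bp}_i$ with $i\in[n-1]\setminus\calC$, which are common to $\calP_{\calC}$ and $\calP'_{\calC}$. First I would unpack the definitions \eqref{eq:defn_P_C}--\eqref{eq:defn_hatP}: for $\calC\in\binom{[n-1]}{m}$, the pair $(\calP_{\calC},\calP'_{\calC})$ consists of $n$ distributions each, and both share the common sub-collection given by $m=|\calC|$ copies of $\bp'_n$ (one for each index in $\calC$) together with $\widetilde{\calP}_{[n-1]\setminus\calC}=\{\tilde{\bp}_i:i\in[n-1]\setminus\calC\}$ (of size $n-1-m$); they differ only in the remaining slot, $\calP_{\calC}$ carrying $\bp_n$ while $\calP'_{\calC}$ carries $\bp'_n$. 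In particular $(\calP_{\calC},\calP'_{\calC})$ is a ``neighboring'' pair in exactly the sense required by Lemma~\ref{lem:cvx_rdp} (differing in a single distribution, which we may put in the last coordinate since, by \eqref{general-distribution}, $F$ depends only on the \emph{multiset} of input distributions, the shuffler output being a histogram).

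Next I would peel off the $n-1-m$ distributions of $\widetilde{\calP}_{[n-1]\setminus\calC}$ sequentially. At each step the running pair still differs in exactly one distribution ($\bp_n$ versus $\bp'_n$), and the distribution being deleted is one of the common, non-distinguished coordinates; hence Lemma~\ref{lem:cvx_rdp} applies (with the collection sizes shrinking each time) and shows that the quantity $\mathbb{E}_{\bh\sim F(\cdot)}\!\left[(F(\cdot)(\bh)/F(\cdot)(\bh))^{\lambda}\right]$ does not decrease under the deletion. Chaining these $n-1-m$ inequalities gives
$$
\bbE_{\bh\sim F(\calP'_{\calC})}\!\left[\left(\tfrac{F(\calP_{\calC})(\bh)}{F(\calP'_{\calC})(\bh)}\right)^{\lambda}\right]\ \le\ \bbE_{\bh\sim F(\calR_{\calC})}\!\left[\left(\tfrac{F(\calQ_{\calC})(\bh)}{F(\calR_{\calC})(\bh)}\right)^{\lambda}\right],
$$
where $\calQ_{\calC}=\{\bp'_n,\ldots,\bp'_n\}\cup\{\bp_n\}$ and $\calR_{\calC}=\{\bp'_n,\ldots,\bp'_n\}\cup\{\bp'_n\}$, each having $m+1$ distributions.

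Finally I would identify these reduced collections with shuffled mechanisms on the special datasets: since $\calQ_{\calC}$ is (as a multiset) the collection of output distributions of $\calR$ on the $m+1$ data points $(d'_n,\ldots,d'_n,d_n)=\calD_{m+1}^{(n)}$, and $\calR_{\calC}$ likewise corresponds to $(d'_n,\ldots,d'_n)=\calD_{m+1}'^{(n)}$, we have $F(\calQ_{\calC})=\calM(\calD_{m+1}^{(n)})$ and $F(\calR_{\calC})=\calM(\calD_{m+1}'^{(n)})$ as distributions over $\calA_B^{m+1}$ (note $(\calD_{m+1}'^{(n)},\calD_{m+1}^{(n)})\in\calD_{\same}^{m+1}$). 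Substituting this into the displayed inequality yields \eqref{eqn:bound_C}.

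\textbf{Main obstacle.} There is no genuine analytic difficulty here; this is a bookkeeping argument built entirely on Lemma~\ref{lem:cvx_rdp} and the permutation-symmetry of $F$. The only point that needs care is to check that, throughout the peeling, the running pair stays in the special ``differs-in-a-single-coordinate'' form that Lemma~\ref{lem:cvx_rdp} demands — which it does precisely because every distribution removed comes from the common block $\widetilde{\calP}_{[n-1]\setminus\calC}$ and never from the distinguished slot. The extreme cases are covered uniformly by the same reasoning: when $m=n-1$ (so $\calC=[n-1]$) nothing is removed and the inequality is an equality, and when $m=0$ all $n-1$ non-distinguished distributions are removed, leaving $\{\bp_n\}$ and $\{\bp'_n\}$, i.e.\ $\calM(\calD_1^{(n)})$ and $\calM(\calD_1'^{(n)})$.
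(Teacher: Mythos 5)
Your proposal is correct and follows essentially the same route as the paper's proof: both decompose $\calP_{\calC}$ and $\calP'_{\calC}$ into the common block ($m$ copies of $\bp'_n$ plus $\widetilde{\calP}_{[n-1]\setminus\calC}$) and the distinguished slot ($\bp_n$ vs.\ $\bp'_n$), repeatedly apply Lemma~\ref{lem:cvx_rdp} to strip off the $\tilde{\bp}_i$ one at a time, and then identify the surviving $(m+1)$-element collections with $\calM(\calD_{m+1}^{(n)})$ and $\calM(\calD_{m+1}'^{(n)})$. The only addition is your explicit remark about permutation-symmetry of $F$ justifying the placement of the distinguished distribution in the last coordinate, which the paper leaves implicit.
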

\begin{proof}
Recall from Lemma~\ref{lem:convex-combinations} and the notation defined in Appendix~\ref{app:reduce_special_case}, that for any $\calC\subseteq[n-1]$, we have
$\calP_{\calC} = \calP'_{|\calC|,n} \bigcup \widetilde{\calP}_{[n-1]\setminus\calC} \bigcup \{\bp_n\}$ and $\calP'_{\calC} = \calP'_{|\calC|,n} \bigcup \widetilde{\calP}_{[n-1]\setminus\calC} \bigcup \{\bp'_n\}$, where $\calP'_{|\calC|,n}=\{\bp'_n,\hdots,\bp'_n\}$ with $|\calP'_{|\calC|,n}|=|\calC|$ and $\widetilde{\calP}_{[n-1]\setminus\calC}=\{\tilde{\bp}_i:i\in\left[n-1\right]\setminus \calC\}$.

Now, repeatedly applying Lemma~\ref{lem:cvx_rdp} over the set of distributions $\tilde{\bp}_i\in\widetilde{\calP}_{[n-1]\setminus\calC}$, we get that 
\begin{align*}
\bbE_{\bh\sim F(\calP'_{\calC})}\left[\left(\frac{F(\calP_{\calC})(\bh)}{F(\calP'_{\calC})(\bh)}\right)^{\lambda}\right] &\leq \bbE_{\bh\sim F\(\calP'_{|\calC|,n} \bigcup \{\bp'_n\}\)}\left[\left(\frac{F\(\calP'_{|\calC|,n} \bigcup \{\bp_n\}\)(\bh)}{F\(\calP'_{|\calC|,n} \bigcup \{\bp'_n\}\)(\bh)}\right)^{\lambda}\right] \\
&=\bbE_{\bh\sim \calM(\calD_{m+1}'^{(n)})}\left[\left(\frac{\calM(\calD_{m+1}^{(n)})(\bh)}{\calM(\calD_{m+1}'^{(n)})(\bh)}\right)^{\lambda}\right]
\end{align*}
In the last equality, we used that $\calP'_{|\calC|,n} \bigcup \{\bp_n\}$ has $|\calC|+1=m+1$ distributions which are associated with the $(m+1)$ data points $\{d'_n,\hdots,d'_n,d_n\}$ ($m$ of them are equal to $d'_n$); similarly, $\calP'_{|\calC|,n} \bigcup \{\bp'_n\}$ also has $|\calC|+1=m+1$ distributions which are associated with the $(m+1)$ data points $\{d'_n,\hdots,d'_n,d'_n\}$ (all of them are equal to $d'_n$). This implies that for every $\bh\in\calA_B^{m+1}$, $F\(\calP'_{|\calC|,n} \bigcup \{\bp_n\}\)(\bh)$ and $F\(\calP'_{|\calC|,n} \bigcup \{\bp'_n\}\)(\bh)$ are distributionally equal to $\calM(\calD_{m+1}^{(n)})(\bh)$ and $\calM(\calD_{m+1}'^{(n)})(\bh)$, respectively.

This proves Corollary~\ref{corol:cvx_rdp_repeat}.
\end{proof}

\section{Omitted Details from Section~\ref{sec:special_form}}

\subsection{Proof of Lemma~\ref{lemm:MomAcc_RV}}\label{app:supp_MomAcc_rv}
\begin{lemma*}[Restating Lemma~\ref{lemm:MomAcc_RV}]
The random variable $X$ has the following properties:
\begin{enumerate}
\item $X$ has zero mean, i.e., $\mathbb{E}_{\bh\sim\calM(\calD_m)}\left[X(\bh)\right]=0$.
\item The variance of $X$ is equal to 
\begin{equation*}
\mathbb{E}_{\bh\sim\calM(\calD_m)}\left[X(\bh)^{2}\right]=m\left(\sum_{j=1}^{B}\frac{p_j'^2}{p_j} -1\right).
\end{equation*}
%where $a_j=\frac{p'_j}{p_j}\in\left[e^{-\epsilon_0},e^{\epsilon_0}\right]$.
\item For $i\geq3$, the $i$th moment of $X$ is bounded by 
\begin{equation*}
\mathbb{E}_{\bh\sim\calM(\calD_m)}\left[(X(\bh))^{i}\right] \leq \mathbb{E}_{\bh\sim\calM(\calD_m)}\left[|X(\bh)|^{i}\right]\leq i \Gamma\left(i/2\right)\left(2m\nu^2\right)^{i/2},
\end{equation*}
where $\nu^2=\frac{\left(e^{\epsilon_0}-e^{-\epsilon_0}\right)^2}{4}$ and $\Gamma\left(z\right)=\int_{0}^{\infty}x^{z-1}e^{-x}dx$ is the Gamma function. 
\end{enumerate}
%\iffalse
%The random variable $X(\bh)$ has the following properties:
%\begin{enumerate}
%\item $X\left(\bh\right)$ has zero mean, i.e., $\mathbb{E}_{\bh\sim\mu_0}\left[X\left(\bh\right)\right]=0$.
%\item $X\left(\bh\right)$ has a bounded variance: 
%\begin{equation*}%\label{eqn:variance}
%\sigma^2 := \mathbb{E}_{\bh\sim\mu_0}\left[X\left(\bh\right)^{2}\right]=m\left(\sum_{j=1}^{B}a_j^{2} p_j-1\right),
%\end{equation*}
%where $a_j=\frac{p'_j}{p_j}\in\left[e^{-\epsilon_0},e^{\epsilon_0}\right]$.
%\item For $i\geq3$, the $i$th moment of $X(\bh)$ is bounded by 
%\begin{equation*}%\label{ith-moment-Xh}
%\mathbb{E}\left[|X|^{i}\right]\leq i \Gamma\left(i/2\right)\left(2m\nu^2\right)^{i/2},
%\end{equation*}
%where $\nu^2=\frac{\left(e^{\epsilon_0}-e^{-\epsilon_0}\right)^2}{4}$ and $\Gamma\left(z\right)=\int_{0}^{\infty}x^{z-1}e^{-x}dx$ is the gamma function. 
%\end{enumerate}
%\fi
\end{lemma*}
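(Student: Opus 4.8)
The plan is to exploit the i.i.d.\ representation of the multinomial that was already used in the proof of Theorem~\ref{thm:RDP_same_2nd-bound}. Recall that $\bh\sim\calM(\calD_m)$ can be generated by drawing $X_1,\dots,X_m$ i.i.d.\ with $\Pr[X_i=j]=p_j$ and setting $h_j=\sum_{i=1}^m\mathbbm{1}_{\{X_i=j\}}$. Combining this with the closed form $X(\bh)=\left(\sum_{j=1}^B\tfrac{p'_j}{p_j}h_j\right)-m$ derived just after \eqref{ratio_mu}, together with the identity \eqref{eqn:ratio-i.i.d.}, gives
\[
X(\bh)=\sum_{i=1}^m\left(\frac{p'_{X_i}}{p_{X_i}}-1\right)=:\sum_{i=1}^m Y_i,
\]
where $Y_1,\dots,Y_m$ are i.i.d.\ copies of $Y:=\frac{p'_X}{p_X}-1$ with $X\sim\bp$. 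All three parts of the lemma then reduce to standard facts about such a sum.

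For the first part, $\mathbb{E}[Y]=\sum_{j=1}^B p_j\left(\frac{p'_j}{p_j}-1\right)=\sum_j p'_j-\sum_j p_j=1-1=0$, so $\mathbb{E}_{\bh\sim\calM(\calD_m)}[X(\bh)]=m\,\mathbb{E}[Y]=0$. For the second part, since the $Y_i$ are i.i.d.\ and centered, $\mathbb{E}_{\bh\sim\calM(\calD_m)}[X(\bh)^2]=m\,\mathbb{E}[Y^2]=m\sum_{j=1}^B p_j\left(\frac{p'_j}{p_j}-1\right)^2=m\left(\sum_j\frac{p_j'^2}{p_j}-2\sum_j p'_j+\sum_j p_j\right)=m\left(\sum_j\frac{p_j'^2}{p_j}-1\right)$, which is exactly \eqref{eqn:variance}.

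For the third part, the key observation is that $Y$ is a bounded centered random variable: the $\eps_0$-LDP constraint \eqref{LDP_constraints} gives $\frac{p'_j}{p_j}\in[e^{-\eps_0},e^{\eps_0}]$ for every $j$, so $Y$ takes values in an interval of length $e^{\eps_0}-e^{-\eps_0}$. By Hoeffding's lemma, a centered random variable supported on an interval of length $L$ is sub-Gaussian with variance proxy $L^2/4$; hence $Y$ is $\nu^2$-sub-Gaussian with $\nu^2=\frac{(e^{\eps_0}-e^{-\eps_0})^2}{4}$, and therefore, by independence of $Y_1,\dots,Y_m$, the sum $X=\sum_{i=1}^m Y_i$ is $(m\nu^2)$-sub-Gaussian, i.e.\ $\Pr[|X|\ge t]\le 2e^{-t^2/(2m\nu^2)}$ for all $t\ge 0$. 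It then remains to convert this tail bound into the claimed moment bound: writing $\mathbb{E}[|X|^i]=\int_0^\infty i\,t^{i-1}\Pr[|X|\ge t]\,dt\le 2i\int_0^\infty t^{i-1}e^{-t^2/(2m\nu^2)}\,dt$ and substituting $u=t^2/(2m\nu^2)$ turns the integral into a Gamma integral and yields $\mathbb{E}[|X|^i]\le i\,\Gamma(i/2)\,(2m\nu^2)^{i/2}$; finally $\mathbb{E}[X^i]\le\mathbb{E}[|X|^i]$ trivially. The only mildly delicate steps are invoking Hoeffding's lemma with exactly the right interval length and tracking the constant through the Gamma-integral substitution so that it matches $i\,\Gamma(i/2)(2m\nu^2)^{i/2}$ precisely; the rest is routine.
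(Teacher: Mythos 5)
Your proof is correct, and for the crucial third part it is essentially identical to the paper's: bound each i.i.d.\ summand via Hoeffding's lemma to get sub-Gaussianity with parameter $\nu^2=\frac{(e^{\eps_0}-e^{-\eps_0})^2}{4}$, pass to the sum, apply the Chernoff tail bound, and integrate the tail with the substitution $u=t^2/(2m\nu^2)$ to produce the $i\,\Gamma(i/2)(2m\nu^2)^{i/2}$ constant. The one genuine difference is in parts 1 and 2: you carry the i.i.d.\ representation $X=\sum_{i=1}^m Y_i$ through all three parts, so the variance drops out immediately as $m\,\mathbb{E}[Y^2]=m\big(\sum_j \frac{p_j'^2}{p_j}-1\big)$, whereas the paper computes the mean directly from $\sum_{\bh}(\mu_1(\bh)-\mu_0(\bh))=0$ and derives the variance by a longer direct expansion using the multinomial moment identities $\mathbb{E}[h_j^2]=mp_j(1-p_j)+m^2p_j^2$ and $\mathbb{E}[h_jh_l]=-mp_jp_l+m^2p_jp_l$, only introducing the i.i.d.\ decomposition for part 3. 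Your uniform treatment is cleaner and shorter; the paper's computation is more pedestrian but verifies the same identity without appealing to the equivalence between the multinomial histogram and the underlying i.i.d.\ draws until it is strictly needed. Both are valid, and the constants match exactly.
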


%Let $\bh\sim \mu_0$ be a multinomial random variable, where $\mu_0\left(\bh\right)=\textsl{MN}\left(m,\bp,\bh\right)$. Let $X\left(\bh\right)$ be a random variable defined by $X\left(\bh\right)=\sum_{j=1}^{B}a_jh_j-m$, where $a_j=\frac{p'_j}{p_j}\in\left[e^{-\epsilon_0},e^{\epsilon_0}\right]$. Then, the random variable $X\left(\bh\right)$ has the following properties:
%\begin{enumerate}
%\item $X\left(\bh\right)$ has zero mean, i.e., $\mathbb{E}_{\bh\sim\mu_0}\left[X\left(\bh\right)\right]=0$.
%\item $X\left(\bh\right)$ has a bounded variance $\sigma^2=m\left(\sum_{j=1}^{B}a_j^{2} p_j-1\right)$, i.e., $\mathbb{E}_{\bh\sim\mu_0}\left[X\left(\bh\right)^{2}\right]=\sigma^2$.
%\item For $i\geq 3$, the $i$th moment of the random variable $X\left(\bh\right)$ is bounded above by $\mathbb{E}\left[|X|^{i}\right]\leq i \Gamma\left(i/2\right)\left(2m\nu^2\right)^{i/2}$, where $\nu^2=\frac{\left(e^{\epsilon_0}-e^{-\epsilon_0}\right)^2}{4}$.   
%\end{enumerate}
%\end{lemma*}
\begin{proof}
For simplicity of notation, let $\mu_0,\mu_1$ denote the distributions $\calM(\calD_m),\calM(\calD'_m)$, respectively. As shown in \eqref{ratio_mu}, for any $\bh\in\calA_B^m$, we have
%First, we compute the ratio $\frac{\mu_{1}(\bh)}{\mu_{0}(\bh)}$:
%\begin{equation}\label{ratio_mu}
%\frac{\mu_{1}(\bh)}{\mu_{0}(\bh)} 
%=\sum_{j=1}^{B}p'_j\frac{\textsl{MN}\left(m-1,\bp,\tbh_j\right)}{\textsl{MN}(m,\bp,\bh)}
%=\sum_{j=1}^{B}\frac{p'_j}{p_j}\frac{h_j}{m}.
%\end{equation}
%Thus, we get 
$$X(\bh)=m\(\frac{\mu_{1}(\bh)}{\mu_{0}(\bh)} -1\) = \left(\sum_{j=1}^{B}a_jh_j\right)-m,$$
where $a_j=\frac{p'_j}{p_j}\in\left[e^{-\eps_0},e^{\eps_0}\right]$ for all $j\in\left[B\right]$.

Now we show the three properties.
\begin{enumerate}
\item The mean of the random variable $X$ is given by
\begin{align*}
\bbE_{\bh\sim\mu_0}\left[X(\bh)\right] &= m\bbE_{\bh\sim\mu_0}\left[\frac{\mu_{1}(\bh)}{\mu_{0}(\bh)} -1\right] = m\sum_{\bh\in\calA_B^m}\mu_0(\bh)\(\frac{\mu_{1}(\bh)}{\mu_{0}(\bh)} -1\) 
= m\sum_{\bh\in\calA_B^m}\(\mu_1(\bh) - \mu_{0}(\bh)\) 
= 0
\end{align*}
%
%\begin{equation}
%\begin{aligned}
%\mathbb{E}_{\bh\sim\mu_0}\left[X(\bh)\right] &= \sum_{j=1}^{B}a_j\mathbb{E}_{\bh\sim\mu_0}\left[h_j\right]-m\\
%&=\sum_{j=1}^{B}\frac{p'_j}{p_j}p_jm-m\\
%&=m\left(\sum_{j=1}^{B}p'_j-1\right)=0
%\end{aligned}
%\end{equation}

\item The variance of the random variable $X$ is given by
\begin{align*}
\mathbb{E}_{\bh\sim\mu_0}&\left[X\left(\bh\right)^{2}\right]=\bbE_{\bh\sim\mu_{0}}\left[\left(\sum_{j=1}^{B}a_jh_j - m\right)^{2}\right]\\ 
&=m^2 \bbE_{\bh\sim\mu_{0}}\left[\sum_{j=1}^{B}\sum_{l=1}^{B}a_ja_l\frac{h_jh_l}{m^2}-2\sum_{j=1}^{B}a_j\frac{h_j}{m}+1\right] \\
&=m^2\bbE_{\bh\sim\mu_{0}}\left[\sum_{j=1}^{B}a_j^2\frac{h_j^2}{m^2}+\sum_{j=1}^{B}\sum_{l\neq j}a_ja_l\frac{h_jh_l}{m^2} -2\sum_{j=1}^{B}a_j\frac{h_j}{m}+1\right] \notag \\
&=m^2\left[\sum_{j=1}^{B}\frac{(p'_j)^2}{p_j^2}\frac{\bbE_{\bh\sim\mu_{0}}[h_j^2]}{m^2}+\sum_{j=1}^{B}\sum_{l\neq j}\frac{p_j'p_l'}{p_jp_l}\frac{\bbE_{\bh\sim\mu_{0}}[h_jh_l]}{m^2} -2\sum_{j=1}^{B}\frac{p_j'}{p_j}\frac{\bbE_{\bh\sim\mu_{0}}[h_j]}{m} + 1 \right]\\
&\stackrel{\text{(b)}}{=}m^2\left[\sum_{j=1}^{B}\frac{(p'_j)^2}{p_j^2}\frac{(mp_j(1-p_j)+m^2p_j^2)}{m^2}+\sum_{j=1}^{B}\sum_{l\neq j}\frac{p_j'p_l'}{p_jp_l}\frac{(-mp_jp_l+m^2p_jp_l)}{m^2} -2\sum_{j=1}^{B}\frac{p_j'}{p_j}\frac{p_jm}{m} + 1\right]\\
&=m^2\left[\sum_{j=1}^{B}\(\frac{(p'_j)^2\left(1-p_j\right)}{p_jm}+(p'_j)^2\)+\sum_{j=1}^{B}\sum_{l\neq j}\( -\frac{p'_jp'_l}{m}+p'_jp'_l\)-1 \right]\\
&=m^2\left[ \frac{1}{m}\(\sum_{j=1}^{B}\frac{(p'_j)^2(1-p_j)}{p_j}-\sum_{j=1}^{B}\sum_{l\neq j}p'_jp'_l\)+\sum_{j=1}^B(p'_j)^2+\sum_{j=1}^{B}\sum_{l\neq j}p'_jp'_l-1 \right] \\
&=m^2\left[ \frac{1}{m}\(\sum_{j=1}^{B}\frac{(p'_j)^2}{p_j} - \sum_{j=1}^{B}(p'_j)^2 - \sum_{j=1}^{B}\sum_{l\neq j}p'_jp'_l\)+\sum_{j=1}^B(p'_j)^2+\sum_{j=1}^{B}\sum_{l\neq j}p'_jp'_l-1\right] \\
&\stackrel{\text{(c)}}{=}m\left(\sum_{j=1}^{B}\frac{(p'_j)^2}{p_j}-1\right).
\end{align*}
Here, step $\text{(b)}$ uses properties of multinomial distribution: $\bbE_{\bh\sim\mu_{0}}[h_j]=mp_j$, $\bbE_{\bh\sim\mu_{0}}[h_j^{2}]=mp_j(1-p_j)+m^2p_j^2$, and $\bbE_{\bh\sim\mu_{0}}[h_jh_l] = -mp_jp_l + m^2p_jp_l$ for $j\neq l$. Step $\text{(c)}$ follows because $\sum_{j=1}^B(p'_j)^2+\sum_{j=1}^{B}\sum_{l\neq j}p'_jp'_l = \left(\sum_{j=1}^Bp'_j\right)^2=1$, as $\bp'=(p'_1,\hdots,p'_B)$ is a probability distribution.

%\vspace{3mm}

\item Let $Y_i$ denote the random variable associated with the output of the local randomizer at the $i$'th client. So, $\Pr\left[Y_i=j\right]=p_j$ for $j\in\left[B\right]$. Recall that $h_j$ denote the number of clients that map to the $j$'th element from $[B]$. This implies that for any $j\in[B]$, we have $h_j=\sum_{i=1}^{m}\mathbbm{1}_{\{Y_i=j\}}$. For any $i\in[m]$, define a random variable $X_i=\(\sum_{j=1}^{B}a_j\mathbbm{1}_{\{ Y_i=j\}}\)-1$, where $a_j=\frac{p'_j}{p_j}$. Observe that $X_1,\ldots,X_m$ are zero mean i.i.d.\ random variables, because for any $i\in[m]$, we have $\mathbb{E}\left[X_i\right]=\(\sum_{j=1}^{B}a_jp_j\)-1=0$. With these definitions, we can equivalently represent $X(\bh)=\(\sum_{j=1}^Ba_jh_j\)-m$ as $X(\bh)=\sum_{i=1}^{m}X_i$, which is the sum of $m$ zero mean i.i.d.\ r.v.s. 
Furthermore, since $a_j\in\left[e^{-\epsilon_0},e^{\epsilon_0}\right]$ for any $j\in[B]$, we have $X_i\in\left[e^{-\epsilon_0}-1,e^{\epsilon_0}-1\right]$. Since any bounded r.v.\ $Z\in[a,b]$ is a sub-Gaussian r.v.\ with parameter $\frac{(b-a)^2}{4}$ (see~\cite[Lemma~$1.8$]{rigollet2015high})), we have that $X_i$ is a sub-Gaussian r.v.\ with parameter $\nu^{2}=\frac{\left(e^{\epsilon_0}-e^{-\epsilon_0}\right)^2}{4}$, i.e.,
\begin{equation*}
\mathbb{E}\left[e^{sX_i}\right]\leq e^{\frac{s^2\nu^2}{2}}, \qquad \forall s\in\bbR.
\end{equation*} 
It follows that $X\left(\mathbf{h}\right)=\sum_{i=1}^{m}X_i$ is also a sub-Gaussian random variable with parameter $m\nu^2$. The remaining steps are similar to bound the moments of a sub-Gaussian random variable. We write them here for completeness. From Chernoff bound we get 
\begin{align*}
\Pr\left[X\geq t\right]&\leq \min_{s\geq0} \frac{\mathbb{E}\left[e^{sX}\right]}{e^{st}}\\
&\leq\min_{s\geq0}  \frac{e^{\frac{s^2m\nu^2}{2}}}{e^{st}}\\
&\stackrel{\text{(b)}}{\leq}  e^{-\frac{t^2}{2m\nu^2}}
\end{align*}
where (b) follows by setting $s=\frac{t}{m\nu^2}$. Similarly, we can bound the term $\Pr\left[-X\geq t\right]$. Thus, we get 
\begin{equation*}
\Pr\left[|X|\geq t\right]\leq 2 e^{-\frac{t^2}{2m\nu^2}} 
\end{equation*}
Hence, the $i$'th moment of the random variable $X$ can be bounded by
\begin{align*}
\mathbb{E}\left[|X|^{i}\right]&= i \int_{0}^{\infty} t^{i-1}\Pr\left[|X|\geq t\right] dt\\
&\leq 2i \int_{0}^{\infty} t^{i-1} e^{-\frac{t^2}{2m\nu^2}} dt\\
&\stackrel{\text{(b)}}{=} i\left(2m\nu^2\right)^{i/2} \int_{0}^{\infty} u^{i/2-1} e^{-u} du\\
&= i\left(2m\nu^2\right)^{i/2} \Gamma\left(i/2\right),
\end{align*}
where step (b) follows by setting $u=\frac{t^2}{2m\nu^2}$ (change of variables). 
In the last step, $\Gamma\left(z\right)=\int_{0}^{\infty}x^{z-1}e^{-x}dx$ denotes the Gamma function. 
Thus, we conclude that for every $i\geq3$, we have $\mathbb{E}\left[|X|^{i}\right]\leq i \Gamma\left(i/2\right)\left(2m\nu^2\right)^{i/2}$, where $\nu^2=\frac{\left(e^{\epsilon_0}-e^{-\epsilon_0}\right)^2}{4}$. 
\end{enumerate}
This completes the proof of Lemma~\ref{lemm:MomAcc_RV}.
\end{proof}

\subsection{Proof of Lemma~\ref{lemma_bound_sup}}\label{app:supp_MomAcc_bound_sup}
\begin{lemma*}[Restating Lemma~\ref{lemma_bound_sup}]
We have the following bound:
\begin{equation*}
\sup_{(\bp,\bp')\in\calT_{\eps_0}}\(\sum_{j=1}^{B}\frac{p_j'^2}{p_j}-1\) = \frac{\left(e^{\eps_0}-1\right)^2}{e^{\eps_0}}.
\end{equation*}
\end{lemma*}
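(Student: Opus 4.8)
The plan is to recast the optimization in terms of the likelihood ratios and then exploit the convexity of $x\mapsto x^2$. Write $r_j := p'_j/p_j\in[e^{-\eps_0},e^{\eps_0}]$ for each $j\in[B]$ (we may assume $B\ge 2$, since for $B=1$ both distributions equal $(1)$, the left-hand side is $0$, and the worst case is degenerate; bins with $p_j=p'_j=0$ are handled by the usual convention that the ratio constraint is then vacuous). Then $\sum_j \frac{p_j'^2}{p_j}=\sum_j p_j r_j^2$, and the two normalization constraints translate to $\sum_j p_j=1$ and $\sum_j p_j r_j=\sum_j p'_j=1$. Equivalently, letting $R$ be the random variable taking value $r_j$ with probability $p_j$, the quantity to bound is $\bbE[R^2]$ subject to $\bbE[R]=1$ and $R\in[e^{-\eps_0},e^{\eps_0}]$ almost surely; the maximum of a convex functional over such a constrained family is attained at a two-point distribution on the endpoints.

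For the upper bound I would use the elementary chord inequality: for any $a\le x\le b$ one has $x^2\le (a+b)x-ab$, because $(a+b)x-ab-x^2=-(x-a)(x-b)\ge 0$ on $[a,b]$. Applying this with $a=e^{-\eps_0}$, $b=e^{\eps_0}$ (so that $ab=1$) to each $r_j$ and averaging with weights $p_j$ gives
\[
\sum_{j=1}^B p_j r_j^2 \;\le\; (a+b)\sum_{j=1}^B p_j r_j \;-\; ab\sum_{j=1}^B p_j \;=\; \bigl(e^{\eps_0}+e^{-\eps_0}\bigr)-1,
\]
using $\sum_j p_j r_j=\sum_j p_j=1$. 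Hence $\sum_j \frac{p_j'^2}{p_j}-1\le e^{\eps_0}+e^{-\eps_0}-2=\frac{(e^{\eps_0}-1)^2}{e^{\eps_0}}$, and since this holds for every $(\bp,\bp')\in\calT_{\eps_0}$, the supremum is at most $\frac{(e^{\eps_0}-1)^2}{e^{\eps_0}}$.

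To show tightness (so the supremum is in fact attained), I would exhibit the two-point distribution that saturates the chord inequality: $R\in\{e^{-\eps_0},e^{\eps_0}\}$ with probabilities chosen so that $\bbE[R]=1$. Concretely, take $B=2$, $\bp=\bigl(\tfrac{e^{\eps_0}}{e^{\eps_0}+1},\tfrac{1}{e^{\eps_0}+1}\bigr)$ and $\bp'=\bigl(\tfrac{1}{e^{\eps_0}+1},\tfrac{e^{\eps_0}}{e^{\eps_0}+1}\bigr)$. Then $p'_1/p_1=e^{-\eps_0}$ and $p'_2/p_2=e^{\eps_0}$, so $(\bp,\bp')\in\calT_{\eps_0}$, and a direct computation gives $\sum_j \frac{p_j'^2}{p_j}-1=\frac{(e^{\eps_0}-1)^2}{e^{\eps_0}}$ (this is exactly the binary randomized-response pair used later in the lower-bound proof, padded with empty bins if $B>2$). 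Combining the two directions yields the equality.

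There is essentially no hard step here; the only mild subtleties are (i) applying the convexity inequality in the correct direction — upper-bounding $x^2$ by its chord, which is valid precisely because $r_j$ is confined to $[e^{-\eps_0},e^{\eps_0}]$ — and (ii) the harmless boundary case $p_j=p'_j=0$, where the ratio constraint is read as vacuous (or one simply works with $B=2$ for the achievability direction). I would present the argument in the order above: reformulate via $r_j$, prove the upper bound by the chord inequality, then verify tightness with the explicit binary construction.
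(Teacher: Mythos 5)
Your proof is correct, and it takes a genuinely different route from the paper's. The paper observes that $f(\bp,\bp')=\sum_j (p'_j)^2/p_j$ is jointly convex (via convexity of $(x,y)\mapsto x^2/y$), that $\calT_{\eps_0}$ is a polytope, and that a convex function maximized over a polytope attains its maximum at a vertex; it then argues that at a vertex every likelihood ratio $p'_j/p_j$ sits at one of the two endpoints $e^{\pm\eps_0}$, and computes the value there using the normalization constraints. You instead bound the objective pointwise by the chord inequality $r^2\le (a+b)r-ab$ on $[a,b]$ with $a=e^{-\eps_0}$, $b=e^{\eps_0}$, $ab=1$, and sum against the weights $p_j$ using $\sum_j p_j=\sum_j p_j r_j=1$; achievability is then checked explicitly on the binary randomized-response pair. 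Your argument is shorter and sidesteps the vertex characterization, which in the paper is stated somewhat loosely (the claim that every vertex has all $B$ ratio constraints tight is not carefully justified, and degenerate coordinates $p_j=p'_j=0$ are not discussed); the chord bound gives the inequality uniformly over all feasible pairs with no appeal to extreme-point structure. What the paper's route buys is an explicit description of the extremizers (all mass on ratios $e^{\pm\eps_0}$, split so that both vectors normalize), and it reuses the same convexity machinery that appears elsewhere in their arguments. Your handling of the boundary case is also sound: the linear constraints force $p_j=0\Rightarrow p'_j=0$, so such bins contribute nothing. Both proofs establish that the supremum is attained, yours explicitly via the $B=2$ construction.
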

\begin{proof}
%Observe that $\sigma^2=m\left(\sum_{j=1}^{B}a_j^{2} p_j-1\right)$, where $a_j=p'_j/p_j$ for $j\in\left[B\right]$. 
For any $(\bp,\bp')\in\calT_{\eps_0}$, define $f(\bp,\bp')=\sum_{j=1}^{B}\frac{(p'_j)^2}{p_j}$. 
Since the function $g\left(x,y\right)=\frac{x^2}{y}$ is convex in $(x,y)$ for $y>0$, it implies that the objective function $f(\bp,\bp')$ is also convex in $(\bp,\bp')$. It is easy to verify that $\calT_{\eps_0}$ is a polytope.

Since we maximize a convex function $f(\bp,\bp')$ over a polytope $\calT_{\eps_0}$, the optimal solution is one of the vertices of the polytope. Note that any vertex $(\bp,\bp')$ of the polytope in $B$ dimensions satisfies all the $B$ LDP constraints (i.e., $e^{-\eps_0}\leq \frac{p_j}{p'_j}\leq e^{\eps_0}, j=1,\hdots,B$) with equality. Without loss of generality, assume that the optimal solution $(\tilde{\bp},\tilde{\bp}')$ is a vertex such that $\frac{\tilde{p}'_j}{\tilde{p}_j} = e^{\eps_0}$ for $j=1,\hdots,l$ and $\frac{\tilde{p}'_j}{\tilde{p}_j} = e^{-\eps_0}$ for $j=l+1,\hdots,B$, for some $l\in [B]$. Thus, we have
\begin{align*}
1 = \sum_{j=1}^{B}\tilde{p}'_j &= e^{\eps_0}\sum_{j=1}^{l}\tilde{p}_j + e^{-\eps_0}\sum_{j=l+1}^{B}\tilde{p}_j 
= e^{\eps_0}\sum_{j=1}^{l}\tilde{p}_j + e^{-\eps_0}\Big(1-\sum_{j=1}^{l}\tilde{p}_j\Big) 
= e^{-\eps_0} + (e^{\eps_0} - e^{-\eps_0})\sum_{j=1}^{l}\tilde{p}_j 
%& \sum_{j=1}^{B}p_j=\sum_{j=1}^{l}p_j+\sum_{j=l+1}^{B}e^{\eps_0}p_j=1
\end{align*}
Rearranging the above gives $\sum_{j=1}^{l}\tilde{p}_j=\frac{1}{e^{\eps_0}+1}$. This implies $\sum_{j=1}^{l}\tilde{p}'_j=\frac{e^{\eps_0}}{e^{\eps_0}+1}$, which in turn implies $\sum_{j=l+1}^{B}\tilde{p}'_j=\frac{1}{e^{\eps_0}+1}$.
Now the result follows from the following set of equalities:
\begin{align*}
f\left(\tilde{\bp},\tilde{\bp}'\right) &= \sum_{j=1}^{B}\frac{(\tilde{p}'_j)^2}{\tilde{p}_j} 
= \sum_{j=1}^{l}\frac{\tilde{p}'_j}{\tilde{p}_j} \tilde{p}'_j + \sum_{j=l+1}^{B}\frac{\tilde{p}'_j}{\tilde{p}_j} \tilde{p}'_j 
=e^{\eps_0}\sum_{j=1}^{l}\tilde{p}'_j+e^{-\eps_0}\sum_{j=l+1}^{B}\tilde{p}'_j\\
&= \frac{e^{2\eps_0}}{e^{\eps_0}+1}+\frac{1}{e^{\eps_0}\left(e^{\eps_0}+1\right)} 
= \frac{(e^{\eps_0})^3+1}{e^{\eps_0}(e^{\eps_0}+1)}  
= \frac{\left(e^{\eps_0}-1\right)^2}{e^{\eps_0}}+1,
\end{align*}
where the last equality uses the identity $x^3+1 = (x+1)(x^2-x+1)$. This completes the proof of Lemma~\ref{lemma_bound_sup}. 
\end{proof}

}

\end{document}